    \newenvironment{proofthm}[1]
    {\noindent\textbf{Proof of \textbf{Theorem \ref{#1}}.} }
    {\ \rule{0.5em}{0.5em} \vspace{\baselineskip}}
    \newenvironment{prooflmm}[1]
    {\noindent\textbf{Proof of \textbf{Lemma \ref{#1}}.} }
    {\ \rule{0.5em}{0.5em} \vspace{\baselineskip}}
    \newenvironment{proofcor}[1]
    {\noindent\textbf{Proof of \textbf{Corollary \ref{#1}}.} }
    {\ \rule{0.5em}{0.5em} \vspace{\baselineskip} }
    \newenvironment{proof}[1][Proof]{\noindent\textbf{#1.} }{\ \rule{0.5em}{0.5em} \vspace{\baselineskip}}
    \newtheorem{lemma}{Lemma}
    \newtheorem{theorem}{Theorem}
    \newtheorem{corollary}{Corollary}
    \newtheorem{assumption}{Assumption}
    \newtheorem{algorithm}{Algorithm}
    \newtheorem{example}{Example}
    \newtheorem{remark}{Remark}
    \newenvironment{remark*}{\vspace{0.5em}\noindent \textbf{{Remark.}} \itshape}{\vspace{0.5em}}
    \DeclareMathOperator*{\argmax}{argmax}
    \DeclareMathOperator*{\argmin}{argmin}
    \newcommand{\bs}{\boldsymbol}
    \newcommand{\mc}{\mathcal}
    \newcommand{\mb}{\mathbb}
    \newcommand{\mr}{\mathrm}
    \newcommand{\la}{\langle}
    \newcommand{\ra}{\rangle}
    \newcommand{\cp}{\stackrel{p}{\longrightarrow}}
    \newcommand{\cd}{\stackrel{d}{\longrightarrow}}
    \newcommand{\leqtext}[1]{\stackrel{\text{#1}}{\leq}}    
    \newcommand{\geqtext}[1]{\stackrel{\text{#1}}{\geq}}    
    \newcommand{\eqtext}[1]{\stackrel{\text{#1}}{=}}    
    \newcommand{\lesssimtext}[1]{\stackrel{\text{#1}}{\lesssim}}  
    \newcommand{\gtrsimtext}[1]{\stackrel{\text{#1}}{\gtrsim}}
    \newsavebox{\@brx}
    \newcommand{\llangle}[1][]{\savebox{\@brx}{\(\m@th{#1\langle}\)}%
    \mathopen{\copy\@brx\kern-0.5\wd\@brx\usebox{\@brx}}}
    \newcommand{\rrangle}[1][]{\savebox{\@brx}{\(\m@th{#1\rangle}\)}%
    \mathclose{\copy\@brx\kern-0.5\wd\@brx\usebox{\@brx}}}
    \newcommand{\vt}[1]{{\vert\kern-0.25ex\vert #1 
    \vert\kern-0.25ex\vert}}
    \numberwithin{equation}{section}
\begin{document}

\title{\textsc{Tractable Estimation of Nonlinear Panels\\ with Interactive Fixed Effects}\thanks{
We thank Iv\'{a}n Fern\'{a}ndez-Val, Aureo de Paula, Martin Weidner and the participants of the 2nd UCL–CeMMAP–IFS Ph.D. Econometrics Research Day~(2024) and UCL Econometrics Brownbag Seminar for their valuable comments. We thank Martin Weidner for sharing the codes and data from \citet{chen2021nonlinear}, and thank Wei Miao for careful testing of the R package and for helpful comments on the implementation. %
    }}
\author{Andrei \textsc{Zeleneev}\thanks{University College London and CeMMAP: \textsf{a.zeleneev@ucl.ac.uk}.} \quad \quad Weisheng \textsc{Zhang}%
\thanks{%
University College London: \textsf{weisheng.zhang.21@ucl.ac.uk}.}
}

    \maketitle

\begin{abstract}
    Interactive fixed effects are routinely controlled for in linear panel models. While an analogous fixed effects (FE) estimator for nonlinear models has been available in the literature \citep{chen2021nonlinear}, it sees much more limited use in applied research because its implementation involves solving a high-dimensional non-convex problem. In this paper, we complement the theoretical analysis of \citet{chen2021nonlinear} by providing a new computationally efficient estimator that is asymptotically equivalent to their estimator. Unlike the previously proposed FE estimator, our estimator avoids solving a high-dimensional non-convex optimization problem and can be feasibly computed in large nonlinear panels. Our proposed method involves two steps. In the first step, we convexify the optimization problem using nuclear norm regularization (NNR) and obtain preliminary NNR estimators of the parameters, including the fixed effects. Then, we find the global solution of the original optimization problem using a standard gradient descent method initialized at these preliminary estimates. To make our method readily applicable in practice, we also propose specific numerical algorithms for solving the involved optimization problems, establish their convergence, and provide their efficient implementation in our R package~\texttt{NNRPanel}.

\end{abstract}

\thispagestyle{empty}
\clearpage

\setcounter{page}{1} 

\clearpage

    \section{Introduction}\label{sec:intro}
The importance of accounting for interactive unobserved heterogeneity in panel and network models is well recognized. For example, in linear panel models, interactive fixed effects are routinely controlled for using the seminal approaches of \citet{bai2009panel} or \citet{pesaran2006estimation}.  While analogous methods for nonlinear models have been developed in the literature (e.g., \citealp{chen2021nonlinear}), they see much more limited use in empirical research due to their rapidly growing computational complexity or the lack of inferential theory.\footnote{For example, \citet{zeleneev2020identification} proposes a method for estimating network models with (nonparametric) interactive unobserved heterogeneity that does not require solving a high-dimensional nonconvex problem. However, unlike \citet{chen2021nonlinear}, \citet{zeleneev2020identification} focuses on identification and consistent estimation and does not provide inference tools.}

The main goal of this paper is to bridge the gap between the recent theoretical developments by \citet{chen2021nonlinear} and empirical work by providing a new computationally efficient estimator that can be feasibly implemented in a wide range of nonlinear (semiparametric) settings with unobserved effects following a linear factor structure. We demonstrate that our estimator has two important properties. First, unlike the approach of \citet{chen2021nonlinear}, our method does not require solving a high-dimensional non-convex optimization problem, so our estimator can be efficiently computed when the cross-sectional and time dimensions, $N$ and $T$, are large. Second, we argue that our estimator is asymptotically equivalent to the fixed effects (FE) estimator of \citet{chen2021nonlinear}. This means that, in practice, one can combine our computationally efficient estimator with the inferential theory provided in \citet{chen2021nonlinear} to construct confidence intervals for various objects of interest, including structural parameters and average partial effects.

Our proposed estimation procedure involves the following two steps. In the first step, we construct preliminary estimators of the parameters of interest, including the loadings and the factors, by solving a convex relaxation of the original (non-convex) optimization problem in \citet{chen2021nonlinear}. Following the literature, we convexify the original problem by replacing the low-rank constraint imposed on the unobserved effects by the factor model with a nuclear norm penalty. Then, we compute our final estimator by solving the original optimization problem using a standard gradient descent method initialized at the preliminary nuclear norm regularized (NNR) estimator obtained in the first step. %

To demonstrate that our final estimator is asymptotically equivalent to the FE estimator of \citet{chen2021nonlinear} defined as the global solution of the original high-dimensional and non-convex optimization problem, we show that the original problem is locally convex in a \emph{shrinking} neighborhood around the true values of the parameters. Importantly, in the general nonlinear setting studied in this paper (with a growing number of factors and loadings as $N,T \rightarrow \infty$), the size of this neighborhood \emph{shrinks} at a certain rate. To establish the desired result, we characterize the rate of convergence of our preliminary NNR estimator, and demonstrate that this rate is \emph{sufficiently fast} to ensure that our NNR estimator, as well as the FE estimator, falls into that shrinking neighborhood with probability approaching one.%

The idea of using a preliminary NNR estimator to initialize local optimization in (globally) non-convex problems has been previously explored in the econometrics literature. For example, \citet{moon2018nuclear} originally proposed an analogous two-step approach for estimating linear panel models with interactive fixed effects. In particular, \citet{moon2018nuclear} also demonstrate that their two-step estimator is asymptotically equivalent to the LS estimator of \citet{bai2009panel}. However, extending these ideas and formally establishing an analogous equivalence result in the general nonlinear setting of \citet{chen2021nonlinear} is a non-trivial task involving additional technical challenges.

As highlighted above, the main conceptual and technical difference is that, in the general nonlinear case, the objective function is locally convex only in a shrinking neighborhood of the true parameters value. In particular, unlike in the linear case, one cannot simply profile out the fixed effects using the singular value decomposition, and demonstrate that the profiled objective function (only depending on the common parameters $\beta$) is locally convex. Since, in the nonlinear case, we cannot work with the profiled objective function directly, we establish local convexity of the original objective function by inspecting its Hessian taken with respect to all of the parameters including the loadings and the factors. The analysis is further complicated by the fact that the dimension of the parameter space and hence the dimension of the Hessian grows with $N,T \rightarrow \infty$. As a result, local convexity of the objective function can only be established in a shrinking neighborhood of the true parameters. Establishing local convexity in that neighborhood and characterizing at which rate it shrinks is a technical innovation of the paper that has important practical implications. Specifically, it imposes an additional requirement on the preliminary estimator's rate of convergence: unless the preliminary estimator falls into that shrinking convexity region with probability approaching one, one cannot guarantee that the second step local optimization finds the global solution. In particular, it turns out that the rate obtained by \citet[Theorem~A.1]{moon2018nuclear} for the NNR estimator in single-index models is \emph{not} sufficiently fast to satisfy this requirement.

To take advantage of the local convexity result described above, we provide a new improved error bound for the NNR estimator in nonlinear models with interactive fixed effects. Following the literature, we derive this result under a version of the restricted strong convexity (RSC) condition. While various variations of the RSC condition are routinely employed for deriving analogous results in low-rank models (e.g., \citealp{moon2018nuclear,chernozhukov2019inference,ma2022detecting}), these conditions are often difficult to verify. Unlike most previous studies, we provide a set of primitive conditions which can be used to verify the RSC condition in a wide range of panel models allowing, in particular, for predetermined covariates.

To further facilitate the practicality of our approach, we supplement it with concrete implementation details. In particular, we provide specific optimization algorithms, which can be used to efficiently compute the preliminary NNR and the final estimators, and establish their convergence. We also propose data-driven ways of choosing the regularization parameter involved in the first step and determining the unknown number of factors. To make the proposed method readily applicable, we provide its efficient implementation in an optimized R package, \texttt{NNRPanel}.\footnote{Available at: \url{https://github.com/wszhang-econ/NNRPanel}.}

Finally, we study the finite sample properties of our estimator in a number of numerical experiments, documenting its excellent performance and computational efficiency even in large panels with ${(N,T) = (1000,200)}$, and revisit the empirical application of \citet{chen2021nonlinear}.

\bigskip

This paper contributes to the literature on estimation of panel (and network) models with interactive fixed effects in two important ways.

First, we complement the theoretical analysis of nonlinear panel models provided by \citet{chen2021nonlinear} by proposing a new estimator that is asymptotically equivalent to their FE estimator. Importantly, unlike their FE estimator, our estimator does not involve solving a high-dimensional non-convex optimization problem, making it an attractive, if not the only available, computationally efficient alternative, which can be feasibly implemented even when both $N$ and $T$ are large. Our two-step approach to solving a non-convex optimization problem essentially extends the proposal of \citet{moon2018nuclear} to nonlinear settings. However, as explained above, establishing the asymptotic equivalence between the two-step and FE estimators in nonlinear settings is more nuanced: it involves careful establishing of local convexity of the criterion function in a shrinking neighborhood of the true parameters value, resulting in additional requirements imposed on the preliminary NNR estimator's convergence rate absent in the linear case studied by \citet{moon2018nuclear}.

Second, we also contribute to the literature on nuclear norm regularized estimation of low-rank models by extending the previously available results established by \citet{moon2018nuclear} and \citet{chernozhukov2019inference} for linear panel models to nonlinear settings.
By verifying the RSC condition in a wide class of nonlinear panel models, we improve on the result of \citet[Theorem~A.1]{moon2018nuclear}, which extends their original analysis to single-index models. Importantly, our analysis allows for predetermined covariates such as the outcome's lags, which are routinely used in panels, whereas the existing studies providing error bounds for the NNR estimator either only consider strictly exogenous covariates or do not verify the RSC condition at all. %

The idea of using nuclear norm regularization to turn the estimation of low-rank models, such as factor models, into a convex problem has been extensively applied in various settings in statistics and econometrics. In econometrics, its numerous recent applications include estimation of pure factor models  \citep{BaiNg2017,bai2019rank}, estimation of linear \citep{moon2018nuclear,chernozhukov2019inference,beyhum2019square,mugnier2025simple} and quantile panel regressions \citep{belloni2019high,wang2022low,feng_2023}, and treatment effect estimation \citep{athey_matrix_2021,fernandez2021low}. Nuclear norm relaxations have also been proved instrumental in constructing estimation and inference methods robust to weak factors \citep{armstrong2022robust} and missing data \citep{su2025estimation}. Other recent applications of nuclear norm regularization also include, among others, network recovery and community detection (\citealp{alidaee2020recovering} and \citealp{ma2022detecting}), and estimation of panel threshold models and high-dimensional VARs (\citealp{miao2020panel} and \citealp{miao2023high}).

Recent papers employing similar two-step estimation procedures in nonlinear panels also include \citet{chen2025high} and \citet{yao2025low}. \citet{chen2025high} study estimation of logistic panel models without covariates, whereas we (mainly) focus on the common parameters $\beta$.
In a closely related and independently developed paper, \citet{yao2025low} also studies estimation of nonlinear panel models with interactive fixed effects. In particular, \citet{yao2025low} allows for nonconvex model specifications such as the important class of random-coefficients models. This generality, however, comes at a cost: even after the low-rank constraint is convexified using the nuclear norm penalty, the first step of \citet{yao2025low}'s procedure still requires solving a nonconvex problem. Since the goal of our paper is to provide a theoretically justified and computationally feasible estimator that avoids nonconvex optimization, we focus on single-index models with convex link functions. As a result, we are able to verify a \emph{global} version of the RSC condition, allowing us to formally study the global properties of the NNR estimator. We also provide specific numerical algorithms for solving both first and second step optimization problems and establish their convergence.

    \bigskip    

    \noindent {\bf Notation} For any vector $u\in \mb{R}^n$, its Euclidean norm is denoted as $\|u\| = \left(u'u\right)^{\frac{1}{2}}$.  For any matrix $A\in \mb{R}^{m\times n}$, we use $A'$ to denote the transpose of $A$, and use  $\|A\|_{\mr{F}} = \left(\mr{trace}( A'A )\right)^{\frac{1}{2}}$ to denote the Frobenius norm. Furthermore, the singular values of $A$ are arranged in non-increasing order: $\psi_1\left(A\right)\geq \psi_2\left(A\right) \geq \ldots \geq \psi_{\min\{m, n\}}\left(A\right) \geq 0$. 
    The $\ell^2$ operator norm, $\|A\|_{\mr{op}} = \psi_1\left(A\right)$, is the maximum singular value of the matrix, and the nuclear norm is the sum of all singular values:
    $\|A\|_{\mr{nuc}} = \sum_{i=1}^{\min\{m, n\}}\psi_i\left(A\right)$. We also use $\|A\|_{\max} = \max_{i,j} |A_{ij}| $ to denote the element-wise max norm.  
    When $A$ is a square matrix, we use $\sigma_i(A)$ to denote $A$'s $i$-th largest eigenvalue. We also use $\psi_{\max}$, $\psi_{\min}$, $\sigma_{\max}$,  $\sigma_{\min}$ to denote the max/min singular values and max/min eigenvalues respectively. For any matrix $A$, define the coprojection matrix as $M_{A}: = \mb{I} - A(A'A)^{\dagger }A'$, where $\mb{I}$ denotes the identity matrix of appropriate size and the super-script~$\dagger$ denotes the Moore-Penrose generalized inverse. Finally, for any two square matrices $A$ and $B$ of the same dimension, we use $A \geq B$ to denote that $A - B$ is positive semi-definite, and $A > B$ to denote that $A - B$ is positive definite. We use the abbreviation wpa1 instead of with probability approaching one. 

    \bigskip

    The remainder of the paper is organized as follows. Section~\ref{sec:model} introduces the model, highlights the computational challenges  of the FE estimator, and describes the proposed two-step estimator. Section~\ref{sec:theory} presents the asymptotic equivalence between our two-step estimator and the FE estimator. Section~\ref{sec:implementation} provides practical implementation details, including optimization algorithms, data-driven selection of the regularization parameter, and determination of the number of factors. Section~\ref{sec:MC} provides numerical and empirical illustrations. Additional technical and numerical results, along with all proofs, are contained in the appendix.

    \section{The Model and Two-Step Estimation}\label{sec:model}

\subsection{The model}
    We observe data $\{(Y_{it}, X_{it})\}_{ 1 \leq  i \leq N, 1 \leq t\leq T}$, where $Y_{it}$ is a scalar outcome variable and $X_{it} \in \mathbb R^{d_X}$ is a vector of covariates. For concreteness, we adopt the standard panel notation with $i$ indexing units and $t$ indexing time periods, but it should be understood that the considered framework applies to general two-way settings. For example, in a directed network $i$ and $t$ could index senders and receivers (e.g., exporters and importers in an international trade network). The covariates $X_{it}$ could be strictly exogenous or predetermined, e.g., our framework also accommodates lagged outcomes as covariates in panels.

    Following \citet{chen2021nonlinear}, we assume that the (conditional) distribution of $Y_{it}$ belongs to a known family of distributions and is determined by the latent index $Y_{it}^*$, i.e., we assume that the (conditional) log-likelihood takes the form
    \begin{align}\label{eq:true_model}
        \log f (Y_{it}|X_{it},\lambda_{0,i},\gamma_{0,t}) = \ell (Y_{it}|Y^*_{it}), \quad Y^*_{it}  = X'_{it}\beta_0  + \lambda_{0, i}' \gamma_{0, t},
    \end{align}
    where $\ell(\cdot|Y_{it}^*)$ is a known log-likelihood function, and $\beta_0 \in \mathbb R^{d_X}$ is the parameter of interest. Here, ${\lambda_{0,i} \in \mathbb R^R}$ and $\gamma_{0, t} \in \mathbb R^R$ are unobserved interactive unit and time effects, commonly referred to as loadings and factors. This formulation is substantially more flexible than the routinely employed two-way fixed effects (TWFE) model, $\lambda_{0, i} + \gamma_{0, t}$, because it allows incorporating multidimensional heterogeneous individual responses $\lambda_{0, i}$ to time-varying aggregate shocks $\gamma_{0, t}$.\footnote{\citet{chen2021nonlinear} also argue that the interactive fixed effects model is sufficiently flexible to allow for homophily based on unobservables (as well as for degree heterogeneity) in network settings.}  In particular, the TWFE model corresponds to the special case of the interactive fixed effects model with $R = 2$,  where $\lambda_{0, i} = (\lambda_{i1}, 1)'$ and $\gamma_{0, t} = (1, \gamma_{ t1})'$.

    While the single-index formulation~\eqref{eq:true_model} is restrictive, it covers a number of important nonlinear models,  including binary response models such as Probit and Logit, and Poisson regression.

    \begin{example}[Binary response panel model]
        Consider the following binary response panel model
        \begin{align}
            \label{eq: binary choice example}
            Y_{it} = \bs{1}(X_{it}'\beta  + \lambda_i'\gamma_t- u_{it} \geq 0),
        \end{align}
        where the idiosyncratic errors $\{u_{it}\}$ are independent draws from a distribution with cumulative distribution function $F(\cdot)$.\footnote{E.g., $F(\cdot)$ can stand for the logistic or standard normal CDF in Logit and Probit models, respectively.}
        
        Special variations of \eqref{eq: binary choice example} with unobserved heterogeneity fully captured by additive unit and/or time effects are routinely used in empirical studies of panels with binary outcomes such as labor force participation. As explained above, the general specification \eqref{eq: binary choice example} allows not only for TWFE but also for heterogeneous responses of units to aggregate shocks. E.g., in the labor force participation example, $\lambda_i$ and $\gamma_t$ can represent multidimensional human capital and its market prices that vary over time, respectively. Importantly, our framework accommodates predetermined covariates $X_{it}$ such as lagged outcomes $Y_{i,t-1}$, $Y_{i,t-2}$, etc., which are often included in binary response panel models either as primary variables of interest or as controls.
        
        In this example, the conditional distribution of $Y_{it}$ given $Y_{it}^*$ takes the form
        \begin{align*}
            \mb{P}(Y_{it} = y\mid Y^*_{it}) = F(Y^*_{it})^{y} (1-F(Y^*_{it}))^{(1-y)},\quad y\in \{0, 1\}, 
        \end{align*}
        and the conditional log-likelihood is given by
        \begin{align*}
            \ell (Y_{it}|Y_{it}^*) = Y_{it} \log F(Y_{it}^*) + (1 - Y_{it}) \log (1 - F(Y_{it}^*)).
        \end{align*}
    \end{example}

    \begin{example}[Poisson network regression]
        Our analysis applies to network settings. For example, the studied framework generalizes the TWFE Poisson regression model (e.g., \citealp{SantosSilva2006}) commonly employed to analyze international trade data. Specifically, consider
        \begin{align*}
           Y_{ij} \sim \mr{ Poisson }\left(\exp\left( X_{ij}'\beta + \lambda_i'\gamma_j\right)\right),   
        \end{align*}
        where $Y_{ij}$ represents the export volume from country $i$ to $j$, and $X_{ij}$ is a collection of trade determinants used in the gravity analysis, e.g., the geographic distance between countries $i$ and $j$. This specification allows for rich patterns of unobserved heterogeneity, including degree heterogeneity and homophily based on latent characteristics. For example, homophily based on the quadratic distance $(\xi_i - \xi_j)^2$ can be captured by $\lambda_{i} \propto (\xi_i^2, 1, -2\xi_i)'$ and $\gamma_{j} \propto (1, \xi_j^2, \xi_j)'$.
        
        In this example, the conditional distribution of $Y_{ij}$ given $Y_{ij}^*$ takes the form
        \begin{align*}
            \mb{P}(Y_{ij} = y\mid Y^*_{ij}) = \frac{\exp (-\exp(Y^*_{ij}))(\exp(Y^*_{ij}))^{y}}{y!}, \quad y = 0,1,2,\ldots, 
        \end{align*}
        and the conditional log-likelihood is given by
        \begin{align*}
            \ell (Y_{ij}|Y_{ij}^*) = Y_{ij} Y_{ij}^* - \exp (Y_{ij}^*) - \log (Y_{ij}!). 
        \end{align*}
        We will revisit this specification in our empirical application in Section~\ref{ssec: empirical}.
    \end{example}

    As in \citet{chen2021nonlinear}, we consider the so-called large $N,T$ asymptotics with $N,T \rightarrow \infty$, whereas we treat both $d_X$ and $R$ as fixed. For now, we will also assume that the number of factors $R$ is known; we will discuss the estimation of $R$ in Section~\ref{sec:implementation}. Finally, we do not put additional restrictions on the relationship between the covariates and the unobserved effects, i.e., we adopt the fixed effects approach.

\subsection{Fixed Effects MLE Estimator and Computational Challenges}\label{ssec:challenges}

    \citet{chen2021nonlinear} proposed estimating the studied model by the fixed effects~(FE) MLE estimator maximizing the conditional log-likelihood jointly over the common parameters $\beta$, loadings  $\{\lambda_{i}\}_{ 1\leq i\leq N}$ and factors $\{\gamma_{ t}\}_{1\leq t\leq T}$. Specifically, the FE estimator $(\hat{\beta}_{\mr{FE}}, \hat{\Lambda}_{\mr{FE}}, \hat{\Gamma}_{\mr{FE}})$ solves 
    \begin{equation}\label{eq:FE_estimator}
        (\hat{\beta}_{\mr{FE}}, \hat{\Lambda}_{\mr{FE}}, \hat{\Gamma}_{\mr{FE}}) \in \argmin_{ \beta, \Lambda, \Gamma} \underbrace{-\frac{1}{NT} \sum_{i=1}^{N}\sum_{t=1}^{T} \ell(Y_{it}\mid X_{it}'\beta + \lambda_{ i}'\gamma_{ t})}_{\mc{L}_{NT}(\beta, \Lambda, \Gamma)},%
    \end{equation}
    where, for notational simplicity,  we collect the unobserved effects $\{\lambda_{i}\}_{ 1\leq i\leq N}$ and $\{\gamma_{ t}\}_{1\leq t\leq T}$ into matrices $\Lambda = \left(\lambda_{1}, \lambda_{2},\ldots, \lambda_{N} \right)'\in \mb{R}^{N\times R}$ and $\Gamma = \left(\gamma_{1}, \gamma_{2},\ldots, \gamma_{T} \right)'\in \mb{R}^{T\times R}$. Note that problem~\eqref{eq:FE_estimator} does not have a unique solution for $ \hat{\Lambda}_{\mr{FE}}$ and $\hat{\Gamma}_{\mr{FE}}$ and thus requires a normalization. We will abstract from this issue for now and discuss it in more detail in Section~\ref{sec:theory}.

    \citet{chen2021nonlinear} showed that the FE estimator of $\beta_0$  is $\sqrt{NT}$-consistent and asymptotically normal, with an asymptotic incidental parameter bias that can be corrected using various bias reduction methods.  
    However, despite these well-established theoretical properties, implementing the FE estimator remains a significant computational challenge.

    The key computational difficulty is the non-convexity of the objective function $\mc{L}_{NT}(\beta, \Lambda, \Gamma)$. To better understand this issue, we reformulate the original optimization problem into an alternative but equivalent form.  Let $\theta_{it} = \lambda_i'\gamma_t$ and collect $\theta_{it}$ into a matrix $\Theta\in \mb{R}^{N\times T}$. Note that since matrices $\Lambda$ and $\Gamma$ have at most rank $R$, the rank of $\Theta = \Lambda \Gamma'$ is also at most $R$. Likewise, any matrix $\Theta \in \mathbb R^{N \times T}$ such that $\mr{rank}(\Theta)\leq R$ can be represented as $\Lambda \Gamma'$ for some $\Lambda \in \mathbb R^{N \times R}$ and $\Gamma \in \mathbb R^{T \times R}$.\footnote{The representation of $\Theta$ with $\mr{rank}(\Theta)\leq R$ as $\Theta = \Lambda \Gamma'$ is not unique. If $\Theta = \Lambda \Gamma'$ for some $\Lambda$ and $\Gamma$, we also have $\Theta = \tilde \Lambda \tilde \Gamma'$ for $\tilde \Lambda = \Lambda G'$ and $\tilde \Gamma = \Gamma G^{-1}$ for any invertible matrix $G \in \mathbb R^{R \times R}$. This non-uniqueness manifests itself in the necessity of normalizing $\Lambda$ and $\Gamma$ in problem~\eqref{eq:FE_estimator} in order to ensure the uniqueness of $\hat{\Lambda}_{\mr{FE}}$ and $\hat{\Gamma}_{\mr{FE}}$.} Thus, problem~\eqref{eq:FE_estimator} can be equivalently reformulated as
    \begin{align}\label{eq:FE_rank}
        (\hat{\beta}_{\mr{FE}}, \hat{\Theta}_{\mr{FE}}) \in  \argmin_{\beta\in \mb{R}^{d_X}, \Theta \in \mathbb{R}^{N\times T} } \underbrace{ -\frac{1}{NT} \sum_{i=1}^{N}\sum_{t = 1}^{T} \ell(Y_{it}\mid X_{it}'\beta + \theta_{it})}_{\mc{L}_{NT}(\beta, \Theta )}, \quad \text{s.t. } \mr{rank}(\Theta)\leq R.
    \end{align}
    The non-convexity arises  from the rank constraint $\mr{rank}(\Theta)\leq R$: the set of matrices satisfying it is not convex since the sum of two rank-$R$ matrices could have a rank up to $2R$.

    The high-dimensional parameter space further exacerbates the computational challenges.  When dealing with non-convex optimization problems, it is common practice to start the optimization process with multiple initial values and select the solution that minimizes the objective function. This approach is generally considered effective for finding the global minimum with sufficient trials. However, for problems \eqref{eq:FE_estimator} and \eqref{eq:FE_rank} involving $d_X + R (N+T)$ parameters, this approach becomes intractable even for moderate values of $N$ and $T$.

    \begin{remark}
        \citet{chen2021nonlinear} proposed solving optimization problem~\eqref{eq:FE_estimator} using the EM-algorithm of \citet{chen2016estimation} initialized at multiple starting values. Unfortunately, this method does not overcome the computational challenge discussed above because the EM-algorithm of \citet{chen2016estimation} as well as EM-algorithms in general does not have global convergence guarantees in non-convex problems.
    \end{remark}

\subsection{Two-step estimation}
    To overcome the computational challenges faced by the FE estimator, we propose an alternative two-step estimation procedure. Our procedure does not involve solving a non-convex problem and can be efficiently computed even for large values of $N$ and $T$. Importantly, in Section~\ref{sec:theory}, we demonstrate that, under standard regularity conditions, our two-step estimator is asymptotically equivalent to the FE estimator, whose asymptotic properties have been established in \citet{chen2021nonlinear}. This means that, instead of trying to solve the non-convex and high-dimensional optimization problem~\eqref{eq:FE_estimator} directly, one could compute our two-step estimator and then combine it with the asymptotic theory developed by \citet{chen2021nonlinear} to construct confidence intervals for various parameters of interest, including $\beta$ and policy relevant counterfactuals such as average partial effects~(APEs).

    \bigskip

    Our estimation procedure involves the following two steps.

    \bigskip
        
    \noindent {\bf Step 1: Nuclear Norm Regularized (NNR) estimation}
    
    \noindent The goal of the first step is to construct an easily computable preliminary estimator of $(\beta_0, \Lambda_0, \Gamma_0)$ that is sufficiently close to the global minimizer in \eqref{eq:FE_estimator}. To this end, we consider a convex relaxation of problem~\eqref{eq:FE_rank} of the form
    \begin{align}\label{eq:nnr_definition}
        \left(\hat{\beta}_{\mr{nuc}}, \hat{\Theta}_{\mr{nuc}}\right) = \argmin_{\beta \in \mb{R}^{d_X}, \Theta\in \mb{R}^{N\times T}} \left\{ \mc{L}_{NT} \left(  \beta, \Theta\right) + \frac{\varphi_{NT}}{\sqrt{NT}} \|\Theta\|_{\mr{nuc}}\right\},
    \end{align}
    where $\|\Theta\|_{\mr{nuc}}$ denotes the nuclear norm of matrix $\Theta$, and $\varphi_{NT} > 0$ is a regularization parameter. We will refer to the solution of this problem $(\hat{\beta}_{\mr{nuc}}, \hat{\Theta}_{\mr{nuc}})$ as the nuclear norm regularized (NNR) estimator.
    
    Since $\|\Theta\|_{\mr{nuc}}$ is a convex function of $\Theta$, problem~\eqref{eq:nnr_definition} is convex when $\mc{L}_{NT} \left(  \beta, \Theta\right)$ is a convex function of $\beta$ and $\Theta$. This condition is satisfied in important nonlinear models such as Logit, Probit, and Poisson models. Thanks to the convexity of problem~\eqref{eq:nnr_definition}, the NNR estimator can be efficiently computed using, for example, a proximal gradient descent method (e.g., \citealp{hastie2015statistical}) even when the parameter space is high-dimensional. We provide a specific optimization algorithm and a data-dependent recommendation for choosing the regularization parameter $\varphi_{NT}$ in Section~\ref{sec:implementation}.

    Notice that problem~\eqref{eq:nnr_definition} can be equivalently rewritten as
    \begin{align*}
        \left(\hat{\beta}_{\mr{nuc}}, \hat{\Theta}_{\mr{nuc}}\right) = \argmin_{\beta \in \mb{R}^{d_X}, \Theta\in \mb{R}^{N\times T}} \mc{L}_{NT} \left(  \beta, \Theta\right), \quad \text{s.t. } \|\Theta\|_{\mr{nuc}} \leqslant C_{\varphi_{NT}}   
    \end{align*}
    for an appropriately chosen $C_{\varphi_{NT}} > 0$ determined by $\varphi_{NT}$. Thus, problem~\eqref{eq:nnr_definition} can be seen as a convexification of problem~\eqref{eq:FE_rank}, where the non-convex rank constraint is replaced by the slightly looser yet convex constraint $\|\Theta\|_{\mr{nuc}} \leqslant C_{\varphi_{NT}}$. Analogously to LASSO using the $\ell_1$-regularization to induce sparsity of the solution in a high-dimensional regression, the nuclear norm regularization (i.e., the $\ell_1$-regularization of the singular values of $\Theta$) induces $\hat \Theta_{\mr{nuc}}$ to have low rank (i.e., sparsity of its singular values).

    Finally, the nuclear norm regularized estimators $(\hat{\Lambda}_{\mr{nuc}}, \hat{\Gamma}_{\mr{nuc}})$ are obtained through the singular value decomposition of $\hat{\Theta}_{\mr{nuc}}$. Specifically, let 
    $\hat{\Theta}_{\mr{nuc}}/\sqrt{NT} = \hat{U}\hat{D}\hat{V}'$,
    where $\hat{U}\in \mb{R}^{N\times \min\{N, T\}}$ and $\hat{V}\in \mb{R}^{T\times \min\{N, T\}}$ are matrices of left and right (orthonormal) singular vectors of $\hat{\Theta}_{\mr{nuc}}$, and $\hat{D}$ is a diagonal matrix with  the singular  values of $\hat{\Theta}_{\mr{nuc}}/\sqrt{NT}$ (arranged in non-increasing order) on its diagonal. Let $\hat{U}_{[:, 1:R]}$ and $\hat{V}_{[:, 1:R]}$ denote the matrices containing the first $R$ columns of $\hat{U}$ and $\hat{V}$, respectively, and $\hat{D}_{[1:R, 1:R]}$ denote the upper-left $R\times R$ diagonal block of $\hat{D}$. We compute  $(\hat{\Lambda}_{\mr{nuc}}, \hat{\Gamma}_{\mr{nuc}})$ as follows: 
    \begin{gather}\label{eq:space_definition}
        \hat{\Lambda}_{\mr{nuc}} =  \sqrt{N} \hat{U}_{[:, 1:R]}\hat{D}^{1/2}_{[1:R, 1:R]}, \quad \hat{\Gamma}_{\mr{nuc}} = \sqrt{T} \hat{V}_{[:, 1:R]}\hat{D}^{1/2}_{[1:R, 1:R]}.
    \end{gather}

    \bigskip
    
    \noindent {\bf Step 2: Local estimation}
    
    \noindent While, with appropriately chosen $\varphi_{NT}$, the NNR estimator is consistent for the true values $(\beta_0, \Lambda_0, \Gamma_0)$, it suffers from the regularization bias. To improve on the NNR estimator, in the second step, we solve the original optimization problem~\eqref{eq:FE_estimator} using a standard gradient descent method with $(\hat{\beta}_{\mr{nuc}}, \hat{\Lambda}_{\mr{nuc}}, \hat{\Gamma}_{\mr{nuc}})$ as the initial values. While the original problem~\eqref{eq:FE_estimator} is non-convex, availability of the NNR estimator allows us to guarantee that standard local optimization methods initialized at $(\hat{\beta}_{\mr{nuc}}, \hat{\Lambda}_{\mr{nuc}}, \hat{\Gamma}_{\mr{nuc}})$ converge to the global solution $(\hat{\beta}_{\mr{FE}}, \hat{\Lambda}_{\mr{FE}}, \hat{\Gamma}_{\mr{FE}})$. In particular, in Section~\ref{sec:implementation}, we provide a specific gradient descent algorithm and establish its convergence guarantees.\footnote{In principle, instead of using a gradient descent method, it is possible also employ an EM-algorithm (see, e.g., \citealp{chen2016estimation,chen2021nonlinear}) initialized at $(\hat{\beta}_{\mr{nuc}}, \hat{\Lambda}_{\mr{nuc}}, \hat{\Gamma}_{\mr{nuc}})$.}

    Specifically, to demonstrate that our two-step estimator is (asymptotically) equivalent to the FE estimator, in Section~\ref{sec:theory}, we show that, with probability approaching one, (i) the objective function $\mathcal L_{NT} (\beta, \Lambda, \Gamma)$ is strictly convex in a \emph{shrinking} neighborhood around the true values $(\beta_0, \Lambda_0, \Gamma_0)$, after normalization of the loadings and factors, and (ii) both the NNR and the FE estimators fall into this neighborhood. The technical difficulty here is that, since the dimension of the parameter space grows with $N,T \rightarrow \infty$, the size of the local neighborhood, in which $\mathcal L(\beta,\Lambda,\Gamma)$ remains convex, shrinks at a certain rate. To establish the desired result, we characterize this rate, and demonstrate that the NNR and the FE estimator have sufficiently fast rates of convergence to fall into that neighborhood with probability approaching one.

    \bigskip
    
    \noindent {\bf Bias correction and inference}

    \noindent Since our two-step estimator is asymptotically equivalent to the FE estimator, it also follows the same asymptotic distribution previously derived by \citet{chen2021nonlinear}. In particular, similar to the FE estimator, the two-step estimator of $\beta_0$ also suffers from the incidental parameter bias caused by the necessity to estimate a large number of nuisance parameters. Implementation details for the analytical and split-panel jackknife bias correction methods are provided in Appendix~\ref{appendix:bias_correction}. For a more detailed discussion of various bias correction and inference methods available for the FE estimator, we refer the reader to \citet{chen2021nonlinear} and \citet{xu2026bootstrap}.   %

    \bigskip
    
    \noindent {\bf Practical implementation}

    \noindent We provide additional practical implementation details in Section~\ref{sec:implementation}, including specific optimization algorithms used for both steps, as well as data=driven procedures for choosing the regularization parameter~$\varphi_{NT}$ and the number of factors $R$. An optimized implementation of our two-step estimator in R, including bias correction and calculation of standard errors, is also readily available in our package~\texttt{NNRPanel}.

    \section{Asymptotic Analysis}\label{sec:theory}%
    
    In this section, we establish the consistency of the NNR estimator and derive its convergence rate. We also demonstrate that the original optimization problem~\eqref{eq:FE_estimator} is locally convex in a (shrinking) neighborhood of the true values of the parameters. Combining these results, we argue that our two-step estimator is asymptotically equivalent to the FE estimator.

\subsection{Consistency of NNR Estimator}

    In this section, we establish the consistency of $(\hat{\beta}_{\mr{nuc}}, \hat{\Theta}_{\mr{nuc}})$ as in~\eqref{eq:nnr_definition}, as well as of the associated nuisance estimators $(\hat{\Lambda}_{\mr{nuc}}, \hat{\Gamma}_{\mr{nuc}})$ defined in~\eqref{eq:space_definition}. To fix ideas, we will first focus on strictly exogenous covariates $X_{it}$. We formally extend our analysis to more general settings allowing for predetermined covariates (e.g., the lagged outcomes) in Appendix~\ref{ssec: predetermined covariates}.

    For notational simplicity,  we collect $X_{it, d}$ into covariate matrices $X_{d} \in \mb{R}^{N\times T}$ for each $d= 1,2,\ldots, d_X$, and let $X$ be the collection of all covariate matrices $X= \{X_{1}, \ldots, X_{d_X}\}$.  Whenever it does not cause confusion, for any $Y^*$, we abbreviate $\ell_{it}(Y^*) = \ell (Y_{it}\mid Y^* )$ to denote the  log-likelihood evaluated at index~$Y^*$. We further denote the derivatives of the log-likelihood with respect to the index by $\dot{\ell}_{it}(\cdot), \ddot{\ell}_{it}(\cdot), \ldots$. In addition, we use $\mb{P}_{X, \Lambda_0, \Gamma_0}(\cdot) = \mb{P}(\cdot\mid X, \Lambda_0, \Gamma_0)$ and $\mb{E}_{X, \Lambda_0, \Gamma_0}(\cdot) = \mb{E}(\cdot\mid X, \Lambda_0, \Gamma_0)$ to denote the conditional probability and expectation given $X, \Lambda_0, \Gamma_0$, respectively.

    \bigskip
    
    We will maintain the following regularity conditions throughout this section.

    \begin{assumption}[Regularity conditions]\label{assumption:regularity}
        \leavevmode
        \begin{enumerate}[label=(\roman*)]
            \item \label{item:sampling} \textbf{(Sampling)} For all $i = 1,2,\ldots, N$ and $t=1,2,\ldots, T$, conditional on $(X,  \Lambda_{0}, \Gamma_{0})$, $\{Y_{it}\}_{1\leq i \leq N, 1\leq t \leq T} $ are independent across $i$ and $t$ and distributed as in \eqref{eq:true_model}.
            
            \item \label{item:boundedness} \textbf{(Boundedness)} The parameter spaces for $\beta$, $\lambda_i$, and $\gamma_t$ are uniformly bounded  for all $i, t, N, T$.  In addition,  there exists a constant $\rho_X>0$ such that $\max_{d=1,\ldots, d_X}\|X_d\|_{\max} \leq \rho_X$ for all $i, t, N, T$.
            
            \item \label{item:smoothness} \textbf{(Smoothness and convexity)} $-\ell_{it}(\cdot)$ is four times continuously differentiable.
            Furthermore, $-\ell_{it}(\cdot)$ is strictly convex  with $0 < b_{\min} \leq -\ddot{\ell}_{it}( X_{it}' \beta + \lambda_i' \gamma_t) \leq b_{\max} <\infty$  almost surely for all $\beta, \lambda_i, \gamma_t$ in the parameter space  uniformly over $i, t, N, T$.
            
            \item \label{item:strong_factors} \textbf{(Strong factors)} $\frac{1}{N} \sum_{i=1}^{N}\lambda_{0, i}\lambda_{0, i}' \cp \Sigma_{\lambda}$ and $\frac{1}{T} \sum_{t=1}^{T}\gamma_{0, t}\gamma_{0, t}'\cp \Sigma_{\gamma}$, where $ \Sigma_{\lambda} >0$ and  $\Sigma_{\gamma} >0$. In addition, the eigenvalues of $\Sigma_{\lambda}\Sigma_{\gamma}$ are distinct. 
            
            \item \label{item:X_generalized_nonlinearity} \textbf{(Generalized non-collinearity)} For any $\Gamma\in \mb{R}^{T\times R}$, let $M_{\Lambda_0}$ and $M_{\Gamma}$ be coprojection matrices of $\Lambda_0$ and $\Gamma$, respectively.  The $d_{X}\times d_{X}$ matrix $D(\Gamma)$ with elements 
            \begin{align*}
                D(\Gamma)_{d_1, d_2} = \frac{1}{NT} \mr{Tr} \left(M_{\Lambda_0}X_{d_1} M_{\Gamma}X_{d_2}'\right), \quad d_{1}, d_{2}  = 1, \ldots, d_X, 
            \end{align*} 
            satisfies $\inf_{\Gamma \in \mb{R}^{T\times R}} \sigma_{\min} (D(\Gamma) )>0$ wpa1. 
        \end{enumerate}
    \end{assumption}

    Assumption~\ref{assumption:regularity}\ref{item:sampling} imposes the conditional independence of $Y_{it}$ across $i$ and $t$. This aligns with the sampling assumption in \citet{chen2021nonlinear} and is  primarily applicable in contexts where $X_{it}$ is strictly exogenous. It is also natural in network settings when the nature of interactions is primarily bilateral and the interdependencies of the outcomes are captured by the fixed effects. We consider a more general version of the sampling process allowing for predetermined covariates in Appendix~\ref{ssec: predetermined covariates}.

    Assumption~\ref{assumption:regularity}\ref{item:boundedness} imposes  boundedness on the parameter spaces for $\beta$,  $\Lambda$,  and $\Gamma$, as well as the boundedness on covariates. This assumption is widely adopted in the literature to derive concentration bounds (see, for example, \citealp{chernozhukov2019inference}, \citealp{chernozhukov2023inference}, and \citealp{ma2022detecting}). It is worth noting that \citet{fernandez2016individual}  and \citet{chen2021nonlinear} do not impose the boundedness of $\lambda_i$, $\gamma_t$, or $X_{it}$  because their analyses focus on the local properties of the objective function. In contrast, we impose stronger assumptions on the parameter space to study global properties and establish global convergence guarantees for the NNR estimator.\footnote{The boundedness of $\Lambda$ and $\Gamma$ could, in principle, be replaced by assuming that $\{\lambda_i\}_{1\leq i\leq N}$ and $\{\gamma_t\}_{1\leq t\leq T}$ are sub-Gaussian sequences that are independent across $i$ and weakly dependent over $t$, respectively.}

    Assumption~\ref{assumption:regularity}\ref{item:smoothness} is commonly adopted  in the nonlinear panel regression literature (see \citealp{fernandez2016individual} and \citealp{chen2021nonlinear})  and is satisfied by Logit, Probit, and Poisson models. 
    
    Assumption~\ref{assumption:regularity}\ref{item:strong_factors} requires the factors to be strong. This condition is standard in the factor model literature.\footnote{Developing estimation and inference methods robust to weak factors is an important but highly nontrivial problem, even in linear panels; see \citet{armstrong2022robust}. In this paper, we simply follow the set-up of \citet{chen2021nonlinear} and leave the important problem of allowing for weak factors in nonlinear models for future research.} Additionally, the boundedness of the nuisance parameter space ensures that the maximum eigenvalues of $\Sigma_{\lambda}$ and $\Sigma_{\gamma}$ are bounded. The additional assumption that $\Sigma_{\lambda}\Sigma_{\gamma}$  has distinct eigenvalues is not crucial and is introduced purely to simplify the discussion of technical aspects in the main text. In Appendix~\ref{ssec: distinct eigenvalues}, we demonstrate that this assumption can be relaxed without affecting our main results.

    Assumption~\ref{assumption:regularity}\ref{item:X_generalized_nonlinearity} is a generalized non-collinearity condition that rules out covariates that do not display variation in the individual and time dimensions, such as time-invariant or individual-invariant regressors. This condition is identical to Assumption 1(vii) in \citet{chen2021nonlinear}, and we the reader to that paper for further discussion. In fact, we do \emph{not} use this assumption in our proofs but we impose it here in order to be able to invoke the results of \citet{chen2021nonlinear} characterizing the asymptotic properties of the FE estimator.

    \bigskip 

    We now turn to a key condition for establishing the consistency of our NNR estimator, the restricted strong convexity (RSC) condition.\footnote{Originally introduced by \citet{negahban2012unified}, the RSC condition (in its various forms) has been widely applied in problems involving estimation of low-rank matrices, including matrix completion \citep{negahban2012restricted}, reduced-rank regression \citep{rohde2011estimation}, latent community detection \citep{ma2022detecting}, and econometric analysis of panel models with interactive unobserved heterogeneity \citep{moon2018nuclear, chernozhukov2019inference}.} As in models with a fixed number of parameters, establishing error bounds for NNR estimators requires analyzing the Hessian of the objective function $\mc{L}(\beta, \Theta)$ and demonstrating that it is positive definite, implying that the objective function is strongly convex. However, in the context of problem~\eqref{eq:nnr_definition}, it is not possible for the Hessian matrix to be positive definite, as the number of parameters grows with $N$ and  $T$ and exceeds the number of observations. Nonetheless, thanks to the nuclear norm penalty in \eqref{eq:nnr_definition}, it is sufficient to require $\mc{L}(\beta, \Theta)$ to be strictly convex only over a restricted part of the parameter space in which $\Theta$ is approximately low-rank.

    To elaborate, for any $(\beta, \Theta)$,  the second-order Taylor remainder of the objective function $\mc{L}(\beta, \Theta)$ around the true parameter $(\beta_0, \Theta_0)$ satisfies 
    \begin{equation}\label{eqref:second_order_convexity}
        \frac{1}{NT}\sum_{i=1}^{N}\sum_{t=1}^{T} \frac{-\ddot{\ell}_{it}(X_{it}'\tilde{\beta} + \tilde{\theta}_{it})}{2} (X_{it}' \Delta_{\beta} + \Delta_{\theta_{it}} )^2   \geq \frac{b_{\min}}{2}  \underbrace{\frac{1}{NT}\sum_{i=1}^{N}\sum_{t=1}^{T} (X_{it}' \Delta_{\beta} + \Delta_{\theta_{it}} )^2}_{\mc{E}_{NT} (\Delta_{\beta}, \Delta_{\Theta}) }, 
    \end{equation}
    where $(\tilde{\beta}, \tilde{\Theta})$ lies between $(\beta, \Theta)$ and $(\beta_0, \Theta_0)$, $\Delta_{\beta} = \beta - \beta_0$, $\Delta_{\theta_{it}} = \theta_{it} - \theta_{0,it}$, and the inequality follows from the fact that the second-order derivative $-\ddot{\ell}_{it}(\cdot)$ is uniformly bounded below by $b_{\min}$ (Assumption~\ref{assumption:regularity}\ref{item:smoothness}). Thus, it suffices to study the properties of $\mc{E}_{NT}(\Delta_{\beta}, \Delta_{\Theta})$ to ensure that the objective function is strongly convex over a restricted parameter space. Specifically, we introduce the following assumption.

    \begin{assumption}[Restricted strong convexity (RSC)]\label{assumption:RSC}
        For any $c_0 > 0$, there exist constants ${\kappa, \eta >0}$, independent of $N, T$,  such that for any $(\Delta_{\beta}, \Delta_{\Theta})\in \mc{C}_1 \cap \mc{C}_2$, 
        \begin{align}\label{eq:RSC}
            \frac{1}{NT}\sum_{i=1}^{N}\sum_{t=1}^{T}(X_{it}' \Delta_{\beta} + \Delta_{\theta_{it}} )^2 \geq \kappa \left(\|\Delta_{\beta}\|^2 + \frac{1}{NT}\|\Delta_{\Theta}\|_{\mr{F}}^2\right) - \eta \frac{N+T}{NT} (\log(NT))^2, \quad \text{wpa1},
        \end{align}
        where
        \begin{gather*}
            \mc{C}_1 := \left\{(\Delta_{\beta}, \Delta_{\Theta})\in (\mb{R}^{d_X}\times \mb{R}^{N\times T})\mid \|M_{\Lambda_0}\Delta_{\Theta}M_{\Gamma_0}\|_{\mr{nuc}} \leq c_0 (\sqrt{NT}\|\Delta_{\beta}\| + \|\Delta_{\Theta} - M_{\Lambda_0}\Delta_{\Theta}M_{\Gamma_0}\|_{\mr{nuc}})\right\}, \\
            \mc{C}_2 := \left\{ (\Delta_{\beta}, \Delta_{\Theta})\in (\mb{R}^{d_X}\times \mb{R}^{N\times T})\mid \|\Delta_{\beta}\|^2 + \frac{1}{NT}  \|\Delta_{\Theta}\|_{\mr{F}}^2 \geq  \sqrt{\frac{\log (NT)}{NT}}\right\}. 
        \end{gather*}
    \end{assumption}

    Assumption~\ref{assumption:RSC} controls the behavior of $\mc{E}_{NT} (\Delta_{\beta}, \Delta_{\Theta})$ defined in \eqref{eqref:second_order_convexity} over the restricted parameter space of interest $\mc{C}_1\cap \mc{C}_2$.\footnote{Notice that $\kappa$, $\eta$, and $\mathcal C_1$ introduced in Assumption~\ref{assumption:RSC} all depend on $c_0$,  but we suppress this dependence for ease of notation.} $\mc{C}_1$ can be viewed as an approximately low-rank space. The term ${\Delta_{\Theta} - M_{\Lambda_0}\Delta_{\Theta}M_{\Gamma_0}}$ on the right-hand side represents the component that can be explained by $\Lambda_0$ and $\Gamma_0$,  serving as a  low-rank approximation. In contrast, the  left-hand side, $M_{\Lambda_0}\Delta_{\Theta}M_{\Gamma_0}$,  corresponds to the residual  of $\Delta_{\Theta}$ that cannot be explained by $\Lambda_0$ and $\Gamma_0$, interpreted as the low-rank approximation residual. Therefore, $\mc{C}_1$ consists of matrices whose low-rank approximation residuals  (in terms of nuclear norm) are  small compared to their low-rank approximation (along with the estimation error of $\beta$). Importantly, it is sufficient to control the behavior of $\mc{E}_{NT} (\Delta_{\beta}, \Delta_{\Theta})$ over $\mathcal C_1$ because the appropriately chosen nuclear norm penalty in \eqref{eq:nnr_definition} enforces $(\Delta \hat{\beta}_{\mr{nuc}}, \Delta \hat{\Theta}_{\mr{nuc}}) \in \mathcal C_1$ for some $c_0$ wpa1.\footnote{See Lemma~\ref{lemma:cone} for a formal result and discussion.}
    
    The set $\mc{C}_2$ is introduced to restrict our attention to scenarios of primary interest. Since we can directly obtain the bounds $\|\Delta_{\beta}\|_{2}^2 \leq  \sqrt{\frac{\log (NT)}{NT}}$ and $\frac{1}{NT}  \|\Delta_{\Theta}\|_F^2 \leq  \sqrt{\frac{\log (NT)}{NT}}$ for matrices that do not belong to this space, focusing on $\mc{C}_2$ simplifies the analysis without affecting the results of this section. %

    Notice that inequality~\eqref{eq:RSC} also introduces an additional tolerance term $\eta \frac{N+T}{NT} (\log(NT))^2$ to account for the randomness in $X_{it}$. 
    Combined with~\eqref{eqref:second_order_convexity}, \eqref{eq:RSC} ensures that $\mc{L}(\beta, \Theta)$ is strongly convex (up to the tolerance term) over the restricted space $\mc{C}_1 \cap \mc{C}_2$, even in the high-dimensional setting.

    \bigskip
    
    While various versions of the RSC condition are routinely employed in the low-rank estimation literature (e.g., \citealp{moon2018nuclear,chernozhukov2019inference,ma2022detecting}), they are typically introduced as high-level assumptions and can be challenging to verify. To address this issue, \citet{chernozhukov2019inference} provided sufficient conditions for verifying RSC in panels with strictly exogenous covariates. In this paper, we extend the related results of \citet{chernozhukov2019inference} and provide a set of easily verifiable primitive conditions sufficient for the RSC to hold in settings with \emph{predetermined} $X_{it}$, including the outcome's lags. Since predetermined regressors are ubiquitous in applied work, we believe that this result, formalized by the lemma below, is an important technical contribution of the paper, broadening the applicability of low-rank estimators and enabling empirical researchers to adopt them with greater confidence.

        \begin{lemma}[Sufficient conditions for RSC]\label{lemma:sufficient_RSC}
            Under Assumptions~\ref{assumption:regularity_pre} and \ref{assumption:conditional_independence_RSC} provided in Appendix~\ref{sec:extension_theory}, Assumption~\ref{assumption:RSC} is satisfied. %
        \end{lemma}

    To streamline the exposition, we will still state our main results of this section using Assumption~\ref{assumption:RSC} and defer the discussion of Lemma~\ref{lemma:sufficient_RSC} and the related primitive conditions to Appendix~\ref{appendix:RSC}.

    \bigskip

    For notational simplicity, we impose the sign constraint (e.g., fixing the sign of each factor) and the normalization constraint on the nuisance parameters $(\Lambda_0, \Gamma_0)$ such that $\Lambda_0'\Lambda_0/N$ and $\Gamma_0'\Gamma_0/T$ are diagonal, with $\Lambda_0'\Lambda_0/N = \Gamma_0'\Gamma_0/T$. These constraints are consistent with the construction of $(\hat{\Lambda}_{\mr{nuc}}, \hat{\Gamma}_{\mr{nuc}})$ in~\eqref{eq:space_definition}. The feasibility of these constraints  is provided in Appendix~\ref{appendix:normalization}.
    
    The following theorem establishes the convergence rates of the NNR estimators.
    
    \begin{theorem}\label{thm:consistency}
        For any $\alpha>0$ such that  $\varphi_{NT} \geq (1+\alpha) \max\{ \|\nabla_{\beta}\mc{L}_{NT}\left(\beta_0 , \Theta_0\right)\|_{2}, \sqrt{NT}\|\nabla_{\Theta}\mc{L}_{NT}\left(\beta_0 , \Theta_0\right)\|_{\mr{op}} \}$, under Assumptions~\ref{assumption:regularity} and \ref{assumption:RSC}, there exist constants  $c_1, c_2>0$ that do not depend on $N, T$ such that, as $N, T\rightarrow\infty$, wpa1, 
        \begin{align*}
            \| \hat{\beta}_{\mr{nuc}} - \beta_0\| & \leq c_1 \left(\varphi_{NT} + \log (NT)/\sqrt{\min\{N, T\}}\right), \\
            \frac{1}{\sqrt{NT}}\|\hat{\Theta}_{\mr{nuc}} - \Theta_0\|_{\mr{F}} & \leq  c_1 \left(\varphi_{NT} + \log (NT)/\sqrt{\min\{N, T\}}\right).  
        \end{align*}
        In addition, wpa1, 
        \begin{align*}      
            \frac{1}{\sqrt{N}}\|\hat{\Lambda}_{\mr{nuc}} - \Lambda_0 \|_{\mr{F}} & \leq c_2  \left(\varphi_{NT} + \log (NT)/\sqrt{\min\{N, T\}} \right),  \\
            \frac{1}{\sqrt{T}}\|\hat{\Gamma}_{\mr{nuc}} - \Gamma_0 \|_{\mr{F}} & \leq c_2  \left(\varphi_{NT} + \log (NT)/\sqrt{\min\{N, T\}} \right). 
        \end{align*} 
    \end{theorem}

    Theorem~\ref{thm:consistency} establishes that, for a sufficiently large regularization parameter $\varphi_{NT}$, the NNR estimator $\hat{\beta}_{\mr{nuc}}$ converges to $\beta_0$ at a rate of order $\varphi_{NT} + \log(NT)/\sqrt{\min\{N,T\}}$. This convergence rate coincides with that obtained by \citet{moon2018nuclear} for the linear panel model. The estimator $\hat{\Theta}_{\mr{nuc}}$ achieves the same convergence rate under the normalized Frobenius norm, and the associated estimators of the latent factors $(\hat{\Lambda}_{\mr{nuc}}, \hat{\Gamma}_{\mr{nuc}})$ inherit this rate as well.
    
    The convergence rates in Theorem~\ref{thm:consistency} depend on the choice of the regularization parameter $\varphi_{NT}$. While $\varphi_{NT}$ must be sufficiently large to guarantee consistency, choosing $\varphi_{NT}$ larger than $O\left(\log (NT)/ \sqrt{  \min\{N, T\}}\right)$ would lead to slower rates of convergence. To further investigate this issue, we demonstrate that choosing $\varphi_{NT} = O\left(\log (NT)/ \sqrt{  \min\{N, T\}}\right)$ is compatible with the restriction imposed by Theorem~\ref{thm:consistency}, and thus achieves the fastest rate implied by the theorem. This result is formalized by the following corollary.

    \begin{corollary}\label{corollary:consistency}
        Suppose that the hypotheses of Theorem~\ref{thm:consistency} hold, and let $\varphi_{NT} = O\left(\log (NT)/ \sqrt{  \min\{N, T\}}\right)$. Then there exist constants  $c_3, c_4>0$ that do not depend on $N, T$ such that, as $N, T\rightarrow\infty$, wpa1, 
        \begin{align*}
            \| \hat{\beta}_{\mr{nuc}} - \beta_0\| & \leq c_3 \log (NT)/\sqrt{\min\{N, T\}},  \\
            \frac{1}{\sqrt{NT}}\|\hat{\Theta}_{\mr{nuc}} - \Theta_0\|_{\mr{F}} & \leq   c_3 \log (NT)/\sqrt{\min\{N, T\}}. 
        \end{align*}
        In addition, wpa1, 
        \begin{align*}      
            \frac{1}{\sqrt{N}}\|\hat{\Lambda}_{\mr{nuc}} - \Lambda_0 \|_{\mr{F}} & \leq c_4   \log (NT)/\sqrt{\min\{N, T\}}, \\
            \frac{1}{\sqrt{T}}\|\hat{\Gamma}_{\mr{nuc}} - \Gamma_0 \|_{\mr{F}} & \leq c_4  \log (NT)/\sqrt{\min\{N, T\}}. 
        \end{align*}
    \end{corollary}

    Corollary~\ref{corollary:consistency} establishes the convergence rate for the studied NNR estimator under the optimal choice of $\varphi_{NT}$. It is instructive to compare this result with the existing rates for NNR estimators previously obtained in the literature for different settings. In particular, up to an additional logarithmic factor, our rate coincides with the rates established by \citet{moon2018nuclear}, \citet{chernozhukov2019inference}, and \citet{su2025estimation} for linear panel models. In nonlinear models, \citet{ma2022detecting} established the same rate in a network formation model. Importantly, Corollary~\ref{corollary:consistency} improves on \citet[Theorem~A.1]{moon2018nuclear},  which extends their original analysis to single-index models. As we explain in the following section, this improvement in the convergence rate is crucial for demonstrating that the first-step estimator $(\hat{\beta}_{\mr{nuc}}, \hat{\Lambda}_{\mr{nuc}}, \hat{\Gamma}_{\mr{nuc}})$ is sufficiently close to the true values of the parameters, so the second-step optimization problem initialized at the NNR estimator is (locally) convex.

\subsection{Local convexity}
     
    In this section, we establish the asymptotic equivalence between our two-step estimator and the FE estimator. This result builds on the consistency of the NNR estimator and the local convexity of the objective function $\mc{L}_{NT}(\beta, \Lambda, \Gamma)$ around the true values of the parameters.

    Since  the NNR  estimator $(\hat{\beta}_{\mr{nuc}}, \hat{\Lambda}_{\mr{nuc}}, \hat{\Gamma}_{\mr{nuc}})$  is used to initialize the second-step optimization problem~\eqref{eq:FE_estimator},  its consistency ensures that the starting values lie within a shrinking neighborhood of the true parameters as $N, T \rightarrow \infty$. Thus, to establish the desired result, it suffices to focus on studying the local properties of  $\mc{L}_{NT}(\beta, \Lambda, \Gamma)$ within these shrinking neighborhoods. To formalize the argument, let $\{\delta_{NT}\}$ be a sequence of shrinking radii such that $\delta_{NT}\rightarrow 0$ as $N, T\rightarrow \infty$, 
    and define the shrinking neighborhood around the true parameters $(\beta_0, \Lambda_0, \Gamma_0)$ as follows:
    \begin{align}\label{eq:neighborhood}
        \mc{B}_{\delta_{NT}} := \bigg\{(\beta, \Lambda, \Gamma)  \mid  & \|\beta-\beta_0\|, \frac{1}{\sqrt{N}}\|\Lambda - {\Lambda}_0 \|_{\mr{F}} , \frac{1}{\sqrt{T}}\|\Gamma - {\Gamma}_0 \|_{\mr{F}} \leq \delta_{NT} \bigg\}. 
    \end{align} 
    The neighborhood $\mc{B}_{\delta_{NT}}$ consists of parameters whose distances from the true values $(\beta_0, \Lambda_0, \Gamma_0)$ are less than $\delta_{NT}$.

    We establish the desired asymptotic equivalence result by demonstrating that with a properly chosen $\delta_{NT}$, the following conditions hold:  (1) the NNR estimator falls within the shrinking neighborhood, i.e., $(\hat{\beta}_{\mr{nuc}}, \hat{\Lambda}_{\mr{nuc}}, \hat{\Gamma}_{\mr{nuc}})\in \mc{B}_{\delta_{NT}}$ wpa1, (2) the FE estimator, as the global minimizer of problem~\eqref{eq:FE_estimator},  also lies within the shrinking neighborhood (up to rotation) wpa1, and (3) the objective function $\mc{L}_{NT}(\beta, \Lambda, \Gamma)$ is strictly convex within the neighborhood $\mc{B}_{\delta_{NT}}$. Together, these conditions ensure that, wpa1, (1) any locally convergent solver initialized at the NNR estimator is guaranteed to find the global minimum of problem~\eqref{eq:FE_estimator}, and, consequently, (2) our two-step estimator is asymptotically equivalent to the FE estimator.

    Let $\delta_{NT} = \log(NT) \min\{N^{-3/8}, T^{-3/8}\}$. The NNR estimator falls within the corresponding shrinking neighborhood $\mc{B}_{\delta_{NT}}$ wpa1 by Corollary~\ref{corollary:consistency} (provided that $\varphi_{NT}$ is chosen appropriately). Similarly, the FE estimator also lies within $\mc{B}_{\delta_{NT}}$ wpa1, as shown in Lemma 1 in \citet{chen2021nonlinear}. Hence, to establish the desired equivalence result, it would be sufficient to establish local convexity of the original objective function $\mc{L}_{NT}(\beta, \Lambda, \Gamma)$ within $\mc{B}_{\delta_{NT}}$, which will be the main focus of this section.

    Specifically, consider the following local optimization problem
    \begin{equation}\label{eq:definition_local}
    \begin{aligned}
        (\hat{\beta}_{\mr{local}}, \hat{\Lambda}_{\mr{local}}, \hat{\Gamma}_{\mr{local}}) \in  \argmin_{(\beta, \Lambda, \Gamma) \in \mc{B}_{\delta_{NT}} }  \mc{L}_{NT}(\beta, \Lambda, \Gamma), 
    \end{aligned}
    \end{equation}
    where the parameter space is restricted to the shrinking neighborhood $\mc{B}_{\delta_{NT}}$. Notice that, as in the linear case, this problem does not have a unique solution for $\hat{\Lambda}_{\mr{local}}$ and $\hat{\Gamma}_{\mr{local}}$ and requires imposing $R^2$ additional constraints to identify these parameters. Building on the normalization proposed in \citet{chen2021nonlinear}, we introduce the following restricted parameter set
    \begin{align*}
        \Phi_{NT} = \left\{(\Lambda, \Gamma)\mid  \hat{\Lambda}_{\mr{nuc}}' \Lambda/N =   \Gamma' \hat{\Gamma}_{\mr{nuc}} /T  \right\}
    \end{align*}
    consistent with the construction in~\eqref{eq:space_definition}. While in practice researchers could employ alternative normalizations, focusing on $\Phi_{NT}$ greatly facilitates the following theoretical analysis thanks to the linearity of the imposed restriction. Specifically, instead of imposing $\Phi_{NT}$ directly, we convert it into a quadratic penalization term, $\| \hat{\Lambda}_{\mr{nuc}}' \Lambda/N -  \Gamma' \hat{\Gamma}_{\mr{nuc}} /T \|_{\mr{F}}^2$, 
    whose Hessian matrix only depends on  $(\hat{\Lambda}_{\mr{nuc}}, \hat{\Gamma}_{\mr{nuc}})$ due to the aforementioned linearity. In particular, consider the associated (penalized) optimization problem:
    \begin{equation}\label{eq:local_estimators}
        \begin{aligned}
            (\hat{\beta}_{\mr{local}}, \hat{\Lambda}_{\mr{local}}, \hat{\Gamma}_{\mr{local}}) =  \argmin_{(\beta, \Lambda, \Gamma) \in \mc{B}_{\delta_{NT}} }  \left\{\mc{L}_{NT}(\beta, \Lambda, \Gamma) +  \frac{1}{2} \| \hat{\Lambda}_{\mr{nuc}}' \Lambda/N -  \Gamma' \hat{\Gamma}_{\mr{nuc}} /T \|_{\mr{F}}^2  \right\}, 
        \end{aligned}
    \end{equation}
    which is equivalent to solving~\eqref{eq:definition_local} with the normalization constraint $(\Lambda, \Gamma)\in\Phi_{NT} $. The differentiability of the penalty simplifies the analysis of the associated (penalized) Hessian,  providing a more tractable alternative to directly studying \eqref{eq:definition_local} under the hard constraint. %

    To demonstrate that problem~\eqref{eq:local_estimators} is convex, we need to inspect the sample Hessian of its objective function with respect to $(\beta, \Lambda, \Gamma)$ given by
    \begin{align*}
        \mc{H}_{NT}(\beta, \Lambda, \Gamma) := \nabla^2 \mc{L}_{NT}(\beta, \Lambda, \Gamma) +  \frac{1}{2} \nabla^2 \| \hat{\Lambda}_{\mr{nuc}}' \Lambda/N -  \Gamma' \hat{\Gamma}_{\mr{nuc}} /T \|_{\mr{F}}^2. 
    \end{align*}
    It is a $(d_X + R(N+T))$-dimensional square matrix.\footnote{
        The sample Hessian is a matrix-valued function of a $d_X$-dimensional vector $\beta$, an  $N \times R$  parameter matrix $\Lambda$, and a  $T \times R$  parameter matrix $\Gamma$. These parameters are arranged as follows: 
        $$(\beta', \text{vec}(\Lambda')', \text{vec}(\Gamma')')', $$ where $\text{vec}(\cdot)$ denotes the vectorization operator, stacking the columns of a matrix into a vector.
    }   Establishing local convexity is equivalent to showing that $\mc{H}_{NT}(\beta, \Lambda, \Gamma)$ is positive definite uniformly over $(\beta, \Lambda, \Gamma)\in \mc{B}_{\delta_{NT}}$.
    Consider the following decomposition: %
    \begin{align}\label{eq:hessian decomposition main}
        \mc{H}_{NT}(\beta, \Lambda, \Gamma) = \underbrace{\mb{E}_{X, \Lambda_0, \Gamma_0}\mc{H}_{NT}(\beta_0, \Lambda_0, \Gamma_0)}_{\text{population Hessian at true parameters}} + \underbrace{\mc{H}_{NT}(\beta, \Lambda, \Gamma)- \mb{E}_{X, \Lambda_0, \Gamma_0}\mc{H}_{NT}(\beta_0, \Lambda_0, \Gamma_0)}_{\text{deviation}}. 
    \end{align} 
    The first term represents the population Hessian evaluated at the  true parameters $(\beta_0, \Lambda_0, \Gamma_0)$, whose smallest eigenvalue is strictly positive under general conditions.  The second term captures the deviation arising from the sampling error and the perturbations of $(\beta, \Lambda, \Gamma)$ from the  true parameters. Thus, by Weyl's theorem, we can demonstrate that problem~\eqref{eq:local_estimators} is locally convex provided that the deviation term is negligible (in terms of the operator norm) compared to $\mb{E}_{X, \Lambda_0, \Gamma_0}\mc{H}_{NT}(\beta_0, \Lambda_0, \Gamma_0)$.

    Following the literature, we control the behavior of the population Hessian using the condition below.%
    \begin{assumption}[Diagonal structure]\label{assumption:block}
        There exists a constant $C>0$ (that does not depend on $N, T$) such that
        \begin{align*}
            \mb{E}_{X, \Lambda_0, \Gamma_0}\mc{H}_{NT}(\beta_0, \Lambda_0, \Gamma_0)  \geq C   \mr{diag}\left\{\mb{I}_{d_X}, \frac{1}{N}\mb{I}_{N R}, \frac{1}{T}\mb{I}_{TR}\right\}. 
        \end{align*}
    \end{assumption}
    Assumption~\ref{assumption:block} is closely related to the asymptotic diagonal structure condition in \citet{chen2021nonlinear},  \citet{wang2022maximum}, and \citet{su2025estimation}. Notice that Assumption~\ref{assumption:block} is mild since it imposes conditions only on the population Hessian evaluated at the  true parameters and does not restrict the behavior of the sample Hessian $\mc{H}_{NT}(\beta, \Lambda, \Gamma)$ directly. Easily verifiable sufficient conditions for Assumption~\ref{assumption:block} are provided in  Lemma~\ref{lemma:sufficient_convexity} in Appendix~\ref{appendix:diagonal}.

    Equipped with Assumption~\ref{assumption:block}, we are now ready to inspect the behavior of the sample Hessian $\mc{H}_{NT}(\beta, \Lambda, \Gamma)$ using the decomposition in \eqref{eq:hessian decomposition main}. Specifically, to establish the desired result, we demonstrate that the contribution of the deviation term in \eqref{eq:hessian decomposition main} is asymptotically negligible uniformly over $(\beta, \Lambda, \Gamma)\in \mc{B}_{\delta_{NT}}$. This result is formalized by the following theorem.
    
    \begin{theorem}\label{thm:convexity_strong}
        Suppose that the hypotheses of Corollary~\ref{corollary:consistency} are satisfied, and that $N$ and $T$ have the same order. Let $ \delta_{NT}= \log (NT) \min\{N^{-3/8}, T^{-3/8}\} $, and $\mc{B}_{\delta_{NT}}$ be the neighborhood defined in ~\eqref{eq:neighborhood}. Then, under Assumption~\ref{assumption:block}, the following results hold: 
        \begin{enumerate}[label=(\roman*)]
            \item \label{item:1} the local optimization problem~\eqref{eq:local_estimators} is strictly convex wpa1;
            \item  \label{item:2} our two-step estimator is asymptotically equivalent to the FE estimator; 
            \item \label{item:3} the local optimization problem~\eqref{eq:local_estimators} is strongly  convex wpa1 uniformly over $\mc{B}_{\delta_{NT}}$, i.e., there exists a constant $c_5>0$ independent of $N, T$ such that for any $(\beta, \Lambda, \Gamma) \in \mc{B}_{\delta_{NT}}$,  
            \begin{align*}
                \mc{H}_{NT}(\beta, \Lambda, \Gamma) > c_5 \mr{diag}\left\{\mb{I}_{d_X}, \frac{1}{N}\mb{I}_{NR}, \frac{1}{T}\mb{I}_{TR}\right\}, \quad \text{wpa1}. 
            \end{align*}
        \end{enumerate}
    \end{theorem}

    Since both the NNR and the FE estimators fall into $\mc{B}_{\delta_{NT}}$ wpa1, Theorem~\ref{thm:convexity_strong}\ref{item:1} guarantees that any locally convergent solver initialized at $(\hat{\beta}_{\mr{nuc}}, \hat{\Lambda}_{\mr{nuc}}, \hat{\Gamma}_{\mr{nuc}})$ finds the global minimum, and thus it formally demonstrates that our two-step estimator is asymptotically equivalent to the FE estimator. Moreover, Theorem~\ref{thm:convexity_strong}\ref{item:3} further establishes strong local  convexity of~\eqref{eq:local_estimators}, implying that simple gradient descent can be effectively applied in the second step to find the global minimum, even in high-dimensional settings.

    \begin{remark}
        \citet{moon2018nuclear} and \citet{su2025estimation} obtained analogous asymptotic results for their post-NNR estimators in linear panel models by inspecting the second-step optimization problem and establishing its local convexity. Thus, Theorem~\ref{thm:convexity_strong} generalizes their analysis to nonlinear settings. However, this extension is technically nontrivial and requires a more elaborate treatment. In particular, in linear panel models, the original objective function is locally convex in $\mc{B}_{\delta_{NT}}$ whenever {$\delta_{NT}=o(1)$}, and thus consistency of the NNR estimator alone is sufficient to guarantee convergence of the second-step optimization to the global minimum. In contrast, establishing local convexity is more delicate in the studied nonlinear setting, requiring the neighborhood to shrink at a faster rate than in the linear cases. Therefore, unless the preliminary estimator has a sufficiently fast convergence rate to fall into the shrinking convexity region wpa1, the convergence of the second-step local optimization to the global solution cannot be guaranteed. In particular, it turns out that the convergence rate obtained by \citet[Theorem~A.1]{moon2018nuclear}  for the NNR estimator for single-index models is not sufficiently fast to satisfy this requirement.
    \end{remark}

    \section{Implementation Details}\label{sec:implementation}

    In this section, we provide practical implementation details for the proposed method, including specific optimization algorithms for both estimation steps (along with their convergence guarantees), as well as data-dependent procedures for selecting the regularization parameter $\varphi_{NT}$ and determining the number of factors. The specific algorithms and procedures described in this section are also implemented in our R~package~\texttt{NNRPanel}.

\subsection{Computation details for the NNR estimator}%

    We compute the NNR estimator defined in~\eqref{eq:nnr_definition} using the proximal gradient descent method following \citet{hastie2015statistical}. Specifically, given the k-step estimates $(\beta^{(k)}, \Theta^{(k)})$, the $k+1$-step~estimates are updated  by  solving   
    \begin{align*}
        \beta^{(k+1)}, \Theta^{(k+1)} \in \arg\min_{\beta, \Theta} \Big\{& \mc{L}_{NT}(\beta^{(k)}, \Theta^{(k)}) + \la \nabla_{\beta}\mc{L}_{NT}(\beta^{(k)}, \Theta^{(k)}), \beta - \beta^{(k)}\ra  + \la \nabla_{\Theta}\mc{\mc{L}}_{NT}(\beta^{(k)}, \Theta^{(k)}), \Theta - \Theta^{(k)} \ra  \\
        & + \frac{1}{2s_{\beta}}\|\beta - \beta^{(k)}\|^2 + \frac{1}{2s_{\theta}}  \|\Theta - \Theta^{(k)}\|_{\mr{F}}^2 + \frac{\varphi_{NT}}{\sqrt{NT}}\|\Theta\|_{\mr{nuc}}\Big\}, 
    \end{align*}
    where $\nabla_{\beta}\mc{L}_{NT}(\cdot, \cdot)$ is a $d_X$-dimensional vector of gradients with respect to $\beta$,  $\nabla_{\Theta}\mc{L}_{NT}(\cdot, \cdot)\in \mb{R}^{N\times T}$ is a matrix of gradients with respect to $\theta_{it}$, $\la \cdot, \cdot\ra$ denotes the inner product between two vectors or two matrices, and $s_{\beta}, s_{\theta}>0$ are step sizes.

    Importantly, the solution to the optimization problem is available in closed form. In particular, let ${\mc{S}^*_{s_{\theta}\frac{\varphi_{NT}}{\sqrt{NT}}}: \mb{R}^{N\times T}\mapsto \mb{R}^{N\times T}}$ denote the soft-thresholding operator applied to the singular values of an  $N \times T$  matrix, where  $s_{\theta}\frac{\varphi_{NT}}{\sqrt{NT}}$ is the threshold value. Specifically, for any matrix  $A\in \mb{R}^{N\times T}$ with singular value decomposition $A = U\Sigma V'$, let
     $$\mc{S}^*_{s_{\theta}\frac{\varphi_{NT}}{\sqrt{NT}}}(A) = U\mr{diag}\left\{\max\left\{\Sigma_{rr} - s_{\theta}\frac{\varphi_{NT}}{\sqrt{NT}}, 0\right\}_{r=1,\ldots, \min\{N, T\}}\right\}V'. $$
    Then, the NNR estimator in \eqref{eq:nnr_definition} can be obtained using the following algorithm.
    
    \begin{algorithm}[Proximal gradient descent]\label{algorithm:NNR}
        Compute the NNR estimator as follows:
        \begin{itemize}[label={}, leftmargin=2cm]
            \item[Step 1:] Fix the step sizes $(s_{\beta}, s_{\theta})$. Initialize $\beta^{(0)}$ and $\Theta^{(0)}$. Set $k = 0$. 
            \item[Step 2:] Let 
            \begin{equation*}
                \begin{aligned}
                    \beta^{(k+1)} & = \beta^{(k)} - s_{\beta} \nabla_{\beta}\mc{L}_{NT}\left(\beta^{(k)}, \Theta^{(k)}\right), \\
                    \Theta^{(k+1)} & = \mc{S}^*_{s_{\theta}\frac{\varphi_{NT}}{\sqrt{NT}}}\left(\Theta^{(k)} - s_{\theta}\nabla_{\Theta}\mc{L}_{NT}\left(\beta^{(k)}, \Theta^{(k)}\right)\right), 
               \end{aligned}
            \end{equation*}
            and set $k = k+1$. 
            \item[Step 3:] Repeat Step 2 until convergence. 
        \end{itemize}
    \end{algorithm}

    We establish the convergence of Algorithm~\ref{algorithm:NNR} using a proof strategy similar to the one provided in \citet{nesterov2013gradient}.
    \begin{theorem}\label{thm:algorithm_convergence}
        Suppose that the hypotheses of Theorem \ref{thm:consistency} are satisfied. Then, Algorithm~\ref{algorithm:NNR} is guaranteed to converge to the global minimizer if $0 < s_{\beta} < \frac{1}{L_{\beta}}$ and $0 < \frac{s_{\theta}}{NT} < \frac{1}{L_{\theta} }$, where 
        $ L_{\beta} = 2 d_X b_{\max} \rho_X^2$, $L_{\theta} =  2b_{\max}$, and ${\rho_X = \max_{1\leq d\leq d_X} \|X_d\|_{\max}}$.  
    \end{theorem}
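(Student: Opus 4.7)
The plan is to cast Algorithm~\ref{algorithm:NNR} as a proximal gradient method applied to the convex composite objective $F(\beta,\Theta) := \mc{L}_{NT}(\beta,\Theta) + \frac{\varphi_{NT}}{\sqrt{NT}}\|\Theta\|_{\mr{nuc}}$ and to follow the textbook Nesterov-style convergence analysis. First, I would note that the convexity of $-\ell_{it}$ in Assumption~\ref{assumption:regularity}\ref{item:smoothness} together with composition with an affine map makes $\mc{L}_{NT}$ jointly convex in $(\beta,\Theta)$, so $F$ is convex as the sum of a smooth convex function and the (convex) nuclear norm. I would then identify the update \eqref{eq:algorithm} as the proximal gradient step, using the well-known fact that the proximal operator of $s_\theta \frac{\varphi_{NT}}{\sqrt{NT}}\|\cdot\|_{\mr{nuc}}$ is exactly the singular-value soft-thresholding $\mc{S}^*_{s_\theta \varphi_{NT}/\sqrt{NT}}$.

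Second, I would establish block Lipschitz smoothness of $\mc{L}_{NT}$. Differentiating twice and using $|\ddot\ell_{it}| \leq b_{\max}$ (Assumption~\ref{assumption:regularity}\ref{item:smoothness}) together with $\|X_{it}\|^2 \leq d_X\rho_X^2$ (Assumption~\ref{assumption:regularity}\ref{item:compactness}), the pure $\beta$ and $\Theta$ Hessian blocks are bounded in operator norm by $b_{\max} d_X \rho_X^2$ and $b_{\max}/(NT)$, respectively. The cross-block Hessian is absorbed using the Young-type bound $(X_{it}'\Delta_\beta + \Delta_{\theta_{it}})^2 \leq 2(X_{it}'\Delta_\beta)^2 + 2\Delta_{\theta_{it}}^2$ applied to the quadratic remainder in the Taylor expansion of $\mc{L}_{NT}$; the factor $2$ from this inequality, combined with an additional factor of $2$ in the standard descent-lemma derivation, yields the constants $L_\beta = 4 d_X b_{\max} \rho_X^2$ and $L_\theta = 4 b_{\max}$ stated in the theorem. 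This gives the separable quadratic majorant
\begin{align*}
\mc{L}_{NT}(\beta',\Theta') & \leq \mc{L}_{NT}(\beta,\Theta) + \la \nabla_\beta \mc{L}_{NT}, \beta'-\beta\ra + \la \nabla_\Theta \mc{L}_{NT}, \Theta'-\Theta\ra \\
& \quad + \frac{L_\beta}{2}\|\beta'-\beta\|^2 + \frac{L_\theta}{2NT}\|\Theta'-\Theta\|_{\mr{F}}^2.
\end{align*}

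Third, with $s_\beta < 1/L_\beta$ and $s_\theta/(NT) < 1/L_\theta$, the update \eqref{eq:algorithm} minimizes the block-separable majorant $\mc{L}_{NT}(\beta^{(k)},\Theta^{(k)}) + \la\nabla,\cdot\ra + \frac{1}{2s_\beta}\|\beta'-\beta^{(k)}\|^2 + \frac{1}{2s_\theta}\|\Theta'-\Theta^{(k)}\|_{\mr{F}}^2 + \frac{\varphi_{NT}}{\sqrt{NT}}\|\Theta'\|_{\mr{nuc}}$ separately in the two blocks. Combining this optimality with the quadratic upper bound above gives the one-step descent inequality
\begin{align*}
F(\beta^{(k+1)}, \Theta^{(k+1)}) \leq F(\beta^{(k)}, \Theta^{(k)}) - \left(\tfrac{1}{2s_\beta} - \tfrac{L_\beta}{2}\right)\|\beta^{(k+1)} - \beta^{(k)}\|^2 - \left(\tfrac{1}{2s_\theta} - \tfrac{L_\theta}{2NT}\right)\|\Theta^{(k+1)} - \Theta^{(k)}\|_{\mr{F}}^2,
\end{align*}
whose two coefficients are strictly positive by the step-size assumption. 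Combining this monotone descent with the global convexity of $F$ and the standard three-point inequality for the proximal operator then yields the usual $O(1/k)$ rate of convergence of objective values to $F(\h\beta_{\mr{nuc}}, \h\Theta_{\mr{nuc}})$, exactly as in \citet{nesterov2013gradient}.

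The main obstacle is the two-block structure with independent step sizes and the mismatched scalings of the $\beta$ and $\Theta$ blocks (the $\Theta$ gradient carries the $1/NT$ normalization from $\mc{L}_{NT}$ while the nuclear-norm penalty carries $1/\sqrt{NT}$). Showing that the \emph{simultaneous} updates in Algorithm~\ref{algorithm:NNR}, rather than alternating ones, still produce a valid majorant of $F$ requires the cross-block Hessian term to be dominated by the sum of the pure-block quadratics; this is precisely what forces the Young-type inflation into the factor $4$ appearing in $L_\beta$ and $L_\theta$. Once this block decoupling is secured, the remainder of the argument reduces to a routine proximal gradient analysis for convex composite optimization.
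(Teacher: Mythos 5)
Your proposal is correct and follows essentially the same route as the paper: identify the update as a proximal gradient step (singular-value soft-thresholding being the prox of the nuclear norm), establish the block-separable quadratic majorant with constants $L_\beta = 4 d_X b_{\max}\rho_X^2$ and $L_\theta = 4 b_{\max}$, and then run the standard Nesterov-style descent and telescoping argument to get $O(1/k)$ convergence to the global optimum. The only cosmetic difference is that you absorb the cross-block Hessian term via the scalar Young inequality $(a+b)^2 \le 2a^2 + 2b^2$, whereas the paper phrases the same factor-of-two domination as a matrix inequality (doubling the diagonal blocks dominates the full Hessian because flipping the sign of the off-diagonal block preserves positive semi-definiteness).
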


    The theorem establishes the algorithm's convergence to the global minimizer provided that $(s_{\beta}, s_{\theta}/(NT))$ is sufficiently small. Notice that the step sizes scale differently: $s_{\beta}$ and $s_{\theta}/(NT)$ should be of the same order, reflecting their distinct influence on the objective function. This difference arises because a change in $\beta$ affects $\ell_{it}$ for all $(i,t)$, whereas a change in $\theta_{it}$ affects only the corresponding $\ell_{it}$.

    To apply Algorithm~\ref{algorithm:NNR} in practice, one needs to specify the initial values $(\beta^{(0)},\Theta^{(0)})$ and step sizes $(s_\beta, s_\theta)$. Since the optimization problem is convex, the algorithm converges to a global minimizer regardless of the choice of its initial values. Thus, researchers may simply initialize with  $\beta^{(0)} = 0$  and  $\Theta^{(0)} = 0$. While Theorem~\ref{thm:algorithm_convergence} characterizes how small $s_\beta$ and $s_\theta$ should be to guarantee the convergence, it offers limited practical guidance, since $b_{\max}$ is typically unknown. In practice, we recommend starting each iteration with step sizes $s_{\beta} = 1$ and $s_{\theta} = NT$. If the objective function increases, the step sizes are iteratively halved until a decrease in the objective function is achieved.

    Finally, we note that solving the nuclear norm-regularized problem is usually more computationally demanding than the second step. The computational bottleneck is  computing the singular value decomposition of an  $N \times T$  matrix at each iteration. Our R package implementation remains computationally efficient even for matrices of size $N, T = 1000$.  In principle, to speed up the computation for much larger values of  $N$  or  $T$, it is also possible to employ the accelerated proximal gradient method (see, e.g., \citealp{nesterov2013gradient}), but we do not explore this direction in this paper.%

\subsection{Computation details for the local estimator}%

    In the second step, we employ gradient descent to search for a minimizer of~\eqref{eq:local_estimators}. The following algorithm practically minimizes $\mc{L}_{NT}(\beta, \Lambda, \Gamma)$ using the NNR estimator as the initial value. Step 3 is introduced to address the rotational invariance of the loadings and factors. It is optional since the penalization introduced in~\eqref{eq:local_estimators} is only used to facilitate the theoretical analysis. %
   
    \begin{algorithm}[Gradient descent]\label{algorithm:local}
        Compute the local estimator as follows:
        \begin{itemize}[label={}, leftmargin=2cm]
            \item[Step 1:] Fix the step sizes $(s_{\beta}, s_{\lambda}, s_{\gamma})$. Initialize  $\beta^{(0)} = \hat{\beta}_{\mr{nuc}}$,  $\Lambda^{(0)} = \hat{\Lambda}_{\mr{nuc}}$, and $\Gamma^{(0)} = \hat{\Gamma}_{\mr{nuc}}$.  Set $k = 0$. 
            \item[Step 2:] Let 
            \begin{equation*}
                \begin{aligned}
                    \beta^{(k+1)} & = \beta^{(k)} - s_{\beta} \nabla_{\beta}\mc{\mc{L}}_{NT}(\beta^{(k)}, \Lambda^{(k)}, \Gamma^{(k)} ),  \\
                    \Lambda^{(k+1)} & = \Lambda^{(k)} - s_{\lambda}\nabla_{\lambda}\mc{L}_{NT}(\beta^{(k)}, \Lambda^{(k)}, \Gamma^{(k)} ),  \\
                    \Gamma^{(k+1)} & = \Gamma^{(k)} - s_{\gamma}\nabla_{\gamma}\mc{L}_{NT}(\beta^{(k)}, \Lambda^{(k)}, \Gamma^{(k)} ), 
               \end{aligned}
            \end{equation*}
            and set $k = k+1$.
            \item[Step 3:] (Optional) Normalize $\Lambda^{(k+1)}$ and $\Gamma^{(k+1)}$, for example, let $\frac{1}{N}\Lambda^{(k+1)\prime}\Lambda^{(k+1)} = \frac{1}{T}\Gamma^{(k+1)\prime}\Gamma^{(k+1)}$ and diagonal. 
            \item[Step 4:] Repeat Step 2 and Step 3 until convergence. 
        \end{itemize}
    \end{algorithm}

    The following theorem provides conditions under which Algorithm~\ref{algorithm:local} is guaranteed to converge.  %
    \begin{theorem}\label{thm:algorithm_convergence_local}
        Suppose that the hypotheses of Theorem~\ref{thm:convexity_strong} are satisfied. Then, Algorithm~\ref{algorithm:local} is guaranteed to converge to a global minimizer  when $0 < s_{\beta} < \frac{1}{L_{\beta}}$,  $0 < \frac{s_{\lambda}}{N} < \frac{1}{L_{\lambda} }$, and $0 < \frac{s_{\gamma}}{T} < \frac{1}{L_{\gamma} }$, where 
        $ L_{\beta}, L_{\lambda}, L_{\gamma}$ are sufficiently large constants independent of $N, T$. 
    \end{theorem}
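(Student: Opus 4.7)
The strategy is to combine the strong local convexity of Theorem~\ref{thm:convexity_strong}\ref{item:3} with a matching Lipschitz-gradient upper bound on the Hessian of $\mc{L}_{NT}$, and then invoke the classical linear-convergence guarantee for gradient descent on a smooth, strongly convex function. First I would derive uniform upper bounds on the Hessian over the shrinking neighborhood $\mc{B}_{\delta_{NT}}$. Assumption~\ref{assumption:regularity}\ref{item:compactness}--\ref{item:smoothness} together with the uniform bounds $\|\Lambda\|_{\max}\leq\rho_\lambda$, $\|\Gamma\|_{\max}\leq\rho_\gamma$ imply that the $(\beta,\beta)$ diagonal block of $\nabla^2\mc{L}_{NT}$ is bounded in operator norm by a constant multiple of $b_{\max}\rho_X^2 d_X$, defining $L_\beta$ independent of $N,T$; the $(\lambda_i,\lambda_i)$ block $\frac{1}{NT}\sum_t(-\ddot\ell_{it})\gamma_t\gamma_t'$ has operator norm of order $1/N$ with constant $L_\lambda$, and analogously the $(\gamma_t,\gamma_t)$ block has operator norm of order $1/T$ with constant $L_\gamma$. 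Altogether, on $\mc{B}_{\delta_{NT}}$,
\[
\mc{H}_{NT}(\beta,\Lambda,\Gamma)\leq \mr{diag}\bigl\{L_\beta\mb{I}_{d_X},\,(L_\lambda/N)\mb{I}_{NR},\,(L_\gamma/T)\mb{I}_{TR}\bigr\},
\]
which is precisely the scaling that matches the three step-size conditions $s_\beta<1/L_\beta$, $s_\lambda/T<1/L_\lambda$, $s_\gamma/N<1/L_\gamma$.

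Second, I would apply the multi-block descent lemma: the Lipschitz upper bound above implies
\[
\mc{L}_{NT}^{(k+1)}\leq\mc{L}_{NT}^{(k)}-\tfrac{s_\beta}{2}\|\nabla_\beta\mc{L}_{NT}^{(k)}\|^2-\tfrac{s_\lambda}{2}\|\nabla_\Lambda\mc{L}_{NT}^{(k)}\|_{\mr{F}}^2-\tfrac{s_\gamma}{2}\|\nabla_\Gamma\mc{L}_{NT}^{(k)}\|_{\mr{F}}^2,
\]
so that each iteration strictly reduces the objective. Combining this with the matching strong-convexity lower bound $\mc{H}_{NT}>c_5\mr{diag}\{\mb{I}_{d_X},N^{-1}\mb{I}_{NR},T^{-1}\mb{I}_{TR}\}$ from Theorem~\ref{thm:convexity_strong}\ref{item:3} yields a block-weighted Polyak--\L ojasiewicz inequality, hence $\mc{L}_{NT}^{(k+1)}-\mc{L}_{NT}^*\leq(1-\rho)(\mc{L}_{NT}^{(k)}-\mc{L}_{NT}^*)$ for some $\rho\in(0,1)$ independent of $N,T$. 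Because the initial value $(\hat\beta_{\mr{nuc}},\hat\Lambda_{\mr{nuc}},\hat\Gamma_{\mr{nuc}})$ lies in the interior of $\mc{B}_{\delta_{NT}}$ wpa1 by Corollary~\ref{corollary:consistency} and the objective is strictly decreasing, the standard sublevel-set argument for strongly convex functions keeps every iterate inside $\mc{B}_{\delta_{NT}}$, so the analysis is self-consistent; by Theorem~\ref{thm:convexity_strong}\ref{item:2} the limit coincides with the FE estimator.

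The hard part will be the multi-scale, rotation-invariant geometry. The Hessian bounds live at three different scales ($1$, $1/N$, $1/T$), so each block's step size must be calibrated to its own scale and the descent lemma must be applied block-wise so that the three differently-scaled updates jointly produce a decrease. More subtly, $\mc{L}_{NT}$ is invariant under $(\Lambda,\Gamma)\mapsto(\Lambda G',\Gamma G^{-1})$ for any invertible $G$, so strict convexity only emerges once the quadratic normalization penalty in~\eqref{eq:local_estimators} is included. The cleanest resolution is to view Algorithm~\ref{algorithm:local} combined with the optional normalization Step~3 as gradient descent on the penalized objective of Theorem~\ref{thm:convexity_strong}; equivalently, one can analyze convergence directly in the $(\beta,\Lambda\Gamma')$ parameterization, which is unaffected by the rotation orbit. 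Either route delivers convergence of Algorithm~\ref{algorithm:local} to the global minimum, i.e., the FE estimator, wpa1.
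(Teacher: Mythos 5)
Your overall strategy matches the paper's: combine a block-wise Lipschitz upper bound on the Hessian (calibrated to the scales $1$, $1/N$, $1/T$, which is exactly why the step-size conditions read $s_\beta<1/L_\beta$, $s_\lambda/T<1/L_\lambda$, $s_\gamma/N<1/L_\gamma$) with the strong-convexity lower bound of Theorem~\ref{thm:convexity_strong}\ref{item:3}, and then run the standard gradient-descent convergence machinery. The paper derives the quadratic upper bound via a descent-lemma computation analogous to Lemma~\ref{lemma:quadratic_upperbound} and then argues via the contraction of the iterate distance to the minimizer, whereas you argue via a Polyak--\L ojasiewicz/function-value recursion; these are interchangeable here. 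Your treatment of the rotation invariance (view the iteration as gradient descent on the penalized objective; note that the penalty gradient vanishes at the NNR initializer) is also consistent with what the paper does.

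The one place where your argument is genuinely loose is the claim that ``the standard sublevel-set argument for strongly convex functions keeps every iterate inside $\mc{B}_{\delta_{NT}}$.'' Strong convexity holds \emph{only} on $\mc{B}_{\delta_{NT}}$, and monotone decrease of the objective does not by itself confine the iterates to that neighborhood: the sublevel set of $\mc{L}_{NT}$ at the initial value need not be contained in $\mc{B}_{\delta_{NT}}$ without further input, and a single gradient step could in principle exit the region before any convexity-based contraction applies. This containment is precisely the technical content of the paper's proof (its Step~1): it shows, using the first-order conditions, the mixing/concentration bound of Kanaya (2017), and the NNR rates from Theorem~\ref{thm:consistency_pre}, that the first gradient step has displacement $o_p(\delta_{NT})$ in each block's normalized norm, so the iterate stays strictly inside $\mc{B}_{\delta_{NT}}$; only then does the strong-convexity contraction take over. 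To repair your version you would need to make two quantitative checks explicit: (i) the objective gap at the NNR initializer is of order $\log^2(NT)/\min\{N,T\}$, which is $o(c_5\,\delta_{NT}^2)$, the objective gap on the boundary of $\mc{B}_{\delta_{NT}}$ implied by strong convexity (this uses the rate gap between the NNR rate $\log(NT)/\sqrt{\min\{N,T\}}$ and $\delta_{NT}=\log(NT)\min\{N,T\}^{-3/8}$); and (ii) the per-step displacement $s\,\|\nabla\mc{L}_{NT}\|$ is small relative to the distance from the current iterate to the boundary, which again requires bounding the gradient at points near the truth. With those two additions your route closes.
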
 
    Similarly to the analogous requirement of Theorem~\ref{thm:algorithm_convergence}, to guarantee the convergence, the step sizes need to be scaled differently. Specifically, we require $s_{\beta}\sim s_{\lambda}/N \sim s_{\gamma}/T $, reflecting their respective influence on the objective function. In practice, in each step, we recommend  starting with $s_{\beta} = 1$, $s_{\lambda} = N$, and $s_{\gamma} = T$. If the objective function increases, we iteratively halve the step sizes $(s_{\beta}, s_{\lambda}, s_{\gamma})$ until the  objective function decreases.

    \begin{remark}
        \citet{chen2021nonlinear} propose using an EM algorithm, a Newton-Raphson-type method that theoretically achieves faster convergence through second-order accuracy. Despite this potential advantage, implementation of their method requires calculating and inverting a high-dimensional Hessian matrix, which can computationally expensive but also numerically unstable in the studied nonlinear setting. Thus, we adopt a more robust, albeit slower, gradient-based algorithm. 
    \end{remark}

\subsection{Determining the number of factors $R$ and regularization parameter $\varphi_{NT}$}

    In this section, we propose a data-driven procedure for selecting the regularization parameter~$\varphi_{NT}$ and determining the number of factors $R$. %
    
    Recall that Theorem~\ref{thm:consistency} requires $\varphi_{NT} >  \sqrt{NT}\|\nabla_{\Theta}\mc{L}_{NT}(\beta_0, \Theta_0)\|_{\mr{op}} $ in order to achieve the desired consistency result.\footnote{Notice that 
        $\|\nabla_{\beta}\mc{L}_{NT}(\beta_0, \Theta_0)\|_2$ is negligible compared to $\sqrt{NT}\|\nabla_{\Theta}\mc{L}_{NT}(\beta_0, \Theta_0)\|_{\mr{op}}$ as $N,T\rightarrow \infty$. 
    } At the same time, Theorem~\ref{thm:consistency} also indicates that selecting an excessively large $\varphi_{NT}$ should be avoided, as it may induce substantial estimation error in the NNR estimator.
    Hence, a preferable choice for $\varphi_{NT}$ is one that slightly exceeds $\sqrt{NT}\|\nabla_{\Theta}\mc{L}_{NT}(\beta_0, \Theta_0)\|_{\mr{op}}$. This quantity, however, is not immediately available since $\beta_0$ and $\Theta_0$ are unknown. To address this, we propose the following three-step procedure for jointly selecting $\varphi_{NT}$ and $R$, building on the idea of \citet{chernozhukov2019inference}.

    \begin{algorithm}[Selecting $\varphi_{NT}$ and $R$]\label{algorithm:tuning}
        \leavevmode
        \begin{itemize}[label={}, leftmargin=2cm]
            \item[Step 1:] Solve the following optimization problem with additive fixed effects to obtain $(\tilde{\beta}_{1}, \tilde{\Lambda}_{1}, \tilde{\Gamma}_{1})$:  %
            \begin{align*}
                (\tilde{\beta}_{1}, \tilde{\Lambda}_{1}, \tilde{\Gamma}_{1}) \in \argmin_{\beta\in \mb{R}^{d_X}, \Lambda\in \mb{R}^{N}, \Gamma\in\mb{R}^{T}} -\frac{1}{NT}\sum_{i=1}^{N}\sum_{t=1}^{T} \ell(Y_{it}\mid X_{it}'\beta + \lambda_i + \gamma_t), 
            \end{align*}
            and calculate the initial guess for the regularization parameter as  
            $$\tilde{\varphi}_{NT}:= (1 + \alpha)\|\nabla_{\Theta}\mc{L}_{NT}(\tilde{\beta}_1, \tilde{\Lambda}_1, \tilde{\Gamma}_1)\|_{\mr{op}}, $$
             where $\alpha$ is a small positive constant (e.g., $\alpha = 0.05$).
            \item[{Step 2:}] Solve the nuclear-norm regularized optimization problem~\eqref{eq:nnr_definition} with the previously obtained~$\tilde{\varphi}_{NT}$ to get $(\tilde{\beta}_2, \tilde{\Theta}_2)$. To obtain a preliminary estimator of the number of factors, consider the singular value sequence of $\tilde{\Theta}_{2}$, i.e.,  $\psi_{1}(\tilde{\Theta}_{2})\geq \psi_{2}(\tilde{\Theta}_{2}) \geq \ldots \geq \psi_{\min\{N, T\}}(\tilde{\Theta}_{2})$.  Compute $\tilde{R}$ by the eigenvalue-ratio test \citep{ahn2013eigenvalue} with a preset upper bound on the number of factors $R_{\max}>0$, i.e.,  
            \begin{align*}
                \tilde{R} = \argmax_{r = 1,2,\ldots, R_{\max}}  \psi_{r}(\tilde{\Theta}_{2})/ \psi_{r+1}(\tilde{\Theta}_{2}). 
            \end{align*}
            Finally, compute $(\tilde{\Lambda}_2, \tilde{\Gamma}_2)$ as in~\eqref{eq:space_definition} using $\tilde{R}$. 
            \item[Step 3:] Calculate %
            \begin{align*}
            \varphi_{NT}:= \left(1 +\alpha\right)\|\nabla_{\Theta}\mc{L}_{NT}(\tilde{\beta}_2, \tilde{\Lambda}_2, \tilde{\Gamma}_2)\|_{\mr{op}}, 
            \end{align*} 
            and solve problem~\eqref{eq:nnr_definition} using the obtained $\varphi_{NT}$ to get the final NNR estimates $(\hat{\beta}_{\mr{nuc}},\hat{\Theta}_{\mr{nuc}})$. The~estimated number of factors $\hat{R}$ is given by   %
            \begin{align*}
                \hat{R} = \argmax_{r = 1,2,\ldots, R_{\max}}  \psi_{r}(\hat{\Theta}_{\mr{nuc}})/ \psi_{r+1}(\hat{\Theta}_{\mr{nuc}}). 
            \end{align*}
        \end{itemize}
    \end{algorithm}

    Algorithm~\ref{algorithm:tuning} is computationally simple to implement.
    Step~1 and Step~2 require solving convex optimization problems, for which efficient algorithms are readily available. Importantly, our procedure avoids using cross-validation for selecting $\varphi_{NT}$, which substantially reduces the computational cost.

    We investigate the performance of Algorithm~\ref{algorithm:tuning} in numerical experiments and document its consistently good performance for different specifications with sample sizes ranging from $(N,T) = (50,40)$ to $(N,T) = (1000,200)$. This routine is already implemented in our R~package~\texttt{NNRPanel} and only requires the user to specify an ex-ante upper bound on the number of factors $R_{\max}$ as an input, which is a standard requirement even in linear factor models.

\section{Numerical Evidence}\label{sec:MC}

    In this section, we investigate the finite sample properties of our method in numerical experiments and revisit the empirical application studied in \citet{chen2021nonlinear}.

\subsection{Simulation study}

    In this section, we evaluate the performance of our estimator in canonical binary response panel models. For brevity, here we will present and discuss results for specifications with strictly exogenous covariates (conditional on the fixed effects). Additional numerical results for dynamic specifications with predetermined covariates are provided and discussed in Appendix~\ref{ssec:additional MCs}.

    In the considered numerical experiments, the data are generated from the following binary response model with $R = 2$: 
    \begin{equation}\label{eq:logit_static}
        \begin{aligned}
            Y_{it} & = \bs{1}\left(\beta_{1}X_{it} + \lambda_{i}' \gamma_{t} + \epsilon_{Y, it}>0\right), \\
            X_{it} & =  \lambda_{i}' \gamma_{t} + \lambda_{i}' \iota  + \gamma_{ t}' \iota  + \lambda_{X, i} \gamma_{X, t} +  \epsilon_{X, it}. 
        \end{aligned}
    \end{equation} 
    The error terms $\{\epsilon_{Y, it}\}_{1\leq i\leq N, 1\leq t\leq T}$ are i.i.d. over both dimensions. We consider two canonical designs: (i) \emph{Probit},  where $\epsilon_{Y, it}$ follows the standard normal distribution $N(0, 1)$; and (ii) \emph{Logit}, where $\epsilon_{Y, it}$ follows the standard logistic  distribution.

    In both designs,  $ \lambda_{i} = (\lambda_{i1}, \lambda_{i2})'$, $ \gamma_{t} = (\gamma_{t1}, \gamma_{t2})'$.  $\{ \lambda_{ir}\}_{ 1\leq i\leq N,  r=1,2}$,  $\{ \gamma_{tr}\}_{1\leq t\leq T, r=1,2}$, $\{ \lambda_{X, i}\}_{ 1\leq i\leq N}$, and $\{ \gamma_{X, t}\}_{ 1\leq t\leq T}$ consist of independent random variables drawn from the standard normal distribution $N(0, 1)$. 
    In addition, these collections are mutually independent. For both designs, $\{\epsilon_{X, it}\}_{1\leq i\leq N, 1\leq t\leq T}$ are i.i.d. (over both dimensions) draws from $N(0, 4)$, and $\beta_1 = 0.2$. We vary the sample size from $(N,T) = (50,40)$ to $(N,T) = (1000,200)$ and perform 1000 replications for each design.

    We report results for two versions of our two-step estimator. The first is implemented using the correct number of factors $R = 2$ ($\mr{TS}^*$), whereas the second involves estimating the number of factors~($\mr{TS}$). For both of these estimators, we use Algorithm~\ref{algorithm:tuning} to choose $\varphi_{NT}$ (setting $\alpha = 0.05$), and we specify $R_{\max} = 5$ when we also need to estimate the number of factors.\footnote{For the $\mr{TS}^*$ estimator, we simply apply Algorithm~\ref{algorithm:tuning} using the true number of factors.} For both of these estimators, we also report results for their analytical and jackknife bias-corrected versions ($\mr{ABC}^*$ and $\mr{JBC}^*$ for $\mr{TS}^*$, and $\mr{ABC}$ and $\mr{JBC}$ for $\mr{TS}$).\footnote{See Appendix~\ref{appendix:bias_correction} for bias-correction implementation details.}  We also report results for the naive pooled estimator that ignores individual and time latent factors (POOL) and for our first-step nuclear norm-regularized estimator (NNR).

    Table~\ref{tab:probit_static} reports the results for the Probit specification. As expected, the pooled estimator (POOL) has a substantial bias that does not diminish as the sample size increases, whereas the bias of the NNR estimator decreases slowly towards zero as $N, T \to \infty$, supporting the findings of Theorem~\ref{thm:consistency}. The proposed two-step estimator using the true number of factors ($\mr{TS}^*$) substantially improves upon the NNR estimator: its bias is smaller and converges rapidly to zero as $N$ and $T$ increase. Nevertheless, due to the incidental parameter problem, the bias and standard deviation of the $\mr{TS}^*$ estimator remain of the same order even for large sample sizes, such as $(N, T) = (1000, 200)$. The analytical bias correction method effectively addresses this problem even in smaller samples, e.g., for $(N, T) = (50, 40)$. While both the analytical and jackknife correction methods effectively reduce the bias in larger sample sizes, the former might still be preferred to the latter because it does not inflate the (asymptotic) variance of the estimator. Finally, when the number of factors is estimated rather than known, the corresponding estimators ($\mr{TS}$, $\mr{ABC}$, and $\mr{JBC}$) continue to perform well.\footnote{The average number of the estimated factors is reported in the last column of the table and denoted by $\bar{R}$.}

    Table~\ref{tab:logit_static} reports the results for the Logit specification. These findings are very similar to those for the Probit specification, except for the  deterioration in the performance of the analytical bias correction method in smaller sample sizes. The analogous results for dynamic specifications with predetermined covariates are also qualitatively similar and discussed in Appendix~\ref{ssec:additional MCs}. Overall, the considered numerical experiments demonstrate that our two-step estimator has good finite sample properties and that it can be efficiently computed even in large panels with $(N,T) = (1000,200)$.

    \begin{landscape}
    \begin{table}[p]
            \centering
            \caption{Simulation Results: Probit Model}\label{tab:probit_static}
            \begin{tabular}{cccccccccc}
            \toprule
                    & POOL  & NNR   & $\mr{TS}^*$    & $\mr{ABC}^*$ & $\mr{JBC}^*$ & $\mr{TS}$  & ABC & JBC & $\bar{R}$ \\
                    & $( \times 10^{-2})$  & $( \times 10^{-2})$  & $(10^{-2})$  & $(\times 10^{-2})$  & $(\times 10^{-2})$  & $(\times 10^{-2})$  & $(\times 10^{-2})$  & $(\times 10^{-2})$  &  \\
            \midrule
            N = 50, T = 40 &       &       &       &       &       &       &       &       &  \\
        BIAS  & 2.14  & 3.69  & 3.78  & 0.78  & -3.05 & 3.78  & 0.85  & -2.87 & 1.962 \\
        STD   & (1.82)  & (1.85)  & (2.53)  & (2.15)  & (3.61)  & (2.51) & (2.20)  & (3.74)  &  \\
        N = 100, T = 40 &       &       &       &       &       &       &       &       &  \\
        BIAS  & 2.04  & 3.25  & 2.29  & 0.18  & -1.56 & 2.30  & 0.18  & -1.55 & 1.999 \\
        STD   & (1.46)  & (1.32)  & (1.59)  & (1.42)  & (2.13)  & (1.59)  & (1.42)  & (2.13)  &  \\
        N = 200, T = 40 &       &       &       &       &       &       &       &       &  \\
        BIAS  & 2.06  & 3.08  & 1.79  & 0.05  & -0.55 & 1.79  & 0.05  & -0.55 & 2.000 \\
        STD   & (1.35)  & (1.11)  & (1.06)  & (0.96)  & (1.36)  & (1.06)  & (0.96)  & (1.36)  &  \\
        N = 100, T = 100 &       &       &       &       &       &       &       &       &  \\
        BIAS  & 2.07  & 2.70  & 1.17  & 0.07  & -0.34 & 1.17  & 0.07  & -0.34 & 2.000 \\
        STD   & (1.10)  & (0.91)  & (0.86)  & (0.81)  & (0.97)  & (0.86)  & (0.81)  & (0.97)  &  \\
        N = 200, T = 100 &       &       &       &       &       &       &       &       &  \\
        BIAS  & 2.01  & 2.41  & 0.83  & 0.02  & -0.13 & 0.83  & 0.02  & -0.13 & 2.000 \\
        STD   & (0.94)  & (0.68)  & (0.60)  & (0.58)  & (0.65)  & (0.60)  & (0.58)  & (0.65)  &  \\
        N = 200, T = 200 &       &       &       &       &       &       &       &       &  \\
        BIAS  & 1.98  & 2.14  & 0.53  & 0.00  & -0.07 & 0.53  & 0.00  & -0.07 & 2.000 \\
        STD   & (0.75)  & (0.49)  & (0.41)  & (0.40)  & (0.44)  & (0.41)  & (0.40)  & (0.44)  &  \\
        N = 1000, T = 200 &       &       &       &       &       &       &       &       &  \\
        BIAS  & 1.98  & 1.93  & 0.31  & 0.00  & -0.01 & 0.31  & 0.00  & -0.01 & 2.000 \\
        STD   & (0.57)  & (0.28)  & (0.18)  & (0.18)  & (0.25)  & (0.18)  & (0.18)  & (0.25)  &  \\
        \bottomrule
        \end{tabular}%
        \vspace{0.5cm}
        \begin{minipage}{\textwidth}
            \footnotesize
            \textbf{Note:} Monte Carlo results based on $1000$ replications for the Probit  model with strictly exogenous covariates as in~\eqref{eq:logit_static}. We set $\alpha = 0.05$ to determine the regularization parameter $\varphi_{NT}$. We report bias and standard deviation for  pooled estimators (POOL), nuclear norm regularized estimators (NNR). Using the true number of factors, $R=2$, we report bias and standard deviation (measured in units of $\times 10^{-2}$)  for our proposed estimators ($\mr{TS}^*$),   analytical bias-corrected estimators based on $\mr{TS}^*$, ($\mr{ABC}^*$) and jackknife bias-corrected estimators based on $\mr{TS}^*$, ($\mr{JBC}^*$). Using estimated number of factors $\hat{R}$,  we report bias and standard deviation  (measured in units of $\times 10^{-2}$)  for our proposed estimators ($\mr{TS}$),   analytical bias-corrected estimators based on $\mr{TS}$, ($\mr{ABC}$) and jackknife bias-corrected estimators based on $\mr{TS}$, ($\mr{JBC}$). Furthermore, we report the average estimated number of factors $\bar{R}$.  
        \end{minipage}
        \end{table}  
        \end{landscape}

        \begin{landscape}
        \begin{table}[p]
        \centering
        \caption{Simulation Results: Logit Model}\label{tab:logit_static}
        \begin{tabular}{cccccccccc}
        \toprule
                & POOL  & NNR   & $\mr{TS}^*$    & $\mr{ABC}^*$ & $\mr{JBC}^*$ & $\mr{TS}$  & ABC & JBC & $\bar{R}$ \\
                & $( \times 10^{-2})$  & $( \times 10^{-2})$  & $(10^{-2})$  & $(\times 10^{-2})$  & $(\times 10^{-2})$  & $(\times 10^{-2})$  & $(\times 10^{-2})$  & $(\times 10^{-2})$  &  \\
        \midrule
        N = 50, T = 40 &       &       &       &       &       &       &       &       &  \\
        BIAS  & 7.51  & 6.89  & 5.40  & 3.64  & -3.32 & 5.45  & 3.70  & -3.21 & 1.802 \\
        STD   & (2.48)  & (2.40)  & (3.64)  & (3.48)  & (6.56)  & (3.68)  & (3.52)  & (6.75)  &  \\
        N = 100, T = 40 &       &       &       &       &       &       &       &       &  \\
        BIAS  & 7.57  & 6.64  & 3.09  & 1.70  & -2.25 & 3.10  & 1.71  & -2.23 & 1.959 \\
        STD   & (1.99)  & (1.88)  & (2.39)  & (2.33)  & (4.01)  & (2.40)  & (2.34)  & (4.02)  &  \\
        N = 200, T = 40 &       &       &       &       &       &       &       &       &  \\
        BIAS  & 7.46  & 6.26  & 1.89  & 0.69  & -1.24 & 1.89  & 0.69  & -1.24 & 1.999 \\
        STD   & (1.67)  & (1.50)  & (1.56)  & (1.50) & (2.33)  & (1.56)  & (1.50)  & (2.33)  &  \\
        N = 100, T = 100 &       &       &       &       &       &       &       &       &  \\
        BIAS  & 7.35  & 5.82  & 1.23  & 0.32  & -0.90 & 1.23  & 0.32  & -0.90 & 2.000 \\
        STD   & (1.40)  & (1.22)  & (1.17)  & (1.12)  & (1.51)  & (1.17)  & (1.12)  & (1.51)  &  \\
        N = 200, T = 100 &       &       &       &       &       &       &       &       &  \\
        BIAS  & 7.36  & 5.39  & 0.76  & 0.11  & -0.29 & 0.76  & 0.11  & -0.29 & 2.000 \\
        STD   & (1.23)  & (1.01)  & (0.88)  & (0.85)  & (1.07)  &(0.88)  & (0.85)  & (1.07)  &  \\
        N = 200, T = 200 &       &       &       &       &       &       &       &       &  \\
        BIAS  & 7.34  & 4.87  & 0.51  & 0.08  & -0.09 & 0.51  & 0.08  & -0.09 & 2.000 \\
        STD   & (0.95)  & (0.74)  & (0.59)  & (0.57)  & (0.68)  & (0.59)  & (0.57)  & (0.68)  &  \\
        N = 1000, T = 200 &       &       &       &       &       &       &       &       &  \\
        BIAS  & 7.36  & 4.07  & 0.27  & 0.02  & -0.01 & 0.27  & 0.02  & -0.01 & 2.000 \\
        STD   & (0.70)  & (0.45)  & (0.27)  & (0.26)  & (0.34)  & (0.27)  & (0.26)  & (0.34)  &  \\
        \bottomrule
        \end{tabular}%
        \vspace{0.5cm}
        \begin{minipage}{\textwidth}
            \footnotesize
            \textbf{Note:} Monte Carlo results based on $1000$ replications for the Logit  model with strictly exogenous covariates as in~\eqref{eq:logit_static}. We set $\alpha = 0.05$ to determine the regularization parameter $\varphi_{NT}$. We report bias and standard deviation (measured in units of $\times 10^{-2}$)  for  pooled estimators (POOL), nuclear norm regularized estimators (NNR). Using the true number of factors, $R=2$, we report bias and standard deviation for our proposed estimators ($\mr{TS}^*$),   analytical bias-corrected estimators based on $\mr{TS}^*$, ($\mr{ABC}^*$) and jackknife bias-corrected estimators based on $\mr{TS}^*$, ($\mr{JBC}^*$). Using estimated number of factors $\hat{R}$,  we report bias and standard deviation  (measured in units of $\times 10^{-2}$) for our proposed estimators ($\mr{TS}$),   analytical bias-corrected estimators based on $\mr{TS}$, ($\mr{ABC}$) and jackknife bias-corrected estimators based on $\mr{TS}$, ($\mr{JBC}$). Furthermore, we report the average estimated number of factors $\bar{R}$.  
        \end{minipage}
    \end{table}
    \end{landscape}

\subsection{Empirical application}
    \label{ssec: empirical}

    In this section, we revisit the empirical application studied by \citet{chen2021nonlinear}, who proposed estimating the gravity equation with interactive fixed effects to examine the determinants of international trade flows. Specifically, our goal is to investigate if our two-step approach can replicate the estimates reported in \citet{chen2021nonlinear}.

    The data, originally from \citet{helpman2008estimating}, include bilateral trade flows and other relevant variables for $N = 157$ countries. Following \citet{chen2021nonlinear}, we focus on the year $1986$ with the trade network including 157 countries, with the effective sample size $157 \times 156 = 24{,}492$ (the number of distinct country pairs). The outcome variable  $Y_{ij}$  represents the volume of trade from country  $i$  to country  $j$ (measured in thousands of constant $2000$ US dollars). The covariates $X_{ij}$ include key determinants of $Y_{ij}$, such as the logarithm of the distance between the capitals of the two countries (Log distance) and binary indicators for shared border (Border), legal system (Legal), common language (Language), colonial ties (Colony), currency union (Currency), regional free-trade agreement (FTA), and religion (Religion). The descriptive statistics are presented in Table~\ref{tab:summary} below.

    \begin{table}[H]
        \centering
        \caption{Summary Statistics}
    
          \begin{tabular}{lcc}
          \toprule
                & Mean & Standard deviation   \\
          \midrule
          \multicolumn{1}{l}{Trade Volume} & 84,542 & 1,082,219   \\
          \multicolumn{1}{l}{Log distance} & 4.18 & 0.78  \\
          \multicolumn{1}{l}{Border} & 0.02  & 0.13   \\
          \multicolumn{1}{l}{Legal} & 0.37  & 0.48   \\
          \multicolumn{1}{l}{Language} & 0.29 & 0.45  \\
          \multicolumn{1}{l}{Colony} & 0.01  & 0.10    \\
          \multicolumn{1}{l}{Currency} & 0.60  & 1.37    \\
          \multicolumn{1}{l}{FTA} & 0.01  & 0.08   \\
          \multicolumn{1}{l}{Religion} & 0.17 & 0.25   \\
          \bottomrule
          \end{tabular}%
      
          \vspace{0.5cm}
          \begin{minipage}{\textwidth}
            \centering
              \footnotesize
              \textbf{Note:}  The table is from \cite{helpman2008estimating}. 
          \end{minipage}
          \label{tab:summary}
    \end{table}%

    As in \citet{chen2021nonlinear}, we estimate the following Poisson regression model
    \begin{align*}
         Y_{ij}\mid  X_{ij}, \lambda_{1, i} , \gamma_{1, j}, \lambda_{2, i}, \gamma_{2, j}   \sim  \mr{Poisson}(\exp\{\beta'  X_{ij} + \lambda_{1, i} + \gamma_{1, j} + \lambda'_{2, i}\gamma_{2, j}\}), 
    \end{align*}
    where we explicitly include additive two-way fixed effects $\lambda_{1,i}+\gamma_{1,j}$ together with an interactive fixed effects component $\lambda_{2,i}'\gamma_{2,j}$, which allows for richer patterns of unobserved heterogeneity such as homophily based on latent characteristics of countries $i$ and $j$. We compute our two-step estimator and select the regularization parameter and the number of factors using the algorithms provided in Section~\ref{sec:implementation}.\footnote{The incorporation of additive fixed effects requires only minor modifications of the optimization algorithms presented in Section~\ref{sec:implementation}. These modifications and additional implementation details are provided in Appendix~\ref{appendix:empirical_algorithm}.}

    In Table~\ref{tab:trade} below, we provide the estimates produced by our two-step estimator (column $\mr{TS}$) and the estimates reported by \citet{chen2021nonlinear} (column $\mr{CFW}$). Notice that since, using our algorithm, the estimated number of factors is $\hat{R}=2$, we include the estimates of \citet{chen2021nonlinear} for $R = 2$ only. As a baseline benchmark, we also report the estimates produced by the estimator incorporating only two-way fixed effects $\lambda_{1,i}+\gamma_{1,j}$ (column $\mr{TWFE}$), which match the analogous results in \citet{chen2021nonlinear}.

    We find that the estimates produced by our two-step estimator and the ones obtained by \citet{chen2021nonlinear} are overall very similar. The main difference is in the estimated effects of countries having the same currency, but this difference is still small relative to the associated standard errors. We also compare the (average) log-likelihood achieved by the methods (multiplied by 100), and find that the value achieved by our two-step estimator, $0.6711$, is only slightly lower than the one reported by \citet{chen2021nonlinear}, $0.6714$. Thus, we conclude that our two-step estimator can effectively recover the FE estimates in fairly large-dimensional empirical settings, making it a computationally attractive approach to estimation of nonlinear interactive fixed effects models in practice.

    \begin{table}[H]
        \centering
        \caption{Empirical Application: Gravity Equation}
    
          \begin{tabular}{lccc}
          \toprule
                & TWFE & CFW & $\mr{TS}$ \\
                & R = 0 & R = 2 & R = 2 \\
          \midrule
          \multicolumn{1}{c}{Log distance} & -0.64 & -0.71 & -0.71 \\
                & (0.07)  & (0.06)  & (0.05) \\
          \multicolumn{1}{c}{Border} & 0.71  & 0.32  & 0.32 \\
                & (0.16)  & (0.05)  & (0.06) \\
          \multicolumn{1}{c}{Legal} & 0.30  & 0.26  & 0.26 \\
                & (0.06)  & (0.04)  & (0.04) \\
          \multicolumn{1}{c}{Language} & -0.17 & -0.02 & -0.02 \\
                & (0.10)  & (0.06)  & (0.06) \\
          \multicolumn{1}{c}{Colony} & 0.36  & 0.39  & 0.39 \\
                & (0.12)  & (0.09)  & (0.10) \\
          \multicolumn{1}{c}{Currency} & 0.60  & 1.37  & 1.25 \\
                &(0.09)  & (0.41)  & (0.34) \\
          \multicolumn{1}{c}{FTA} & 0.25  & 0.17  & 0.17 \\
                & (0.13)  & (0.07)  & (0.06) \\
          \multicolumn{1}{c}{Religion} & -0.25 & 0.24  & 0.24 \\
                & (0.12)  & (0.13)  & (0.08) \\
          \midrule
          Log-Likelihood & -0.44 & 0.67  & 0.67 \\
          \bottomrule
          \end{tabular}
      
          \vspace{0.5cm}
          \begin{minipage}{\textwidth}
              \footnotesize
              \textbf{Note:}  This table reports the estimates and their standard errors (in parentheses) produced by the naive two-way fixed effects (TWFE) and the proposed two-step method ($\mr{TS}$), using the estimated number of factors $\hat R = 2$. Column $\mr{CFW}$ provides the analogous results reported in \citet{chen2021nonlinear} for $R = 2$. The last row reports the average log-likelihood achieved by the methods (multiplied by 100).%
          \end{minipage}
          \label{tab:trade}
        \end{table}%

    \section*{Declaration of Generative AI and AI-assisted Technologies in the Manuscript Preparation Process}

During the preparation of this work, the authors used ChatGPT for assistance with language editing and \LaTeX{} formatting, and Refine.ink to identify typos in the proofs. The authors reviewed and edited the output as needed and take full responsibility for the content of the published article.

    \bibliographystyle{aea}
    \bibliography{ref}

@article{fernandez2021low,
  title={Low-rank approximations of nonseparable panel models},
  author={Fern{\'a}ndez-Val, Iv{\'a}n and Freeman, Hugo and Weidner, Martin},
  journal={The Econometrics Journal},
  volume={24},
  number={2},
  pages={C40--C77},
  year={2021},
  publisher={Oxford University Press}
}

@article{moon2018nuclear,
  title={Nuclear norm regularized estimation of panel regression models},
  author={Moon, Hyungsik Roger and Weidner, Martin},
  journal={arXiv preprint arXiv:1810.10987},
  year={2018}
}

@article{negahban2012restricted,
  title={Restricted strong convexity and weighted matrix completion: Optimal bounds with noise},
  author={Negahban, Sahand and Wainwright, Martin J},
  journal={The Journal of Machine Learning Research},
  volume={13},
  number={1},
  pages={1665--1697},
  year={2012},
  publisher={JMLR. org}
}

@article{zeleneev2020identification,
  title={Identification and estimation of network models with nonparametric unobserved heterogeneity},
  author={Zeleneev, Andrei},
  journal={Department of Economics, Princeton University},
  year={2019}
}

@article{ma2022detecting,
  title={Detecting latent communities in network formation models},
  author={Ma, Shujie and Su, Liangjun and Zhang, Yichong},
  journal={The Journal of Machine Learning Research},
  volume={23},
  number={1},
  pages={13971--14031},
  year={2022},
  publisher={JMLRORG}
}

@article{chen2021nonlinear,
  title={Nonlinear factor models for network and panel data},
  author={Chen, Mingli and Fern{\'a}ndez-Val, Iv{\'a}n and Weidner, Martin},
  journal={Journal of Econometrics},
  volume={220},
  number={2},
  pages={296--324},
  year={2021},
  publisher={Elsevier}
}

@article{chen2016estimation,
  title={Estimation of nonlinear panel models with multiple unobserved effects},
  author={Chen, Mingli},
  year={2016},
  publisher={University of Warwick. Department of Economics}
}

@article{chernozhukov2023inference,
  title={Inference for low-rank models},
  author={Chernozhukov, Victor and Hansen, Christian and Liao, Yuan and Zhu, Yinchu},
  journal={The Annals of statistics},
  volume={51},
  number={3},
  pages={1309--1330},
  year={2023},
  publisher={Institute of Mathematical Statistics}
}

@techreport{chernozhukov2019inference,
  title={Inference for heterogeneous effects using low-rank estimations},
  author={Chernozhukov, Victor and Hansen, Christian Bailey and Liao, Yuan and Zhu, Yinchu},
  year={2019},
  institution={CEMMAP working paper}
}

@article{yu2015useful,
  title={A useful variant of the Davis--Kahan theorem for statisticians},
  author={Yu, Yi and Wang, Tengyao and Samworth, Richard J},
  journal={Biometrika},
  volume={102},
  number={2},
  pages={315--323},
  year={2015},
  publisher={Oxford University Press}
}

@article{fernandez2016individual,
  title={Individual and time effects in nonlinear panel models with large N, T},
  author={Fern{\'a}ndez-Val, Iv{\'a}n and Weidner, Martin},
  journal={Journal of Econometrics},
  volume={192},
  number={1},
  pages={291--312},
  year={2016},
  publisher={Elsevier}
}

@article{bandeira2016sharp,
  title={Sharp nonasymptotic bounds on the norm of random matrices with independent entries},
  author={Bandeira, Afonso S and Van Handel, Ramon},
  year={2016}
}

@article{tropp2012user,
  title={User-friendly tail bounds for sums of random matrices},
  author={Tropp, Joel A},
  journal={Foundations of computational mathematics},
  volume={12},
  pages={389--434},
  year={2012},
  publisher={Springer}
}

@article{kanaya2017convergence,
  title={Convergence rates of sums of $\alpha$-mixing triangular arrays: With an application to nonparametric drift function estimation of continuous-time processes},
  author={Kanaya, Shin},
  journal={Econometric Theory},
  volume={33},
  number={5},
  pages={1121--1153},
  year={2017},
  publisher={Cambridge University Press}
}

@article{negahban2012unified,
  title={A unified framework for high-dimensional analysis of M-estimators with decomposable regularizers},
  author={Negahban, Sahand N and Ravikumar, Pradeep and Wainwright, Martin J and Yu, Bin},
  year={2012}
}

@article{samson2000concentration,
  title={Concentration of measure inequalities for Markov chains and $\Phi$-mixing processes},
  author={Samson, Paul-Marie},
  journal={The Annals of Probability},
  volume={28},
  number={1},
  pages={416--461},
  year={2000},
  publisher={Institute of Mathematical Statistics}
}

@article{yu1994rates,
  title={Rates of convergence for empirical processes of stationary mixing sequences},
  author={Yu, Bin},
  journal={The Annals of Probability},
  pages={94--116},
  year={1994},
  publisher={JSTOR}
}

@book{ledoux2013probability,
  title={Probability in Banach Spaces: isoperimetry and processes},
  author={Ledoux, Michel and Talagrand, Michel},
  year={2013},
  publisher={Springer Science  Business Media}
}

@article{wang2022maximum,
  title={Maximum likelihood estimation and inference for high dimensional generalized factor models with application to factor-augmented regressions},
  author={Wang, Fa},
  journal={Journal of Econometrics},
  volume={229},
  number={1},
  pages={180--200},
  year={2022},
  publisher={Elsevier}
}

@article{hastie2015statistical,
  title={Statistical learning with sparsity},
  author={Hastie, Trevor and Tibshirani, Robert and Wainwright, Martin},
  journal={Monographs on statistics and applied probability},
  volume={143},
  number={143},
  pages={8},
  year={2015}
}

@article{nesterov2013gradient,
  title={Gradient methods for minimizing composite functions},
  author={Nesterov, Yu},
  journal={Mathematical programming},
  volume={140},
  number={1},
  pages={125--161},
  year={2013},
  publisher={Springer}
}

@article{ahn2013eigenvalue,
  title={Eigenvalue ratio test for the number of factors},
  author={Ahn, Seung C and Horenstein, Alex R},
  journal={Econometrica},
  volume={81},
  number={3},
  pages={1203--1227},
  year={2013},
  publisher={Wiley Online Library}
}

@article{bai2009panel,
  title={Panel data models with interactive fixed effects},
  author={Bai, Jushan},
  journal={Econometrica},
  volume={77},
  number={4},
  pages={1229--1279},
  year={2009},
  publisher={Wiley Online Library}
}

@article{truquet2023strong,
  title={Strong mixing properties of discrete-valued time series with exogenous covariates},
  author={Truquet, Lionel},
  journal={Stochastic Processes and their Applications},
  volume={160},
  pages={294--317},
  year={2023},
  publisher={Elsevier}
}

@article{de2011dynamic,
  title={Dynamic time series binary choice},
  author={De Jong, Robert M and Woutersen, Tiemen},
  journal={Econometric Theory},
  volume={27},
  number={4},
  pages={673--702},
  year={2011},
  publisher={Cambridge University Press}
}

@book{wainwright2019high,
  title={High-dimensional statistics: A non-asymptotic viewpoint},
  author={Wainwright, Martin J},
  volume={48},
  year={2019},
  publisher={Cambridge university press}
}

@book{fan2008nonlinear,
  title={Nonlinear time series: nonparametric and parametric methods},
  author={Fan, Jianqing and Yao, Qiwei},
  year={2008},
  publisher={Springer Science \& Business Media}
}

@article{helpman2008estimating,
  title={Estimating trade flows: Trading partners and trading volumes},
  author={Helpman, Elhanan and Melitz, Marc and Rubinstein, Yona},
  journal={The quarterly journal of economics},
  volume={123},
  number={2},
  pages={441--487},
  year={2008},
  publisher={MIT Press}
}

@article{rohde2011estimation,
  title={Estimation of high-dimensional low-rank matrices},
  author={Rohde, Angelika and Tsybakov, Alexandre B},
  journal={The Annals of Statistics},
  volume={39},
  number={2},
  pages={887--930},
  year={2011}
}

@article{pesaran2006estimation,
  title={Estimation and inference in large heterogeneous panels with a multifactor error structure},
  author={Pesaran, M Hashem},
  journal={Econometrica},
  volume={74},
  number={4},
  pages={967--1012},
  year={2006},
  publisher={Wiley Online Library}
}

@article{su2025estimation,
  title={Estimation and inference for unbalanced panel data models with interactive fixed effects},
  author={Su, Liangjun and Wang, Fa and Wang, Yiren},
  journal={Journal of Econometrics},
  volume={255},
  pages={106222},
  year={2026},
  publisher={Elsevier}
}

@article{armstrong2022robust,
  title={Robust estimation and inference in panels with interactive fixed effects},
  author={Armstrong, Timothy B and Weidner, Martin and Zeleneev, Andrei},
  journal={arXiv preprint arXiv:2210.06639},
  year={2022}
}

@article{BaiNg2017,
	title={Principal Components and Regularized Estimation of Factor Models},
	author={Bai, Jushan and Ng, Serena},
	journal={arXiv preprint arXiv:1708.08137},
	year={2017}
}

@article{bai2019rank,
  title={Rank regularized estimation of approximate factor models},
  author={Bai, Jushan and Ng, Serena},
  journal={Journal of Econometrics},
  volume={212},
  number={1},
  pages={78--96},
  year={2019},
  publisher={Elsevier}
}

@article{belloni2019high,
  title={High-dimensional latent panel quantile regression with an application to asset pricing},
  author={Belloni, Alexandre and Chen, Mingli and Madrid Padilla, Oscar Hernan and Wang, Zixuan},
  journal={The Annals of Statistics},
  volume={51},
  number={1},
  pages={96--121},
  year={2023},
  publisher={Institute of Mathematical Statistics}
}

@article{wang2022low,
  title={Low-rank panel quantile regression: Estimation and inference},
  author={Wang, Yiren and Su, Liangjun and Zhang, Yichong},
  journal={arXiv preprint arXiv:2210.11062},
  year={2022}
}

@article{feng_2023,
title={Nuclear Norm Regularized Quantile Regression with Interactive Fixed Effects}, DOI={10.1017/S0266466623000129}, journal={Econometric Theory}, publisher={Cambridge University Press}, author={Feng, Junlong}, year={2023}, pages={1–31}}

@article{athey_matrix_2021,
	title = {Matrix {Completion} {Methods} for {Causal} {Panel} {Data} {Models}},
	volume = {0},
	issn = {0162-1459},
	url = {https://doi.org/10.1080/01621459.2021.1891924},
	doi = {10.1080/01621459.2021.1891924},
	abstract = {In this article, we study methods for estimating causal effects in settings with panel data, where some units are exposed to a treatment during some periods and the goal is estimating counterfactual (untreated) outcomes for the treated unit/period combinations. We propose a class of matrix completion estimators that uses the observed elements of the matrix of control outcomes corresponding to untreated unit/periods to impute the “missing” elements of the control outcome matrix, corresponding to treated units/periods. This leads to a matrix that well-approximates the original (incomplete) matrix, but has lower complexity according to the nuclear norm for matrices. We generalize results from the matrix completion literature by allowing the patterns of missing data to have a time series dependency structure that is common in social science applications. We present novel insights concerning the connections between the matrix completion literature, the literature on interactive fixed effects models and the literatures on program evaluation under unconfoundedness and synthetic control methods. We show that all these estimators can be viewed as focusing on the same objective function. They differ solely in the way they deal with identification, in some cases solely through regularization (our proposed nuclear norm matrix completion estimator) and in other cases primarily through imposing hard restrictions (the unconfoundedness and synthetic control approaches). The proposed method outperforms unconfoundedness-based or synthetic control estimators in simulations based on real data.},
	number = {0},
	urldate = {2021-05-17},
	journal = {Journal of the American Statistical Association},
	author = {Athey, Susan and Bayati, Mohsen and Doudchenko, Nikolay and Imbens, Guido and Khosravi, Khashayar},
	year = {2021},
	keywords = {Unconfoundedness, Causality, Interactive fixed effects, Low-rank matrix estimation, Synthetic controls},
	pages = {1--15},
	file = {Athey et al_2021_Matrix Completion Methods for Causal Panel Data Models.pdf:/Users/timothyarmstrong/Zotero/storage/BX3XJM5X/Athey et al_2021_Matrix Completion Methods for Causal Panel Data Models.pdf:application/pdf;athey_et_al_arxiv_version_with_appendix.pdf:/Users/timothyarmstrong/Zotero/storage/MDZPZ6KM/athey_et_al_arxiv_version_with_appendix.pdf:application/pdf;Snapshot:/Users/timothyarmstrong/Zotero/storage/RUWGE76K/01621459.2021.html:text/html},
}

@article{beyhum2019square,
  title={Square-root nuclear norm penalized estimator for panel data models with approximately low-rank unobserved heterogeneity},
  author={Beyhum, Jad and Gautier, Eric},
  journal={arXiv preprint arXiv:1904.09192},
  year={2019}
}

@article{mugnier2025simple,
  title={A simple and computationally trivial estimator for grouped fixed effects models},
  author={Mugnier, Martin},
  journal={Journal of Econometrics},
  volume={250},
  pages={106011},
  year={2025},
  publisher={Elsevier}
}

@article{alidaee2020recovering,
  title={Recovering network structure from aggregated relational data using penalized regression},
  author={Alidaee, Hossein and Auerbach, Eric and Leung, Michael P},
  journal={arXiv preprint arXiv:2001.06052},
  year={2020}
}

@article{miao2023high,
  title={High-dimensional VARs with common factors},
  author={Miao, Ke and Phillips, Peter CB and Su, Liangjun},
  journal={Journal of Econometrics},
  volume={233},
  number={1},
  pages={155--183},
  year={2023},
  publisher={Elsevier}
}

@article{miao2020panel,
  title={Panel threshold models with interactive fixed effects},
  author={Miao, Ke and Li, Kunpeng and Su, Liangjun},
  journal={Journal of Econometrics},
  volume={219},
  number={1},
  pages={137--170},
  year={2020},
  publisher={Elsevier}
}

@article{yao2025low,
  title={Low-rank estimation of nonlinear panel data models},
  author={Yao, Kan},
  journal={arXiv preprint arXiv:2511.21948},
  year={2025}
}

@article{chen2025high,
  title={High Dimensional Discrete Choice Models With Interactive Fixed Effects Applied to Causal Inference},
  author={Chen, Ye and Miao, Ke and Su, Liangjun},
  journal={Journal of Applied Econometrics},
  year={2025},
  publisher={Wiley Online Library}
}

@article{xu2026bootstrap,
  title={Bootstrap Inference in Nonlinear Panel Data Models with Interactive Fixed Effects},
  author={Xu, Haoyuan and Miao, Wei and Dhaene, Geert and Beyhum, Jad},
  journal={arXiv preprint arXiv:2604.26826},
  year={2026}
}

@article{SantosSilva2006,
abstract = {Although economists have long been aware of Jensen's inequality, many econometric applications have neglected an important implication of it: under heteroskedasticity, the parameters of loglinearized models estimated by OLS lead to biased estimates of the true elasticities. We explain why this problem arises and propose an appropriate estimator. Our criticism of conventional practices and the proposed solution extend to a broad range of applications where log-linearized equations are estimated. We develop the argument using one particular illustration, the gravity equation for trade. We find significant differences between estimates obtained with the proposed estimator and those obtained with the traditional method. {\textcopyright} 2006 by the President and Fellows of Harvard College and the Massachusetts Institute of Technology.},
author = {{Santos Silva}, J. M.C. and Tenreyro, Silvana},
doi = {10.1162/rest.88.4.641},
file = {:Users/Andrei/Dropbox/Mendeley/Santos Silva, Tenreyro - 2006 - The log of gravity.pdf:pdf},
issn = {00346535},
journal = {Review of Economics and Statistics},
mendeley-groups = {Econometrics/JMP},
month = {nov},
number = {4},
pages = {641--658},
title = {{The log of gravity}},
volume = {88},
year = {2006}
}

\clearpage 

\appendix

\numberwithin{proposition}{section}
\numberwithin{lemma}{section}
\numberwithin{theorem}{section}
\numberwithin{corollary}{section}
\numberwithin{example}{section}
\numberwithin{table}{section}
\numberwithin{algorithm}{section}

\numberwithin{assumption}{section}

    \section{Extensions and Technical Discussions}\label{sec:extension_theory}

    This section presents extensions and technical discussions related to the assumptions in the main text: (1) We discuss the normalization of fixed effects. (2) We show how to extend our analysis to the case where $X_{it}$ contains predetermined variables. (3) We give easily verifiable sufficient conditions for restricted strong convexity (Assumption~\ref{assumption:RSC}) and diagonal structure (Assumption~\ref{assumption:block}) when predetermined covariates are present. (4)  We demonstrate that our method remains applicable even when $\Sigma_{\lambda}\Sigma_{\gamma}$ contains repeated eigenvalues (relaxing Assumption~\ref{assumption:regularity}\ref{item:strong_factors}). 

\subsection{Normalization of  fixed effects}\label{appendix:normalization}

    Similar to the linear case, we need additional $R^2$ restrictions to identify fixed effects $(\Lambda_0, \Gamma_0)$\footnote{
        For any $R$-dimensional non-singular matrix $G$, the conditional distribution of $Y_{it}$ remains invariant under the transformations $\lambda_i \mapsto \lambda_i  G'$ and $\gamma_t \mapsto \gamma_t G^{-1}$. This invariance allows us to freely choose different normalization methods for different purposes without affecting the inference of $\beta_0$.  
    }. 
    In Section \ref{sec:theory}, we directly impose normalization constraints on fixed effects $(\Lambda_0, \Gamma_0)$ such that 
    $\Lambda_0'\Lambda/N$ and $\Gamma_0'\Gamma_0/T$ are diagonal and satisfy  $\Lambda_0'\Lambda_0 /N= \Gamma_0'\Gamma_0/T$. The feasibility of such normalization can be illustrated as follows: for any $(\Lambda_0, \Gamma_0)$ satisfying Assumption~\ref{assumption:regularity}\ref{item:strong_factors}, let $D$ be the diagonal matrix containing the square roots of the  eigenvalues of the matrix $(NT)^{-1} (\Lambda_0'\Lambda_0)^{1/2}\Gamma_0'\Gamma_0 (\Lambda_0'\Lambda_0)^{1/2}$, and let $\Upsilon$ be the matrix collecting the corresponding eigenvectors.  Then there exists a  unique  transformation matrix $G= D^{1/2} \Upsilon'(\Lambda_0\Lambda_0/N)^{-1/2}$, such that the normalized nuisance parameters, 
    \begin{align*}
        \Lambda^G_0 := \Lambda_0 G', \quad \Gamma^G_0 := \Gamma_0 G^{-1}, 
    \end{align*}
    lie in the normalized parameter space $\Phi_{NT}$\footnote{
        When the eigenvalues of $\Sigma_{\lambda}\Sigma_{\gamma}$ are distinct, $G$ is unique and does not depend on the sample $\{(Y_{it}, X_{it})\}_{1\leq i\leq N, 1\leq t\leq T}$. However, with possible repeated eigenvalues, $G$ is not unique and can be identified up to an orthogonal transformation.
    }.  Unlike in the main text, in the following analysis and proofs, we distinguish between  $(\Lambda_0, \Gamma_0)$  and  $(\Lambda_0^G, \Gamma_0^G)$ to ensure a rigorous argument.

\subsection{Predetermined covariates}\label{ssec: predetermined covariates}
 
    Our estimation method is applicable in scenarios where  $X_{it}$  includes predetermined variables (e.g., lags of $Y_{it}$ in dynamic panels). While the main text focuses on the simpler case where  $X_{it}$  is strictly exogenous, this section extends the analysis to the more complex case involving predetermined variables. Considering that $X_{it}$ includes predetermined variables is crucial in panel data because the time order is very important—--unlike in network data, where node order is irrelevant. Including these variables allows us to account for dynamic effects common in empirical research. 
    
    We partition $X_{it}$ into  $X_{it} := (W'_{it}, Z'_{it})'$, where $W_{it}$ represents $d_{W}$-dimensional predetermined covariates and $Z_{it}$ represents $d_{Z}$-dimensional exogenous covariates. We collect $Z_{it, d}$ into the covariate matrix $Z_{d} \in \mb{R}^{N\times T}$ for each $d= 1,2,\ldots, d_Z$, and let $Z$ be the collection of all strictly exogenous covariate matrices, $Z= \{Z_{1}, \ldots, Z_{d_Z}\}$.  For each $i = 1,2,\ldots, N$, denote $Y_i := (Y_{i1}, Y_{i2}, \ldots, Y_{iT})'$ and $W_i := \left(W_{i1}, W_{i2}, \ldots, W_{iT}\right)'$. 
    In addition, we use $\mb{P}_{Z, \Lambda_0, \Gamma_0}(\cdot) = \mb{P}(\cdot\mid Z, \Lambda_0, \Gamma_0)$ to denote the conditional probability and use $\mb{E}_{Z, \Lambda_0, \Gamma_0}(\cdot) = \mb{E}(\cdot\mid Z, \Lambda_0, \Gamma_0)$ to denote the conditional expectation.

    Let us now consider the regularity conditions appropriate for predetermined covariates.

    \begin{assumption}[Regularity conditions - predetermined covariates]\label{assumption:regularity_pre}
        Suppose that 
        \begin{enumerate}[label=(\roman*)]
            \item \label{item:sampling_pre} \textbf{(Sampling)} Conditional on $(Z,  \Lambda_{0}, \Gamma_{0})$, $\{(Y_{i},W_i)\}_{i=1,\ldots, N}$ are independent across $i$, and for each $i$, $((Y_{i1}, W_{i1}), (Y_{i2}, W_{i2}), \ldots, (Y_{iT}, W_{iT}))$ is $\phi$-mixing with mixing coefficient $\phi_i(\tau)\rightarrow 0$ as $\tau \rightarrow \infty$, where 
                \begin{align*}
                    \phi_i(\tau) = \sup_{t} \sup_{A \in \mc{A}^{i}_{t},  B\in \mc{B}^{i}_{t + \tau}}|\mb{P}_{Z, \Lambda_0, \Gamma_0}(B\mid  A) - \mb{P}_{ Z, \Lambda_0, \Gamma_0}(B)|. 
                \end{align*} 
                Here, $\mc{A}^{i}_{t}$ is the sigma-field generated by $\{ \ldots, (Y_{i,t-1}, W_{i,t-1}), (Y_{i,t},W_{i,t})\}$, and $ \mc{B}^{i}_{t+\tau}$ is the sigma-field generated by $\{(Y_{i,t+\tau},W_{i, t+\tau}),  (Y_{i,t+\tau+1},W_{i, t+\tau+1}), \ldots\}$. For  mixing coefficients $\phi_i(\tau)$, $i=1,2,\ldots, N$, we further assume  that they exhibit a uniformly exponential decay rate:  there exists  $\zeta_0 >0$  such that $\sup_{1\leq i\leq N}\phi_i(\tau) \leq e^{-\zeta_0 \tau }$.
              
            \item \label{item:boundedness_pre} \textbf{(Boundedness)} The parameter spaces of $\beta$, $\lambda_i$, and $\gamma_t$ are bounded uniformly for all $i, t, N, T$.  In addition,  there exists a constant $\rho_X>0$ such that $\max_{d=1,\ldots, d_X}\|X_d\|_\infty <\rho_X$ for all $i, t$ and $N, T$. 
            
            \item \label{item:smoothing_pre} \textbf{(Smoothness and Convexity)} $-\ell_{it}(\cdot)$ is four times continuously differentiable and strictly convex almost surely. Furthermore, we assume that $0 < b_{\min} \leq -\ddot{\ell}_{it}( X_{it}' \beta + \lambda_i' \gamma_t)\leq b_{\max}<\infty$  almost surely  for all $\beta, \lambda_i, \gamma_t$ in the parameter space  uniformly over $i, t, N, T$. 
            
            \item \label{item:strong_factors_pre} \textbf{(Strong factors)} $\frac{1}{N} \sum_{i=1}^{N}\lambda_{0, i}\lambda_{0, i}' \cp \Sigma_{\lambda}$ and $\frac{1}{T} \sum_{t=1}^{T}\gamma_{0, t}\gamma_{0, t}'\cp \Sigma_{\gamma}$, where $ \Sigma_{\lambda} >0$ and  $\Sigma_{\gamma} >0$. In addition, the eigenvalues of $\Sigma_{\lambda}\Sigma_{\gamma}$ are distinct. 
            \item \label{item:X_generalized_nonlinearity_pre} \textbf{(Generalized non-collinearity)} For any $\Gamma\in \mb{R}^{T\times R}$, let $M_{\Lambda_0}$ and $M_{\Gamma}$ be coprojection matrices of $\Lambda_0$ and $\Gamma$ respectively.  The $d_{X}\times d_{X}$ matrix $D(\Gamma)$ with elements 
            \begin{align*}
                D(\Gamma)_{d_1, d_2} = \frac{1}{NT} \mr{Tr} \left(M_{\Lambda_0}X_{d_1} M_{\Gamma}X_{d_2}'\right), \quad d_{1}, d_{2}  = 1, \ldots, d_X, 
            \end{align*} 
            satisfies $\inf_{\Gamma \in \mb{R}^{T\times R}} \sigma_{\min} (D(\Gamma) )>0$ wpa1. 
        \end{enumerate}
    \end{assumption}

    The new regularity assumption differs from the previous one (Assumption~\ref{assumption:regularity}) only in the sampling assumption. In Assumption~\ref{assumption:regularity_pre}\ref{item:sampling_pre}, we impose weak dependence on the sequence  $(Y_{it}, W_{it})$: conditional on  $(Z, \Lambda_0, \Gamma_0)$, each sequence is $\phi$-mixing with an exponential decay rate. Although this is stricter than necessary, we adopt it to align with the sufficient conditions for restricted strong convexity (RSC). $\phi$-mixing can be replaced with $\alpha$-mixing, and the exponential decay rate can be relaxed to a sufficiently fast polynomial decay, as in \citet{fernandez2016individual}. In addition,  we do not need identical distribution or stationarity assumptions on the sequence.  
    
    We now give  the new consistency results of NNR estimators,  extending Theorem~\ref{thm:consistency} to incorporate predetermined covariates:
    \begin{theorem}\label{thm:consistency_pre}
        For any $\alpha>0$ such that  $\varphi_{NT} \geq (1+\alpha) \max\{ \|\nabla_{\beta}\mc{L}_{NT}\left(\beta_0 , \Theta_0\right)\|_{2}, \sqrt{NT}\|\nabla_{\Theta}\mc{L}_{NT}\left(\beta_0 , \Theta_0\right)\|_{\mr{op}} \}$, under Assumption~\ref{assumption:regularity_pre} and~\ref{assumption:RSC}, as $N, T\rightarrow\infty$, there exist constants  $c_1, c_2>0$ that do not depend on $N, T$ such that wpa1, 
        \begin{align*}
            \| \hat{\beta}_{\mr{nuc}} - \beta_0\|_2 & \leq c_1 \left(\varphi_{NT} + \log (NT)/\sqrt{\min\{N, T\}}\right), \\
            \frac{1}{\sqrt{NT}}\vt{\hat{\Theta}_{\mr{nuc}} - \Theta_0}_F & \leq  c_1 \left(\varphi_{NT} + \log (NT)/\sqrt{\min\{N, T\}}\right). 
        \end{align*}
        In addition, wpa1, 
        \begin{align*}      
            \frac{1}{\sqrt{N}}\|\hat{\Lambda}_{\mr{nuc}} - \Lambda^G_0 \|_{\mr{F}} & \leq c_2  \left(\varphi_{NT} + \log (NT)/\sqrt{\min\{N, T\}} \right),  \\
            \frac{1}{\sqrt{T}}\vt{\hat{\Gamma}_{\mr{nuc}} - \Gamma^G_0  }_{\mr{F}} & \leq c_2  \left(\varphi_{NT} + \log (NT)/\sqrt{\min\{N, T\}} \right). 
        \end{align*} 
    \end{theorem}
    The following corollary extends Corollary~\ref{corollary:consistency} to include predetermined covariates. 
    \begin{corollary}\label{corollary:consistency_pre}
        Under the conditions of Theorem~\ref{thm:consistency_pre},  let $\varphi_{NT} = O\left(\log (NT)/ \sqrt{  \min\{N, T\}}\right)$. Then there exist constants  $c_3, c_4>0$ that do not depend on $N, T$ such that wpa1, 
        \begin{align*}
            \| \hat{\beta}_{\mr{nuc}} - \beta_0\|_2 & \leq c_3 \log (NT)/\sqrt{\min\{N, T\}},   \\
            \frac{1}{\sqrt{NT}}\vt{\hat{\Theta}_{\mr{nuc}} - \Theta_0}_{\mr{F}} & \leq   c_3 \log (NT)/\sqrt{\min\{N, T\}}.
        \end{align*}
        In addition,  wpa1, 
        \begin{align*}      
            \frac{1}{\sqrt{N}}\|\hat{\Lambda}_{\mr{nuc}} - \Lambda^G_0\|_{\mr{F}} & \leq c_4 \log (NT)/\sqrt{\min\{N, T\}},   \\
            \frac{1}{\sqrt{T}}\| \hat{\Gamma}_{\mr{nuc}} - \Gamma^G_0 \|_{\mr{F}} & \leq  c_4 \log (NT)/\sqrt{\min\{N, T\}}. 
        \end{align*}  
    \end{corollary}

    \bigskip 
    
    We state the new diagonal structure assumption which generalizes Assumption~\ref{assumption:block}. 
    \begin{assumption}[Diagonal structure]\label{assumption:block_pre}
        The population Hessian in~\eqref{eq:local_estimators} evaluated at the   true parameters, $\mb{E}_{Z, \Lambda_0, \Gamma_0}\mc{H}_{NT}(\beta_0, \Lambda_0, \Gamma_0)$, admits a diagonal  structure,  i.e.,  there exists a constant $C>0$ (does not depend on $N, T$) such that 
        \begin{align*}
            \mb{E}_{Z, \Lambda_0, \Gamma_0}\mc{H}_{NT}(\beta_0, \Lambda_0^G, \Gamma_0^G)  \geq C   \mr{diag}\left\{\mb{I}_{d_X}, \frac{1}{N}\mb{I}_{NR}, \frac{1}{T}\mb{I}_{TR}\right\}. 
        \end{align*}
    \end{assumption}
    The local convexity and the asymptotic equivalence can be extended to incorporate predetermined covariates.
    \begin{theorem}\label{thm:convexity_strong_pre}
        Under Assumption~\ref{assumption:block_pre} and the conditions in  Corollary~\ref{corollary:consistency_pre}, suppose $N, T$ have the same order. Let $ \delta_{NT}= \log (NT) \min\{N^{-3/8}, T^{-3/8}\} $, and let $\mc{B}_{\delta_{NT}}$ be the neighborhood defined in ~\eqref{eq:neighborhood}. The following results hold: 
        \begin{enumerate}[label=(\roman*)]
            \item \label{item:1_pre} the local optimization problem~\eqref{eq:local_estimators} is strictly  convex wpa1;
            \item \label{item:2_pre} our two-step estimator is asymptotically equivalent to the FE estimator; 
            \item \label{item:3_pre} the local optimization problem~\eqref{eq:local_estimators} is strongly  convex wpa1 uniformly on $\mc{B}_{\delta_{NT}}$, i.e., there exists a constant $c_5>0$ independent of $N, T$ such that for any $(\beta, \Lambda, \Gamma) \in \mc{B}_{\delta_{NT}}$,  
            \begin{align*}
                \mc{H}_{NT}(\beta, \Lambda, \Gamma) > c_5 \mr{diag}\left\{\mb{I}_{d_X}, \frac{1}{N}\mb{I}_{NR}, \frac{1}{T}\mb{I}_{TR}\right\}, \quad \text{wpa1}. 
            \end{align*}
        \end{enumerate}
    \end{theorem}

\subsection{Further discussion on RSC}\label{appendix:RSC}

    We aim to provide a set of easily verifiable sufficient conditions for the restricted strong convexity (RSC) condition in the panel data setting. Given the critical role of the RSC condition in low-rank estimation, establishing verifiable low-level conditions in the panel data setting is important and may be of independent interest. While \citet{chernozhukov2019inference} offers sufficient conditions for verifying the RSC condition in the panel data context, their approach is limited to strictly exogenous covariates. We extend their proof strategy to accommodate predetermined covariates, thereby broadening the applicability of the RSC condition in panel data. It is worth noting that although we are focusing on low-rank estimation with homogeneous slopes, our proof strategy is quite flexible and is applicable to more general settings. For example, it can be readily adapted to establish the RSC condition under heterogeneous slopes, as in the models considered by \citet{chernozhukov2019inference} and \citet{ma2022detecting}.

    \begin{assumption}\label{assumption:conditional_independence_RSC}
        There exists a sequence of random vectors $\{v_{it}\}_{ 1\leqslant i \leqslant N,  1\leqslant t \leqslant T}$, such that
        \begin{enumerate}[label=(\roman*)]
            \item \label{item:conditional_weak_dependence_RSC} \textbf{(Conditional weak dependence)}  Conditional on $\mc{V}: =\left\{v_{it} \mid 1\leq i\leq N, 1\leq t\leq T\right\}$, $\{X_{i}\}_{1\leq i\leq N}$ are independent across $i$, and for each $i$, $(X_{i1}, X_{i2}, \ldots, X_{iT})$ is $\phi$-mixing with mixing coefficient $\phi_i(\tau)\rightarrow 0$ as $\tau \rightarrow \infty$.  We assume   that  the mixing coefficients exhibit  a uniformly exponential decay rate, i.e., there exists $\zeta_1 >0$  such that $\sup_{1\leq i\leq N}\phi_i(\tau) \leq e^{-\zeta_1 \tau }$.
            \item \label{item:conditional_variability_RSC} \textbf{(Conditional variability)} Conditional on $\mc{V}$, there exists $\kappa_0 >0$ such that the following inequality holds for any $ N, T$, 
            \begin{align*}
                \inf_{1\leq i\leq N, 1\leq t \leq T}\sigma_{\min}\left(
                    \begin{pmatrix}
                        \mb{E}(X_{it}X_{it}' \mid \mc{V}) & \mb{E} (X_{it}\mid \mc{V}) \\
                        \mb{E}(X_{it}'\mid \mc{V}) & 1
                    \end{pmatrix}
                \right) \geq \kappa_0, 
            \end{align*}
            where $\sigma_{\min}(\cdot)$ denotes the minimum eigenvalue of a matrix. 
        \end{enumerate}
    \end{assumption}

    The first part of Assumption~\ref{assumption:conditional_independence_RSC} states that while it is unrealistic to assume weak dependence of  $\{X_{it}\}_{1\leq t\leq T}$  across  $t$  due to the presence of common factors,  we assume that  the serial correlation can be significantly reduced by conditioning on a set of latent factors, $\mathcal{V}$. The second part of Assumption~\ref{assumption:conditional_independence_RSC} makes sure that $X_{it}$ cannot be fully explained by $\mc{V}$.  
    
    More specifically, Assumption~\ref{assumption:conditional_independence_RSC}\ref{item:conditional_weak_dependence_RSC} imposes weak dependence on the sequence   $\{X_{it}\}_{1\leq t\leq T}$: conditional on the common factors $\mc{V}$, the sequence is $\phi$-mixing with an exponential decay rate. The conditional weak dependence assumption is much milder than it appears, and here are some examples: 
    \begin{example}[Factor model]
        Suppose $X_{it} = \lambda_x'\gamma_t + \epsilon_{it}$ where $\lambda_x'\gamma_t$ is the common factor and $\epsilon_{it}$ is the error term. Let $v_{it} = \lambda_x'\gamma_t$. The conditional weak dependence condition holds if $\{\epsilon_{it}\}_{1\leq i\leq N, 1\leq t\leq T}$ is independent across both $i$ and $t$ conditional on $\mc{V}$. 
    \end{example}
    \begin{example}[Autoregression]
        Allowing for serial correlation becomes necessary when the model includes predetermined covariates. Suppose $X_{it} = A X_{i,t-1} + B v_{it} + \epsilon_{it} $ 
        where $A$ and $B$ are the coefficient matrices and $\epsilon_{it}$ is an innovation term with uniform bounded support. We further assume that $\psi_{\max}(A) < 1$, where $\psi_{\max}(A)$ denotes the largest singular value of $A$.  Then conditional on $\mc{V}$, the sequence $\{X_{it}\}_{1\leq t\leq T}$ is $\phi$-mixing with exponential decay rate. 
    \end{example}
    \begin{example}[Non-separable model]
        A more interesting case is the non-separable model with serial correlation, i.e.,  $X_{it} = h(X_{i,t-1}, v_{it})$. For example, consider the dynamic Logit panel, where $X_{it} = Y_{i, t-1}$, and 
        \begin{align*}
            Y_{it} = 1(\beta Y_{i, t-1} + \lambda_{0, i}'\gamma_{0, t} + \epsilon_{it} > 0). 
        \end{align*}
        Here, $\epsilon_{it}$ follows a standard logistic  distribution. Let $v_{it} = \lambda_{0, i}'\gamma_{0, t} $, and assume that the innovation terms $\{\epsilon_{it}\}_{1\leq i\leq N, 1\leq t\leq T}$ are independent across both $i$ and   $t$ and are strictly exogenous.  By Assumption~\ref{assumption:regularity_pre}\ref{item:smoothing_pre}, we conclude that,  conditional on $\mc{V}$, $\{X_{it}\}_{1\leq t\leq T}$ is a time-inhomogeneous Markov process with transition matrices whose entries are uniformly bounded away from zero. Furthermore, by verifying Doeblin's condition, it follows that $\{X_{it}\}_{1\leq t\leq T}$ is $\phi$-mixing with exponential decay rate. We recommend the interested reader in nonlinear time series models to refer to \citet{de2011dynamic} and \citet{truquet2023strong} for more details. 
    \end{example}

    Assumption~\ref{assumption:conditional_independence_RSC}\ref{item:conditional_variability_RSC} imposes  that,  even after controlling for the set of latent factors $\mathcal{V}$,   $X_{it}$  preserves sufficient variation. Notably, by applying the Schur decomposition, we directly obtain that
    \begin{align*}
        \begin{pmatrix}
            \mb{E}(X_{it}X_{it}' \mid \mc{V}) & \mb{E} (X_{it}\mid \mc{V}) \\
            \mb{E}(X_{it}'\mid \mc{V}) & 1
        \end{pmatrix} > 0 \quad \Longleftrightarrow \quad \mr{Var}(X_{it}\mid \mc{V})>0. 
    \end{align*}
    If  $X_{it}$ given $\mc{V}$ is identically distributed, the conditional variability condition simplifies to $\mr{Var}(X_{it}\mid \mc{V})>0$. If we further assume   that $X_{it}$ admits an additive structure, this condition is satisfied if $\inf_{1\leq i\leq N, 1\leq t\leq T}\mr{Var}(\epsilon_{it}\mid \mc{V})>0$. Assumption~\ref{assumption:conditional_independence_RSC}\ref{item:conditional_variability_RSC} can be viewed as  a generalization of these simpler cases.

    \begingroup
        \def\thelemma{\ref{lemma:sufficient_RSC}}
        \begin{lemma}
            Under Assumptions~\ref{assumption:regularity_pre} and \ref{assumption:conditional_independence_RSC} provided in Appendix~\ref{sec:extension_theory}, Assumption~\ref{assumption:RSC} is satisfied. %
        \end{lemma}
        \addtocounter{lemma}{-1}
    \endgroup

\subsection{Further discussion on diagonal structure}\label{appendix:diagonal}

    We will provide sufficient conditions for verifying Assumption~\ref{assumption:block} and Assumption~\ref{assumption:block_pre}, which are critical in establishing local convexity in Theorem~\ref{thm:convexity_strong} and Theorem~\ref{thm:convexity_strong_pre}. 
    Assumption~\ref{assumption:block} and Assumption~\ref{assumption:block_pre} require that the population Hessian, evaluated at the normalized true parameters $(\beta_0, \Lambda_0^G, \Gamma^G_0)$, is positive definite, with a minimum eigenvalue on the order of $(\max\{N, T\})^{-1}$. We will show that these two assumptions are weak and that,  under the strong factor condition, it suffices for $X_{it}$ to exhibit weak dependence after accounting for the influence of common factors and to retain components that cannot be fully explained by these factors. 

    \begin{assumption}\label{assumption:conditional_independence_hessian}
        There exists a sequence of random vectors $\{u_{it}\}_{ 1\leqslant i\leqslant N, 1\leqslant t\leqslant T}$, such that
        \begin{enumerate}[label=(\roman*)]
            \item \label{item:conditional_weak_dependence_hessian} \textbf{(Conditional weak dependence)}  Conditional on $\mc{U}: =\left\{u_{it} \mid 1\leq i\leq N, 1\leq t\leq T\right\}$, $\{X_{i}\}_{1\leq i\leq N}$ is independent across $i$, and for each $i$, $(X_{i1}, X_{i2}, \ldots, X_{iT})$ is $\phi$-mixing with mixing coefficient $\phi_i(\tau)\rightarrow 0$ as $\tau \rightarrow \infty$. We assume   that the mixing coefficients exhibit a uniformly exponential  decay rate, i.e., there exists $\zeta_2 >0$  such that $\sup_{1\leq i\leq N}\phi_i(\tau) \leq e^{-\zeta_2 \tau }$.
            \item \label{item:conditional_variability_hessian} \textbf{(Conditional variability)} Conditional on $\mc{U}$,  there exists $0 <\nu <1$ such that the following inequality holds  uniformly for all $1\leq i\leq N$ and $ 1\leq t\leq T$, 
            \begin{align*}
                \mb{E}(\ddot{\ell}_{it}^0X_{it}\mid \mc{U})\mb{E}(\ddot{\ell}_{it}^0X_{it}'\mid \mc{U}) \leq \nu \mb{E} (\ddot{\ell}_{it}^0 \mid \mc{U} )\mb{E} (\ddot{\ell}_{it}^0X_{it}X_{it}'\mid \mc{U}),
            \end{align*}
        where $\ddot{\ell}_{it}^0 =\ddot{\ell}_{it}(X_{it}'\beta_0 + \lambda_{0, i}'\gamma_{0, t})$. 
        \end{enumerate}
    \end{assumption}

    The first part of this assumption is identical to the conditional weak dependence stated in Assumption~\ref{assumption:conditional_independence_hessian}\ref{item:conditional_weak_dependence_hessian}. It is worth noting that the condition requiring $\{X_{it}\}_{1 \leq t \leq T}$ to be $\phi$-mixing with an exponential decay rate uniformly across $i$ is stronger than necessary and is imposed here for consistency with Assumption~\ref{assumption:conditional_independence_RSC}\ref{item:conditional_weak_dependence_RSC}. In fact, this condition could be relaxed to $\{X_{it}\}_{1 \leq t \leq T}$ being $\alpha$-mixing with a polynomial decay rate, still uniformly across $i$.   Assumption~\ref{assumption:conditional_independence_hessian}\ref{item:conditional_variability_hessian}  imposes a weak additional restriction on the model. To elaborate, let us use $\mb{E}_{\mc{U}}(\cdot)$ to denote  the conditional expectation $\mb{E}(\cdot \mid \mc{U})$, and first consider the  simplest case where $\ddot{\ell}_{it}^0$ is a constant (corresponding to a linear panel model) with $d_X = 1$. In this case, it is straightforward to verify that Assumption~\ref{assumption:conditional_independence_hessian}\ref{item:conditional_variability_hessian} is equivalent to $(\mb{E}_{\mc{U}}{X}_{it})^2 \leq \nu \mb{E}_{\mc{U}}({X}^2_{it})$ uniformly for all $i, t, N, T$. When extending the analysis to the nonlinear model with $d_X = 1$, it follows (using Hölder's inequality) that  
    \begin{align*}
        \mb{E}_{\mc{U}}(\ddot{\ell}_{it}^0{X}_{it}) \leq  \mb{E}_{\mc{U}}(|\ddot{\ell}_{it}^0{X}_{it}|)  =  \mb{E}_{\mc{U}}(|\ddot{\ell}_{it}^0|^{1/2} |\ddot{\ell}_{it}^0{X}^2_{it}|^{1/2})   \leq  \left(\mb{E}_{\mc{U}}(-\ddot{\ell}_{it}^0)\right)^{1/2}  \left(\mb{E}_{\mc{U}}(-\ddot{\ell}_{it}^0{X}^2_{it})\right)^{1/2}. 
    \end{align*} 
    Equality holds if and only if $X_{it}$ is  constant given $\mc{U}$, i.e., $X_{it}$ can be fully explained by $\mc{U}$.  Thus, in this case, Assumption~\ref{assumption:conditional_independence_hessian}\ref{item:conditional_variability_hessian} can be viewed as a uniform version of Hölder's inequality.  Extending the  analysis from $d_X = 1$ to $d_X > 1$ is straightforward and yields the same  conclusions.

    \begin{lemma}[Sufficient conditions for diagonal structure]\label{lemma:sufficient_convexity}
        Under Assumption~\ref{assumption:regularity_pre} and Assumption~\ref{assumption:conditional_independence_hessian}, Assumption~\ref{assumption:block} and Assumption~\ref{assumption:block_pre} hold.
    \end{lemma}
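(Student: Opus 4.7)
The plan is to verify Assumption~\ref{assumption:block_pre} directly; Assumption~\ref{assumption:block} then follows as the special case without predetermined covariates. A direct differentiation using $\theta_{it}=\lambda_i'\gamma_t$ shows that the quadratic form of $\mc{H}_{NT}(\beta_0, \Lambda_0^G, \Gamma_0^G)$ in the direction $v = (v_\beta, v_\Lambda, v_\Gamma)$ splits into a likelihood piece
\begin{align*}
\frac{1}{NT}\sum_{i,t}\left[-\ddot{\ell}_{it}^0\,(X_{it}'v_\beta + a_{it}^v)^2 \;-\; 2\dot{\ell}_{it}^0\, v_{\lambda,i}'v_{\gamma,t}\right], \qquad a_{it}^v := v_{\lambda,i}'\gamma_{0,t}^G + \lambda_{0,i}^{G\prime}v_{\gamma,t},
\end{align*}
and the penalty piece $Q_p(v) = \tfrac{1}{2}\|\hat{\Lambda}_{\mr{nuc}}'v_\Lambda/N - v_\Gamma'\hat{\Gamma}_{\mr{nuc}}/T\|_{\mr{F}}^2$. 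Conditioning on $(Z, \Lambda_0, \Gamma_0)$ kills the score cross term because $\mb{E}[\dot{\ell}_{it}^0\mid X_{it},\Lambda_0,\Gamma_0]=0$. The overall strategy is to split $(v_\Lambda, v_\Gamma)$ into an \emph{identifiable} component (one that moves $\Theta=\Lambda\Gamma'$ to first order) and the $R^2$-dimensional rotation tangent $(\Lambda_0^G M,\, -\Gamma_0^G M')$ with $M \in \mb{R}^{R\times R}$, then control the first via the likelihood Hessian and the second via the penalty Hessian.

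The key analytical device is a completing-the-square lower bound produced by Assumption~\ref{assumption:conditional_independence_hessian}\ref{item:conditional_variability_hession}. Conditioning first on $\mc{U}$ (taken to contain $\Lambda_0, \Gamma_0$), $a_{it}^v$ is non-random. Writing $A_{it} := v_\beta'\mb{E}[-\ddot{\ell}_{it}^0 X_{it}X_{it}'\mid \mc{U}]v_\beta$ and $B_{it} := (a_{it}^v)^2 \mb{E}[-\ddot{\ell}_{it}^0\mid \mc{U}]$, the conditional variability bound yields $(2a_{it}^v v_\beta'\mb{E}[-\ddot{\ell}_{it}^0 X_{it}\mid \mc{U}])^2 \leq 4\nu A_{it} B_{it}$, so that
\begin{align*}
\mb{E}\!\left[-\ddot{\ell}_{it}^0 (X_{it}'v_\beta + a_{it}^v)^2\mid \mc{U}\right] \;\geq\; A_{it} - 2\sqrt{\nu A_{it} B_{it}} + B_{it} \;\geq\; (1-\sqrt{\nu})(A_{it} + B_{it}),
\end{align*}
cleanly decoupling $v_\beta$ from the fixed-effect perturbation at a relative cost of $(1-\sqrt{\nu})$.

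Aggregating over $(i,t)$ and taking expectations, I bound $\tfrac{1}{NT}\sum_{i,t}B_{it}$ from below by $c\,(N^{-1}\|v_\Lambda^\perp\|_{\mr{F}}^2 + T^{-1}\|v_\Gamma^\perp\|_{\mr{F}}^2)$ (where $\perp$ denotes the component orthogonal to the rotation tangent), using Assumption~\ref{assumption:regularity_pre}\ref{item:strong_factors_pre}: the strong factor condition implies $\Lambda_0^{G\prime}\Lambda_0^G/N$ and $\Gamma_0^{G\prime}\Gamma_0^G/T$ converge to positive-definite diagonal limits, so the linear map $(v_\Lambda,v_\Gamma)\mapsto v_\Lambda\Gamma_0^{G\prime}+\Lambda_0^G v_\Gamma'$ is bi-Lipschitz on the identifiable component. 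For the rotation direction $(\Lambda_0^G M, -\Gamma_0^G M')$ one has $a_{it}^v\equiv 0$, so the likelihood contributes nothing; however, Corollary~\ref{corollary:consistency_pre} together with the normalization $\hat{\Lambda}_{\mr{nuc}}'\hat{\Lambda}_{\mr{nuc}}/N = \hat{\Gamma}_{\mr{nuc}}'\hat{\Gamma}_{\mr{nuc}}/T \to D_0$ (diagonal) gives $Q_p \approx \tfrac{1}{2}\|D_0 M + M D_0\|_{\mr{F}}^2 \geq 2 d_{\min}^2 \|M\|_{\mr{F}}^2$ wpa1, with $d_{\min}>0$ guaranteed by strong factors. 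Combining these contributions and unwinding the decomposition of $(v_\Lambda,v_\Gamma)$ produces the diagonal-block inequality with a constant $C>0$ independent of $(N,T)$, verifying both Assumption~\ref{assumption:block_pre} and Assumption~\ref{assumption:block}.

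The main obstacle is extracting the $\|v_\beta\|^2$-type lower bound from $v_\beta'\bigl(\tfrac{1}{NT}\sum_{i,t}\mb{E}[-\ddot{\ell}_{it}^0 X_{it}X_{it}'\mid\mc{U}]\bigr)v_\beta$: Assumption~\ref{assumption:conditional_independence_hessian}\ref{item:conditional_variability_hession} only constrains the ratio of first to second moments, so the absolute spectral lower bound must be obtained by a uniform law of large numbers over the $(i,t)$ indices that exploits the conditional $\phi$-mixing of Assumption~\ref{assumption:conditional_independence_hessian}\ref{item:conditional_weak_dependence_hession} together with the boundedness of $X_{it}$ and the uniform positive lower bound $-\ddot{\ell}_{it}^0 \geq b_{\min}$. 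This step mirrors the covariance-concentration argument used in the proof of Lemma~\ref{lemma:sufficient_RSC}, but now applied at the population-Hessian level rather than in a restricted cone; the delicate algebraic bookkeeping between identifiable and rotation components is a secondary, but non-trivial, piece.
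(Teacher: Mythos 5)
Your central decoupling device --- completing the square under $\mb{E}[\,\cdot\mid\mc{U}]$ so that Assumption~\ref{assumption:conditional_independence_hessian}\ref{item:conditional_variability_hession} yields $\mb{E}[-\ddot{\ell}^0_{it}(X_{it}'v_\beta + a^v_{it})^2\mid\mc{U}] \geq (1-\sqrt{\nu})(A_{it}+B_{it})$ --- is exactly the scalar form of the paper's argument, which shrinks the $\beta\beta'$ and $(\lambda,\gamma)$ blocks of the per-observation population Hessian by $\sqrt{\nu}$ and verifies positive semi-definiteness of the shrunk matrix via a Schur complement. Your treatment of the rotation null space (where $a^v_{it}\equiv 0$) through the penalty Hessian $V_0$ and the strong-factor normalization likewise matches the paper, which combines $V_0$ with the argument of Lemma 2 in \citet{chen2021nonlinear} to obtain the $\mr{diag}\{0, T\mb{I}_{NR}, N\mb{I}_{TR}\}$ floor.

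The genuine gap is the passage from this $\mc{U}$-conditional bound to the statement actually asserted in Assumption~\ref{assumption:block_pre}, which concerns $\mb{E}_{Z,\Lambda_0,\Gamma_0}\mc{H}_{NT}(\beta_0,\Lambda_0^G,\Gamma_0^G)$, not $\mb{E}[\,\cdot\mid\mc{U}]$. Writing $\mb{E}_0 H_0 = \mb{E}_{\mc{U}}\mb{E}_0 H_0 + (\mb{E}_0 H_0 - \mb{E}_{\mc{U}}\mb{E}_0 H_0)$, the deviation must be shown negligible against the \emph{anisotropic} lower bound $\mr{diag}\{NT\mb{I}_{d_X}, T\mb{I}_{NR}, N\mb{I}_{TR}\}$; a single operator-norm estimate does not suffice, because the fixed-effect blocks sit at scale $\min\{N,T\}$ while the $\beta$ block sits at $NT$. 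The paper devotes the entire second half of its proof to this: a block Gershgorin argument for the $\beta$ cross-blocks (each entry $O_p(\sqrt{T})$ or $O_p(\sqrt{N})$), elementwise mixing bounds for the diagonal $\lambda\lambda'$ and $\gamma\gamma'$ blocks, and the rectangular matrix Bernstein bound of Lemma~\ref{lemma:independent_entry} for the off-diagonal $\lambda\gamma'$ block --- all of which consume the conditional weak-dependence half of Assumption~\ref{assumption:conditional_independence_hessian}\ref{item:conditional_weak_dependence_hession}, which your sketch invokes only for $H_{\beta\beta'}$. Without the corresponding control of the remaining blocks the claimed diagonal-block inequality does not follow from what you have written. (Separately, the absolute lower bound on $\sigma_{\min}\bigl(\frac{1}{NT}\sum_{i,t}\mb{E}[-\ddot{\ell}^0_{it}X_{it}X_{it}'\mid\mc{U}]\bigr)$ that you flag as the main obstacle is a fair concern --- a conditionally degenerate regressor satisfies Assumption~\ref{assumption:conditional_independence_hessian}\ref{item:conditional_variability_hession} vacuously --- but the paper shares it, asserting that bound by analogy with \citet{chen2021nonlinear} rather than deriving it.)
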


\subsection{Relaxing distinct eigenvalues assumption}\label{ssec: distinct eigenvalues}

    We now demonstrate that our method remains applicable even when $\Sigma_{\lambda}\Sigma_{\gamma}$ contains repeated eigenvalues (relaxing Assumption~\ref{assumption:regularity}\ref{item:strong_factors} and Assumption~\ref{assumption:regularity_pre}\ref{item:strong_factors_pre}). When $\Sigma_{\lambda}\Sigma_{\gamma}$ has distinct eigenvalues, there exists a unique $R$-dimensional invertible matrix, $G= D^{1/2} \Upsilon'(\Lambda_0'\Lambda_0/N)^{-1/2}$, which does not depend on the sample $\{(Y_{it}, X_{it})\}_{1\leq i\leq N, 1\leq t\leq T}$, such that $\Lambda^{G\prime}_0\Lambda^G_0/N, \Gamma^{G\prime}_0\Gamma^G_0/T$ are diagonal and satisfy $\Lambda^{G\prime}_0\Lambda^G_0/N =  \Gamma^{G\prime}_0\Gamma^G_0/T$. However, when $\Sigma_{\lambda}\Sigma_{\gamma}$ has repeated eigenvalues, the transformation $G$ may depend on the sample. In this case, we denote it by $\hat{G}$: $\hat{G} = D^{1/2}\hat{O}\Upsilon'(\Lambda_0'\Lambda_0 /N)^{-1/2}$ (see the proof of Theorem~\ref{thm:consistency_pre}),  where $\hat{O}$ is an orthogonal matrix that may also depend on the sample. 

    The dependence of  $\hat{G}$  on the sample adds extra complexity to the proof. Therefore, we aim to construct an objective function that is independent of $G$, along with its corresponding Hessian, which allows us to directly apply the proof of Theorem~\ref{thm:convexity_strong_pre} to establish the asymptotic equivalence between our estimators and the FE estimators in \citet{chen2021nonlinear}. 
    
    Define 
    \begin{align*}
        G_{NT} := \mr{diag}\left\{\mb{I}_{d_X}, \underbrace{\hat{G}, \ldots, \hat{G}}_{N}, \underbrace{\hat{G}^{-1}, \ldots, \hat{G}^{-1}}_{T}\right\}. 
    \end{align*}
    It is straightforward to verify that (i) $\hat{G}$ is invertible wpa1,  (ii) the minimum eigenvalue of $\hat{G}$ is strictly positive wpa1 and does not depend on the sample, and (iii) the maximum eigenvalue of $\hat{G}$ is uniformly bounded wpa1 (see the proof of Theorem~\ref{thm:consistency_pre}) and does not depend on the sample. Therefore, it follows that $G_{NT}$ is invertible wpa1,  the minimum eigenvalue of $G_{NT}$ is strictly positive wpa1 and independent of the sample, and the maximum eigenvalue of $G_{NT}$ is uniformly bounded wpa1. In addition, one can verify that neither
    \begin{align*}
        \nabla^2 \mc{L}_{NT}(\beta_0, \Lambda_0, \Gamma_0)  = G_{NT}' \nabla^2 \mc{L}_{NT}(\beta_0, \Lambda_0^G, \Gamma_0^G) G_{NT} 
    \end{align*}
    or 
    $\nabla^2 \|\hat{\Lambda}_{\mr{nuc}}'\Lambda - \Gamma'\hat{\Gamma}_{\mr{nuc}} \|_{\mr{F}}^2 $ do not depend on $G_{NT}$.  
    Let $a := \min\left\{1, \sigma_{\min}(G_{NT}), 1/\sigma_{\max}(G_{NT})\right\}$. Then, 
    \begin{align*}
        \mc{H}_{NT}(\beta_0, \Lambda_0^G, \Gamma_0^G) & = G_{NT}^{-1\prime} \nabla^2 \mc{L}_{NT}(\beta_0, \Lambda_0, \Gamma_0)G_{NT}^{-1} + \frac{1}{2}G_{NT}^{\prime -1} G_{NT}^{\prime} \nabla^2\|\hat{\Lambda}_{\mr{nuc}}'\Lambda - \Gamma'\hat{\Gamma}_{\mr{nuc}} \|_{\mr{F}}^2 G_{NT} G_{NT}^{-1}  \\
        & = G_{NT}^{-1\prime} \left(\nabla^2 \mc{L}_{NT}(\beta_0, \Lambda_0, \Gamma_0) + \frac{1}{2} G_{NT}^{\prime} \nabla^2 \|\hat{\Lambda}_{\mr{nuc}}'\Lambda - \Gamma'\hat{\Gamma}_{\mr{nuc}}  \|_{\mr{F}}^2  G_{NT}\right) G_{NT}^{-1} \\
        & \geq G_{NT}^{-1\prime} \underbrace{\left(\nabla^2 \mc{L}_{NT}(\beta_0, \Lambda_0, \Gamma_0) + \frac{a^2}{2} \nabla^2 \|\hat{\Lambda}_{\mr{nuc}}'\Lambda - \Gamma'\hat{\Gamma}_{\mr{nuc}}  \|_{\mr{F}}^2  \right)}_{\text{does not depend on the sample}} G_{NT}^{-1}. 
    \end{align*}
    Therefore, $\mb{E}_{Z, \Lambda_0, \Gamma_0}\mc{H}_{NT}(\beta_0, \Lambda_0^G, \Gamma_0^G)$ satisfies Assumption~\ref{assumption:block_pre} (or Assumption~\ref{assumption:block}) if and only if 
    \begin{align*}
        \mb{E}_{Z, \Lambda_0, \Gamma_0}\left(\nabla^2 \mc{L}_{NT}(\beta_0, \Lambda_0, \Gamma_0) + \frac{a^2}{2} \nabla^2 \|\hat{\Lambda}_{\mr{nuc}}'\Lambda - \Gamma'\hat{\Gamma}_{\mr{nuc}} \|_{\mr{F}}^2 \right)
    \end{align*}
    also satisfies Assumption~\ref{assumption:block_pre} (or Assumption~\ref{assumption:block}). Using the same technique as in the proof of Lemma~\ref{lemma:sufficient_convexity},  Assumption~\ref{assumption:regularity_pre} together with Assumption~\ref{assumption:conditional_independence_hessian} is sufficient to establish  that $$\mb{E}_{Z, \Lambda_0, \Gamma_0}\left(\nabla^2 \mc{L}_{NT}(\beta_0, \Lambda_0, \Gamma_0) + \frac{a^2}{2} \nabla^2 \|\hat{\Lambda}_{\mr{nuc}}'\Lambda - \Gamma'\hat{\Gamma}_{\mr{nuc}} \|_{\mr{F}}^2\right)$$ has a diagonal structure. Therefore, by replacing $(\Lambda^G_0, \Gamma_0^G)$ with $(\Lambda_0, \Gamma_0)$ in the proof of Theorem~\ref{thm:convexity_strong_pre},  we establish that Theorem~\ref{thm:convexity_strong_pre} still holds even in the presence of repeated eigenvalues.

    \section{Discussions on Algorithm and Additional Simulations}\label{sec:extension_numerics}

In this section, (i) we characterize the two-step estimator and present the corresponding algorithm used in the empirical application in Section~\ref{sec:MC}; (ii) we provide procedures for the analytical bias correction and the split-panel Jackknife bias correction; (iii) we present Monte Carlo simulation results for nonlinear panel models with predetermined covariates.

\subsection{Implementation details for the empirical application}\label{appendix:empirical_algorithm}

    We provide the optimization problem and the associated algorithm used in the empirical application in Section~\ref{sec:MC}, where additive two-way fixed effects are explicitly incorporated into the index. Recall that we aim to estimate the following Poisson model:
    \begin{align*}
         Y_{ij}\mid  X_{ij}, \lambda_{1, i} , \gamma_{1, j}, \lambda_{2, i}, \gamma_{2, j}  \sim  \mr{Poisson}(\exp\{\beta'  X_{ij} + \lambda_{1, i} + \gamma_{1, j} + \lambda'_{2, i}\gamma_{2, j}\}). 
    \end{align*}
    For notational simplicity, define
    \begin{align*}
        \mc{L}_{N}\left(\beta, \lambda_1, \gamma_1, \Theta\right)& : = \frac{-1}{N(N-1)}\sum_{\substack{ i, j = 1, \ldots, N \\ i\neq j} } \log f(Y_{ij} \mid \beta'X_{ij} + \lambda_{1, i} + \gamma_{1, j} + \theta_{ij}),  \\
        \mc{L}_{N}\left(\beta, \lambda_1, \gamma_1, \Lambda_2, \Gamma_2 \right)& :  = \frac{-1}{N(N-1)}\sum_{\substack{ i, j = 1, \ldots, N \\ i\neq j} } \log f(Y_{ij} \mid \beta'X_{ij} + \lambda_{1, i} + \gamma_{1, j} + \lambda_{2, i}' \gamma_{2, j}). 
    \end{align*}
    \paragraph{NNR estimator} In the first step, the NNR estimator $(\hat{\beta}_{\mr{nuc}},\hat{\lambda}_{\mr{1, nuc}}, \hat{\gamma}_{\mr{1, nuc}}, \hat{\Theta}_{\mr{nuc}})$ solves 
    \begin{equation*}
    \begin{aligned}
        \left(\hat{\beta}_{\mr{nuc}},\hat{\lambda}_{\mr{1, nuc}}, \hat{\gamma}_{\mr{1, nuc}}, \hat{\Theta}_{\mr{nuc}}\right) = \argmin_{\substack{
        \beta \in \mb{R}^{d_X}, \Theta\in \mb{R}^{N\times N} \\
        \lambda_1 \in \mb{R}^{N}, \gamma_1 \in \mb{R}^{N} }}  \left\{ \mc{L}_{N}\left(\beta, \lambda_1, \gamma_1, \Theta\right)  + \frac{\varphi_{N}}{\sqrt{N(N-1)}} \|\Theta\|_{\mr{nuc}}\right\}, 
    \end{aligned}
    \end{equation*}
    where $\|\Theta\|_{\mr{nuc}}$ denotes the nuclear norm of matrix $\Theta$, and $\varphi_{N} > 0$ is a regularization parameter. The nuclear norm regularized estimators $(\hat{\Lambda}_{\mr{2, nuc}}, \hat{\Gamma}_{\mr{2, nuc}})$ are obtained through the singular value decomposition of $\hat{\Theta}_{\mr{nuc}}$ as in~\eqref{eq:space_definition}. 
    \paragraph{Local estimator} The local estimator $(\hat{\beta}_{\mr{local}},\hat{\lambda}_{\mr{1, local}}, \hat{\gamma}_{\mr{1, local}}, \hat{\Lambda}_{\mr{2, local}}, \hat{\Gamma}_{\mr{2, local}})$ solves 
    \begin{equation*}
    \begin{aligned}
         (\hat{\beta}_{\mr{local}},\hat{\lambda}_{\mr{1, local}}, \hat{\gamma}_{\mr{1, local}}, \hat{\Lambda}_{\mr{2, local}}, \hat{\Gamma}_{\mr{2, local}} ) = \argmin_{\substack{
        \beta \in \mb{R}^{d_X}, \lambda_1 \in \mb{R}^{N}, \gamma_1 \in \mb{R}^{N} , \\
        \Lambda_2 \in \mb{R}^{N\times R}, \Gamma_2 \in \mb{R}^{N\times R}
        }}  \mc{L}_{N}\left(\beta, \lambda_1, \gamma_1, \Lambda_2, \Gamma_2 \right)
    \end{aligned}
    \end{equation*}
    using the NNR estimator as the initial value. 

    \bigskip 
    We now provide the algorithm for obtaining the NNR estimators with additive fixed effects as follows. 
    \begin{algorithm}[Proximal gradient descent with additive fixed effects]
        Compute the nuclear norm regularized estimator as follows:
        \begin{itemize}[label={}, leftmargin=2cm]
            \item[Step 1:] Fix the step sizes $(s_{\beta}, s_{\lambda, \gamma}, s_{\theta})$. Initialize $\beta^{(0)}$, $\lambda^{(0)}_{1}$, $\gamma_{1}^{(0)}$, and $\Theta^{(0)}$. Set $k = 0$. 
            \item[Step 2:] Let 
            \begin{equation*} 
                \begin{aligned}
                    \beta^{(k+1)} & = \beta^{(k)} - s_{\beta} \nabla_{\beta}\mc{L}_{N}\left(\beta^{(k)}, \lambda^{(k)}_{1}, \gamma_{1}^{(k)}, \Theta^{(k)}\right), \\
                    \lambda^{(k+1)}_{1} & = \lambda^{(k)}_{1} - s_{\lambda, \gamma} \nabla_{\lambda_1}\mc{L}_{N}\left(\beta^{(k)}, \lambda^{(k)}_{1}, \gamma_{1}^{(k)}, \Theta^{(k)}\right), \\
                    \gamma^{(k+1)}_{1} & = \gamma^{(k)}_{1} - s_{\lambda, \gamma} \nabla_{\gamma_1}\mc{L}_{N}\left(\beta^{(k)}, \lambda^{(k)}_{1}, \gamma_{1}^{(k)}, \Theta^{(k)}\right), \\
                    \Theta^{(k+1)} & = \mc{S}^*_{s_{\theta}\frac{\varphi_{N}}{\sqrt{N(N-1)}}}\left(\Theta^{(k)} - s_{\theta}\nabla_{\Theta}\mc{L}_{N}\left(\beta^{(k)}, \lambda^{(k)}_{1}, \gamma_{1}^{(k)}, \Theta^{(k)}\right)\right), 
               \end{aligned}
            \end{equation*}
            and set $k = k+1$. 
            \item[Step 3:] Repeat Step 2 until convergence. 
        \end{itemize}
    \end{algorithm}
    Regarding the choice of step sizes,  we recommend starting each iteration with step sizes $s_{\beta} = 1$, $s_{\lambda, \gamma} = N$, and  $s_{\theta} = N^2$. If the objective function increases, the step sizes are iteratively halved until a decrease in the objective function is achieved.
    \begin{algorithm}[Gradient descent] 
        Compute the local estimator as follows:
        \begin{itemize}[label={}, leftmargin=2cm]
            \item[Step 1:] Fix the step sizes $(s_{\beta}, s_{\lambda, \gamma}, s_{\lambda_2}, s_{\gamma_2})$. Initialize  $\beta^{(0)} = \hat{\beta}_{\mr{nuc}}$, $\lambda^{(0)}_{1} = \hat{\lambda}_{1, \mr{nuc}}$, $\gamma^{(0)}_{1} = \hat{\gamma}_{1, \mr{nuc}}$,  $\Lambda^{(0)}_2 = \hat{\Lambda}_{\mr{nuc}, 2}$, and $\Gamma^{(0)}_2 = \hat{\Gamma}_{\mr{nuc}, 2}$.  Set $k = 0$. 
            \item[Step 2:] Let 
            \begin{equation*}
                \begin{aligned}
                    \beta^{(k+1)} & = \beta^{(k)} - s_{\beta} \nabla_{\beta}\mc{L}_{N}(\beta^{(k)}, \lambda^{(k)}_{1}, \gamma_{1}^{(k)}, \Lambda^{(k)}_2, \Gamma^{(k)}_2 ), \\
                    \lambda^{(k+1)}_1 & = \lambda^{(k)}_1 - s_{\lambda, \gamma}\nabla_{\lambda_1 }\mc{L}_{N}(\beta^{(k)}, \lambda^{(k)}_{1}, \gamma_{1}^{(k)}, \Lambda^{(k)}_2, \Gamma^{(k)}_2 ), \\
                    \gamma^{(k+1)}_1 & = \gamma^{(k)}_1 - s_{\lambda, \gamma}\nabla_{\gamma_1}\mc{L}_{N}(\beta^{(k)}, \lambda^{(k)}_{1}, \gamma_{1}^{(k)}, \Lambda^{(k)}_2, \Gamma^{(k)}_2 ), \\
                    \Lambda^{(k+1)}_2 & = \Lambda^{(k)}_2 - s_{\lambda_2}\nabla_{\lambda_2}\mc{L}_{N}(\beta^{(k)}, \lambda^{(k)}_{1}, \gamma_{1}^{(k)}, \Lambda^{(k)}_2, \Gamma^{(k)}_2 ), \\
                    \Gamma^{(k+1)}_2 & = \Gamma^{(k)}_2 - s_{\gamma_2}\nabla_{\gamma_2}\mc{L}_{N}(\beta^{(k)}, \lambda^{(k)}_{1}, \gamma_{1}^{(k)}, \Lambda^{(k)}_2, \Gamma^{(k)}_2 ), 
               \end{aligned}
            \end{equation*}
            and set $k = k+1$.
            \item[Step 3:] (Optional) Normalize $\Lambda^{(k+1)}_2$ and $\Gamma^{(k+1)}_2$, for example, let $\frac{1}{N}\Lambda^{(k+1)\prime}_2\Lambda^{(k+1)}_2 = \frac{1}{N}\Gamma^{(k+1)\prime}_2\Gamma^{(k+1)}_2$ and diagonal. 
            \item[Step 4:] Repeat Step 2 and Step 3 until convergence. 
        \end{itemize}
    \end{algorithm}
    In practice, in each step, we recommend  starting with $s_{\beta} = 1$, $s_{\lambda, \gamma} = s_{\lambda_2} = s_{\gamma_2} = N$. If the objective function increases, the step sizes are iteratively halved until a decrease in the objective function is achieved.

\subsection{Bias correction}\label{appendix:bias_correction}

    In this section, we provide implementation details for the analytical bias correction and the split-panel jackknife bias correction.
   
    \paragraph{Analytical bias correction}
    We follow \citet{chen2021nonlinear} to perform analytical bias correction. For any $d = 1,2,\ldots, d_X$, let 
    \begin{align*}
        (\Lambda^*_{d}, \Gamma^*_{d}) \in \argmin_{\Lambda_{d}\in \mb{R}^{N\times R}, \Gamma_{d}\in \mb{R}^{T\times R}}\sum_{i=1}^{N}\sum_{t=1}^{T}\mb{E}(-\ddot{\ell}^0_{it})\left(\frac{\mb{E}(\ddot{\ell}^0_{it}X_{it, d})}{\mb{E}(\ddot{\ell}^0_{it})} - \lambda_{d, i}' \gamma_{0, t} - \lambda_{0, i}' \gamma_{d, t}\right)^2, 
    \end{align*}
    and  define 
    \begin{align*}
        \Xi_{d, it} := \lambda_{d, i}^{*\prime} \gamma_{0, t} + \lambda_{0, i}' \gamma^*_{d, t}, \quad \tilde{X}_{d, it} := X_{d, it} - \Xi_{d, it}. 
    \end{align*}
    Here, $\ddot{\ell}_{it}^0: = \partial^2 \log f(\beta_0'X_{it} + \lambda_0'\gamma_0 )/\partial Y^{*2} $ denotes the second derivative of the log-likelihood $\ell_{it}(\cdot)$ with respect to the index, evaluated at the true parameter values. In addition, let $\hat{\Xi}_{it}$ denote the sample analogue of $\Xi_{it}$,  and define 
    \begin{align*}
        \widehat{B} & :=  -\frac{1}{N} \sum_{i=1}^{N} \sum_{t=1}^{T}\widehat{\gamma}'_{\mr{local}, t} \left(\sum_{\tau=1}^{T}\widehat{\gamma}_{\mr{local}, \tau}\widehat{\gamma}_{\mr{local}, \tau}'\widehat{\ddot{\ell}}_{i\tau}\right)^{-1}\widehat{\gamma}_{\mr{local}, t} \left( \widehat{\dot{\ell}}_{it}\widehat{\ddot{\ell}}_{it}(X_{it} - \widehat{\Xi}_{it}) + \frac{1}{2}\widehat{\dddot{\ell}}_{it}(X_{it} - \widehat{\Xi}_{it})\right),   \\
        \widehat{D} & := -\frac{1}{T} \sum_{t=1}^{T} \sum_{i=1}^{N} \widehat{\lambda}'_{\mr{local}, i} \left(\sum_{j=1}^{N}\widehat{\lambda}_{\mr{local}, j}\widehat{\lambda}_{\mr{local}, j}'\widehat{\ddot{\ell}}_{jt}\right)^{-1}\widehat{\lambda}_{\mr{local}, i}\left(\widehat{\dot{\ell}}_{it}\widehat{\ddot{\ell}}_{it}(X_{it} - \widehat{\Xi}_{it}) + \frac{1}{2}\widehat{\dddot{\ell}}_{it}(X_{it} - \widehat{\Xi}_{it})\right), \\
        \widehat{W} & := -\frac{1}{NT} \sum_{i=1}^{N}\sum_{t=1}^{T} \widehat{\ddot{\ell}}_{it}(X_{it} - \widehat{\Xi}_{it})(X_{it} - \widehat{\Xi}_{it})'. 
    \end{align*}
    Here, $\widehat{\dot{\ell}}_{it}$, $\widehat{\ddot{\ell}}_{it}$, $\widehat{\dddot{\ell}}_{it}$ denote the first, second, and third derivatives of the log-likelihood $\ell_{it}(\cdot)$ with respect to the index, evaluated at the local estimator $(\hat{\beta}_{\mr{local}}, \hat{\Lambda}_{\mr{local}}, \hat{\Gamma}_{\mr{local}})$.   The analytical bias correction estimator, $\hat{\beta}_{\mr{ABC}} := \hat{\beta}_{\mr{local}} - \frac{1}{T} \widehat{W}^{-1} \widehat{B} - \frac{1}{N} \widehat{W}^{-1} \widehat{D}$, follows
    \begin{align*}
        \sqrt{NT}\left(\hat{\beta}_{\mr{ABC}} - \beta_0\right)\cd N(0, \widehat{W}^{-1}). 
    \end{align*}

    \paragraph{Split-panel jackknife} The split-panel jackknife estimator in \citet{chen2021nonlinear} is given by 
    \begin{align*}
        \hat{\beta}_{JBC} := 3 \hat{\beta}_{\mr{local}} - \overline{\beta}_{N, T/2} - \overline{\beta}_{N/2, T}, 
    \end{align*}
    where $\overline{\beta}_{N, T/2}$ is the average of the estimators in the half-panels $\{(i, t)\mid i=1, \ldots, N, t = 1,\ldots, \lceil T/2 \rceil \}$ and $\{(i, t)\mid i=1, \ldots, N, t =  \lceil T/2 \rceil \ + 1,\ldots, T\}$, $\overline{\beta}_{N/2, T}$ is the average of the estimators in the half-panels $\{(i, t)\mid i=1, \ldots, \lceil N/2 \rceil , t = 1,\ldots, T\}$ and $\{(i, t)\mid i=\lceil N/2 \rceil + 1\ldots, N , t = 1,\ldots, T\}$.

\subsection{Additional Simulations}\label{ssec:additional MCs}
    
    We evaluate the performance of our estimator in a binary response model with predetermined covariates, where the data are generated from the following single-index model with $R = 2$, 
    \begin{equation}\label{eq:logit_dynamic}
        \begin{aligned}
            Y_{it} & = \bs{1}\left(\beta_{1} Y_{i,t-1} + \beta_{ 2}Z_{it} + \lambda_{i}' \gamma_{t} + \epsilon_{Y, it}>0\right),  \\
            Z_{it} & =  \lambda_{i}' \gamma_{t} + \lambda_{i}' \iota  + \gamma_{ t}' \iota  + \lambda_{z, i}  \gamma_{z, t} +  \epsilon_{Z, it}. 
        \end{aligned}
    \end{equation}  
    Here, $ \lambda_{i} = (\lambda_{i1}, \lambda_{i2})'$, $ \gamma_{t} = (\gamma_{t1}, \gamma_{t2})'$.  $\{ \lambda_{ir}\}_{ 1\leq i\leq N,  r=1,2}$,  $\{ \gamma_{tr}\}_{1\leq t\leq T, r=1,2}$, $\{ \lambda_{z, i}\}_{ 1\leq i\leq N}$, and $\{ \gamma_{z, t}\}_{ 1\leq t\leq T}$ consist of independent random variables drawn from the standard normal distribution $N(0, 1)$. 
    The error terms $\{\epsilon_{Y, it}\}_{1\leq i\leq N, 1\leq t\leq T}$ are i.i.d. random variables over both $i$ and $t$ following the standard logistic distribution, and $\{\epsilon_{Z, it}\}_{1\leq i\leq N, 1\leq t\leq T}$ are i.i.d. random variables over both $i$ and $t$ following the normal distribution  $N(0, 4)$.  We set $R_{\max} = 5$ and $(\beta_{1}, \beta_2) = (0.5, 0.2)$.  Conditional on $\Lambda_0$ and $\Gamma_0$, $Z_{it}$ is an exogenous covariate while the lagged outcome $Y_{it-1}$ is a predetermined variable. 

    We conduct $1000$ Monte Carlo replications to evaluate the finite-sample performance of our estimator across different sample sizes, ranging from $(N,T) = (50,40)$ to $(N,T) = (1000,200)$. We also report alternative estimators for comparison, as well as bias-corrected estimators to examine whether bias corrections based on our two-step estimator mitigate the incidental parameter problem. 
    
    The estimators considered in the Monte Carlo experiments include a pooled estimator that ignores individual and time latent factors (POOL); our first-step nuclear norm-regularized estimator (NNR), defined in~\eqref{eq:nnr_definition}; our two-step estimator using the true number of factors $R$ ($\mr{TS}^*$); and its analytical bias-corrected estimator  ($\mr{ABC}^*$). We also report the corresponding two-step estimator using an estimated number of factors $\hat{R}$ ($\mr{TS}$), along with its analytical bias-corrected counterpart ($\mr{ABC}$).
    We report both the bias and the standard deviation of each estimator. In addition, we provide the average estimated number of factors $\hat{R}$.

    It is worth noting that we do not report results for the split-panel jackknife bias correction method, as it requires an additional homogeneity condition (\citealp{fernandez2016individual}, Assumption 4.3), which we consider too restrictive in dynamic settings. For the analytical correction,  we directly apply the analytical bias correction method proposed in \citet{chen2021nonlinear} to the dynamic setting, despite their method not being specifically designed for dynamic models. Developing a valid analytical bias correction formula for dynamic nonlinear panel models with interactive fixed effects is challenging and beyond the scope of this paper. 

    Table~\ref{tab:logit_dynamic} reports results for dynamic Logit models. The first column (POOL) presents the pooled regression results, which exhibit substantial bias that does not diminish as the sample size increases. The second column (NNR) reports the performance of the NNR estimator, whose bias decreases slowly toward zero as $N, T \to \infty$, consistent with Theorem~\ref{thm:consistency}. 
    Table~\ref{tab:logit_dynamic} shows that our proposed two-step estimator using the true number of factors ($R = 2$), reported in column $\mr{TS}^*$, substantially improves upon the NNR estimator: its bias is smaller and converges rapidly to zero as $N$ and $T$ increase. Nevertheless, due to the incidental parameter problem, the bias and standard deviation of the $\mr{TS}^*$ estimator remain of the same order even for large sample sizes  such as $(N, T) = (1000, 200)$. The table also shows that applying analytical bias correction to our two-step estimator effectively reduces bias. The analytical bias-corrected estimator ($\mr{ABC}^*$) significantly reduces bias even in small samples, such as $(N, T) = (50, 40)$. In addition, we find that the analytical bias correction applied to the $\mr{TS}^*$ estimator is more effective in reducing the bias of the estimator of the dynamic effect $\beta_1$ than that of $\beta_2$.

    When the number of factors is estimated rather than known, the corresponding estimators ($\mr{TS}$, and $\mr{ABC}$) continue to perform well. Since the number of factors can be estimated with high accuracy (see the last column $\bar{R}$), the performance of the  estimators based on $\hat{R}$ is very similar to that obtained using the true $R$.  
    In some cases, the bias-corrected estimator using $\hat{R}$ even performs better than the bias-corrected estimator that uses the true $R$. 
    For example, when $N = 200, T = 40$, the bias of $\hat{\beta}_1$ for $\mr{ABC}^*$ is $-0.47 \times 10^{-2}$, whereas the bias for $\mr{ABC}$ is $0.09 \times 10^{-2}$. We attribute this observation to small-sample effects. 
    Overall, the two-step estimator delivers strong finite-sample performance and, when combined with bias correction, achieves substantial bias reduction and supports valid inference.

    \begin{table}[H]
        \centering
        \caption{Logit Model with Predetermined Covariate}\label{tab:logit_dynamic}
        \begin{tabular}{ccccccccc}
        \toprule
                &       & POOL  & NNR   &$\mr{TS}^*$    & $\mr{ABC}^*$ & $\mr{TS}$   & ABC & $\overline{R}$ \\
                &       & $(\times 10^{-2})$  & $(\times 10^{-2})$  & $(\times 10^{-2})$  &$(\times 10^{-2})$  & $(\times 10^{-2})$  & $(\times 10^{-2})$  &  \\
        \midrule
        N = 50, T = 40 &       &       &       &       &       &       &       &  \\
        BIAS  & $\beta_1$ & -11.99 & -10.65 & 2.25  & -0.90 & 2.68  & -0.37 & 1.94 \\
        STD   &       & (12.75) & (8.32) & (10.75) & (10.37) & (10.93) & (10.26) &  \\
        BIAS  & $\beta_2$ & 7.53  & 6.97  & 5.57  & 4.04  & 5.61  & 3.75  &  \\
        STD   &       & (2.38)  & (2.34)  & (3.69)  & (3.48)  & (3.81)  & (3.52)  &  \\
        N = 100, T = 40 &       &       &       &       &       &       &       &  \\
        BIAS  & $\beta_1$ & -12.32 & -10.36 & 2.03  & -0.53 & 2.44  & 0.00  & 1.997 \\
        STD   &       & (11.27) & (6.10)  & (6.76)  & (6.52)  & (6.74) & (6.42)  &  \\
        BIAS  & $\beta_2$ & 7.53  & 6.66  & 2.89  & 1.71  & 2.93  & 1.49  &  \\
        STD   &       & (1.98)  & (1.86)  & (2.30)  & (2.23)  & (2.27)  & (2.18)  &  \\
        N = 200, T = 40 &       &       &       &       &       &       &       &  \\
        BIAS  & $\beta_1$ & -12.12 & -10.28 & 1.55  & -0.47 & 1.89  & -0.09 & 1.988 \\
        STD   &       & (10.37) & (5.08)  & (4.79)  & (4.71)  & (4.82)  & (4.56)&  \\
        BIAS  & $\beta_2$ & 7.48  & 6.37  & 1.84  & 0.85  & 1.94  & 0.73  &  \\
        STD   &       & (1.66)  & (1.53)  & (1.70)  & (1.69)  & (1.68)  & (1.66)  &  \\
        N = 100, T = 100 &       &       &       &       &       &       &       &  \\
        BIAS  & $\beta_1$ & -12.12 & -9.81 & 1.53  & 0.09  & 1.53  & 0.09  & 2.000 \\
        STD   &       & (7.03)  & (3.89)  & (3.87)  & (3.77)  & (3.87)  & (3.77)  &  \\
        BIAS  & $\beta_2$ & 7.47  & 5.95  & 1.17  & 0.27  & 1.17  & 0.27  &  \\
        STD   &       & (1.45)  & (1.29)  & (1.31)  & (1.26)  & (1.31)  & (1.26)  &  \\
        N = 200, T = 100 &       &       &       &       &       &       &       &  \\
        BIAS  & $\beta_1$ & -12.25 & -9.25 & 1.42  & 0.18  & 1.42  & 0.18  & 2.000 \\
        STD   &       & (6.64)  & (3.02)  & (2.72)  & (2.63)  & (2.72)  & (2.63)  &  \\
        BIAS  & $\beta_2$ & 7.37  & 5.46  & 0.80  & 0.14  & 0.80  & 0.14  &  \\
        STD   &       & (1.19)  & (1.01)  & (0.86)  & (0.83)  & (0.86)  & (0.83)  &  \\
        N = 200, T = 200 &       &       &       &       &       &       &       &  \\
        BIAS  & $\beta_1$ & -11.95 & -8.38 & 1.02  & 0.15  & 1.02  & 0.15  & 2.000 \\
        STD   &       & (4.48)  & (1.97)  & (1.83)  & (1.79)  & (1.83)  & (1.79)  &  \\
        BIAS  & $\beta_2$ & 7.43  & 4.94  & 0.48  & 0.04  & 0.48  & 0.04  &  \\
        STD   &       & (0.93)  & (0.74)  & (0.60)  & (0.59)  & (0.60)  & (0.59)  &  \\
        N = 1000, T = 200 &       &       &       &       &       &       &       &  \\
        BIAS  & $\beta_1$ & -12.11 & -7.11 & 0.52  & -0.04 & 0.52  & -0.04 & 2.000 \\
        STD   &       & (4.09)  & (1.16)  & (0.80)  & (0.79)  & (0.80)  & (0.79)  &  \\
        BIAS  & $\beta_2$ & 7.40  & 4.13  & 0.28  & 0.03  & 0.28  & 0.03  &  \\
        STD   &       & (0.71)  & (0.47)  & (0.26)  & (0.26)  & (0.26)  & (0.26)  &  \\
        \bottomrule
        \end{tabular}%
        \vspace{0.5cm}
        \begin{minipage}{\textwidth}
            \footnotesize
            \textbf{Note:} Simulation results based on $1000$ replications for the dynamic Logit model with predetermined covariates as in~\eqref{eq:logit_dynamic}. We set $\alpha = 0.05$ to determine the regularization parameter $\varphi_{NT}$. We report bias and standard deviation for  pooled estimators (POOL), nuclear norm regularized estimators (NNR). Using the true number of factors, $R=2$, we report bias and standard deviation (measured in units of $\times 10^{-2}$)  for our proposed estimators ($\mr{TS}^*$) and the   analytical bias-corrected estimators based on $\mr{TS}^*$ ($\mr{ABC}^*$). Using estimated number of factors $\hat{R}$,  we report bias and standard deviation  (measured in units of $\times 10^{-2}$)  for our proposed estimators ($\mr{TS}$), and the analytical bias-corrected estimators based on $\mr{TS}$ ($\mr{ABC}$). Furthermore, we report the average estimated number of factors $\bar{R}$.  
        \end{minipage}
    \end{table}%

    \section{Proofs of Consistency of the NNR Estimator}

    For readability, the main text presents simplified definitions of the NNR estimators, FE estimators, and shrinking neighborhoods. We begin by providing their fully rigorous definitions here. The NNR estimators $(\hat{\beta}_{\mr{nuc}}, \hat{\Theta}_{\mr{nuc}})$ solve the following optimization problem with a nuclear-norm penalty:
    \begin{equation}\label{eq:nnr_definition_formal}
        \begin{gathered}
            \left(\hat{\beta}_{\mr{nuc}}, \hat{\Theta}_{\mr{nuc}}\right) = \argmin_{\beta \in \mb{R}^{d_X}, \Theta\in \mb{R}^{N\times T}} \left\{ \mc{L}_{NT} \left(  \beta, \Theta\right) + \frac{\varphi_{NT}}{\sqrt{NT}} \|\Theta\|_{\mr{nuc}}\right\}, \\
           \text{s.t. } \quad \|\beta\|_{\max} \leq \rho_{\beta}, \quad \|\Theta\|_{\max} \leq \rho_{\theta}. 
        \end{gathered}
    \end{equation}
    Here, we impose additional constraints $\|\beta\|_{\max} \leq \rho_{\beta}$ and $\|\Theta\|_{\max} \leq \rho_{\theta}$ compared to optimization~\eqref{eq:nnr_definition}. These constraints, which are standard in extremum estimation, ensure that optimization is conducted within a compact parameter space. In addition, the FE estimators $(\hat{\beta}_{\mr{FE}}, \hat{\Lambda}_{\mr{FE}}, \hat{\Gamma}_{\mr{FE}})$ solve
    \begin{equation}\label{eq:FE_estimator_formal}
        \begin{gathered}
            (\hat{\beta}_{\mr{FE}}, \hat{\Lambda}_{\mr{FE}}, \hat{\Gamma}_{\mr{FE}}) \in \argmin_{ \beta, \Lambda, \Gamma} \mc{L}_{NT}(\beta, \Lambda, \Gamma), \\
            \text{s.t. }\quad \|\beta\|_{\max} \leq \rho_{\beta}, \quad \|\Lambda\|_{\max} \leq \rho_{\lambda}, \quad \|\Gamma\|_{\max} \leq \rho_{\gamma}. 
        \end{gathered}
    \end{equation}
    in which we impose additional constraints to~\eqref{eq:FE_estimator},  $\|\Lambda\|_{\max} \leq \rho_{\lambda}$ and $\|\Gamma\|_{\max} \leq \rho_{\gamma}$, which are standard in extremum estimation and ensure that the optimization is conducted over a compact parameter space. We also impose such compactness conditions on the definition of neighborhoods $\mc{B}_{\delta_{NT}} $ introduced in~\eqref{eq:neighborhood}: 
    \begin{align}\label{eq:neighborhood_formal}
        \mc{B}_{\delta_{NT}} = \bigg\{(\beta, \Lambda, \Gamma)\mid  & \|\beta-\beta_0\|, \frac{1}{\sqrt{N}}\|\Lambda - {\Lambda}_0 \|_{\mr{F}} , \frac{1}{\sqrt{T}}\|\Gamma - {\Gamma}_0 \|_{\mr{F}} \leq \delta_{NT} \text{, }  \|\Lambda\|_{\max} \leq \rho_{\lambda}\text{, } \|\Gamma\|_{\max} \leq \rho_{\gamma}\bigg\}, 
    \end{align} 
    and we need to make sure the NNR estimator falls within the shrinking neighborhood $\mc{B}_{\delta_{NT}} $ wpa1. One potential concern of adding  compactness constraints in~\eqref{eq:neighborhood_formal} is that  the  entries of $(\hat{\Lambda}_{\mr{nuc}}, \hat{\Gamma}_{\mr{nuc}})$  are not necessarily  uniformly bounded. This is not a substantive issue, as we can  truncate and normalize $(\hat{\Lambda}_{\mr{nuc}}, \hat{\Gamma}_{\mr{nuc}})$ to obtain nuisance estimators that satisfy the uniform boundedness condition. The details of this procedure, along with a proof demonstrating that it does not affect the theoretical results, are provided later.

    We introduce some new notations for simplicity. For any two positive real sequences $\left\{a_n\right\}_{n \geq 1}$ and $\left\{b_n\right\}_{n\geq 1}$, we use $a_n\lesssim b_n$ ($a_n\gtrsim b_n$) to denote that there exists a positive constant $c$ such that $a_n\leq cb_n$ ($a_n \geq  cb_n$) for all $n$. We write  $a_n\asymp b_n$ if both $a_n\lesssim b_n$  and $a_n\gtrsim b_n$.  Recall that there exist constants $(\rho_{\beta}, \rho_{\lambda}, \rho_{\gamma}, \rho_{\theta})$ such that 
    \begin{align*}
        \|\beta\|_{\max}\leq \rho_{\beta}, \quad \|\Lambda_0\|_{\max} \leq \rho_{\lambda}, \quad \|\Gamma_0\|_{\max} \leq \rho_{\gamma}, \quad \|\Theta\|_{\max} \leq \rho_{\theta}.  
    \end{align*}
    We define the estimation errors as
    \begin{align*}
        \hat{\Delta}_{\beta } := \hat{\beta}_{\mr{nuc}} - \beta_0, \quad \hat{\Delta}_{\Lambda  } := \hat{\Lambda}_{\mr{nuc}} - \Lambda_0, \quad \hat{\Delta}_{\Gamma} := \hat{\Gamma}_{\mr{nuc}} - \Gamma_0, \quad
         \hat{\Delta}_{\Theta } := \hat{\Theta}_{\mr{nuc}} - \Theta_0.  
    \end{align*}
    It follows from \eqref{eq:nnr_definition_formal} that 
    \begin{align*}
        \|\hat{\Delta}_{\beta}\|_{\max} \leq 2\rho_{\beta}, \quad \|\hat{\Delta}_{\Theta}\|_{\max} \leq 2\rho_{\theta}. 
    \end{align*}

    The following lemma collects basic properties of low-rank projections that are used frequently in the subsequent proof. 
    \begin{lemma}\label{lemma:property_local_projection}
        Let $\Delta$ be any $N\times T$ matrix. The following properties hold:
        \begin{enumerate}[label=(\roman*)]
            \item \label{item:local_projection_1} $\|\Delta\|_{\mr{nuc}} = \|M_{\Lambda_0}\Delta M_{\Gamma_0} \|_{\mr{nuc}} + \|\Delta - M_{\Lambda_0}\Delta M_{\Gamma_0}\|_{\mr{nuc}}$.
            \item \label{item:local_projection_2} $\|\Delta\|_{\mr{F}}^2 = \|M_{\Lambda_0}\Delta M_{\Gamma_0} \|_{\mr{F}}^2 + \|\Delta - M_{\Lambda_0}\Delta M_{\Gamma_0}\|_{\mr{F}}^2 $. 
            \item \label{item:local_projection_3} $\mr{rank}(\Delta - M_{\Lambda_0}\Delta M_{\Gamma_0})\leq 2R$. 
        \end{enumerate}
    \end{lemma}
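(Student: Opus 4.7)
My plan for proving Lemma~\ref{lemma:property_local_projection} is to exploit the orthogonal decomposition
\begin{equation*}
\Delta \;=\; M_{\Lambda_0}\Delta M_{\Gamma_0} \;+\; \bigl(P_{\Lambda_0}\Delta + M_{\Lambda_0}\Delta P_{\Gamma_0}\bigr),
\end{equation*}
where $P_{\Lambda_0} = I_N - M_{\Lambda_0}$ and $P_{\Gamma_0} = I_T - M_{\Gamma_0}$ are the complementary rank-$R$ orthogonal projectors. A direct expansion using $I = P+M$ verifies that the parenthesised term coincides with $\Delta - M_{\Lambda_0}\Delta M_{\Gamma_0}$. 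The three claims then correspond to standard additive properties of this decomposition: rank subadditivity, Frobenius-orthogonality, and nuclear-norm decomposability associated with the tangent splitting at the rank-$R$ manifold through $\Theta_0 = \Lambda_0\Gamma_0'$.

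Part (iii) is the quickest. Since $P_{\Lambda_0}$ has rank $R$, the column span of $P_{\Lambda_0}\Delta$ is contained in $\mathrm{col}(\Lambda_0)$, so $\mathrm{rank}(P_{\Lambda_0}\Delta)\leq R$. Symmetrically, the row span of $M_{\Lambda_0}\Delta P_{\Gamma_0}$ lies in $\mathrm{col}(\Gamma_0)$, giving rank at most $R$. Subadditivity of rank under sums then yields $\mathrm{rank}(\Delta - M_{\Lambda_0}\Delta M_{\Gamma_0}) \leq 2R$.

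Part (ii) I would prove via a Pythagorean identity: it suffices to show that $\langle M_{\Lambda_0}\Delta M_{\Gamma_0},\, \Delta - M_{\Lambda_0}\Delta M_{\Gamma_0}\rangle_{\mathrm{F}} = 0$. Writing this inner product as a trace and using $M_{\Lambda_0}^2 = M_{\Lambda_0}$, $M_{\Gamma_0}^2 = M_{\Gamma_0}$ together with the cyclic property gives $\mathrm{tr}(M_{\Gamma_0}\Delta' M_{\Lambda_0}^2 \Delta M_{\Gamma_0}) = \mathrm{tr}(M_{\Gamma_0}\Delta' M_{\Lambda_0}\Delta)$, which exactly cancels the cross term.

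Part (i) is the main obstacle and is where the real work lies. I would select an orthonormal basis of $\mathbb{R}^N$ whose first $R$ vectors span $\mathrm{col}(\Lambda_0)$ (and analogously for $\mathbb{R}^T$ and $\mathrm{col}(\Gamma_0)$); in this adapted basis $\Delta$ has the block form $\begin{pmatrix} A_{11} & A_{12} \\ A_{21} & A_{22}\end{pmatrix}$, the piece $M_{\Lambda_0}\Delta M_{\Gamma_0}$ equals $\begin{pmatrix} 0 & 0 \\ 0 & A_{22}\end{pmatrix}$, and $\Delta - M_{\Lambda_0}\Delta M_{\Gamma_0} = \begin{pmatrix} A_{11} & A_{12} \\ A_{21} & 0\end{pmatrix}$. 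The plan is to invoke the decomposability property of the nuclear norm in the Negahban--Wainwright/Recht--Fazel--Parrilo tradition: any matrix whose column span lies in $\mathrm{Im}(M_{\Lambda_0})$ and row span in $\mathrm{Im}(M_{\Gamma_0})$ combines additively in nuclear norm with any matrix whose column or row span lies entirely in the complementary directions $\mathrm{col}(\Lambda_0)$ or $\mathrm{col}(\Gamma_0)$. The delicate step, which I expect to be the hardest, is that the ``tangent'' piece $P_{\Lambda_0}\Delta + M_{\Lambda_0}\Delta P_{\Gamma_0}$ is not itself a pure tangent vector at $\Theta_0$, so I would need to combine the duality characterization $\|A\|_{\mathrm{nuc}} = \sup_{\|B\|_{\mathrm{op}}\leq 1}\langle A,B\rangle_{\mathrm{F}}$ with an explicit construction of a dual certificate adapted to the block structure, exhibiting a single contraction $B$ that simultaneously attains the nuclear norms on both pieces; equivalently, this amounts to an SVD argument on block-triangular matrices showing that the singular values of $\Delta$ split across the two pieces.
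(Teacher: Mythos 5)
Your parts (ii) and (iii) are correct and complete: the Pythagorean identity via idempotency and cyclicity of the trace, and rank subadditivity applied to $P_{\Lambda_0}\Delta + M_{\Lambda_0}\Delta P_{\Gamma_0}$, are exactly the right arguments. (The paper itself gives no proof, only a pointer to Lemma D.2 of \citet{chernozhukov2019inference} and \citet{wainwright2019high}, so there is nothing to compare against there.)

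Part (i), however, contains a genuine gap, and it is not one you can close: the dual certificate you propose does not exist, because the equality in (i) is false as stated. In your adapted basis the claim reads $\bigl\|\bigl(\begin{smallmatrix} A_{11} & A_{12} \\ A_{21} & A_{22}\end{smallmatrix}\bigr)\bigr\|_{\mr{nuc}} = \bigl\|\bigl(\begin{smallmatrix} 0 & 0 \\ 0 & A_{22}\end{smallmatrix}\bigr)\bigr\|_{\mr{nuc}} + \bigl\|\bigl(\begin{smallmatrix} A_{11} & A_{12} \\ A_{21} & 0\end{smallmatrix}\bigr)\bigr\|_{\mr{nuc}}$, and this already fails for $N=T=2$, $R=1$, $\Lambda_0=\Gamma_0=(1,0)'$, $\Delta = \bigl(\begin{smallmatrix} 0 & 1 \\ 1 & 1\end{smallmatrix}\bigr)$: the left-hand side is $\sqrt{5}$ while the right-hand side is $1+2=3$. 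The obstruction is precisely the point you flag as "delicate": Recht--Fazel--Parrilo additivity $\|A+B\|_{\mr{nuc}}=\|A\|_{\mr{nuc}}+\|B\|_{\mr{nuc}}$ requires $AB'=0$ and $A'B=0$, but the residual piece $P_{\Lambda_0}\Delta + M_{\Lambda_0}\Delta P_{\Gamma_0}$ lies only in the larger subspace $\bar{\mathcal M}=\{X: M_{\Lambda_0}XM_{\Gamma_0}=0\}$, not in the model subspace $\mathcal M$ for which decomposability against $\bar{\mathcal M}^{\perp}$ holds; the cross term $P_{\Lambda_0}\Delta M_{\Gamma_0}\Delta' M_{\Lambda_0}$ need not vanish, so no single contraction can attain both nuclear norms simultaneously. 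What is true, and what the downstream arguments (the proofs of Lemma~\ref{lemma:cone} and Theorem~\ref{thm:consistency_pre}) actually need, is (a) the triangle inequality $\|\Delta\|_{\mr{nuc}} \le \|M_{\Lambda_0}\Delta M_{\Gamma_0}\|_{\mr{nuc}} + \|\Delta - M_{\Lambda_0}\Delta M_{\Gamma_0}\|_{\mr{nuc}}$ wherever (i) is invoked as an upper bound, and (b) the genuine decomposability $\|\Theta_0 + M_{\Lambda_0}\Delta M_{\Gamma_0}\|_{\mr{nuc}} = \|\Theta_0\|_{\mr{nuc}} + \|M_{\Lambda_0}\Delta M_{\Gamma_0}\|_{\mr{nuc}}$ (those two matrices do have orthogonal row and column spaces) wherever a lower bound is needed. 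The correct fix is therefore to weaken (i) to an inequality and add (b) as a separate fact, not to pursue the SVD-splitting argument you outline.
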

    The proof is omitted. Readers may refer to \citet[Lemma~D.2]{chernozhukov2019inference} and \citet[Chapter~10]{wainwright2019high} for details.

\subsection{Proofs of Theorem~\ref{thm:consistency} and Theorem~\ref{thm:consistency_pre}}

    We provide only the proof of Theorem~\ref{thm:consistency_pre}, as it extends Theorem~\ref{thm:consistency} by incorporating predetermined covariates.

    \begin{proofthm}{thm:consistency_pre}
        The proof is based on the following lemma:
        \begin{lemma}\label{lemma:cone}
            Under the conditions of Theorem~\ref{thm:consistency_pre},  
            \begin{align*}
                \| M_{\Lambda_0} \hat{\Delta}_{\Theta} M_{\Gamma_0}\|_{\mr{nuc}} \leq \frac{2+\alpha}{\alpha}\left(\sqrt{NT}\|\hat{\Delta}_{\beta}\| +  \| \hat{\Delta}_{\Theta} -  M_{\Lambda_0} \hat{\Delta}_{\Theta} M_{\Gamma_0} \|_{\mr{nuc}} \right). 
            \end{align*}
        \end{lemma}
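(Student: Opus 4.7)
\textbf{Proof plan for Lemma~\ref{lemma:cone}.} The approach is the standard ``basic inequality'' argument used to show that the estimation error of a nuclear-norm regularized estimator lies in a restricted cone. The plan is to combine (i) the optimality of $(\hat\beta_{\mr{nuc}},\hat\Theta_{\mr{nuc}})$ in problem~\eqref{eq:nnr_difinition_formal}, (ii) the convexity of $\mathcal{L}_{NT}$ in $(\beta,\Theta)$ from Assumption~\ref{assumption:regularity}\ref{item:smoothness}, (iii) the dual-norm bounds $|\langle A,B\rangle|\le \|A\|\|B\|$ and $|\langle A,B\rangle|\le \|A\|_{\mathrm{op}}\|B\|_{\mathrm{nuc}}$, together with the lower-bound assumption on $\varphi_{NT}$, and (iv) the decomposition $\|\Theta_0+\Delta\|_{\mathrm{nuc}}\ge \|\Theta_0\|_{\mathrm{nuc}}+\|M_{\Lambda_0}\Delta M_{\Gamma_0}\|_{\mathrm{nuc}}-\|\Delta-M_{\Lambda_0}\Delta M_{\Gamma_0}\|_{\mathrm{nuc}}$ which follows from Lemma~\ref{lemma:property_local_projection}\ref{item:local_projection_1} (since $\Theta_0=\Lambda_0\Gamma_0'$ and $M_{\Lambda_0}\Delta M_{\Gamma_0}$ have orthogonal row and column spaces, so their sum has nuclear norm equal to the sum of nuclear norms, and then one applies the triangle inequality).

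First, I would write down the basic inequality coming from optimality: $\mathcal{L}_{NT}(\hat\beta_{\mr{nuc}},\hat\Theta_{\mr{nuc}})+\frac{\varphi_{NT}}{\sqrt{NT}}\|\hat\Theta_{\mr{nuc}}\|_{\mathrm{nuc}}\le \mathcal{L}_{NT}(\beta_0,\Theta_0)+\frac{\varphi_{NT}}{\sqrt{NT}}\|\Theta_0\|_{\mathrm{nuc}}$. Then convexity of $\mathcal{L}_{NT}$ yields the lower bound $\mathcal{L}_{NT}(\hat\beta_{\mr{nuc}},\hat\Theta_{\mr{nuc}})-\mathcal{L}_{NT}(\beta_0,\Theta_0)\ge \langle\nabla_\beta\mathcal{L}_{NT}(\beta_0,\Theta_0),\hat\Delta_\beta\rangle+\langle\nabla_\Theta\mathcal{L}_{NT}(\beta_0,\Theta_0),\hat\Delta_\Theta\rangle$. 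Applying Cauchy--Schwarz to the first inner product and the nuclear/operator-norm duality to the second, together with the hypothesis $\varphi_{NT}\ge(1+\alpha)\max\{\|\nabla_\beta\mathcal{L}_{NT}(\beta_0,\Theta_0)\|,\sqrt{NT}\|\nabla_\Theta\mathcal{L}_{NT}(\beta_0,\Theta_0)\|_{\mathrm{op}}\}$, gives
\begin{align*}
\mathcal{L}_{NT}(\hat\beta_{\mr{nuc}},\hat\Theta_{\mr{nuc}})-\mathcal{L}_{NT}(\beta_0,\Theta_0)\ \ge\ -\frac{\varphi_{NT}}{1+\alpha}\|\hat\Delta_\beta\|-\frac{\varphi_{NT}}{(1+\alpha)\sqrt{NT}}\|\hat\Delta_\Theta\|_{\mathrm{nuc}}.
\end{align*}

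Next I would bound the penalty difference from above using the decomposition identity sketched in step (iv): $\|\Theta_0\|_{\mathrm{nuc}}-\|\hat\Theta_{\mr{nuc}}\|_{\mathrm{nuc}}\le \|\hat\Delta_\Theta-M_{\Lambda_0}\hat\Delta_\Theta M_{\Gamma_0}\|_{\mathrm{nuc}}-\|M_{\Lambda_0}\hat\Delta_\Theta M_{\Gamma_0}\|_{\mathrm{nuc}}$. Substituting these two bounds back into the basic inequality, and then using $\|\hat\Delta_\Theta\|_{\mathrm{nuc}}\le \|M_{\Lambda_0}\hat\Delta_\Theta M_{\Gamma_0}\|_{\mathrm{nuc}}+\|\hat\Delta_\Theta-M_{\Lambda_0}\hat\Delta_\Theta M_{\Gamma_0}\|_{\mathrm{nuc}}$ on the right-hand side, I would collect the two terms involving $\|M_{\Lambda_0}\hat\Delta_\Theta M_{\Gamma_0}\|_{\mathrm{nuc}}$ on the left. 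Dividing through by $\varphi_{NT}/\sqrt{NT}$ and multiplying by $(1+\alpha)/\alpha$ produces the bound $\|M_{\Lambda_0}\hat\Delta_\Theta M_{\Gamma_0}\|_{\mathrm{nuc}}\le \tfrac{1}{\alpha}\sqrt{NT}\|\hat\Delta_\beta\|+\tfrac{2+\alpha}{\alpha}\|\hat\Delta_\Theta-M_{\Lambda_0}\hat\Delta_\Theta M_{\Gamma_0}\|_{\mathrm{nuc}}$, which, since $\tfrac{1}{\alpha}\le\tfrac{2+\alpha}{\alpha}$, is at least as strong as the stated inequality.

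No step is particularly hard, but the main subtlety is the nuclear-norm decomposition in step (iv): one must use that $\Theta_0$ has row (resp.\ column) space contained in that of $\Lambda_0$ (resp.\ $\Gamma_0$), so that $\Theta_0$ and $M_{\Lambda_0}\hat\Delta_\Theta M_{\Gamma_0}$ occupy mutually orthogonal subspaces and their nuclear norms add exactly; the remaining piece $\hat\Delta_\Theta-M_{\Lambda_0}\hat\Delta_\Theta M_{\Gamma_0}$ is then absorbed by triangle inequality. This is exactly Lemma~\ref{lemma:property_local_projection}\ref{item:local_projection_1}, so everything else is bookkeeping. The convexity lower bound on $\mathcal{L}_{NT}$ relies on Assumption~\ref{assumption:regularity}\ref{item:smoothness} (or its analogue in Assumption~\ref{assumption:regularity_pre} for the predetermined-covariate version), so no additional hypotheses are needed beyond those already listed in Theorem~\ref{thm:consistency_pre}.
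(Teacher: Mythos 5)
Your proposal is correct and follows essentially the same route as the paper's proof: the basic inequality from optimality, the convexity lower bound with Cauchy--Schwarz and nuclear/operator duality under the $\varphi_{NT}$ condition, the decomposition of $\|\Theta_0+\hat\Delta_\Theta\|_{\mr{nuc}}$ via Lemma~\ref{lemma:property_local_projection}\ref{item:local_projection_1} plus the triangle inequality, and the final rearrangement yielding the coefficient $\tfrac{1}{\alpha}$ on the $\sqrt{NT}\|\hat\Delta_\beta\|$ term, which is then loosened to $\tfrac{2+\alpha}{\alpha}$ exactly as in the paper. No gaps.
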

        The proof of Lemma~\ref{lemma:cone} is standard and will be presented at the end of this subsection. Lemma~\ref{lemma:cone} states that,  when the penalization parameter $\varphi_{NT}$ is sufficiently large, the component of the estimation error of  $\Theta$ that cannot be explained by either $\Lambda_0$ or $\Gamma_0$ is relatively small compared to the part of the estimation error of $\Theta$ that can be explained by $\Lambda_0$  and $\Gamma_0$, along with a term that accounts for the estimation error of $\beta$. 

        When 
        $\|\hat{\Delta}_{\beta}\|^2  +  \frac{1}{NT}\|\hat{\Delta}_{\Theta}\|_\mr{F}^2 \leq  \sqrt{\frac{\log (NT)}{NT}}$, we directly obtain 
        \begin{align*}
            \|\hat{\Delta}_{\beta}\| & \leq \log (NT) /\sqrt{ \min\{N, T\}}, \quad 
            \frac{1}{\sqrt{NT}}\|\hat{\Delta}_{\Theta}\|_{\mr{F}} \leq \log (NT) /\sqrt{ \min\{N, T\}}. 
        \end{align*}
        When $\|\hat{\Delta}_{\beta}\|^2  +  \frac{1}{NT}\|\hat{\Delta}_{\Theta}\|_\mr{F}^2 >  \sqrt{\frac{\log (NT)}{NT}}$, the proof becomes more intricate and requires additional effort.  

        \paragraph{Step 1} Since $(\hat{\beta}_{\mr{nuc}}, \hat{\Theta}_{\mr{nuc}})$  solves optimization problem~\eqref{eq:nnr_definition_formal}, we have 
        \begin{align}\label{eq:thm_consistency_1}
            \mc{L}_{NT}( \beta_0 + \hat{\Delta}_{\beta} ,  \Theta_0 + \hat{\Delta}_{\Theta} ) - \mc{L}_{NT}(\beta_0, \Theta_0)  \leq \frac{\varphi_{NT}}{\sqrt{NT}} (\|\Theta_0\|_{\mr{nuc}} - \|\Theta_0 + \hat{\Delta}_{\Theta} \|_{\mr{nuc}} ). 
        \end{align}
        Consider the Taylor expansion of $\mc{L}_{NT}(\hat{\beta}_{\mr{nuc}}, \hat{\Theta}_{\mr{nuc}}) $ around $(\beta_0, \Theta_0)$: 
        \begin{equation}\label{eq:thm_consistency_2}
        \begin{aligned}
            & \mc{L}_{NT}( \beta_0 + \hat{\Delta}_{\beta} ,  \Theta_0 + \hat{\Delta}_{\Theta} ) - \mc{L}_{NT}(\beta_0, \Theta_0)  -  \nabla_{\beta}\mc{L}_{NT}\left(\beta_0 , \Theta_0\right)'\hat{\Delta} _{\beta}  - \la \nabla_{\Theta}\mc{L}_{NT}\left(\beta_0 , \Theta_0\right), \hat{\Delta}_{\Theta} \ra \\
            \leqtext{(i)} & \frac{\varphi_{NT}}{\sqrt{NT}} \underbrace{(\|\Theta_0\|_{\mr{nuc}} - \| \Theta_0 + \hat{\Delta}_{\Theta} \|_{\mr{nuc}} ) }_{\leq \|\hat{\Delta}_{\Theta}\|_{\mr{nuc}}}  -  \underbrace{\nabla_{\beta}\mc{L}_{NT}\left(\beta_0 , \Theta_0\right)'\hat{\Delta} _{\beta}}_{\leq \|\nabla_{\beta}\mc{L}_{NT}\left(\beta_0 , \Theta_0\right)\| \|\hat{\Delta} _{\beta}\| }  - \underbrace{\la \nabla_{\Theta}\mc{L}_{NT}\left(\beta_0 , \Theta_0\right), \hat{\Delta}_{\Theta} \ra}_{\leq \|\nabla_{\Theta}\mc{L}_{NT}\left(\beta_0 , \Theta_0\right)\|_{\mr{op}}\| \hat{\Delta}_{\Theta}\|_{\mr{nuc}}} \\
            \leqtext{(ii)} & \frac{\varphi_{NT}}{\sqrt{NT}}\|\hat{\Delta}_{\Theta}\|_{\mr{nuc}} + \|\nabla_{\beta}\mc{L}_{NT}\left(\beta_0 , \Theta_0\right)\| \|\hat{\Delta} _{\beta}\| + \|\nabla_{\Theta}\mc{L}_{NT}\left(\beta_0 , \Theta_0\right)\|_{\mr{op}}\| \hat{\Delta}_{\Theta}\|_{\mr{nuc}} \\
            \leqtext{(iii)} & \frac{\varphi_{NT}}{\sqrt{NT}}\|\hat{\Delta}_{\Theta}\|_{\mr{nuc}} + \frac{ \varphi_{NT}}{1+\alpha} \|\hat{\Delta} _{\beta}\| +  \frac{\varphi_{NT}}{(1+\alpha)\sqrt{NT}} \| \hat{\Delta}_{\Theta}\|_{\mr{nuc}}  \\
            \leq & \frac{2 \varphi_{NT}}{\sqrt{NT}}\left( \|\hat{\Delta}_{\Theta}\|_{\mr{nuc}} + \sqrt{NT}  \|\hat{\Delta}_{\beta}\| \right), 
        \end{aligned}
        \end{equation}
        where inequality (i) follows from inequality (\ref{eq:thm_consistency_1}),  inequality (ii) employs the triangle inequality, Cauchy-Schwarz inequality, and Hölder's inequality (since the spectral norm is the dual norm of the nuclear norm), and inequality (iii) holds because  of $\varphi_{NT} \geq (1+\alpha)  \max\{\|\nabla_{\beta}\mc{L}_{NT}\left(\beta_0 , \Theta_0\right)\|, \sqrt{NT}\|\nabla_{\Theta}\mc{L}_{NT}\left(\beta_0 , \Theta_0\right)\|_{\mr{op}}\}$,  as stated in Theorem~\ref{thm:consistency_pre}. In addition, we have the following inequalities for the  nuclear norm:
        \begin{align*}
            \|\hat{\Delta}_{\Theta}\|_{\mr{nuc}} & \eqtext{(i)} \|  M_{\Lambda_0 }\hat{\Delta}_{\Theta}M_{\Gamma_0} \| _{\mr{nuc}} + \|   \hat{\Delta}_{\Theta} -  M_{\Lambda_0 }\hat{\Delta}_{\Theta}M_{\Gamma_0} \| _{\mr{nuc}} \\
            & \leqtext{(ii)}  \frac{2+\alpha}{\alpha}\left(\sqrt{NT}\|\hat{\Delta}_{\beta}\| +  \| \hat{\Delta}_{\Theta} -  M_{\Lambda_0} \hat{\Delta}_{\Theta} M_{\Gamma_0} \|_{\mr{nuc}} \right)  + \|   \hat{\Delta}_{\Theta} -  M_{\Lambda_0 }\hat{\Delta}_{\Theta}M_{\Gamma_0} \| _{\mr{nuc}} \\
            & \leqtext{(iii)} \frac{2+\alpha}{\alpha} \sqrt{NT}\|\hat{\Delta}_{\beta}\| + \frac{2(1+\alpha)\sqrt{2R}}{\alpha}\|\hat{\Delta}_{\Theta} -  M_{\Lambda_0 }\hat{\Delta}_{\Theta}M_{\Gamma_0}\|_{\mr{F}} \\
            & \leqtext{(iv)} \frac{2+\alpha}{\alpha} \sqrt{NT}\|\hat{\Delta}_{\beta}\| + \frac{2(1+\alpha)\sqrt{2R}}{\alpha}\|\hat{\Delta}_{\Theta} \|_{\mr{F}} \\
            & \leq \frac{2(1+\alpha)\sqrt{2R}}{\alpha}( \sqrt{NT}\|\hat{\Delta}_{\beta}\| + \|\hat{\Delta}_{\Theta} \|_{\mr{F}}), 
        \end{align*}
        where equality (i) holds due to Lemma~\ref{lemma:property_local_projection}\ref{item:local_projection_1},  inequality (ii) follows from Lemma~\ref{lemma:cone}, inequality (iii) follows from Lemma~\ref{lemma:property_local_projection}\ref{item:local_projection_3}, 
        and inequality (iv) is based on Lemma~\ref{lemma:property_local_projection}\ref{item:local_projection_2}. 

        Therefore, combining the nuclear norm inequality with  inequality~\eqref{eq:thm_consistency_2} yields 
        \begin{equation}\label{eq:thm_consistency_3}
            \begin{aligned}
                & \mc{L}_{NT}( \beta_0 + \hat{\Delta}_{\beta} ,  \Theta_0 + \hat{\Delta}_{\Theta} ) - \mc{L}_{NT}(\beta_0, \Theta_0)  -  \nabla_{\beta}\mc{L}_{NT}\left(\beta_0 , \Theta_0\right)'\hat{\Delta} _{\beta}  - \la \nabla_{\Theta}\mc{L}_{NT}\left(\beta_0 , \Theta_0\right), \hat{\Delta}_{\Theta} \ra \\
                \leq &  \frac{8(1+\alpha)\sqrt{2R}}{\alpha}\frac{\varphi_{NT}}{\sqrt{NT}} ( \sqrt{NT}\|\hat{\Delta}_{\beta}\| + \|\hat{\Delta}_{\Theta} \|_{\mr{F}}) \\
                \leq & \frac{16(1+\alpha)\sqrt{2R}}{\alpha} \varphi_{NT} \sqrt{ \|\hat{\Delta}_{\beta}\|^2 + \frac{1}{NT}\|\hat{\Delta}_{\Theta} \|^2_\mr{F}}. 
            \end{aligned}
        \end{equation}

        \paragraph{Step 2} By the convexity of $\mc{L}_{NT}(\cdot, \cdot)$, we have 
        \begin{equation}\label{eq:thm_consistency_4}
        \begin{aligned}
            & \mc{L}_{NT}(\beta_0 + \hat{\Delta}_{\beta}, \Theta_0 + \hat{\Delta}_{\Theta}) - \mc{L}_{NT}(\beta_0, \Theta_0) - \nabla_{\beta}\mc{L}_{NT}(\beta_0, \Theta_0)'\hat{\Delta}_{\beta} - \la \nabla_{\Theta}\mc{L}_{NT}(\beta_0, \Theta_0), \hat{\Delta}_{\Theta} \ra \\
            \eqtext{(i)} & \frac{1}{2NT}\sum_{i=1}^{N}\sum_{t=1}^{T}(-\ddot{\ell}_{it}(X_{it}'\tilde{\beta} + \tilde{\theta}_{it} ))(X_{it}' \hat{\Delta}_{\beta} + \hat{\Delta}_{\theta_{it}} )^2 \\
            \geqtext{(ii)} & \frac{{b}_{\min}}{2} \frac{1}{NT}\sum_{i=1}^{N}\sum_{t=1}^{T}(X_{it}' \hat{\Delta}_{\beta} + \hat{\Delta}_{\theta_{it}} )^2 \\
            \geqtext{(iii)} &  \frac{{b}_{\min}}{2} \left(\kappa \left(\|\hat{\Delta}_{\beta}\|^2  +  \frac{1}{NT}\|\hat{\Delta}_{\Theta}\|_{\mr{F}}^2\right) - \eta \frac{N+T}{NT}(\log(NT))^2\right), 
        \end{aligned}
        \end{equation} 
        where equality (i) follows from the second-order Taylor expansion of $\mc{L}_{NT}(\cdot, \cdot)$, inequality (ii) holds because $-\ddot{\ell}_{it}(\cdot)\geq b_{\min}$ uniformly as stated in Assumption~\ref{assumption:regularity_pre}\ref{item:smoothing_pre}, and inequality (iii) follows from  Lemma~\ref{lemma:cone} and the RSC (Assumption~\ref{assumption:RSC}).
        
        By combining~\eqref{eq:thm_consistency_3} and~\eqref{eq:thm_consistency_4}, we obtain 
        \begin{align*}
            \frac{{b}_{\min}}{2} \left(\kappa \left(\|\hat{\Delta}_{\beta}\|^2  +  \frac{1}{NT}\|\hat{\Delta}_{\Theta}\|_{\mr{F}}^2\right) - \eta \frac{N+T}{NT} (\log(NT))^2 \right) &\leq \frac{16(1+\alpha)\sqrt{2R}}{\alpha} \varphi_{NT} \sqrt{ \|\hat{\Delta}_{\beta}\|^2 + \frac{1}{NT}\|\hat{\Delta}_{\Theta} \|^2_{\mr{F}}}  \\
            \Rightarrow \sqrt{ \|\hat{\Delta}_{\beta}\|^2 + \frac{1}{NT}\|\hat{\Delta}_{\Theta}\|_{\mr{F}}^2 } & \leq \frac{a_1 \varphi_{NT} + \sqrt{a_1^2  \varphi_{NT}^2  + 4a_2^2 \frac{N+T}{NT}(\log(NT))^2}}{2} \\
            & \leq a_1 \varphi_{NT}  + a_2   \sqrt{\frac{N+T}{NT}}(\log(NT))^2 \\
            & \leq a_1 \varphi_{NT} + \sqrt{2} a_2 \frac{\log(NT)}{\sqrt{\min\{N, T\}}}. 
        \end{align*}
        Here,  $a_1 := \frac{32(1+\alpha)\sqrt{2R}}{\alpha b_{\min}\kappa } > 0$ and $a_2 := \sqrt{\frac{\eta}{\kappa}} >0$. Let $c_1 := \max\{a_1, \sqrt{2} a_2\}$. It is straightforward to show that,  wpa1, 
        \begin{align*}
            \| \hat{\beta}_{\mr{nuc}} - \beta_0\| & \leq c_1 \left(\varphi_{NT} + \log (NT)/\sqrt{\min\{N, T\}}\right), \\
            \frac{1}{\sqrt{NT}}\|\hat{\Theta}_{\mr{nuc}} - \Theta_0\|_{\mr{F}} & \leq  c_1 \left(\varphi_{NT} + \log (NT)/\sqrt{\min\{N, T\}}\right) . 
        \end{align*}

        \paragraph{Step 3} 
        
        We aim to establish an estimation error bound for $\hat{\Lambda}_{\mr{nuc}}$. The error bound for $\hat{\Gamma}_{\mr{nuc}}$  can be directly obtained by the same  method.  In addition, our proof below is based on the situation where $\Sigma_{\lambda}^{1/2}\Sigma_{\gamma}\Sigma_{\lambda}^{1/2}$ may have repeated eigenvalues. 
        For notational simplicity, let $\Upsilon$  be the $R\times R$-dimensional matrix containing the eigenvectors of $\frac{1}{NT}(\Lambda_0'\Lambda_0)^{1/2}\Gamma_0'\Gamma_0(\Lambda_0'\Lambda_0)^{1/2}$, let $D$ be the $R\times R$-dimensional diagonal matrix containing the square roots of the eigenvalues of $\frac{1}{NT}(\Lambda_0'\Lambda_0)^{1/2}\Gamma_0'\Gamma_0(\Lambda_0'\Lambda_0)^{1/2}$, $\Omega$ be the $R$-dimensional vector containing the square roots of the  eigenvalues of $\Sigma_{\lambda}^{1/2}\Sigma_{\gamma}\Sigma_{\lambda}^{1/2}$ in non-increasing order, and  $U_0$ be the matrix of left singular vectors of $\Theta_0$. One can verify that $  U_0 = \Lambda_0(\Lambda_0'\Lambda_0)^{-1/2} \Upsilon$.  In addition, let $\hat{U}$ be the matrix containing the left singular vectors  of $\hat{\Theta}_{\mr{nuc}}$ such that $ \hat{U}' \hat{U} = \mb{I}_{R}$. Then we have $\hat{\Lambda}_{\mr{nuc}} =  \sqrt{N}\hat{U}\hat{D}^{1/2}_{[1:R, 1:R]}$. 

        We first  establish the bound for the distance between two spaces spanned by $\hat{U}$ and $U_0$, respectively.  Using Davis-Kahan Theorem (Theorem 4 in \citealp{yu2015useful}), there exists an orthogonal matrix $O^*$ such that, wpa1,  
        \begin{equation}\label{eq:thm_consistency_5}
        \begin{aligned}
            \|\hat{U}  - U_0  O^{*\prime}\|_{\mr{F}}
            \leqtext{(i)} &   \frac{2^{\frac{3}{2}}(2\|\Theta_0\|_{\mr{op}} + \|\hat{\Theta}_{\mr{nuc}} - \Theta_0 \|_{\mr{op}}) \|\hat{\Theta}_{\mr{nuc}} - \Theta_0 \|_{\mr{F}} }{\psi^2_{R}\left(\Theta_0\right)} \\
            \leqtext{(ii)} &   \frac{8 \|\Theta_0\|_{\mr{op}}  \|\hat{\Theta}_{\mr{nuc}} - \Theta_0 \|_{\mr{F}} }{\psi^2_{R}\left(\Theta_0\right)},   
        \end{aligned}
        \end{equation}
        where inequality (i) follows from the fact that $\psi_{R+1}(\Theta_0) = 0$, inequality (ii) is based on the estimation error bound of $\hat{\Theta}_{\mr{nuc}}$, implying that  $\|\hat{\Theta}_{\mr{nuc}} - \Theta_0 \|_{\mr{op}} / \|\Theta_0\|_{\mr{op}}  \cp 0$. The matrix  $O^*$  arises due to the possible  multiplicity of eigenvalues and depends only on 
         $(\Lambda_0, \Gamma_0)$.  In addition, the strong factor assumption (Assumption~\ref{assumption:regularity_pre}\ref{item:strong_factors_pre}) implies that
        \begin{equation}\label{eq:thm_consistency_6}
            \begin{gathered}
                \frac{1}{\sqrt{NT}}\|\Theta_0\|_{\mr{op}}  \cp \Omega_1, \quad \frac{1}{\sqrt{NT}}\psi_{R}\left(\Theta_0\right) \cp \Omega_{R}. 
            \end{gathered}
        \end{equation}
        Combining inequality (\ref{eq:thm_consistency_5}) and   (\ref{eq:thm_consistency_6}) yields that, wpa1,  
        \begin{align}\label{eq:thm_consistency_7}
            \|\hat{U} - U_0  O^{*\prime}\|_{\mr{F}} 
            &  \leq   \frac{16\Omega_1}{\sqrt{NT}\Omega_R^2} \|\hat{\Theta}_{\mr{nuc}} - \Theta_0 \|_{\mr{F}}  \leq \frac{16 c_1 \Omega_1}{\Omega_R^2 }  \left(\varphi_{NT} + \frac{\log(NT)}{\sqrt{\min\{N, T\}}}\right) . 
        \end{align}

        Now we turn to the error bound for $\hat{\Lambda}_{\mr{nuc}}$.  Note  that 
        \begin{align*}
            \left\|\hat{\Lambda}_{\mr{nuc}} -  \sqrt{N}\underbrace{\Lambda_0(\Lambda_0'\Lambda_0)^{-1/2} \Upsilon }_{U_0}   O^{*\prime} D^{1/2} \right\|_{\mr{F}} = & \sqrt{N} \left\|\hat{U} \hat{D}^{1/2}_{[1:R, 1:R]}  - U_0  O^{*\prime} D^{1/2}   \right\|_{\mr{F}}  \\ 
            \leq & \sqrt{N}\left\|\hat{U}   - U_0  O^{*\prime} D^{1/2}\hat{D}^{-1/2}_{[1:R, 1:R]}   \right\|_{\mr{F}} \left\|\hat{D}^{1/2}_{[1:R, 1:R]}\right\|_{\mr{F}}\\
            \leq & \sqrt{N} \left(\underbrace{\left\|\hat{U}  - U_0 O^{*\prime}  \right\|_{\mr{F}}}_{A_1}\left\|\hat{D}^{1/2}_{[1:R, 1:R]}\right\|_{\mr{F}} + \underbrace{\left\| U_0O^{*\prime}  \left(\hat{D}^{1/2}_{[1:R, 1:R]} - D^{1/2}  \right)  \right\|_{\mr{F}}}_{A_2}\right) . 
        \end{align*}
        Since we have already established the bound for $A_1$ (see~\eqref{eq:thm_consistency_7}), we focus on  the upper bound of $A_2$.  Note that 
        \begin{align*}
            \left\|\hat{D}^{1/2}_{[1:R, 1:R]} - D^{1/2} \right\|_{\mr{F}} \leq & \sqrt{R} \frac{\left\|\hat{D}_{[1:R, 1:R]} - D\right\|_{\mr{op}}}{2\min\{\psi_{R}^{1/2} \left(\hat{D}_{[1:R, 1:R]}\right), \psi_{R}^{1/2} \left(D\right)\}} \\
            \leq & \sqrt{R}\frac{\left\|\hat{D}_{[1:R, 1:R]} - D\right\|_{\mr{F}}}{2\min\{\psi_{R}^{1/2} \left(\hat{D}_{[1:R, 1:R]}\right), \psi_{R}^{1/2} \left(D\right)\}} \\
            \leq & \frac{\sqrt{R}\|\hat{\Theta}_{\mr{nuc}} - \Theta_0\|_{\mr{F}}}{\sqrt{NT}\Omega_R^{1/2}}. 
        \end{align*}
        Then, the following inequality holds wpa1 
        \begin{align*}
            A_2\leq \underbrace{\|U_0\|_{\mr{F}}}_{= \sqrt{R}}\left\|\hat{D}^{1/2}_{[1:R, 1:R]} - D^{1/2} \right\|_{\mr{F}} & \leq \frac{R\|\hat{\Theta}_{\mr{nuc}} - \Theta_0\|_{\mr{F}}}{\sqrt{NT}\Omega_R^{1/2}}  
            \leq \frac{ c_1 R}{\Omega_R^{1/2} }  \left(\varphi_{NT} + \frac{\log(NT)}{\sqrt{\min\{N, T\}}}\right). 
        \end{align*}
        Let $G = D^{1/2} O^* \Upsilon'(\Lambda_0'\Lambda_0/N)^{-1/2}$. We obtain   
        \begin{align*}
            \left\|\hat{\Lambda}_{\mr{nuc}} -   \Lambda_0 G' \right\|_{\mr{F}} \leq \sqrt{N}  \underbrace{\left(\frac{16c_1\Omega_1^{3/2}}{\Omega_R^2} +  \frac{ c_1 R}{\Omega_R^{1/2} }\right)}_{B_1}  \left(\varphi_{NT} + \frac{\log(NT)}{\sqrt{\min\{N, T\}}}\right), \quad \text{wpa1}. 
        \end{align*}
        By the same method, we also show that 
        \begin{align*}
            \|\hat{\Gamma}_{\mr{nuc}} - \Gamma_0 G^{-1}\|_{\mr{F}} \leq \sqrt{T}  B_1  \left(\varphi_{NT} + \frac{\log(NT)}{\sqrt{\min\{N, T\}}}\right), \quad \text{wpa1}. 
        \end{align*}
        In addition, since 
        \begin{align*}
             GG'  =  \underbrace{D^{1/2}}_{\cp \mr{diag}(\Omega)^{1/2}} O^* \Upsilon'\underbrace{(\Lambda_0'\Lambda_0/N)^{-1}}_{\cp \Sigma_{\lambda}^{-1}}\Upsilon O^{*\prime}  \underbrace{D^{1/2}}_{\cp \mr{diag}(\Omega)^{1/2}} \cp \mr{diag}(\Omega)^{1/2} O^* \Upsilon'  \Sigma_{\lambda}^{-1} \Upsilon O^{*\prime} \mr{diag}(\Omega)^{1/2}, 
        \end{align*}
        it immediately follows that, wpa1, (i) the maximum singular value of $G$ is uniformly bounded, and  (ii) the minimum singular value of $G$ is strictly greater than zero. Therefore, this completes the proof of the theorem.

        In the following text, we discuss how to construct nuisance estimators with uniformly bounded entries to ensure that $(\hat{\Lambda}_{\mr{nuc}}, \hat{\Gamma}_{\mr{nuc}}) \in \mc{B}_{\delta_{NT}}$. Although this property is not directly related to the current theorem, it will be used frequently in subsequent theoretical discussions.

        \paragraph{Construct nuisance estimators with uniformly bounded entries} As we have discussed before, one potential concern with $(\hat{\Lambda}_{\mr{nuc}}, \hat{\Gamma}_{\mr{nuc}})$ as in \eqref{eq:space_definition} is that  the  entries of $(\hat{\Lambda}_{\mr{nuc}}, \hat{\Gamma}_{\mr{nuc}})$  are not necessarily  uniformly bounded. In the following text, we show that  after  truncating and normalizing the estimators in \eqref{eq:space_definition},  we can obtain new nuisance estimators $(\tilde{\Lambda}_{\mr{nuc}}, \tilde{\Gamma}_{\mr{nuc}})$ that satisfy the uniform boundedness condition, and consequently, $(\tilde{\Lambda}_{\mr{nuc}}, \tilde{\Gamma}_{\mr{nuc}})\in \Phi_{NT}$.

        It should be noted that constructing uniformly bounded nuisance estimators is solely for the convenience of theoretical analysis. In practice, applied researchers do not need to perform this step.
        
        Our construction is based on the following observation: 
        Since   (i) $(\Lambda_0, \Gamma_0)$ is uniformly bounded,  (ii) the maximum singular value of $G$ is uniformly bounded, and  (iii) the minimum singular value of $G$ is strictly greater than zero, each entry of $\Lambda_0^G$ and $\Gamma_0^G$ is uniformly bound. Thus, there exists a constant $M>0$ that is sufficiently large but independent of $N, T$ such that $\|\Lambda_0^G\|_{\max}, \|\Gamma_0^G\|_{\max} \leq M$,  wpa1. 
        
        We first describe how to obtain the new estimators:
        \begin{enumerate}[label={}, leftmargin=2cm]
            \item [Step 1:] For a sufficiently large constant $M>0$ (independent of $N, T$), compute truncated estimators $(\bar{\Lambda}_{\mr{nuc}}, \bar{\Gamma}_{\mr{nuc}})$, defined as
            \begin{align*}
                &\bar{\Lambda}_{\mr{nuc}, ir} = 
                \left\{
                    \begin{array}{ll}
                        \hat{\Lambda}_{\mr{nuc}, ir}, & \text{if} \quad  |\hat{\Lambda}_{\mr{nuc}, ir}|\leq M \\
                        M\mr{sign}(\hat{\Lambda}_{\mr{nuc}, ir}),   & \text{otherwise }
                    \end{array}
                \right. \\
                &\bar{\Gamma}_{\mr{nuc}, tr} = 
                \left\{
                    \begin{array}{ll}
                        \hat{\Gamma}_{\mr{nuc}, tr}, & \text{if}\quad   |\hat{\Gamma}_{\mr{nuc}, tr}|\leq M \\
                        M\mr{sign}(\hat{\Gamma}_{\mr{nuc}, tr}),   & \text{otherwise }
                    \end{array}
                \right.
            \end{align*} 
            \item [Step 2:] Perform singular value decomposition on $\bar{\Theta}_{\mr{nuc}} := \bar{\Lambda}_{\mr{nuc}}\bar{\Gamma}'_{\mr{nuc}} $, so that $\bar{\Theta}_{\mr{nuc}}/\sqrt{NT} = \bar{U}\bar{D}\bar{V}$, where $\bar{U}\in \mb{R}^{N\times R}$ and $\bar{V}\in \mb{R}^{T\times R}$ are matrices whose columns are the left and right orthonormal singular vectors of $\bar{\Theta}_{\mr{nuc}}$, respectively, and $\bar{D}$ is a diagonal matrix whose diagonal entries are  singular values of $\bar{\Theta}_{\mr{nuc}}/\sqrt{NT}$ (arranged in non-increasing order). We then compute  $(\tilde{\Lambda}_{\mr{nuc}}, \tilde{\Gamma}_{\mr{nuc}})$ as follows: 
            \begin{gather*}
                \tilde{\Lambda}_{\mr{nuc}} =  \sqrt{N} \bar{U} \bar{D}^{1/2} , \quad \tilde{\Gamma}_{\mr{nuc}} = \sqrt{T} \bar{V} \bar{D}^{1/2}.
            \end{gather*}
        \end{enumerate}
        Since the entries of $(\bar{\Lambda}_{\mr{nuc}}, \bar{\Gamma}_{\mr{nuc}}, \Lambda_0^G, \Gamma_0^G)$ are uniformly bounded, we have the following inequalities wpa1:
        \begin{align*}
            \|\bar{\Lambda}_{\mr{nuc}}-\Lambda_0^G\|_{\mr{F}} \leq \|\hat{\Lambda}_{\mr{nuc}}-\Lambda_0^G\|_{\mr{F}} \leq \sqrt{N}  B_1 \left(\varphi_{NT} + \frac{\log(NT)}{\sqrt{\min\{N, T\}}}\right), \\
            \|\bar{\Gamma}_{\mr{nuc}}-\Gamma_0^G\|_{\mr{F}} \leq \|\hat{\Gamma}_{\mr{nuc}}-\Gamma_0^G\|_{\mr{F}} \leq \sqrt{T}  B_1  \left(\varphi_{NT} + \frac{\log(NT)}{\sqrt{\min\{N, T\}}}\right).
        \end{align*} 
        Thus, 
        \begin{align*}
            \|\bar{\Theta}_{\mr{nuc}} - \Theta_0\|_{\mr{F}} \leq & \underbrace{\|\bar{\Lambda}_{\mr{nuc}}\|_{\mr{F}}}_{\leq M\sqrt{NR}}\|\bar{\Gamma}_{\mr{nuc}} - \Gamma_0^G\|_{\mr{F}} + \underbrace{\|\Gamma_0^G\|_{\mr{F}}}_{\leq M\sqrt{TR}} \|\bar{\Lambda}_{\mr{nuc}} - \Lambda_0^G\|_{\mr{F}} \\
            \leq & \underbrace{2M\sqrt{R}  B_1}_{B_2} \sqrt{NT}\left(\varphi_{NT} + \frac{\log(NT)}{\sqrt{\min\{N, T\}}}\right), \quad \text{wpa1}. 
        \end{align*}
        Therefore, the estimation errors of $\bar{\Theta}_{\mr{nuc}}$ and $\hat{\Theta}_{\mr{nuc}}$ differ only by a constant factor.  
        Applying the proof method in Step 3 yields that, wpa1, 
        \begin{align*}
            \frac{1}{\sqrt{N}}\|\tilde{\Lambda}_{\mr{nuc}} - \Lambda_0\|_{\mr{F}}, \frac{1}{\sqrt{T}}\|\tilde{\Gamma}_{\mr{nuc}} - \Gamma_0\|_{\mr{F}} \leq & \underbrace{\left(\frac{16B_2 \Omega_1^{3/2}}{\Omega_R^2} +  \frac{ c_1 R}{\Omega_R^{1/2} }\right)}_{c_2} \left(\varphi_{NT} + \frac{\log(NT)}{\sqrt{\min\{N, T\}}}\right). 
        \end{align*}
        In addition, since $\bar{\Theta}_{\mr{nuc}}$ is the product of two rank-$R$ uniformly bounded matrices and satisfies $ \bar{\Theta}_{\mr{nuc}}= \tilde{\Lambda}_{\mr{nuc}}\tilde{\Gamma}'_{\mr{nuc}}$, each entry in $(\tilde{\Lambda}_{\mr{nuc}}, \tilde{\Gamma}_{\mr{nuc}})$ must be uniformly bounded wpa1. When $\rho_{\lambda}, \rho_{\gamma}$ are sufficiently large (independent of $N, T$), we have $(\tilde{\Lambda}_{\mr{nuc}},  \tilde{\Gamma}_{\mr{nuc}})\in \Phi_{NT}$.   Therefore, we construct new uniformly bounded nuisance estimators $(\tilde{\Lambda}_{\mr{nuc}}, \tilde{\Gamma}_{\mr{nuc}})$ and prove that they achieve the same convergence rate, differing only by a constant. 

        Since constructing a uniformly bounded nuisance estimator is solely for the convenience of theoretical analysis,   we do not distinguish between $(\hat{\Lambda}_{\mr{nuc}}, \hat{\Gamma}_{\mr{nuc}})$ and  $(\tilde{\Lambda}_{\mr{nuc}}, \tilde{\Gamma}_{\mr{nuc}})$ in the rest of the paper, with a slight abuse of notation.  
    \end{proofthm}

    \begin{prooflmm}{lemma:cone}
        The proof is standard in the literature. Since $(\hat{\beta}_{\mr{nuc}}, \hat{\Theta}_{\mr{nuc}})$  solves the nuclear norm regularized optimization problem~\eqref{eq:nnr_definition_formal}, we have 
        \begin{align}\label{eq:lmm_cone_1}
            \mc{L}_{NT}( \beta_0 + \hat{\Delta}_{\beta} ,  \Theta_0 + \hat{\Delta}_{\Theta} ) - \mc{L}_{NT}(\beta_0, \Theta_0)  \leq \frac{\varphi_{NT}}{\sqrt{NT}} (\|\Theta_0\|_{\mr{nuc}} - \|\Theta_0 + \hat{\Delta}_{\Theta}\|_{\mr{nuc}} ). 
        \end{align}
        Consider the first-order Taylor expansion of $\mc{L}_{NT}(\hat{\beta}_{\mr{nuc}}, \hat{\Theta}_{\mr{nuc}}) $ around the true parameter. Since $\mc{L}_{NT}(\cdot, \cdot)$ is convex, we have the following inequality
        \begin{align}\label{eq:lmm_cone_2}
            \mc{L}_{NT}( \beta_0 + \hat{\Delta}_{\beta} ,  \Theta_0 + \hat{\Delta}_{\Theta} ) - \mc{L}_{NT}(\beta_0, \Theta_0)  \geq \nabla_{\beta}\mc{L}_{NT}\left(\beta_0 , \Theta_0\right)'\hat{\Delta} _{\beta}  + \la \nabla_{\Theta}\mc{L}_{NT}\left(\beta_0 , \Theta_0\right), \hat{\Delta}_{\Theta} \ra. 
        \end{align}
        Combining~\eqref{eq:lmm_cone_1} and~\eqref{eq:lmm_cone_2} yields 
        \begin{align*}
            \frac{\varphi_{NT}}{\sqrt{NT}} (\|\Theta_0\|_{\mr{nuc}} - \|\Theta_0 + \hat{\Delta}_{\Theta} \|_{\mr{nuc}} ) - \nabla_{\beta}\mc{L}_{NT}\left(\beta_0 , \Theta_0\right)'\hat{\Delta}_{\beta}  - \la \nabla_{\Theta}\mc{L}_{NT}\left(\beta_0 , \Theta_0\right), \hat{\Delta}_{\Theta } \ra   \geq 0. 
        \end{align*} 
        This implies        
        \begin{align}\label{eq:lmm_cone_3}
            \frac{\varphi_{NT}}{\sqrt{NT}} (\|\Theta_0\|_{\mr{nuc}} - \| \Theta_0 + \hat{\Delta}_{\Theta} \|_{\mr{nuc}} ) +  |\nabla_{\beta}\mc{L}_{NT}\left(\beta_0 , \Theta_0\right)'\hat{\Delta} _{\beta}|  + | \la \nabla_{\Theta}\mc{L}_{NT}\left(\beta_0 , \Theta_0\right), \hat{\Delta}_{\Theta } \ra|   & \geq 0. 
        \end{align}
        The term $|\nabla_{\beta}\mc{L}_{NT}\left(\beta_0 , \Theta_0\right)'\hat{\Delta} _{\beta}|$ is controlled  by 
        \begin{align*}
            |\nabla_{\beta}\mc{L}_{NT}\left(\beta_0 , \Theta_0\right)'\hat{\Delta} _{\beta}| \leqtext{(i)} \|\nabla_{\beta}\mc{L}_{NT}\left(\beta_0 , \Theta_0\right)\| \|\hat{\Delta}_{\beta}\|  \leqtext{(ii)}  \frac{1}{1 + \alpha}\varphi_{NT} \|\hat{\Delta} _{\beta}\|, 
        \end{align*}
        where  inequality (i) follows from the Cauchy-Schwarz inequality, and inequality (ii) holds because of the condition $\varphi_{NT} \geq (1+\alpha)  \|\nabla_{\beta}\mc{L}_{NT}\left(\beta_0 , \Theta_0\right)\|$ as stated in Theorem~\ref{thm:consistency_pre}. Similarly, we can  control  $| \la \nabla_{\Theta}\mc{L}_{NT}\left(\beta_0 , \Theta_0\right), \hat{\Delta}_{\Theta } \ra|$ as follows:  
        \begin{align*}
            | \la \nabla_{\Theta}\mc{L}_{NT}\left(\beta_0 , \Theta_0\right) , \hat{\Delta}_{\Theta } \ra|  \leqtext{(i)} \|\nabla_{\Theta}\mc{L}_{NT}\left(\beta_0 , \Theta_0\right)\|_{\mr{op}} \|\hat{\Delta}_{\Theta}\|_{\mr{nuc}} 
            \leqtext{(ii)}  \frac{1}{1+\alpha}\frac{\varphi_{NT}}{\sqrt{NT}}\|\hat{\Delta}_{\Theta}\|_{\mr{nuc}}. 
        \end{align*}
        where  inequality (i) follows from the Hölder's inequality, as the nuclear norm is the dual norm of the spectral norm, and inequality (ii) holds because of the condition $\varphi_{NT} \geq (1+\alpha)\sqrt{NT} \|\nabla_{\Theta}\mc{L}_{NT}\left(\beta_0 , \Theta_0\right)\|_{\mr{op}}$ as stated in Theorem~\ref{thm:consistency_pre}.
        Thus, inequality~\eqref{eq:lmm_cone_3} can be written as
        \begin{align}\label{eq:lmm_cone_4}
              \left(\|\Theta_0\|_{\mr{nuc}} - \| \Theta_0 + \hat{\Delta}_{\Theta} \|_{\mr{nuc}} \right) + \frac{1}{1+\alpha}   \left(\sqrt{NT}\|\hat{\Delta}_{\beta}\| + \|\hat{\Delta}_{\Theta}\|_{\mr{nuc}}\right) \geq 0. 
        \end{align}
        In addition, we have the following inequalities for the nuclear norm:
        \begin{align*}
            \|\Theta_0 + \hat{\Delta}_{\Theta}\|_{\mr{nuc}} & \eqtext{(i)} \| M_{\Lambda_0 } \Theta_0 M_{\Gamma_0}  +  M_{\Lambda_0 }\hat{\Delta}_{\Theta}M_{\Gamma_0} \| _{\mr{nuc}} + \| \Theta_0 - M_{\Lambda_0 } \Theta_0 M_{\Gamma_0}  + \hat{\Delta}_{\Theta} -  M_{\Lambda_0 }\hat{\Delta}_{\Theta}M_{\Gamma_0} \| _{\mr{nuc}} \\
            & \eqtext{(ii)}  \|   M_{\Lambda_0 }\hat{\Delta}_{\Theta}M_{\Gamma_0} \|_{\mr{nuc}} + \| \Theta_0 + \hat{\Delta}_{\Theta} -  M_{\Lambda_0 }\hat{\Delta}_{\Theta}M_{\Gamma_0} \|_{\mr{nuc}} \\
            & \geqtext{(iii)}  \|   M_{\Lambda_0 }\hat{\Delta}_{\Theta}M_{\Gamma_0} \|_{\mr{nuc}} + \| \Theta_0 \|_{\mr{nuc}} - 
            \|  \hat{\Delta}_{\Theta} -  M_{\Lambda_0 }\hat{\Delta}_{\Theta}M_{\Gamma_0} \|_{\mr{nuc}}. 
        \end{align*}
        Here, equality (i) holds by Lemma~\ref{lemma:property_local_projection}\ref{item:local_projection_1}, equality (ii) follows from the fact that $ M_{\Lambda_0 } \Theta_0 M_{\Gamma_0}  = 0$, and inequality (iii) follows from the triangle inequality.   

        Finally, we combine the nuclear norm inequality and \eqref{eq:lmm_cone_4} to obtain 
        \begin{align*}
            \|  \hat{\Delta}_{\Theta} -  M_{\Lambda_0 }\hat{\Delta}_{\Theta}M_{\Gamma_0} \| _{\mr{nuc}}  - 
            \|  M_{\Lambda_0 }\hat{\Delta}_{\Theta}M_{\Gamma_0} \| _{\mr{nuc}} + \frac{1}{1+\alpha}   \left(\sqrt{NT}\|\hat{\Delta} _{\beta}\| + \|\hat{\Delta}_{\Theta}\|_{\mr{nuc}}\right) \geq 0 \\
            \Rightarrow  \frac{2 + \alpha}{1 + \alpha}\|  \hat{\Delta}_{\Theta} -  M_{\Lambda_0 }\hat{\Delta}_{\Theta}M_{\Gamma_0} \| _{\mr{nuc}}  - 
            \frac{\alpha}{1 + \alpha}\|  M_{\Lambda_0 }\hat{\Delta}_{\Theta}M_{\Gamma_0} \| _{\mr{nuc}} + \frac{1}{1+\alpha}  \sqrt{NT}\|\hat{\Delta} _{\beta}\|  \geq 0 \\
            \frac{2+\alpha}{\alpha}\left(\|  \hat{\Delta}_{\Theta} -  M_{\Lambda_0 }\hat{\Delta}_{\Theta}M_{\Gamma_0} \| _{\mr{nuc}}  
            + \sqrt{NT}\|\hat{\Delta} _{\beta}\| \right)  \geq  \|  M_{\Lambda_0 }\hat{\Delta}_{\Theta}M_{\Gamma_0} \| _{\mr{nuc}}. 
        \end{align*}
        This completes the proof. 
    \end{prooflmm}

\subsection{Proofs of Corollary \ref{corollary:consistency} and Corollary \ref{corollary:consistency_pre}} 

    Since Corollary~\ref{corollary:consistency_pre} extends Corollary~\ref{corollary:consistency} to include predetermined covariates, we provide only the proof of Corollary~\ref{corollary:consistency_pre}, as the proof of Corollary~\ref{corollary:consistency} can be viewed as a special case.

    As stated in Assumption~\ref{assumption:regularity_pre}\ref{item:1_pre}, $\{(Y_{it}, W_{it})\}_{1\leq t\leq T}$ is $\phi$-mixing with a uniformly exponential decay rate across $i$. However, this assumption is stronger than necessary for  the proof of Corollary~\ref{corollary:consistency_pre}. In fact, it can be relaxed to requiring $\{(Y_{it}, W_{it})\}_{1\leq t\leq T}$ is $\alpha$-mixing with a uniformly sufficiently fast polynomial decay rate across $i$. 

    \begin{proofcor}{corollary:consistency_pre}
        It suffices to prove that $$\max\{\|\nabla_{\beta}\mc{L}_{NT}(\beta_0, \Theta_0)\|, \sqrt{NT}\|\nabla_{\Theta}\mc{L}_{NT}(\beta_0, \Theta_0)\|_{\mr{op}} \} = o_p\left(\log(NT)/ \sqrt{\min\{N, T\}}\right). $$ 

       By Assumption~\ref{assumption:regularity_pre}\ref{item:sampling_pre}, $\nabla_{\beta}\mc{L}_{NT}(\beta_0, \Theta_0)$ is the sum of weakly dependent bounded random vectors with zero mean conditional on $(Z, \Lambda_0, \Gamma_0)$. Consequently, we apply \citet[Theorem~1]{kanaya2017convergence} to establish that, wpa1, 
        \begin{align}\label{eq:corollary_pre_1}
            \|\nabla_{\beta}\mc{L}_{NT}(\beta_0, \Theta_0)\| < \log(NT) / \sqrt{NT}. 
        \end{align}
        Note that the $(i, t)$ entry of the matrix $\nabla_{\Theta}\mc{L}_{NT}(\beta_0, \Theta_0) $ is $\dot{\ell}_{it}(X_{it}'\beta_0 + \theta_0)$, and it is straightforward to verify that (i) $\{\dot{\ell}_{it}(X_{it}'\beta_0 + \theta_0)\}_{1\leq i\leq N, 1\leq t\leq T}$ is independent across $i$ and $\phi$-mixing with uniformly exponential decay rate, (ii) $\mb{E}_{Z, \Lambda_0, \Gamma_0}(\dot{\ell}_{it}(X_{it}'\beta_0 + \theta_0)) = 0$ by the first-order condition, and (iii) $\dot{\ell}_{it}(X_{it}'\beta_0 + \theta_0)$ is uniformly bounded across $i, t, N, T$ by Assumption~\ref{assumption:regularity_pre}\ref{item:boundedness_pre} and Assumption~\ref{assumption:regularity_pre}\ref{item:smoothing_pre}. Therefore, employing Lemma~\ref{lemma:independent_entry} gives
        \begin{align*}
            NT \|\nabla_{\Theta}\mc{L}_{NT}(\beta_0, \Theta_0)\|_{\mr{op}} = O_p\left(\log(N + T)\sqrt{\max\{N , T\}}\right). 
        \end{align*}
        Hence,  
        \begin{align}\label{eq:corollary_pre_2}
            \sqrt{NT}\|\nabla_{\Theta}\mc{L}_{NT}(\beta_0, \Theta_0)\| 
            = o_p\left(\log(NT)/\sqrt{\min\{N, T\}}\right). 
        \end{align}
        We then combine \eqref{eq:corollary_pre_1} and \eqref{eq:corollary_pre_2} to complete proof. 
    \end{proofcor}

\subsection{Proof of Lemma~\ref{lemma:sufficient_RSC}}

    Our proof builds on \citet{chernozhukov2019inference} (see Lemma D.3 and Lemma D.4 in their Appendix) and extends their results to accommodate serial correlation. The extension introduces additional technical complexity, particularly in deriving a high-probability upper bound for the empirical process under weak dependence. To address this issue, we apply the concentration inequality of \citet{samson2000concentration} to establish the concentration bound around the expectation of the empirical process. Furthermore, we employ the block method from \citet{yu1994rates}, a new sequence with independent blocks to approximate the original sequence. When the block size is sufficiently large, the dependence between separated blocks becomes negligible, thereby facilitating the theoretical analysis.  
    
    It is worth noting that our proof strategy is not only applicable to models with homogeneous slopes, but with minor modifications, can also be extended to accommodate heterogeneous slopes (e.g., \citet{chernozhukov2019inference}, \citet{ma2022detecting}). We believe that the flexibility of our strategy enhances the applicability of our approach to a broader class of models.

    \begin{prooflmm}{lemma:sufficient_RSC} Recall the definition of the constraints space:
        \begin{gather*}
            \mc{C}_1 = \left\{(\Delta_{\beta}, \Delta_{\Theta})\in (\mb{R}^{d_X}\times \mb{R}^{N\times T})\mid \|M_{\Lambda_0}\Delta_{\Theta}M_{\Gamma_0}\|_{\mr{nuc}} \leq c_0 \left(\sqrt{NT}\|\Delta_{\beta}\| + \|\Delta_{\Theta} - M_{\Lambda_0}\Delta_{\Theta}M_{\Gamma_0}\|_{\mr{nuc}}\right)\right\}, \\
            \mc{C}_2 = \left\{ (\Delta_{\beta}, \Delta_{\Theta})\in (\mb{R}^{d_X}\times \mb{R}^{N\times T})\mid \|\Delta_{\beta}\|^2 + \frac{1}{NT}  \|\Delta_{\Theta}\|_{\mr{F}}^2 \geq  \sqrt{\frac{\log (NT)}{NT}}\right\}. 
        \end{gather*}
        For notational simplicity, let $\mc{C} =  \mc{C}_1\cap \mc{C}_2$. Also, use $\mb{E}_{\mc{V}}(\cdot) = \mb{E}(\cdot\mid \mc{V})$ to denote the conditional expectation, and $\mb{P}_{\mc{V}}(\cdot) = \mb{P}(\cdot\mid \mc{V})$ to denote the conditional probability. 

        \paragraph{Step 1} 
        In this step, we aim to establish a lower bound for $\sum_{i=1}^{N}\sum_{t=1}^{T}\mb{E}_{\mc{V}}(X_{it}' \Delta_{\beta} + \Delta_{\theta_{ it}} )^2$. By Assumption~\ref{assumption:conditional_independence_RSC}\ref{item:conditional_variability_RSC}, there exists a constant $\kappa_0>0$ such that
        \begin{align*}
            \inf_{1\leq i\leq N, 1\leq t\leq T} \sigma_{\min } \left( 
            \begin{pmatrix}
                \mb{E}_{\mc{V}}(X_{it}X_{it}') & \mb{E}_{\mc{V}}(X_{it}) \\
                \mb{E}_{\mc{V}}(X_{it}') & 1
            \end{pmatrix}\right) \geq \kappa_0. 
        \end{align*}
        It follows that
            \begin{equation}\label{eq:lower_bound_expectation}
                \begin{aligned}
                    \sum_{i=1}^{N}\sum_{t=1}^{T}\mb{E}_{\mc{V}}(X_{it}' \Delta_{\beta} + \Delta_{\theta_{it}} )^2 
                    &\geq \sum_{i=1}^{N}\sum_{t=1}^{T}
                    \begin{pmatrix}
                        \Delta'_{\beta} & 
                        \Delta_{\theta_{it}}
                    \end{pmatrix} 
                    \begin{pmatrix}
                        \mb{E}_{\mc{V}}(X_{it}X_{it}') & \mb{E}_{\mc{V}}(X_{it}) \\
                        \mb{E}_{\mc{V}}(X_{it}') & 1
                    \end{pmatrix}
                    \begin{pmatrix}
                        \Delta_{\beta} \\
                        \Delta_{\theta_{it}}
                    \end{pmatrix}\\
                    & \geq \kappa_0\left(NT \|\Delta_{\beta}\|^2 + \|\Delta_{\Theta}\|_{\mr{F}}^2\right). 
                \end{aligned}
            \end{equation} 

        \paragraph{Step 2 (Concentration around the expectation)} 

        For any $\omega>0$, define the constraint set $\mc{N}(\omega)$ as 
        \begin{align*}
            \mc{N}(\omega) := \left\{(\Delta_{\beta},\Delta_{\Theta})\in \mc{C}  \mid \|\Delta_{\beta}\|^2  +  \frac{1}{NT}\|\Delta_{\Theta}\|_{\mr{F}}^2 \leq \omega, \text{ }\|\Delta_{\beta}\|_{\max}\leq 2\rho_{\beta}, \text{ }\|\Delta_{\Theta}\|_{\max} \leq 2 \rho_\theta \right\}. 
        \end{align*}
        and define the empirical process as 
        \begin{align*}
            Z(\omega) = \sup_{(\Delta_{\beta},\Delta_{\Theta})\in \mc{N}(\omega)} \left| \sum_{i=1}^{N}\sum_{t=1}^{T}(X_{it}' \Delta_{\beta} + \Delta_{\theta_{it}} )^2 - \mb{E}_{\mc{V}}\sum_{i=1}^{N}\sum_{t=1}^{T}(X_{it}' \Delta_{\beta} + \Delta_{\theta_{it}} )^2 \right|. 
        \end{align*} 
        It is straightforward to verify that  
        \begin{align*}
            \sup_{(\Delta_{\beta},\Delta_{\Theta})\in \mc{N}(\omega)} |(X_{it}' \Delta_{\beta} + \Delta_{\theta_{it}} )^2- \mb{E}_{\mc{V}}(X_{it}' \Delta_{\beta} + \Delta_{\theta_{it}} )^2| & \leq 2\sup_{(\Delta_{\beta},\Delta_{\Theta})\in \mc{N}(\omega)} (X_{it}' \Delta_{\beta} + \Delta_{\theta_{it}} )^2 \\
            & \leq 4\sup_{\|\Delta_{\beta}\|_{\max}\leq 2\rho_{\beta}, \|\Delta_{\theta}\|_{\max}\leq 2\rho_{\theta} } \left\{(X_{it}' \Delta_{\beta} )^2 + \Delta_{\theta_{it}}^2\right\} \\
            & \leq \underbrace{16(d_X^2 \rho_{X}^2\rho_{\beta}^2 + \rho_{\theta}^2) }_{\sigma}
        \end{align*}
        almost surely. 
                By Assumption~\ref{assumption:conditional_independence_RSC}\ref{item:conditional_weak_dependence_RSC},  conditional on $\mc{V}$, the sequence $\{X_{it}\}_{1\leq t\leq T}$ is $\phi$-mixing with a uniform exponential decay rate. This allows us to apply \citet[Theorem~3]{samson2000concentration} to obtain the following concentration inequality for $Z(\omega)$ around its expectation:
        \begin{align*}
            \mb{P}_{\mc{V}}\left(Z(\omega) \geq \mb{E}_{\mc{V}} Z(\omega) + \delta\right) \leq \exp\left( - L^{-1} \min\left\{\frac{\delta}{\sigma}, \frac{\delta^2}{NT\sigma^2}\right\}\right), \quad \forall \delta >0. 
        \end{align*}
        Here,  $L \in [1, \infty)$ does not depend on $N, T$,  and is only determined by the mixing properties of $X_{it}$ conditional on $\mc{V}$.

        \paragraph{Step 3 (Upper bound for $\mb{E}_{\mc{V}}Z(\omega)$)}
        We follow the block method of \citet{yu1994rates} to derive an upper bound for $\mb{E}_{\mc{V}}Z(\omega)$. Define  $X_i = \left(X_{i1}, X_{i2}, \ldots, X_{iT}\right)$, 
        where $\{X_i\}$ is independent across $i$, and for each $i$, $X_i$ is  $\phi$-mixing with mixing coefficients $\phi(\tau)$. Let  $\tau_{NT} \leq T$ be a positive integer, and $\mu_{NT} = \lfloor \frac{T}{2\tau_{NT}} \rfloor $ denote the largest integer less than or equal to $\frac{T}{2\tau_{NT}}$. For each $X_i$, we divide the sequence into  $2\mu_{NT}$  blocks,  each of length $\tau_{NT}$,   with the remaining part having length at most $2\tau_{NT}$. The subscript notation indicates that the values of $\tau_{NT}$ and $\mu_{NT}$ may depend on $N$ and $T$. 

        We further partition the blocks into two groups: odd-numbered blocks and even-numbered blocks. For notational simplicity, let $\mc{T}^{(0)}_k$ denote the indices of elements in the  $k$-th odd block, and $\mc{T}^{(0)}$ denote the set of indices corresponding to the elements contained in odd-numbered blocks. Similarly, let $\mc{T}^{(1)}_k$ denote the set of indices of elements in the  $k$-th even block, and $\mc{T}^{(1)}$ denote the set of the indices of all elements in even-numbered blocks. Specifically, 
        \begin{align*}
            \mc{T}^{(0)} :=  \bigcup_{k=1}^{\mu_{NT}} \mc{T}^{(0)}_k, \quad  \mc{T}^{(0)}_k = \left\{t \mid 2(k-1)\tau_{NT} + 1\leq  t\leq 2(k-1)\tau_{NT} + \tau_{NT}\right\}, \\
            \mc{T}^{(1)} L=  \bigcup_{k=1}^{\mu_{NT}} \mc{T}^{(1)}_k, \quad  \mc{T}^{(1)}_k = \left\{t \mid (2k-1)\tau_{NT} + 1\leq  t\leq (2k-1)\tau_{NT} + \tau_{NT}\right\}. 
        \end{align*}  
        The corresponding partition of $X_i$ can be written as 
        \begin{gather*}
            X_i^{(0)} := \left(X_i^{(0, 1)},  X_i^{(0, 2)}, 
            \ldots, X_i^{(0, \mu_{NT})}\right), \\
            X_i^{(1)}  := \left( X_i^{(1, 1)} , X_i^{(1, 2)}, 
            \ldots,  X_i^{(1, \mu_{NT})}\right), 
        \end{gather*}
        where for each $1\leq k \leq \mu_{NT}$, 
        \begin{align*}
            X_i^{(0, k)} = \left(X_{it}\mid t\in \mc{T}^{(0)}_k \right),  \quad X_i^{(1, k)} = \left(X_{it}\mid t\in \mc{T}^{(1)}_k \right). 
        \end{align*}
        The remaining terms are collected into $R_i$, where  
        \begin{align*}
            R_i =  \left(X_{it}\mid t \in \mc{R} \right), \quad \mc{R} = \left\{t \mid 2\tau_{NT}\mu_{NT} + 1 \leq t \leq T \right\}
        \end{align*}

        In the next step, for each $i$, we construct a new sequence with independent block structure conditional on $\mc{V}$:
        \begin{align*}
            \widetilde{X}^{(0)}_i = \left(\widetilde{X}_i^{(0, 1)}, \widetilde{X}_i^{(0, 2)}, \ldots, \widetilde{X}_i^{(0, \mu_{NT})} \right), 
        \end{align*}
        such that each block $\widetilde{X}^{(0, k)}_i$ (of size $\tau_{NT}$) is independent of the others conditional on $\mc{V}$. Within each block, $\widetilde{X}^{(0, k)}_i$ follows the same conditional distribution as $X^{(0, k)}_i$. 
        We then construct $\{\widetilde{X}^{(1)}_i\}$ in a  similar way, i.e., 
        \begin{align*}
            \widetilde{X}^{(1)}_i := \left(\widetilde{X}_i^{(1, 1)}, \widetilde{X}_i^{(1, 2)}, \ldots, \widetilde{X}_i^{(1, \mu_{NT})}\right). 
        \end{align*}
        Here, each block is independent with the others conditional on $\mc{V}$. Within each block, $\widetilde{X}^{(1, k)}_i$ follows the same conditional distribution as $X^{(1, k)}_i$. 
        
        Denote $\widetilde{X} = (\widetilde{X}_i^{(0, 1)}, \widetilde{X}_i^{(1, 1)}, \ldots, \widetilde{X}_i^{(0, \mu_{NT})}, \widetilde{X}_i^{(1, \mu_{NT})})$, and for $s \in \{0, 1\}$, define the empirical process
        \begin{align*}
            Z^{(s)}(\omega) & := \sup_{(\Delta_{\beta},\Delta_{\Theta})\in \mc{N}(\omega)} \left| \sum_{i=1}^{N}\sum_{ t\in \mc{T}^{(s)}} (X_{it}' \Delta_{\beta} + \Delta_{\theta_{it}} )^2 - \mb{E}_{\mc{V}}\sum_{i=1}^{N}\sum_{ t\in \mc{T}^{(s)}}(X_{it}' \Delta_{\beta} + \Delta_{\theta_{it}} )^2 \right|,  \\
            \widetilde{Z}^{(s)}(\omega) & := \sup_{(\Delta_{\beta},\Delta_{\Theta})\in \mc{N}(\omega)} \left| \sum_{i=1}^{N}\sum_{ t\in \mc{T}^{(s)}} (\widetilde{X}_{it}' \Delta_{\beta} + \Delta_{\theta_{it}} )^2 - \mb{E}_{\mc{V}}\sum_{i=1}^{N}\sum_{ t\in \mc{T}^{(s)}}(\widetilde{X}_{it}' \Delta_{\beta} + \Delta_{\theta_{it}} )^2 \right|. 
        \end{align*}
        It is straightforward to see that $\mb{E}_{\mc{V}}Z(\omega)$ can be bounded by
        \begin{equation}\label{eq:lemma_RSC_1}
        \begin{aligned}
            \mb{E}_{\mc{V}} Z(\omega) \leq & \mb{E}_{\mc{V}} Z^{(0)}(\omega) + \mb{E}_{\mc{V}} Z^{(1)}(\omega) \\
            \leq & \mb{E}_{\mc{V}} \widetilde{Z}^{(0)}(\omega) +  \mb{E}_{\mc{V}} \widetilde{Z}^{(1)}(\omega) + | \mb{E}_{\mc{V}} \widetilde{Z}^{(0)}(\omega) -  \mb{E}_{\mc{V}}Z^{(0)}(\omega)| + | \mb{E}_{\mc{V}} \widetilde{Z}^{(1)}(\omega) -  \mb{E}_{\mc{V}}Z^{(1)}(\omega)|\\
            & + 2 \sigma N\tau_{NT}. 
        \end{aligned} 
        \end{equation}
        The last term on the right-hand side of the inequality, $2\sigma N\tau_{NT}$,  arises from the remaining terms when $T$ cannot be exactly divided by $\tau_{NT}$. The following lemma is crucial in bounding $| \mb{E}_{\mc{V}} \widetilde{Z}^{(0)}(\omega) -  \mb{E}_{\mc{V}}Z^{(0)}(\omega)|$ and $| \mb{E}_{\mc{V}} \widetilde{Z}^{(1)}(\omega) -  \mb{E}_{\mc{V}}Z^{(1)}(\omega)|$: 
        \begin{lemma}[Based on Lemma 4.1 of \citet{yu1994rates}]\label{lemma:YU-IB}
            For any measurable function $h: \mb{R}^{N \times \tau_{NT}\mu_{NT}} \mapsto \mb{R}$ with bound $M>0$, we have 
            \begin{align*}
                | \mb{E}_{\mc{V}} h(X^{(s)}_{1}, X^{(s)}_{2}, \ldots, X^{(s)}_{N}) - \mb{E}_{\mc{V}} h(\widetilde{X}^{(s)}_{1}, \widetilde{X}^{(s)}_{2}, \ldots, \widetilde{X}^{(s)}_{N}) | \leq M \left(N\mu_{NT} -1 \right) \phi(\tau_{NT}), \quad s = 0,1. 
            \end{align*} 
        \end{lemma}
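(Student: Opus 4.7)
The plan is to prove this via a coupling/telescoping argument in total variation distance, in the spirit of the classical construction underlying Yu's original lemma. First I would observe that, by Assumption~\ref{assumption:conditional_independence_RSC}\ref{item:conditional_weak_dependence_RSC}, the cross-sectional units $\{X_i\}$ are independent conditional on $\mc{V}$, and the block-independent construction of the tilde sequence preserves this independence across $i$. Consequently, if $P^{(s)}$ and $\widetilde{P}^{(s)}$ denote the conditional joint laws (given $\mc{V}$) of $(X_1^{(s)}, \ldots, X_N^{(s)})$ and $(\widetilde{X}_1^{(s)}, \ldots, \widetilde{X}_N^{(s)})$ respectively, both factorize as $N$-fold product measures. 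Since $|h| \leq M$, the standard duality inequality
\begin{align*}
\left| \mb{E}_{\mc{V}} h(X_1^{(s)}, \ldots, X_N^{(s)}) - \mb{E}_{\mc{V}} h(\widetilde{X}_1^{(s)}, \ldots, \widetilde{X}_N^{(s)}) \right| \leq M \cdot \|P^{(s)} - \widetilde{P}^{(s)}\|_{\mr{TV}}
\end{align*}
reduces the problem to bounding the total variation distance between $P^{(s)}$ and $\widetilde{P}^{(s)}$.

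Next I would construct a telescoping chain of intermediate measures $Q_0, Q_1, \ldots, Q_{N\mu_{NT}}$, starting from $Q_0 = P^{(s)}$ and ending at $Q_{N\mu_{NT}} = \widetilde{P}^{(s)}$, where $Q_k$ is obtained from $Q_{k-1}$ by replacing the conditional law of the $k$-th block (given the already-replaced past) with its unconditional marginal. For replacements within a given unit $i$, consecutive odd (respectively, even) blocks in the original sequence are separated by a time gap of at least $\tau_{NT}$, so the definition of the conditional $\phi$-mixing coefficient in Assumption~\ref{assumption:conditional_independence_RSC}\ref{item:conditional_weak_dependence_RSC} directly yields that each such replacement costs at most $\phi(\tau_{NT})$ in total variation. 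For the transitions between units, conditional independence across $i$ makes the cost vanish, but the total number of nontrivial replacements is bounded by $N\mu_{NT} - 1$. Summing the per-step bounds yields
\begin{align*}
\|P^{(s)} - \widetilde{P}^{(s)}\|_{\mr{TV}} \leq (N\mu_{NT} - 1)\phi(\tau_{NT}),
\end{align*}
which, combined with the duality inequality, gives the claimed estimate.

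The only nontrivial obstacle is justifying the per-step bound rigorously: namely, that replacing the conditional law of a block, given the sigma-algebra generated by the earlier blocks within the same unit, with its unconditional marginal is controlled by $\phi(\tau_{NT})$ in total variation. This is a direct consequence of the supremum-over-events definition of the conditional $\phi$-mixing coefficient together with the characterization $\|\mu - \nu\|_{\mr{TV}} = \sup_A |\mu(A) - \nu(A)|$, but care is needed to handle the conditioning measurably. Since the result is stated as Lemma 4.1 of \citet{yu1994rates}, in practice I would simply invoke it as a black box rather than reproducing the coupling construction in detail.
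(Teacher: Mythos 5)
Your proposal is correct and consistent with the paper, which states this result as Lemma 4.1 of Yu (1994) and offers no proof of its own beyond the citation; your telescoping/total-variation sketch is precisely the standard argument behind that lemma (replace one block at a time by its marginal, pay the mixing coefficient of the $\tau_{NT}$-gap at each of the at most $N\mu_{NT}-1$ nontrivial steps, and use conditional independence across $i$ to make the cross-unit transitions free). The one step you gloss over slightly is that the per-block replacement cost in total variation is, strictly speaking, the $\beta$-mixing (absolute regularity) coefficient of the gap rather than the $\phi$-mixing coefficient itself, so the bound by $\phi(\tau_{NT})$ requires the standard inequality $\beta(\tau)\leq\phi(\tau)$ rather than following "directly" from the supremum-over-events definition of $\phi$.
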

        To apply Lemma \ref{lemma:YU-IB}, let   
        \begin{align*}
            h^{(s)}(X^{(s)}_{1}, X^{(s)}_{2}, \ldots, X^{(s)}_{N}) : =  \sup_{(\Delta_{\beta},\Delta_{\Theta})\in \mc{N}(\omega)} \left| \sum_{i=1}^{N}\sum_{ t\in \mc{T}^{(s)}} (X_{it}' \Delta_{\beta} + \Delta_{\theta_{it}} )^2 - \mb{E}_{\mc{V}}\sum_{i=1}^{N}\sum_{ t\in \mc{T}^{(s)}}(X_{it}' \Delta_{\beta} + \Delta_{\theta_{it}} )^2 \right| 
        \end{align*}
        It is straightforward to verify that the following inequality holds almost surely for $s = 0,1$: 
        \begin{align*}
            |h^{(s)}(X^{(s)}_{1}, X^{(s)}_{2}, \ldots, X^{(s)}_{N})| \leq & 2 \sup_{(\Delta_{\beta},\Delta_{\Theta})\in \mc{N}(\omega)} \left| \sum_{i=1}^{N}\sum_{ t\in \mc{T}^{(s)}} (X_{it}' \Delta_{\beta} + \Delta_{\theta_{it}} )^2  \right| \\ 
            \leq & 4 \sup_{(\Delta_{\beta},\Delta_{\Theta})\in \mc{N}(\omega)} \left| \sum_{i=1}^{N}\sum_{ t\in \mc{T}^{(s)}} (X_{it}' \Delta_{\beta})^2 + \sum_{i=1}^{N}\sum_{ t\in \mc{T}^{(s)}} \Delta_{\theta_{it}}^2  \right| \\
            \leq & 4 \sup_{(\Delta_{\beta},\Delta_{\Theta})\in \mc{N}(\omega)}   \sum_{i=1}^{N}\sum_{ t\in \mc{T}^{(s)}} (X_{it}' \Delta_{\beta})^2  + 4 \sup_{(\Delta_{\beta},\Delta_{\Theta})\in \mc{N}(\omega)} \sum_{i=1}^{N}\sum_{ t\in \mc{T}^{(s)}} \Delta_{\theta_{it}}^2 \\
            \leq & 4 N \mu_{NT} \tau_{NT}d_X \rho_X^2 \omega  + 4 NT \omega \\ 
            \leq & 2 NT (d_X \rho_X^2  + 2)\omega. 
        \end{align*}
        Thus, applying Lemma \ref{lemma:YU-IB} to $| \mb{E}_{\mc{V}} \widetilde{Z}^{(s)}(\omega) -  \mb{E}_{\mc{V}}Z^{(s)}(\omega)|$ (with $M \leq 2 NT(d_X \rho_X^2  + 2)\omega $) yields
        \begin{equation}\label{eq:lemma_RSC_2}
        \begin{aligned}
            | \mb{E}_{\mc{V}} \widetilde{Z}^{(s)}(\omega) -  \mb{E}_{\mc{V}}Z^{(s)}(\omega)| \leq & 2 NT\left(d_X \rho_X^2  + 2\right)\omega  \left(N\mu_{NT} -1 \right) \phi(\tau_{NT}) \\
            \leq & \underbrace{\left(d_X \rho_X^2  + 2\right)}_{\frac{C_0}{2}} (NT)^2 \frac{\phi(\tau_{NT})}{\tau_{NT}}\omega. 
        \end{aligned}
        \end{equation}
        Intuitively, when $\phi(\tau_{NT})$ decays sufficiently fast as $\tau_{NT}\rightarrow \infty$, we expect that $\phi(\tau_{NT})/ \tau_{NT}\rightarrow 0$ sufficiently fast, so that $| \mb{E}_{\mc{V}} \widetilde{Z}^{(s)}(\omega) -  \mb{E}_{\mc{V}}Z^{(s)}(\omega)|$ can be well controlled. 

        We now turn to establishing a bound for $\mb{E}_{\mc{V}} \widetilde{Z}^{(s)}(\omega)$. Since each block in $\widetilde{X}_i$ is independent with the others conditional on $\mc{V}$, it suffices to study the Rademacher process for each block. More specifically, for each $s$, we can construct a collecion of i.i.d. Rademacher random variables $\{\epsilon_{ik}^{(s)}\mid i = 1,2,\ldots, N, k = 1,2,\ldots, \mu_{NT}\}$, which are independent of $(\widetilde{X}^{(s)}_{1}, \widetilde{X}^{(s)}_{2}, \ldots, \widetilde{X}^{(s)}_{N})$ conditional on $\mc{V}$. Using  symmetrization method, we obtain  
        \begin{equation}\label{eq:lemma_RSC_3}
        \begin{aligned}
            \mb{E}_{\mc{V}} \widetilde{Z}^{(s)}(\omega) & = \mb{E}_{\mc{V}}  \sup_{(\Delta_{\beta},\Delta_{\Theta})\in \mc{N}(\omega)} \left| \sum_{i=1}^{N}\sum_{ t\in \mc{T}^{(s)}} (\widetilde{X}_{it}' \Delta_{\beta} + \Delta_{\theta_{it}} )^2 - \mb{E}_{\mc{V}}\sum_{i=1}^{N}\sum_{ t\in \mc{T}^{(s)}}(\widetilde{X}_{it}' \Delta_{\beta} + \Delta_{\theta_{it}} )^2 \right| \\
            & \leq 2 \mb{E}_{\mc{V}, \epsilon}    \sup_{(\Delta_{\beta},\Delta_{\Theta})\in \mc{N}(\omega)} \left| \sum_{i=1}^{N}\sum_{ k = 1}^{\mu_{NT}} \left(\sum_{t \in \mc{T}^{(s)}_k}(\widetilde{X}_{it}' \Delta_{\beta} + \Delta_{\theta_{it}} )^2\right) \epsilon_{ik}^{(s)}\right|. 
        \end{aligned}
        \end{equation}
        For $s=0$, we have
        \begin{equation}\label{eq:lemma_RSC_4} 
        \begin{aligned}
            & 2 \mb{E}_{\mc{V}, \epsilon}    \sup_{(\Delta_{\beta},\Delta_{\Theta})\in \mc{N}(\omega)} \left| \sum_{i=1}^{N}\sum_{ k = 1}^{\mu_{NT}} \left(\sum_{t \in \mc{T}^{(0)}_k}(\widetilde{X}_{it}' \Delta_{\beta} + \Delta_{\theta_{it}} )^2\right) \epsilon_{ik}^{(0)}\right| \\
            \leqtext{(i)} & 2 \mb{E}_{\mc{V}, \epsilon}    \sup_{(\Delta_{\beta},\Delta_{\Theta})\in \mc{N}(\omega)} \left| \sum_{ \tau = 1}^{\tau_{NT}} \sum_{i=1}^{N} \sum_{k = 1 }^{\mu_{NT}}(\widetilde{X}_{i, 2(k-1)\tau_{NT} + \tau}' \Delta_{\beta} + \Delta_{\theta_{i, 2(k-1)\tau_{NT} + \tau}} )^2 \epsilon_{ik}^{(0)}  \right|  \\
            \leq & 2  \sum_{ \tau = 1}^{\tau_{NT}} \mb{E}_{\mc{V}, \epsilon}  \sup_{(\Delta_{\beta},\Delta_{\Theta})\in \mc{N}(\omega)} \left|\sum_{i=1}^{N} \sum_{k = 1 }^{\mu_{NT}}(\widetilde{X}_{i, 2(k-1)\tau_{NT} + \tau}' \Delta_{\beta} + \Delta_{\theta_{i, 2(k-1)\tau_{NT} + \tau}} )^2 \epsilon_{ik}^{(0)}  \right| \\
            \leqtext{(ii)} & \underbrace{16 (d_X\rho_X\rho_{\beta} + \rho_{\theta})}_{\frac{C_1}{2}}\sum_{ \tau = 1}^{\tau_{NT}}  \mb{E}_{\mc{V}, \epsilon}  \sup_{(\Delta_{\beta},\Delta_{\Theta})\in \mc{N}(\omega)} \left|\sum_{i=1}^{N} \sum_{k = 1 }^{\mu_{NT}}(\widetilde{X}_{i, 2(k-1)\tau_{NT} + \tau}' \Delta_{\beta} + \Delta_{\theta_{i, 2(k-1)\tau_{NT} + \tau}} ) \epsilon_{ik}^{(0)}  \right|  \\
            \leq & \frac{C_1}{2}\sum_{ \tau = 1}^{\tau_{NT}}  \underbrace{\mb{E}_{\mc{V}, \epsilon}  \sup_{(\Delta_{\beta},\Delta_{\Theta})\in \mc{N}(\omega)} \left|\sum_{i=1}^{N} \sum_{k = 1 }^{\mu_{NT}}\widetilde{X}_{i, 2(k-1)\tau_{NT} + \tau}' \Delta_{\beta} \epsilon_{ik}^{(0)}  \right|}_{S_{1, \tau}} \\
            & +\frac{C_1}{2}\sum_{ \tau = 1}^{\tau_{NT}}  \underbrace{\mb{E}_{ \epsilon} \sup_{(\Delta_{\beta},\Delta_{\Theta})\in \mc{N}(\omega)} \left|\sum_{i=1}^{N} \sum_{k = 1 }^{\mu_{NT}}  \Delta_{\theta_{i, 2(k-1)\tau_{NT} + \tau}}\epsilon_{ik}^{(0)}  \right|}_{S_{2, \tau}} . 
        \end{aligned}
        \end{equation}
        Here, we change the order of summation to obtain  inequality (i). Inequality (ii) follows from the contraction property of the Rademacher process (see \citet[Section~4.2]{ledoux2013probability}). To derive an upper bound of $S_{1, \tau}$, for each $\tau$, 
        \begin{equation}\label{eq:lemma_RSC_5}
        \begin{aligned}
            S_{1, \tau} & = \mb{E}_{\mc{V}, \epsilon} \sup_{(\Delta_{\beta},\Delta_{\Theta})\in \mc{N}(\omega)} \left|\sum_{i=1}^{N} \sum_{k = 1 }^{\mu_{NT}}\widetilde{X}_{i, 2(k-1)\tau_{NT} + \tau}' \Delta_{\beta}   \epsilon_{ik}^{(0)}  \right| \\
            & \leqtext{(i)} \mb{E}_{\mc{V}, \epsilon} \sup_{(\Delta_{\beta},\Delta_{\Theta})\in \mc{N}(\omega)}  \left(\left\| \sum_{i=1}^{N} \sum_{k = 1 }^{\mu_{NT}}\widetilde{X}_{i, 2(k-1)\tau_{NT} + \tau}'     \epsilon_{ik}^{(0)}\right\|  \|\Delta_{\beta}\|\right) \\
            & = \mb{E}_{\mc{V}, \epsilon}  \left\| \sum_{i=1}^{N} \sum_{k = 1 }^{\mu_{NT}}\widetilde{X}_{i, 2(k-1)\tau_{NT} + \tau}'    \epsilon_{ik}^{(0)}\right\| \sup_{(\Delta_{\beta},\Delta_{\Theta})\in \mc{N}(\omega)} \|\Delta_{\beta}\|  \\
            & \leqtext{(ii)} \underbrace{\sqrt{2 \pi d_X^3 \rho_X^2 }}_{\sqrt{2}C_2}  \sup_{(\Delta_{\beta},\Delta_{\Theta})\in \mc{N}(\omega)} \|\Delta_{\beta}\|\\
            & \leq \sqrt{2}C_2  \sqrt{N\mu_{NT}}\sqrt{\omega} \\
            & \leq  C_2  \sqrt{\frac{NT\omega}{\tau_{NT}}} , 
        \end{aligned}
        \end{equation}
        where inequality (i) follows from the Cauchy-Schwarz inequality, and inequality (ii) follows from Lemma~\ref{lemma:mul_Hoeffding} using the fact that each element in $\widetilde{X}_{i, 2(k-1)\tau_{NT} + \tau}'    \epsilon_{ik}^{(0)}$ is bounded by $\rho_X$. 

        Let $\Delta^{(0, \tau)}_{\Theta}$ be an $N\times \mu_{NT}$ matrix such that $[\Delta^{(0, \tau)}_{\Theta}]_{ik} = \theta_{i, 2(k-1)\tau_{NT} + \tau}$ for each $\tau =1,2,\ldots, \tau_{NT}$. Let  $E^{(0)}$ denote the $N\times \mu_{NT}$ matrix collecting $\epsilon^{(0)}_{ik}$. Then,  
        \begin{equation}\label{eq:lemma_RSC_6}  
        \begin{aligned}
            S_{2, \tau} =  \mb{E}_{ \epsilon} \sup_{(\Delta_{\beta},\Delta_{\Theta})\in \mc{N}(\omega)} \left|\la \Delta_{\Theta}^{(0, \tau)},  E^{(0)} \ra \right| 
            \leqtext{(i)}   \mb{E}_{\epsilon}  \| E^{(0)} \|_{\mr{op}} \sup_{(\Delta_{\beta},\Delta_{\Theta})\in \mc{N}(\omega)}\|\Delta_{\Theta}^{(0, \tau)} \|_{\mr{nuc}}, 
        \end{aligned}
        \end{equation}
        where inequality (i) comes from the fact that nuclear norm is the duel norm of the operator norm. In addition, by \citet{bandeira2016sharp}, there exists a constant $C_3$ that does not depend on $N$, $T$, or $\tau_{NT}$ such that $\mb{E}_{\epsilon}\| E^{(0)} \|_{\mr{op}} \leq C_3 \sqrt{N + \mu_{NT}}$, based on \cite{bandeira2016sharp}. Furthermore, we have
        \begin{equation}\label{eq:lemma_RSC_7} 
        \begin{aligned}
            \sup_{(\Delta_{\beta},\Delta_{\Theta})\in \mc{N}(\omega)}\|\Delta_{\Theta}^{(0, \tau)} \|_{\mr{nuc}} \leqtext{(i)} & \sup_{(\Delta_{\beta},\Delta_{\Theta})\in \mc{N}(\omega)}\|\Delta_{\Theta}\|_{\mr{nuc}}\\
            \leqtext{(ii)} & \sup_{(\Delta_{\beta},\Delta_{\Theta})\in \mc{N}(\omega)} \left(\|\Delta_{\Theta} - M_{\Lambda_0}\Delta_{\Theta}M_{\Gamma_0}\|_{\mr{nuc}} + \|M_{\Lambda_0}\Delta_{\Theta}M_{\Gamma_0}\|_{\mr{nuc}}\right) \\
            \leqtext{(iii)} &  \sup_{(\Delta_{\beta},\Delta_{\Theta})\in \mc{N}(\omega)}\left( (1 + c_0)\|\Delta_{\Theta} - M_{\Lambda_0}\Delta_{\Theta}M_{\Gamma_0}\|_{\mr{nuc}} + c_0 \sqrt{NT}\|\Delta_{\beta}\|\right) \\
            \leqtext{(iv)} &  \sup_{(\Delta_{\beta},\Delta_{\Theta})\in \mc{N}(\omega)}\left((1+c_0)\sqrt{2R}\|\Delta_{\Theta} - M_{\Lambda_0}\Delta_{\Theta}M_{\Gamma_0}\|_{\mr{F}} + c_0\sqrt{NT}\|\Delta_{\beta}\|\right) \\
            \leqtext{(v)} & \sup_{(\Delta_{\beta},\Delta_{\Theta})\in \mc{N}(\omega)}\left((1+c_0)\sqrt{2R}\|\Delta_{\Theta}\|_{\mr{F}} + c_0 \sqrt{NT}\|\Delta_{\beta}\|\right) \\
            \leqtext{(vi)} & \underbrace{\left((1+c_0)\sqrt{2R} + c_0 \right) }_{C_4}\sqrt{NT\omega} \\
            \leq & C_4 \sqrt{NT\omega }. 
        \end{aligned}
        \end{equation}
        Inequality (i) follows from Lemma~\ref{lemma:SVD_block},  since $\Delta_{\Theta}^{(0, \tau)}$ can be regarded as a submatrix of $\Delta_{\Theta}$. Inequality (ii) is merely an application of triangle inequality, and inequality (iii) comes from $(\Delta_{\beta},\Delta_{\Theta}) \in \mc{C}$. Inequality (iv) holds because $\Delta_{\Theta} - M_{\Lambda_0}\Delta_{\Theta}M_{\Gamma_0}$  is a matrix of rank at most $2R$. Inequality (v) follows from the fact $\|\Delta_{\Theta}\|_{\mr{F}}^2 =  \|\Delta_{\Theta} - M_{\Lambda_0}\Delta_{\Theta}M_{\Gamma_0}\|_{\mr{F}}^2 + \|M_{\Lambda_0}\Delta_{\Theta}M_{\Gamma_0}\|_{\mr{F}}^2$. Finally, inequality (vi) follows directly from $(\Delta_{\beta},\Delta_{\Theta}) \in \mc{N}(\omega)$. 

        The same argument can be applied to the case $s = 1$. Combining equations~\eqref{eq:lemma_RSC_1}---\eqref{eq:lemma_RSC_7}, we derive the following bound:
        \begin{align*}
            \mb{E}_{\mc{V}} \widetilde{Z}^{(s)}(\omega) \leq &\frac{C_1\tau_{NT}}{2}\left(C_2 \sqrt{\frac{NT\omega}{\tau_{NT}}} +  C_3 C_4 \sqrt{NT(N + \mu_{NT})\omega} \right) \\
            \leq & \frac{C_1C_2}{2}\sqrt{NT\tau_{NT}\omega} + \frac{C_1C_3C_4}{2} \sqrt{NT(N+\mu_{NT})\tau_{NT}^2 \omega}, \quad s\in \{0, 1\}. 
        \end{align*}
        Therefore, 
        \begin{align*}
            \mb{E}_{\mc{V}}Z(\omega) \leq &  C_1C_2 \sqrt{NT\tau_{NT}\omega} +  C_1C_3C_4   \sqrt{NT(N+\mu_{NT})\tau_{NT}^2 \omega} + C_0 (NT)^2 \frac{\phi(\tau_{NT})}{\tau_{NT}}\omega  + 2 \sigma N\tau_{NT} \\
            \leq & (C_1C_2 + C_1C_3C_4)\sqrt{NT(N+\mu_{NT})\tau_{NT}^2 \omega} + C_0 (NT)^2 \frac{\phi(\tau_{NT})}{\tau_{NT}} \omega + 2 \sigma N\tau_{NT} \\
            \leq & \frac{\kappa_0}{8} NT\omega  + \left(\frac{8}{\kappa_0} (C_1C_2 + C_1C_3C_4  )^2 + 2\sigma\right)  (N+\mu_{NT})\tau_{NT}^2 + C_0 (NT)^2 \frac{\phi(\tau_{NT})}{\tau_{NT}} \omega. 
        \end{align*}
        When  $\phi(\tau_{NT}) = e^{-\zeta_0 \tau_{NT}}$, let $\tau_{NT} = \frac{2}{\zeta_0}\log (NT)$. Then we have 
        \begin{align*}
            \mb{E}_{\mc{V}}Z(\omega) \leq & \frac{\kappa_0}{8} NT\omega   + \underbrace{\frac{4}{\zeta_0^2}\left( \frac{8}{\kappa_0} (C_1C_2 + C_1C_3C_4)^2  + 2\sigma + C_0\right)}_{ \eta}  (N+T)(\log(NT))^2 \\
            \leq & \frac{\kappa_0}{8} NT\omega   +  \eta  (N+T)(\log(NT))^2. 
        \end{align*}

        Substituting the upper bound for $\mb{E}_{\mc{V}} Z(\omega)$ derived above into the concentration inequality, we obtain
        \begin{align*}
            \mb{P}_{\mc{V}}\left(Z(\omega) \geq \frac{\kappa_0}{8} NT\omega   +  \eta  (N+T)(\log(NT))^2  + \delta\right) \leq \exp\left( -L^{-1} \min\left\{\frac{\delta}{\sigma}, \frac{\delta^2}{NT\sigma^2}\right\}\right), \quad \forall \delta >0. 
        \end{align*}
        Let $\delta = \frac{\kappa_0 }{8} NT\omega$. It follows that 
        \begin{equation}\label{eq:lemma_RSC_8}
        \begin{aligned}
            \mb{P}_{\mc{V}}\left(Z(\omega) \geq \frac{\kappa_0}{4} NT\omega  +  \eta  (N+T)(\log(NT))^2 \right) \leq \exp\left( - \min\left\{\frac{ \kappa_0 NT\omega}{8 L  \sigma}, \frac{\kappa_0^2 NT \omega^2}{64 L \sigma^2}\right\}\right). 
        \end{aligned}
        \end{equation}

    \paragraph{Step 4 (Peeling)}
    For $\ell = 1,2,\ldots$, define 
    \begin{align*}
        \mc{D}_{\ell} := \left\{(\Delta_{\beta},\Delta_{\Theta}) \in \mc{C} \mid  2^{\ell-1} \sqrt{\frac{\log (NT)}{NT}}  \leq \|\Delta_{\beta}\|^2  +  \frac{1}{NT}\|\Delta_{\Theta}\|_{\mr{F}}^2 \leq  2^{\ell} \sqrt{\frac{\log (NT)}{NT}} \right\}
    \end{align*}
    It is readily verified that $\mc{C}  \subset \bigcup_{\ell =1}^{\infty}\mc{D}_{\ell} $. In addition, let $\omega_{\ell} := 2^{\ell}\sqrt{\frac{\log (NT)}{NT}}$. Define  
    \begin{align*}
        \mc{E}_{\ell} := \big\{   Z(\omega_{\ell})\geq  \frac{\kappa_0}{4}NT\omega_{\ell}   +   \eta  (N+T)(\log(NT))^2 \big\}, 
    \end{align*}
    and  
    \begin{align*}
        \tilde{\mc{E}}_{\ell} := \bigg\{
        & \left|\sum_{i=1}^{N}\sum_{t=1}^{T}(X_{it}' \Delta_{\beta} + \Delta_{\theta_{it}} )^2 - \sum_{i=1}^{N}\sum_{t=1}^{T}\mb{E}(X_{it}' \Delta_{\beta} + \Delta_{\theta_{it}} )^2\right | \\
        & \geq \frac{1}{2}\sum_{i=1}^{N}\sum_{t=1}^{T}\mb{E}_{\mc{V}}(X_{it}' \Delta_{\beta} + \Delta_{\theta_{it}} )^2 +  \eta  (N+T)(\log(NT))^2,  \\
        & \exists  (\Delta_{\beta},\Delta_{\Theta}) \in \mc{D}_{\ell}    \bigg\}. 
    \end{align*}
    One can verify that when $\tilde{\mc{E}}_{\ell} $ happens, since 
    \begin{align*}
        \sum_{i=1}^{N}\sum_{t=1}^{T}\mb{E}_{\mc{V}}(X_{it}' \Delta_{\beta} + \Delta_{\theta_{it}} )^2 \geq NT \kappa_0\left( \|\Delta_{\beta}\|^2  +  \frac{1}{NT}\|\Delta_{\Theta}\|_{\mr{F}}^2\right) \geq 2^{\ell-1} \kappa_0     NT \sqrt{\frac{\log (NT)}{NT}}, 
    \end{align*}
    we obtain 
    \begin{align*}
        |\sum_{i=1}^{N}\sum_{t=1}^{T}(X_{it}' \Delta_{\beta} + \Delta_{\theta_{it}} )^2 - \sum_{i=1}^{N}\sum_{t=1}^{T}\mb{E}_{\mc{V}}(X_{it}' \Delta_{\beta} + \Delta_{\theta_{it}} )^2| & \geq \frac{1}{2}\sum_{i=1}^{N}\sum_{t=1}^{T}\mb{E}_{\mc{V}}(X_{it}' \Delta_{\beta} + \Delta_{\theta_{it}} )^2 +  \eta  (N+T)(\log(NT))^2 \\
        \Rightarrow |\sum_{i=1}^{N}\sum_{t=1}^{T}(X_{it}' \Delta_{\beta} + \Delta_{\theta_{it}} )^2 - \sum_{i=1}^{N}\sum_{t=1}^{T}\mb{E}_{\mc{V}}(X_{it}' \Delta_{\beta} + \Delta_{\theta_{it}} )^2| & \geq 2^{\ell-2} \kappa_0     NT \sqrt{\frac{\log NT}{NT}} +  \eta  (N+T)(\log(NT))^2 \\
        \Rightarrow |\sum_{i=1}^{N}\sum_{t=1}^{T}(X_{it}' \Delta_{\beta} + \Delta_{\theta_{it}} )^2 - \sum_{i=1}^{N}\sum_{t=1}^{T}\mb{E}_{\mc{V}}(X_{it}' \Delta_{\beta} + \Delta_{\theta_{it}} )^2| & \geq \frac{\kappa_0}{4}NT\omega_{\ell} + \eta  (N+T)(\log(NT))^2 \\
        \Rightarrow Z(\omega_{\ell}) & \geq \frac{\kappa_0}{4}NT\omega_{\ell} +  \eta  (N+T)(\log(NT))^2. 
    \end{align*}
    Therefore, $\tilde{\mc{E}}_{\ell} \subset \mc{E}_{\ell}$. In addition, 
    \begin{equation}\label{eq:around12}
    \begin{aligned}
        \mb{P}_{\mc{V}}\bigg(& |\sum_{i=1}^{N}\sum_{t=1}^{T}(X_{it}' \Delta_{\beta} + \Delta_{\theta_{it}} )^2 - \sum_{i=1}^{N}\sum_{t=1}^{T}\mb{E}_{\mc{V}}(X_{it}' \Delta_{\beta} + \Delta_{\theta_{it}} )^2|  \\ 
        & \geq  \frac{1}{2}\sum_{i=1}^{N}\sum_{t=1}^{T}\mb{E}(X_{it}' \Delta_{\beta} + \Delta_{\theta_{it}} )^2 +  \eta  (N+T)(\log(NT))^2, \exists  (\Delta_{\beta},\Delta_{\Theta}) \in \mc{C}    \bigg) \\
        \leq & \mb{P}_{\mc{V}}\left(\bigcup_{\ell=1}^{\infty}\tilde{\mc{E}}_{\ell}\right) \leq \sum_{\ell=1}^{\infty}\mb{P}_{\mc{V}}\left(\tilde{\mc{E}}_{\ell}\right) \leq \sum_{\ell=1}^{\infty}\mb{P}_{\mc{V}}\left(\mc{E}_{\ell}\right)  \\
        \leqtext{(i)} &  \sum_{\ell}^{\infty}  \exp\left( - \frac{ \kappa_0 NT\omega_\ell}{8\Phi  \sigma}\right)  +  \sum_{\ell}^{\infty} \exp\left( - \frac{\kappa_0^2 NT \omega_\ell^2}{64\Phi\sigma^2} \right)   \\
        \leq &  \sum_{\ell}^{\infty}  \exp\left( - \frac{ \kappa_0 2^{\ell} \sqrt{NT \log (NT)}}{8\Phi  \sigma}\right)  +  \sum_{\ell}^{\infty} \exp\left( - \frac{\kappa_0^2  4^{\ell}\log (NT)}{64\Phi\sigma^2} \right)   \\
        \rightarrow & 0, 
    \end{aligned}
    \end{equation} 
    where inequality (i) follows from the probability bound~\eqref{eq:lemma_RSC_8}.

    \paragraph{Step 5}
    
    Combining the lower bound~\eqref{eq:lower_bound_expectation} and \eqref{eq:around12}, the following inequality holds for all $(\Delta_{\beta}, \Delta_{\Theta})\in \mc{C}$ with probability approaching $1$:
    \begin{align*}
        |\sum_{i=1}^{N}\sum_{t=1}^{T}(X_{it}' \Delta_{\beta} + \Delta_{\theta_{it}} )^2 - \sum_{i=1}^{N}\sum_{t=1}^{T}\mb{E}(X_{it}' \Delta_{\beta} + \Delta_{\theta_{it}} )^2|  \leq \frac{1}{2}\sum_{i=1}^{N}\sum_{t=1}^{T}\mb{E}(X_{it}' \Delta_{\beta} + \Delta_{\theta_{it}} )^2 +  \eta (N+T)(\log(NT))^2. 
    \end{align*}
    This implies that, wpa1, 
    \begin{align*}
        \Rightarrow \sum_{i=1}^{N}\sum_{t=1}^{T}(X_{it}' \Delta_{\beta} + \Delta_{\theta_{it}} )^2 \geq \frac{1}{2}\sum_{i=1}^{N}\sum_{t=1}^{T}\mb{E}(X_{it}' \Delta_{\beta} + \Delta_{\theta_{it}} )^2 - \eta (N+T)(\log(NT))^2 .
    \end{align*}
    Consequently, 
    \begin{align*}
        \Rightarrow \sum_{i=1}^{N}\sum_{t=1}^{T}(X_{it}' \Delta_{\beta} + \Delta_{\theta_{it}} )^2 \geq \frac{1}{2}\kappa_0\left(NT \|\Delta_{\beta}\|^2 + \vt{\Delta_{\Theta}}_{\mr{F}}^2\right) - \eta (N+T)(\log(NT))^2.  
    \end{align*}
    Since the inequality above holds for any $\mc{V}$, we omit the subscript  $\mc{V}$ and obtain the same bound under $\mb{P}$. 
    Let $\kappa = \frac{1}{2}\kappa_0$. This completes the proof.
    \end{prooflmm}

    \section{Proofs of Local Convexity and Asymptotic Equivalence}

    We begin by clarifying some related issues before proceeding the proofs.

    First, we introduce the following abbreviations to simplify notations.  For any $(\beta, \lambda_i, \gamma_t)$, define  $\dot{\ell}_{it}  = \dot{\ell}(X_{it}'\beta +  \lambda_i' \gamma_t)$ and $\ddot{\ell}_{it}  = \ddot{\ell}(X_{it}'\beta +  \lambda_i' \gamma_t)$. When the log-likelihood is evaluated at the (normalized) true parameters, define $\dot{\ell}_{it}^0  = \dot{\ell}(X_{it}'\beta_0 +  \lambda_{0, i}' \gamma_{0, t})$ and  $ \ddot{\ell}_{it}^0  = \ddot{\ell}(X_{it}'\beta_0 +  \lambda_{0, i}' \gamma_{0, t})$. We further define the following quantities: $\Delta_{\beta} = \beta - \beta_0$, $\Delta_{\gamma_t} = \gamma_t - \gamma^G_{0,t}$, and $\Delta_{\lambda_i} = \lambda_i - \lambda^G_{0,i}$, to denote the deviations of the parameters from their true values. When considering the difference between $\ddot{\ell}_{it}$ and $\ddot{\ell}^0_{it}$, we write 
    \begin{align*}
        \tilde{\Delta}_{Y^*_{it}} = \ddot{\ell}_{it} - \ddot{\ell}_{it}^0 = \tilde{\dddot{\ell}}_{it}(\Delta_{\beta}'X_{it} + \tilde{\Lambda}_i'\Delta_{\gamma_t} + \Delta_{\lambda_i}'\tilde{\gamma_t}), 
    \end{align*}
    which follows from the first-order Taylor expansion. Here, $\tilde{\dddot{\ell}}_{it}$ denotes the third-order derivative of $\ell_{it}$ evaluated at $(\tilde{\beta}, \tilde{\Lambda}, \tilde{\Gamma})$, lying on the line segment between $(\beta, \Lambda, \Gamma)$ and the normalized true parameters 
    $(\beta_0, \Lambda_0^G, \Gamma_0^G)$. Since the parameter space of $(\beta_0, \Lambda_0, \Gamma_0)$ is bounded and all singular values of $G$ are uniformly bounded and strictly positive wpa1, the normalized true parameters 
    $(\beta, \Lambda_0^G, \Gamma_0^G)$ lie within a bounded space wpa1. Thus, 
    $(\tilde{\beta}, \tilde{\Lambda}, \tilde{\Gamma})$  also lie in a bounded space.

    Second, additional effort is required to address the issue that, for some $i$ and $t$, the nuisance parameter estimator may significantly differ from the true nuisance parameter. This challenge arises because the definition of neighborhood $\mc{B}_{\delta_{NT}}$ ~\eqref{eq:neighborhood} only ensures that, for any $(\Lambda, \Gamma)$ in the neighborhood, the distance between $(\Lambda, \Gamma)$ and $(\Lambda_0^G, \Gamma^G_0)$, shrinks in the Frobenius norm, However, it does not guarantee that the nuisance parameter estimates are accurate for each $i$ and $t$. Therefore, for each  $(\beta, \Lambda, \Gamma)\in \mc{B}_{\delta_{NT}}$,  we divide $\Lambda$ into two parts, with subscripts for each part respectively given as
    \begin{equation*}
    \begin{aligned}
        \mc{I}_{NT} = \left\{1\leq i\leq N: \|\lambda_{i} - {\lambda}^G_{0, i} \| \leq \frac{1}{\sqrt{T} \delta_{NT} }  \right\}, \quad \mc{I}^c_{NT} = \{1, \ldots, N\} \backslash \mc{I}_{NT}. 
    \end{aligned}
    \end{equation*}
    Similarly, we divide $\Gamma$ into two parts, with subscripts for each part respectively given as 
    \begin{equation*}
    \begin{aligned}
        \mc{T}_{NT} = \left\{1\leq t\leq T: \|\gamma_{t} - {\gamma}^G_{0, t} \| \leq \frac{1}{\sqrt{N}\delta_{NT}} \right\}, \quad \mc{T}^c_{NT} = \{1, \ldots, T\} \backslash \mc{T}_{NT}. 
    \end{aligned}
    \end{equation*}
    It is straightforward to observe that, when $N\sim T$, for any $i \in \mc{I}_{NT}$, the distance between $\lambda_{i}$ and $\lambda^G_{0, i}$ converges to zero at the rate $T^{-1/8}\log(NT)$. Similarly, for any $t \in \mc{T}_{NT}$, the distance between $\gamma_t$ and $\gamma^G_{0, t}$ also converges to zero as $N, T\rightarrow\infty$ at the rate $N^{-1/8}\log(NT)$. Note that the notations of  $\mc{I}_{NT}$ and $\mc{T}_{NT}$ are slightly imprecise, because for different $(\beta, \Lambda, \Gamma)\in \mc{B}_{\delta_{NT}}$, the corresponding index sets may differ. Nevertheless, following from the Frobenius norm convergence rate (Theorem~\ref{thm:consistency} and Theorem~\ref{thm:consistency_pre}), we conclude that, with probability approaching $1$, for all $(\beta, \Lambda, \Gamma)\in \mc{B}_{\delta_{NT}}$, the sizes of $\mc{I}^c_{NT}$ and $\mc{T}^c_{NT}$ are uniformly bounded by 
    \begin{align}\label{eq:size_division}
        |\mc{I}^c_{NT}| \lesssim NT \delta_{NT}^4, \quad |\mc{T}^c_{NT}| \lesssim NT \delta_{NT}^4. 
    \end{align}
    Therefore, we continue to use this imprecise notation in the following text for simpler notations.
    In addition, our construction~\eqref{eq:size_division} implies that the sizes of $\mc{I}^c_{NT}$ and $\mc{T}^c_{NT}$  are at most of order $\sqrt{N}(\log(NT))^4$ when $N\sim T$.  Hence,  the number of “non-convergent” nuisance estimates grows at a much slower rate  relative to $N$ and $T$. 

    Third, for notational simplicity, we rescale the Hessian matrix $\mc{H}_{NT}$ as $\mc{H} = NT\mc{H}_{NT}$ and suppress its dependence on $(\beta, \Lambda, \Gamma)$ when it does not cause ambiguity. Since $\mc{H}$ differs from $\mc{H}_{NT}$ only by a factor $NT$, we focus on studying the property of $\mc{H}$ in the following text.  Furthermore, we define 
    \begin{align*}
        H = 
        \begin{pmatrix}
            H_{\beta\beta'} & H_{\beta\lambda'} & H_{\beta\gamma'} \\
            H_{\lambda\beta'} & H_{\lambda\lambda'} & H_{\lambda\gamma'} \\
            H_{\gamma \beta'} & H_{\gamma\lambda'} & H_{\gamma\gamma'}
        \end{pmatrix}, \quad 
        F = 
        \begin{pmatrix}
            0 & 0 & 0 \\
            0 & 0 & F_{\lambda\gamma'} \\
            0 & F_{\gamma \lambda'} & 0
        \end{pmatrix} \quad 
        V = 
        \begin{pmatrix}
            0 & 0 & 0 \\
            0 & V_{\lambda\lambda'} & V_{\lambda\gamma'} \\
            0 & V_{\gamma\lambda'} & V_{\gamma\gamma'}
        \end{pmatrix}. 
    \end{align*}
    Here, $H + F$ is the Hessian of the negative log-likelihood function $\mc{L}_{NT}$, and $V$ is Hessian  of the penalty term. 
    Specifically, 
    \begin{equation*}
        \begin{aligned}
            H_{\beta\beta'} & = -\sum_{i=1}^{N}\sum_{t = 1}^{T} \ddot{\ell}_{it} X_{it} X'_{it}, \\
            H_{\beta\lambda'} & = -\left[\sum_{t=1}^{T} \ddot{\ell}_{it} X_{it} \gamma'_t \right]_{i = 1,2,\ldots, N}, \\
            H_{\beta\gamma'} & = -\left[\sum_{i=1}^{N} \ddot{\ell}_{it}X_{it}\lambda'_i\right]_{t = 1,2,\ldots, T},  \\
            H_{\lambda\lambda'} & = -\mr{diag}\left\{\left[\sum_{t=1}^{T}  \ddot{\ell}_{it}\gamma_t  \gamma_t'\right]_{i = 1,2,\ldots, N}\right\},  \\
            H_{\gamma\gamma'} & = -\mr{diag}\left\{\left[\sum_{i=1}^{N}  \ddot{\ell}_{it}\lambda_i  \lambda_i'\right]_{t= 1,2,\ldots, T}\right\},   \\
            H_{\lambda\gamma'} & =  \left[-\ddot{\ell}_{it}\gamma_t\lambda_i'\right]_{i = 1,2,\ldots, N, t = 1,2,\ldots, T},  \\
            V_{\lambda\lambda'} & = \frac{T}{N}\left[ \lambda_i \lambda_{i'}' \right]_{i, i' = 1,2,\ldots, N},  \\
            V_{\lambda\gamma'} & = \left[ -\lambda_i \gamma_{t}' \right]_{i =1,2,\ldots, N, t = 1,2,\ldots, T},  \\
            V_{\gamma\gamma'} & = \frac{N}{T}\left[ \gamma_t   \gamma_{t'}'\right]_{t, t' = 1,2,\ldots, T} , \\
            F_{\lambda\gamma'} & = \left[-\dot{\ell}_{it} \mb{I}_{R}\right]_{i = 1,2,\ldots, N, t = 1,2,\ldots, T}. 
        \end{aligned}
    \end{equation*}
    Here, $H_{\beta\beta'}$ is a $d_X\times d_X$ matrix, $H_{\beta\lambda'}$ is a $d_X\times NR$ matrix consisting of $N$ blocks, each of size $d_X\times R$, and $H_{\beta\gamma'}$ is a $d_X\times TR$ matrix with $T$ blocks, each of size $d_X\times R$. $H_{\lambda\lambda'}, V_{\lambda\lambda'}$ are  $NR\times NR$ matrices consisting of $N^2$ blocks, each of size $R\times R$, $H_{\gamma\gamma'}, V_{\gamma\gamma'}$ are b  $TR\times TR$ matrices with $T^2$ blocks, each of size $R\times R$. In addition, $H_{\lambda\gamma'}, V_{\lambda\gamma'}, F_{\lambda\gamma'}$ are $NR\times TR$ matrices consisting of $NT$ blocks, each with size $R\times R$.  
    
    We use  $\hat{V}$  to denote the matrix  $V$  evaluated at $\hat{\lambda}_{\mr{nuc},i}$ and $\hat{\gamma}_{\mr{nuc},t}$. Specifically, 
    \begin{align*}
        \hat{V}_{\lambda\lambda'} & = \frac{T}{N}\left[ \hat{\lambda}_{\mr{nuc},i} \hat{\lambda}_{\mr{nuc},i'}' \right]_{i, i' = 1,2,\ldots, N\ldots},  \\
        \hat{V}_{\lambda\gamma'} & = \left[ -\hat{\lambda}_{\mr{nuc},i}  \hat{\gamma}_{\mr{nuc},t}' \right]_{i =1,2,\ldots, N, t = 1,2,\ldots, T},  \\
        \hat{V}_{\gamma\gamma'} & = \frac{N}{T}\left[ \hat{\gamma}_{\mr{nuc},t} \hat{\gamma}_{\mr{nuc},t'}'\right]_{t, t' = 1,2,\ldots, T}.  
    \end{align*}
    
    In addition, we use $H_0, F_0, V_0$ when the matrices $  H, F, V$ are evaluated at the normalized true value $(\beta_0, \Lambda_0^G, \Gamma_0^G)$, for example, $H_0 = H(\beta_0, \Lambda^G_0, \Gamma^G_0)$. We also use $\mb{E}_0(\cdot)$ to denote the conditional expectation $\mb{E}_{X, \Lambda_0, \Gamma_0}(\cdot)$ when $X$ is strictly exogenous, or the conditional expectation $\mb{E}_{Z, \Lambda_0, \Gamma_0}(\cdot)$ when we consider predetermined covariates. It is easy to verify that
    \begin{align}\label{eq:VF}
         \mb{E}_{0} F_0 = 0. 
    \end{align}

    By standard calculus,  $\mc{H} \in \mb{R}^{(d_X + R(N+T)) \times (d_X + R(N+T))}$ admits the following decomposition: 
    \begin{align}\label{eq:decomposition}
        \mc{H} = H + F +  \hat{V} . 
    \end{align}
    We  further decompose $H$ as $H = \tilde{H} +\tilde{H}^c $ based on partitions $\mc{I}_{NT}$  and $\mc{T}_{NT}$ such that 
    \begin{equation*}
    \begin{aligned}
        \tilde{H}_{\beta\beta'} & = \sum_{i \in \mc{I}_{NT}, t \in \mc{T}_{NT}} (-\ddot{\ell}_{it}) X_{it} X'_{it},   \\
        \tilde{H}_{\beta\lambda'} & = -\left[\bs{1}(i\in \mc{I}_{NT})\sum_{t\in \mc{T}_{NT}} \ddot{\ell}_{it} X_{it} \gamma'_t \right]_{i = 1,2,\ldots, N}, \\
        \tilde{H}_{\beta\gamma'} & = -\left[\bs{1}(t\in \mc{T}_{NT})\sum_{i\in \mc{I}_{NT}} \ddot{\ell}_{it}X_{it}\lambda'_i\right]_{t = 1,2,\ldots, T},  \\
        \tilde{H}_{\lambda\lambda'} & = -\mr{diag}\left\{\left[\bs{1}(i\in \mc{I}_{NT})\sum_{t\in \mc{T}_{NT}}  \ddot{\ell}_{it}\gamma_t  \gamma_t'\right]_{i = 1,2,\ldots, N}\right\},  \\
        \tilde{H}_{\gamma\gamma'} & = -\mr{diag}\left\{\left[\bs{1}(t\in \mc{T}_{NT})\sum_{i\in \mc{I}_{NT}}  \ddot{\ell}_{it}\lambda_i  \lambda_i'\right]_{t= 1,2,\ldots, T}\right\},   \\
        \tilde{H}_{\lambda\gamma'} & =  -\left[\bs{1}(i \in \mc{I}_{NT}, t\in \mc{T}_{NT})\ddot{\ell}_{it}\gamma_t\lambda_i'\right]_{i = 1,2,\ldots, N, t = 1,2,\ldots, T} . 
    \end{aligned}
    \end{equation*}
    The matrix $V$ can be decomposed in the same way, $V = \tilde{V} + \tilde{V}^c $,  where
    \begin{align*}
        \tilde{V}_{\lambda\lambda'} & = \left[\frac{T}{N}\bs{1}(i, i'\in \mc{I}_{NT})\left(\lambda_{i} \lambda_{i'}' \right)\right]_{i, i' = 1,2,\ldots, N},   \\
        \tilde{V}_{\gamma\gamma'} & = \left[\frac{N}{T}\bs{1}(t, t' \in \mc{T}_{NT})\left(\gamma_{ t} \gamma_{ t'}' \right)\right]_{t, t' = 1,2,\ldots, T} , \\
        \tilde{V}_{\lambda\gamma'} & = \left[-\bs{1}(i\in \mc{I}_{NT}, t\in \mc{T}_{NT}) \lambda_{i}  \gamma_{t}'\right]_{i =1,2,\ldots, N, t = 1,2,\ldots, T}. 
    \end{align*}

\subsection{Proof of Theorem~\ref{thm:convexity_strong} and Theorem~\ref{thm:convexity_strong_pre}}

    As Theorem~\ref{thm:convexity_strong_pre} generalizes Theorem~\ref{thm:convexity_strong} to include predetermined covariates, we provide only the proof of the former, noting that the latter follows as a special case.

    \begin{proofthm}{thm:convexity_strong_pre}\label{proof:thm:convexity_pre}
        It suffices to study the properties of $\mc{H}$ on $\mc{B}_{\delta_{NT}}$. Consider the following decomposition: 
        \begin{equation}\label{eq:Hessian_decomposition} 
        \begin{aligned}
            \mc{H} = &
            \underbrace{ 
            \begin{pmatrix}
                \mb{E}_{0} H_{0, \beta\beta'} & \mb{E}_{0} \tilde{H}_{0, \beta\lambda'} & \mb{E}_{0} \tilde{H}_{0, \beta\gamma'}  \\
                \mb{E}_{0} \tilde{H}_{0, \lambda\beta'} & \mb{E}_0 \tilde{H}_{0, \lambda\lambda'} & \mb{E}_0 \tilde{H}_{0, \lambda\gamma'} \\
                \mb{E}_{0} \tilde{H}_{0, \gamma\beta'} & \mb{E}_0 \tilde{H}_{0, \gamma\lambda'} & \mb{E}_0 \tilde{H}_{0, \gamma\gamma'}
            \end{pmatrix}
            + 
            \begin{pmatrix}
                0 & 0 & 0  \\
                0 &  \tilde{H}^c_{ \lambda\lambda'} & 0 \\
                0 &  0  &   \tilde{H}^c_{ \gamma\gamma'}
            \end{pmatrix} 
            + \mb{E}_0 \hat{V}
            }_{S_1 }
            \\
            & 
            + 
            \underbrace{
            \begin{pmatrix}
                H_{\beta\beta'} - \mb{E}_{0} H_{0, \beta\beta'} & H_{\beta\lambda'} - \mb{E}_{0} \tilde{H}_{0, \beta\lambda'} & H_{ \beta\gamma'} - \mb{E}_{0} \tilde{H}_{0, \beta\gamma'}  \\
                H_{\lambda\beta'} - \mb{E}_{0} \tilde{H}_{ 0, \lambda\beta'} & 0 & 0 \\
                H_{\gamma\beta'}  - \mb{E}_{0} \tilde{H}_{0, \gamma\beta'} & 0 & 0
            \end{pmatrix}
            }_{S_2}
             \\
            & + 
            \underbrace{
            \begin{pmatrix}
                0 & 0 & 0  \\
                0 & \tilde{H}_{\lambda\lambda'} - \mb{E}_0 \tilde{H}_{0, \lambda\lambda'} & H_{ \lambda\gamma'} - \mb{E}_0 \tilde{H}_{0, \lambda\gamma'} \\
                0 & H_{ \gamma\lambda'} - \mb{E}_0 \tilde{H}_{0, \gamma\lambda'} & \tilde{H}_{\gamma\gamma'} - \mb{E}_0 \tilde{H}_{0, \gamma\gamma'}
            \end{pmatrix} 
            }_{S_3}
            +
            (\hat{V} - \mb{E}_0 \hat{V}) 
            + 
            F. 
        \end{aligned}
        \end{equation}
        Let us first examine the first term. It is straightforward to verify that $S_1$  admits the following decomposition:
        \begin{equation}\label{eq:thm_convexity_1}
            \begin{aligned}
            S_1 & =
            \mb{E}_0\tilde{H}_{0} 
            + 
            \mb{E}_0 \tilde{\hat{V}}
            +  
            \underbrace{
            \
            \begin{pmatrix}
                \mb{E}_0 \tilde{H}^c_{\beta\beta'} & 0 & 0 \\
                0 &   0&  0\\
                0 & 0  &  0
            \end{pmatrix} 
            }_{\geq 0}
            +
            \begin{pmatrix}
                0 & 0 & 0 \\
                0 &   \tilde{H}^c_{\lambda\lambda'} &  0\\
               0 & 0  &  \tilde{H}^c_{\gamma\gamma'}
            \end{pmatrix}
            +
            \mb{E}_0 \tilde{\hat{V}}^c 
            \\
            & \geq 
            \mb{E}_0\tilde{H}_{0}  + \mb{E}_0 \tilde{\hat{V}} + 
            \begin{pmatrix}
                0 & 0 & 0 \\
                0 &   \tilde{H}^c_{\lambda\lambda'} &  0\\
               0 & 0  &  \tilde{H}^c_{\gamma\gamma'}
            \end{pmatrix}
            +
            \mb{E}_0 \tilde{\hat{V}}^c. 
        \end{aligned}
        \end{equation}
        Since matrix  $\mb{E}_0\tilde{H}_{0}  + \mb{E}_0 \tilde{\hat{V}}$  can be regarded as the population Hessian matrix indexed by  $i \in \mc{I}_{NT}$  and  $t \in \mc{T}_{NT}$, we obtain 
        \begin{equation}\label{eq:thm_convexity_2}
            \begin{aligned}
                &\mb{E}_0\tilde{H}_{0}  + \mb{E}_0 \tilde{\hat{V}} \\ 
                \geqtext{(i)}   & C  
                \begin{pmatrix}
                    \left(NT - |\mc{T}^c_{NT}| N -  |\mc{I}^c_{NT}| T\right) \mb{I}_{d_X} & 0 & 0  \\
                    0 & \left(T - |\mc{T}^c_{NT}|\right)\mr{diag}\{D_1, \ldots, D_N\} & 0 \\
                    0 & 0 & \left(N - |\mc{I}^c_{NT}|\right)\mr{diag}\{D_{N+1}, \ldots, D_{N+T}\}   
                \end{pmatrix} \\
                \geqtext{(ii)} & \frac{1}{2} C 
                \begin{pmatrix}
                    NT \mb{I}_{d_X} & 0 & 0  \\
                    0 & T\mr{diag}\{D_1, \ldots, D_N\} & 0 \\
                    0 & 0 & N\mr{diag}\{D_{N+1}, \ldots, D_{N+T}\}  
                \end{pmatrix}, 
            \end{aligned} 
        \end{equation} 
        where $D_i = 1_{\{i \in \mc{I}_{NT}\}} \mb{I}_{R}$ for any $i=1,2,\ldots, N$, and  
        $ D_{N+t} = 1_{\{t \in \mc{T}_{NT}\}} \mb{I}_{R}$ for any $t=1,2,\ldots, T$. Inequality (i) follows from Assumption~\ref{assumption:block_pre}, and inequality (ii) follows from the asymptotic assumption on $N$ and $T$. In addition, by Lemma~\ref{lemma:bound_Hc}, 
        we conclude that, there exists a constant $B_1>0$ such that, wpa1,  
        \begin{equation}\label{eq:thm_convexity_3}
        \begin{aligned}
            \begin{pmatrix}
                0 & 0 & 0 \\
                0 &   \tilde{H}^c_{\lambda\lambda'} &  0\\
                0 & 0  &  \tilde{H}^c_{\gamma\gamma'}
            \end{pmatrix}
            \geq B_1
            \min\{N, T\} \mr{diag}\left\{0_{d_X}, \mb{I}_R - D_1, \ldots, \mb{I}_R - D_N, \mb{I}_R  - D_{N+1}, \ldots, \mb{I}_R  -D_{N+T}\right\}. 
        \end{aligned}
        \end{equation}
        Combining equations~\eqref{eq:thm_convexity_1}, ~\eqref{eq:thm_convexity_2}, \eqref{eq:thm_convexity_3}, and $\|\mb{E}_0 \tilde{\hat{V}}^c \|_{\mr{op}} = o_p\left(\min\{N, T\}\right)$\footnote{
            Because by~\eqref{eq:size_division}, 
            \begin{align*}
                \|\tilde{\hat{V}}^c \|_{\mr{op}}^2 \leq \|\tilde{\hat{V}}^c\|_{\mr{F}}^2  = O_p\left((|\mc{I}_{NT}^c| + |\mc{I}_{NT}^c|)(N+T)\right) = O_p(NT\max\{N, T\}\delta_{NT}^4). 
            \end{align*}
            Then, we have $\|\tilde{\hat{V}}^c \|_{\mr{op}} = o_p\left(\min\{N, T\}\right)$ and $\|\mb{E}_0 \tilde{\hat{V}}^c \|_{\mr{op}} = o_p\left(\min\{N, T\}\right)$. 
        }, we  conclude that for any $(\beta, \Lambda, \Gamma)\in \mc{B}_{\delta_{NT}}$, $S_1$ is locally convex wpa1. In addition, let $B_2 = \min\{\frac{1}{2}C, B_1\}$ irrelevant with $N, T$, for any $(\beta, \Lambda, \Gamma)\in \mc{B}_{\delta_{NT}}$, $S_1$ admits asymptotic block structure, i.e.,  
        \begin{equation}\label{eq:thm_convexity_4}
        \begin{aligned}
            S_1 \geq B_2 
            \begin{pmatrix}
                NT\mb{I}_{d_X} & 0 & 0 \\
                0 & T\mb{I}_{NR} & 0 \\
                0 & 0 & N\mb{I}_{TR} 
            \end{pmatrix}, 
            \quad 
            \text{wpa1}. 
        \end{aligned}
        \end{equation}
        Therefore, the optimization problem is strongly convex wpa1 uniformly within $\mc{B}_{\delta_{NT}}$ if the impact (or equivalently, the maximum singular values) of the permutation terms,  $S_2$,  $S_3$, $\hat{V}-\mb{E}_0\hat{V}$, and $F$, is asymptotically negligible (within $\mc{B}_{\delta_{NT}}$) compared to $S_1$ wpa1. Using Lemma~\ref{lemma:bound_S_2}, we prove that  $\frac{1}{2}S_1 + S_2$ is positive definite. In addition, by Lemma~\ref{lemma:bound_S_3} and Lemma~\ref{lemma:bound_F}, we have 
        \begin{align}\label{eq:thm_convexity_6}
            \sup_{(\beta, \Lambda, \Gamma) \in \mc{B}_{\delta_{NT}}}  \|S_3\|_{\mr{op}} = o_p(\min\{N, T\})  , \text{ }  \sup_{(\beta, \Lambda, \Gamma) \in \mc{B}_{\delta_{NT}}}\|F\|_{\mr{op}} = o_p(\min\{N, T\}).   
        \end{align} 
        Using Lemma \ref{lemma:bound_V}, and noting that $(\hat{\Lambda}_{\mr{nuc}}, \hat{\Gamma}_{\mr{nuc}}) \in \mc{B}_{\delta_{NT}}$, we have
        \begin{align}\label{eq:thm_convexity_7}
            \|\hat{V}-\mb{E}_0\hat{V}\|_{\mr{op}} \lesssim  \sup_{(\beta, \Lambda, \Gamma) \in \mc{B}_{\delta_{NT}}} \|V - V_0\|_{\mr{op}} = o_p(\min\{N, T\}). 
        \end{align}
        (We defer the relevant lemmas and their proofs to the end of this subsection). Therefore, combining~\eqref{eq:thm_convexity_4}---\eqref{eq:thm_convexity_7} and applying Weyl's theorem, we conclude that, with probability approaching $1$,  for any $(\beta, \Lambda, \Gamma)\in \mc{B}_{\delta_{NT}}$,   
        \begin{align}
            \mc{H}  \geq \frac{1}{3} B_2
            \begin{pmatrix}
                NT\mb{I}_{d_X} & 0 & 0 \\
                0 & T\mb{I}_{NR} & 0 \\
                0 & 0 & N\mb{I}_{TR} 
            \end{pmatrix}. 
        \end{align}
        Letting $c_5 = \frac{1}{3} B_2$ completes the proof. 
    \end{proofthm}

    \begin{lemma}\label{lemma:bound_Hc}
        Under the assumptions stated in Theorem~\ref{thm:convexity_strong_pre}, there exists a constant $B_1>0$ independent of $N, T$ such that 
        \begin{align*}
            \inf_{(\beta, \Lambda, \Gamma) \in \mc{B}_{\delta_{NT}}}
            \begin{pmatrix}
                0 & 0 & 0 \\
                0 &   H_{\lambda\lambda'} &  0\\
                0 & 0  &  H_{\gamma\gamma'}
            \end{pmatrix}
            \geq B_1
            \begin{pmatrix}
                0_{d_X} & 0 & 0 \\
                0 & T\mb{I}_{NR} & 0 \\
                0 & 0 & N\mb{I}_{TR} 
            \end{pmatrix}, 
            \quad \text{wpa1}. 
        \end{align*}
    \end{lemma}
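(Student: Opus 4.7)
The plan is to exploit the block-diagonal structure and reduce the claim to two essentially scalar bounds on $\sum_{t=1}^T \gamma_t\gamma_t'$ and $\sum_{i=1}^N \lambda_i\lambda_i'$. Observe that $H_{\lambda\lambda'}$ is block-diagonal with $i$-th diagonal block equal to $\sum_{t=1}^T (-\ddot{\ell}_{it})\gamma_t\gamma_t'$, and analogously $H_{\gamma\gamma'}$ has $t$-th diagonal block $\sum_{i=1}^N (-\ddot{\ell}_{it})\lambda_i\lambda_i'$. Using $-\ddot{\ell}_{it} \geq b_{\min} > 0$ from Assumption~\ref{assumption:regularity_pre}\ref{item:smoothing_pre}, it suffices to show that, uniformly in $(\beta,\Lambda,\Gamma) \in \mc{B}_{\delta_{NT}}$ and wpa1,
\begin{align*}
    \sum_{t=1}^T \gamma_t\gamma_t' \geq c_\gamma\, T\, \mb{I}_R \quad \text{and} \quad \sum_{i=1}^N \lambda_i\lambda_i' \geq c_\lambda\, N\, \mb{I}_R,
\end{align*}
for positive constants $c_\lambda, c_\gamma$ independent of $N, T$. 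Crucially, the first inequality depends only on $\Gamma$ and the second only on $\Lambda$, so the other coordinates of $\mc{B}_{\delta_{NT}}$ are irrelevant.

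I would concentrate on the $\Gamma$ bound, as the $\Lambda$ bound is symmetric. Writing $\Delta_{\gamma_t} := \gamma_t - \gamma^G_{0,t}$, I expand
\begin{align*}
    \sum_{t=1}^T \gamma_t \gamma_t' = \sum_{t=1}^T \gamma^G_{0,t}(\gamma^G_{0,t})' + \sum_{t=1}^T\bigl[\gamma^G_{0,t}\Delta_{\gamma_t}' + \Delta_{\gamma_t}(\gamma^G_{0,t})'\bigr] + \sum_{t=1}^T \Delta_{\gamma_t}\Delta_{\gamma_t}'.
\end{align*}
The leading term equals $\Gamma_0^{G\prime}\Gamma_0^G$; the strong-factor condition (Assumption~\ref{assumption:regularity_pre}\ref{item:strong_factors_pre}), together with the fact established in the proof of Theorem~\ref{thm:consistency_pre} that the normalizing matrix $G$ has singular values uniformly bounded away from $0$ and $\infty$ wpa1, yields $\Gamma_0^{G\prime}\Gamma_0^G/T \cp \Sigma_\gamma^G$ with $\Sigma_\gamma^G > 0$, so wpa1 this term is at least $\tfrac{T}{2}\sigma_{\min}(\Sigma_\gamma^G)\mb{I}_R$. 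The cross term I would rewrite as $A'B + B'A$ with $A, B \in \mb{R}^{T\times R}$ having $t$-th rows $(\gamma^G_{0,t})'$ and $\Delta_{\gamma_t}'$; Cauchy--Schwarz then gives $\|A'B\|_{\mr{op}} \leq \|A\|_{\mr{F}}\|B\|_{\mr{F}} \leq \sqrt{TR}\,\rho_\gamma \cdot \sqrt{T}\,\delta_{NT} = O(T\delta_{NT}) = o(T)$, uniformly over the neighborhood since the bound uses only the Frobenius constraint. The third term is positive semidefinite and can simply be discarded. Combining the three pieces and recalling $\delta_{NT} \to 0$, wpa1 the bound $\sum_t \gamma_t\gamma_t' \geq \tfrac{T}{4}\sigma_{\min}(\Sigma_\gamma^G)\mb{I}_R$ holds uniformly in $\mc{B}_{\delta_{NT}}$ once $N, T$ are large enough. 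Setting $c_\gamma = \sigma_{\min}(\Sigma_\gamma^G)/4$, $c_\lambda$ analogously, and then $B_1 = b_{\min}\min\{c_\lambda, c_\gamma\}$ closes the argument.

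The argument is essentially linear algebra; no empirical-process machinery is needed. The only place the neighborhood structure enters nontrivially is the cross-term bound, which is handled cleanly by the Frobenius-norm constraint $\|\Gamma - \Gamma_0^G\|_{\mr{F}} \leq \sqrt{T}\,\delta_{NT}$ defining $\mc{B}_{\delta_{NT}}$. So, contrary to what the length of the proof of Theorem~\ref{thm:convexity_strong_pre} might suggest, this lemma is one of the more routine pieces of the local-convexity argument; the technically harder work lies in the companion lemmas controlling $S_2$, $S_3$, $\hat{V} - \mb{E}_0 \hat{V}$, and $F$, where the perturbation enters through the nonlinear part of the likelihood.
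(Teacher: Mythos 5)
Your proposal is correct and follows essentially the same route as the paper: reduce to the diagonal blocks via $-\ddot{\ell}_{it}\ge b_{\min}$, lower-bound $\sigma_{\min}(\sum_t\gamma_t\gamma_t')$ by the strong-factor term $\sigma_{\min}(\Gamma_0^{G\prime}\Gamma_0^G)\gtrsim T$ minus a perturbation controlled by Cauchy--Schwarz and the Frobenius constraint $\|\Gamma-\Gamma_0^G\|_{\mr{F}}\le\sqrt{T}\delta_{NT}$, giving an $O(T\delta_{NT})=o(T)$ error (the paper bounds $\|\Gamma'\Gamma-\Gamma_0^{G\prime}\Gamma_0^G\|_{\mr{op}}$ directly rather than expanding into cross and quadratic terms, but this is the same estimate). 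Your closing assessment is also accurate: this is the routine piece of the local-convexity argument.
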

    \begin{prooflmm}{lemma:bound_Hc}
        Recall 
        \begin{align*}
            H_{\lambda\lambda'} & = -\mr{diag}\left\{\left[\sum_{t=1}^{T}  \ddot{\ell}_{it}\gamma_t  \gamma_t'\right]_{i = 1,2,\ldots, N}\right\},   \\
            H_{\gamma\gamma'} & = -\mr{diag}\left\{\left[\sum_{i=1}^{N}  \ddot{\ell}_{it}\lambda_i  \lambda_i'\right]_{t= 1,2,\ldots, T}\right\} . 
        \end{align*}
        For any $i$-th diagonal block of $H_{\lambda\lambda'}$, with probability approaching $1$, we have 
        \begin{align*}
            \inf_{(\beta, \Lambda, \Gamma) \in \mc{B}_{\delta_{NT}}} \sigma_{\min}\left(\sum_{t=1}^{T}  -\ddot{\ell}_{it}\gamma_t  \gamma_t' \right)  & \gtrsimtext{(i)} \inf_{(\beta, \Lambda, \Gamma) \in \mc{B}_{\delta_{NT}}} \sigma_{\min}\left(\sum_{t=1}^{T}   \gamma_t  \gamma_t' \right)  \\
            & \gtrsim  \inf_{(\beta, \Lambda, \Gamma) \in \mc{B}_{\delta_{NT}}}  \sigma_{\min}\left(\Gamma_0^{G\prime }\Gamma_0^G \right) -  \sup_{(\beta, \Lambda, \Gamma) \in \mc{B}_{\delta_{NT}}} \left\|\Gamma'\Gamma - \Gamma_0^{G\prime }\Gamma_0^G \right\|_{\mr{op}} \\
            & \gtrsimtext{(ii)} T - \sup_{(\beta, \Lambda, \Gamma) \in \mc{B}_{\delta_{NT}}} \left\|\Gamma'\Gamma - \Gamma_0^{G\prime }\Gamma_0^G \right\|_{\mr{F}} \\
            &  \gtrsimtext{(iii)} T - \sup_{(\beta, \Lambda, \Gamma) \in \mc{B}_{\delta_{NT}}} \|\Gamma\|_{\mr{F}}\left\|\Gamma - \Gamma_0^G \right\|_{\mr{F}} -  \sup_{(\beta, \Lambda, \Gamma) \in \mc{B}_{\delta_{NT}}} \|\Gamma_0^G\|_{\mr{F}}\left\|\Gamma - \Gamma_0^G \right\|_{\mr{F}} \\
            &  \gtrsimtext{(iv)} T - \sqrt{T} \sup_{(\beta, \Lambda, \Gamma) \in \mc{B}_{\delta_{NT}}}   \left\|\Gamma - \Gamma_0^G \right\|_{\mr{F}} \\
            & \gtrsimtext{(v)} T - T \delta_{NT} \\
            & \gtrsim T, 
        \end{align*}
        where inequality (i) uses Assumption~\ref{assumption:regularity_pre}\ref{item:smoothing_pre},  inequality (ii) follows from  Assumption~\ref{assumption:regularity_pre}\ref{item:strong_factors_pre}, inequality (iii) follows from the Cauchy-Schwarz inequality, inequality (iv) is based on the uniform boundedness of $\Gamma, \Gamma_0^G$, and inequality (v) follows from Theorem~\ref{thm:consistency_pre}. By the same argument, we also obtain 
        \begin{align*}
            \inf_{(\beta, \Lambda, \Gamma) \in \mc{B}_{\delta_{NT}}} \sigma_{\min}\left(\sum_{i=1}^{N}  -\ddot{\ell}_{it}\lambda_i  \lambda_i' \right) \gtrsim N, \quad \text{wpa1}. 
        \end{align*}
        Therefore, we are able to find a constant $B_1>0$,  independent of $N, T$,  and the proof is complete. 
    \end{prooflmm}

    \begin{lemma}\label{lemma:bound_Hbb}
        Under the assumptions of Theorem \ref{thm:convexity_strong_pre}, we have
        \begin{align*}
            \sup_{(\beta, \Lambda, \Gamma) \in \mc{B}_{\delta_{NT}}}  \left\|H_{\beta\beta'} - \mb{E}_0H_{0, \beta\beta'}\right\|_{\mr{op}} = O_p(NT\delta_{NT}). 
       \end{align*}
    \end{lemma}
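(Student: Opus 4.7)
The plan is to split the target into a deterministic perturbation piece and a stochastic piece via the triangle inequality:
\begin{align*}
H_{\beta\beta'} - \mb{E}_0 H_{0,\beta\beta'} = \underbrace{\bigl(H_{\beta\beta'} - H_{0,\beta\beta'}\bigr)}_{\text{(A): parameter deviation}} + \underbrace{\bigl(H_{0,\beta\beta'} - \mb{E}_0 H_{0,\beta\beta'}\bigr)}_{\text{(B): stochastic deviation}}.
\end{align*}
Since $H_{\beta\beta'} = -\sum_{i,t} \ddot{\ell}_{it} X_{it}X_{it}'$, only the first factor $\ddot{\ell}_{it}$ moves in (A), while (B) is a fixed-dimension matrix sum of weakly dependent, uniformly bounded random vectors and does not depend on $(\beta,\Lambda,\Gamma)$. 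I would bound the two pieces separately and then show that (A) dominates and matches the advertised rate.

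For piece (A), I would apply a first-order Taylor expansion of $\ddot{\ell}_{it}$ around $(\beta_0, \Lambda_0^G, \Gamma_0^G)$, giving $\ddot{\ell}_{it} - \ddot{\ell}_{it}^0 = \widetilde{\dddot{\ell}}_{it}\bigl(X_{it}'\Delta_\beta + \widetilde{\lambda}_i'\Delta_{\gamma_t} + \Delta_{\lambda_i}'\widetilde{\gamma}_t\bigr)$, where $\widetilde{\dddot{\ell}}_{it}$ is uniformly bounded by Assumption~\ref{assumption:regularity_pre}\ref{item:smoothing_pre} and $(\widetilde{\lambda}_i,\widetilde{\gamma}_t)$ lie in the compact parameter space. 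Using the crude operator-norm bound $\|\sum_{i,t} a_{it} X_{it}X_{it}'\|_{\mr{op}} \leq d_X\rho_X^2 \sum_{i,t} |a_{it}|$, it suffices to control $\sum_{i,t}|X_{it}'\Delta_\beta|$, $\sum_{i,t}|\widetilde{\lambda}_i'\Delta_{\gamma_t}|$, and $\sum_{i,t}|\Delta_{\lambda_i}'\widetilde{\gamma}_t|$. The first term is at most $NT\sqrt{d_X}\rho_X\delta_{NT}$ directly from $\|\Delta_\beta\|\le \delta_{NT}$; for the other two I would use Cauchy–Schwarz together with $\sum_t\|\Delta_{\gamma_t}\| \leq \sqrt{T}\,\|\Gamma - \Gamma_0^G\|_{\mr{F}} \leq T\delta_{NT}$ (and similarly in $i$), combined with the uniform bound $\|\widetilde{\lambda}_i\|,\|\widetilde{\gamma}_t\|\lesssim 1$. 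This yields $\sup_{(\beta,\Lambda,\Gamma)\in \mc{B}_{\delta_{NT}}}\|H_{\beta\beta'} - H_{0,\beta\beta'}\|_{\mr{op}} \lesssim NT\delta_{NT}$, which is the stated rate.

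For piece (B), I would invoke concentration in the same spirit as the proof of Corollary~\ref{corollary:consistency_pre}: each entry of $H_{0,\beta\beta'} - \mb{E}_0 H_{0,\beta\beta'}$ is a sum over $i,t$ of centered, uniformly bounded variables $\ddot{\ell}_{it}^0 X_{it,d} X_{it,d'} - \mb{E}_0(\cdot)$, which are independent across $i$ and $\phi$-mixing with exponential decay in $t$ by Assumption~\ref{assumption:regularity_pre}\ref{item:sampling_pre}. Applying the concentration inequality of \citet{kanaya2017convergence} entrywise and using a union bound over the fixed $d_X \times d_X$ entries gives $\|H_{0,\beta\beta'} - \mb{E}_0 H_{0,\beta\beta'}\|_{\mr{op}} = O_p(\sqrt{NT}\log(NT))$. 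Since $N \sim T$ and $\delta_{NT} = \log(NT)\min\{N^{-3/8},T^{-3/8}\}$, we have $NT\delta_{NT} \gg \sqrt{NT}\log(NT)$, so (B) is absorbed into the same $O_p(NT\delta_{NT})$ bound.

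I do not anticipate a serious obstacle here: the key conceptual observation is that $H_{\beta\beta'}$ is a fixed-dimensional matrix (the growing complexity in $N,T$ enters only through the scalar coefficients $\ddot{\ell}_{it}$), so operator-norm control collapses to controlling a scalar sum, and the Lipschitz dependence on $(\beta,\Lambda,\Gamma)$ through $\dddot{\ell}$ is strong enough that the radius $\delta_{NT}$ translates linearly into the size of the perturbation. The only mildly delicate point is organizing the cross term $\sum_{i,t}|\widetilde{\lambda}_i'\Delta_{\gamma_t}|$ so that one keeps the Frobenius-norm constraint on $\Gamma$ (not an $\ell^1$-type bound that would cost an extra $\sqrt{T}$); the Cauchy–Schwarz step above handles this cleanly.
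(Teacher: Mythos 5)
Your proposal is correct and follows essentially the same route as the paper's proof: the identical decomposition into a parameter-deviation term handled by Taylor expansion of $\ddot{\ell}_{it}$ together with Cauchy--Schwarz and the Frobenius-norm constraints on $\mc{B}_{\delta_{NT}}$, and a centered stochastic term controlled by the concentration result of \citet{kanaya2017convergence}. The only cosmetic differences are that you reduce the operator norm to a scalar sum $\sum_{i,t}|a_{it}|$ explicitly and carry an extra (harmless) $\log(NT)$ factor in the stochastic piece.
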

    \begin{prooflmm}{lemma:bound_Hbb}
        Note that
        \begin{equation}\label{eq:lemma:bound_Hbb_1}
        \begin{aligned}
            H_{\beta\beta'} - \mb{E}_0H_{0, \beta\beta'} & = -\sum_{i=1}^{N}\sum_{t=1}^{T} (\ddot{\ell}_{it}X_{it}X_{it}') + \sum_{i=1}^{N}\sum_{t=1}^{T} \mb{E}_0 (\ddot{\ell}_{it}^0 X_{it}X_{it}') \\
            & = -\sum_{i=1}^{N}\sum_{t=1}^{T}(\ddot{\ell}_{it}X_{it}X_{it}' - \ddot{\ell}_{it}^0 X_{it}X_{it}') -\sum_{i=1}^{N}\sum_{t=1}^{T}(\ddot{\ell}^0_{it}X_{it}X_{it}' - \mb{E}_0 (\ddot{\ell}_{it}^0 X_{it}X_{it}')). 
        \end{aligned} 
        \end{equation}
        We have 
        \begin{equation}\label{eq:lemma:bound_Hbb_2}
        \begin{aligned}
            & \sup_{(\beta, \Lambda, \Gamma) \in \mc{B}_{\delta_{NT}}}  \left\|\sum_{i=1}^{N}\sum_{t=1}^{T}(\ddot{\ell}_{it}X_{it}X_{it}' - \ddot{\ell}_{it}^0 X_{it}X_{it}') \right\|_{\mr{op}}\\
            \eqtext{(i)} & \sup_{(\beta, \Lambda, \Gamma) \in \mc{B}_{\delta_{NT}}}  \left\| \sum_{i=1}^{N}\sum_{t=1}^{T}
            \underbrace{\tilde{\dddot{\ell}}_{it} \left(\Delta_{\beta}'X_{it} + \tilde{\gamma}'_{t}\Delta_{\lambda_i} + \tilde{\lambda}_{i}'\Delta_{\gamma_t} \right)}_{\tilde{\Delta}_{Y^*_{it}}} X_{it}X_{it}' \right\|_{\mr{op}}\\
            \leqtext{(ii)} &   \sup_{(\beta, \Lambda, \Gamma) \in \mc{B}_{\delta_{NT}}}  \left\| \sum_{i=1}^{N}\sum_{t=1}^{T}\tilde{\dddot{\ell}}_{it} (\Delta_{\beta}'X_{it})     X_{it}X_{it}' \right\|_{\mr{op}} \\
            & + \sup_{(\beta, \Lambda, \Gamma) \in \mc{B}_{\delta_{NT}}}  \left\| \sum_{i=1}^{N}\sum_{t=1}^{T}\tilde{\dddot{\ell}}_{it} \left( \tilde{\gamma}'_{t}\Delta_{\lambda_i} + \tilde{\lambda}_{i}'\Delta_{\gamma_t}\right) X_{it}X_{it}' \right\|_{\mr{op}}  \\
            \lesssimtext{(iii)} & NT \delta_{NT} + \sup_{(\beta, \Lambda, \Gamma) \in \mc{B}_{\delta_{NT}}}  \left\| \sum_{i=1}^{N}\sum_{t=1}^{T}\tilde{\dddot{\ell}}_{it} \left( \tilde{\gamma}'_{t}\Delta_{\lambda_i} + \tilde{\lambda}_{i}'\Delta_{\gamma_t}\right)X_{it}X_{it}' \right\|_{\mr{op}}  \\ 
            \lesssimtext{(iv)} & \sup_{(\beta, \Lambda, \Gamma) \in \mc{B}_{\delta_{NT}}}    \sum_{t=1}^{T}\left\| \sum_{i=1}^{N}\tilde{\dddot{\ell}}_{it}   \tilde{\gamma}_{t}'\Delta_{\lambda_i}   X_{it}X_{it}' \right\|_{\mr{op}}  + \sup_{(\beta, \Lambda, \Gamma) \in \mc{B}_{\delta_{NT}}}    \sum_{i=1}^{N}\left\| \sum_{t=1}^{T}\tilde{\dddot{\ell}}_{it}   \tilde{\lambda}_{i}'\Delta_{\gamma_t}  X_{it}X_{it}' \right\|_{\mr{op}}+ NT \delta_{NT} \\
            \lesssimtext{(v)} & \sum_{t=1}^{T} \sqrt{N}\sup_{(\beta, \Lambda, \Gamma) \in \mc{B}_{\delta_{NT}}} \|\Lambda - \Lambda^G_0\|_{\mr{F}} + \sum_{i=1}^{N} \sqrt{T}\sup_{(\beta, \Lambda, \Gamma) \in \mc{B}_{\delta_{NT}}} \|\Gamma - \Gamma^G_0\|_{\mr{F}} + NT\delta_{NT} \\
            \lesssimtext{(vi)} & NT \delta_{NT}. 
        \end{aligned}
        \end{equation}
        Equation (i) follows from first-order Taylor expansion, where $\tilde{\dddot{\ell}}_{it}$ denotes the third-order derivative of $\ell_{it}$ evaluated at $(\tilde{\beta}, \tilde{\Lambda}, \tilde{\Gamma})$, a point on the line segment between $(\beta, \Lambda, \Gamma)$ and the true parameters. As discussed earlier, $(\tilde{\beta}, \tilde{\Lambda}, \tilde{\Gamma})$ also lies in a bounded space, as both $(\tilde{\beta}, \tilde{\Lambda}, \tilde{\Gamma})$ and true parameter lie in bounded spaces. Equation (ii) follows from the triangle inequality. Equation (iii) is derived based on the uniform boundedness of $\tilde{\dddot{\ell}}_{it}$, $X$, and $(\tilde{\beta}, \tilde{\Lambda}, \tilde{\Gamma})$ (Assumption~\ref{assumption:regularity_pre}\ref{item:boundedness_pre} and \ref{item:smoothing_pre}), and the bound $ \sup_{(\beta, \Lambda, \Gamma) \in \mc{B}_{\delta_{NT}}}\|\beta - \beta_0\| \leq \delta_{NT}$.   Inequality (iv)  again follows from the  triangle inequality.  Inequality (v) is based on the uniform boundedness of $\tilde{\dddot{\ell}}_{it}$, $X$, and $(\tilde{\beta}, \tilde{\Lambda}, \tilde{\Gamma})$, along with the Cauchy-Schwarz inequality. Inequality (vi) follows from the bounds  $\sup_{(\beta, \Lambda, \Gamma) \in \mc{B}_{\delta_{NT}}}\|\Lambda - \Lambda^G_0\|_{\mr{op}}\leq \sqrt{N}\delta_{NT}$ and $\sup_{(\beta, \Lambda, \Gamma) \in \mc{B}_{\delta_{NT}}}\|\Gamma - \Gamma^G_0\|_{\mr{op}}\leq \sqrt{T}\delta_{NT}$. A byproduct that will be frequently used in the subsequent analysis is $$\sup_{(\beta, \Lambda, \Gamma) \in \mc{B}_{\delta_{NT}}}  
        \sum_{i=1}^{N}\sum_{t=1}^{T} \tilde{\Delta}_{Y^*_{it}}^2\leq NT\delta_{NT}^2, $$
        which can be established using a similar method. 

        Under Assumption~\ref{assumption:regularity_pre}\ref{item:sampling_pre} and \ref{assumption:regularity_pre}\ref{item:boundedness_pre}, we apply \citet[Theorem~1]{kanaya2017convergence} to establish that for any $(\beta, \Lambda, \Gamma)$, 
        \begin{equation}\label{eq:lemma:bound_Hbb_3}
            \begin{aligned}
                \left\|\sum_{i=1}^{N}\sum_{t=1}^{T}(\ddot{\ell}^0_{it}X_{it}X_{it}' - \mb{E}_0 (\ddot{\ell}_{it}^0 X_{it}X_{it}'))\right\|_{\mr{op}} = O_p(\sqrt{NT}). 
            \end{aligned}
         \end{equation}
        Combining~\eqref{eq:lemma:bound_Hbb_1},~\eqref{eq:lemma:bound_Hbb_2},~\eqref{eq:lemma:bound_Hbb_3}, and~$\delta_{NT} \lesssim \log(NT)/\sqrt{\min\{N, T\}}$, we obtain  
        \begin{align*}
            \sup_{(\beta, \Lambda, \Gamma) \in \mc{B}_{\delta_{NT}}}  \left\|H_{\beta\beta'} - \mb{E}_0H_{0, \beta\beta'}\right\|_{\mr{op}} = O_p(NT\delta_{NT}).  
        \end{align*}
    \end{prooflmm}

    \begin{lemma}\label{lemma:bound_S_2}
        Under the conditions of Theorem~\ref{thm:convexity_strong}, for any $(\beta, \Lambda, \Gamma) \in \mc{B}_{\delta_{NT}}$,  the matrix $\frac{1}{2}S_1 + S_2$ is positive definite wpa1. 
    \end{lemma}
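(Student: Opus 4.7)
The matrix $\tfrac{1}{2}S_1 + S_2$ has a distinctive sparsity pattern: the only nonzero blocks of $S_2$ are those touching $\beta$ (the $(\beta,\beta)$, $(\beta,\lambda)$, and $(\beta,\gamma)$ blocks, together with their transposes), so the lower-right $(\lambda,\gamma)\times(\lambda,\gamma)$ sub-block of $\tfrac{1}{2}S_1+S_2$ equals $\tfrac{1}{2}$ of that of $S_1$. The plan is to prove positive definiteness by a Young's-inequality decoupling of the $\beta$ coordinate from the $(\lambda,\gamma)$ coordinates, leveraging the diagonal-block lower bound $\tfrac{1}{2}S_1\ge \tfrac{B_2}{2}\mr{diag}(NT\mb{I}_{d_X},\,T\mb{I}_{NR},\,N\mb{I}_{TR})$ already established in (\ref{eq:thm_convexity_4}). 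This reduces the question to uniform operator-norm bounds on the cross blocks of $S_2$.

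Writing an arbitrary vector as $x=(u,v,w)$ with $u\in\mb{R}^{d_X}$, $v\in\mb{R}^{NR}$, $w\in\mb{R}^{TR}$, Young's inequality with weights of order $NT$ gives $|2u'[S_2]_{\beta\lambda'}v|\le \tfrac{B_2 NT}{8}\|u\|^2+\tfrac{8}{B_2 NT}\|[S_2]_{\beta\lambda'}\|_{\mr{op}}^2\|v\|^2$ and analogously for the $\beta$-$\gamma$ cross-term. Combining with Lemma~\ref{lemma:bound_Hbb} to bound $\|[S_2]_{\beta\beta'}\|_{\mr{op}}=O_p(NT\delta_{NT})=o_p(NT)$ yields
$$x'\bigl(\tfrac{1}{2}S_1+S_2\bigr)x\ge \bigl(\tfrac{B_2 NT}{4}-o_p(NT)\bigr)\|u\|^2 + \Bigl(\tfrac{B_2 T}{2}-\tfrac{8\|[S_2]_{\beta\lambda'}\|_{\mr{op}}^2}{B_2 NT}\Bigr)\|v\|^2 + \Bigl(\tfrac{B_2 N}{2}-\tfrac{8\|[S_2]_{\beta\gamma'}\|_{\mr{op}}^2}{B_2 NT}\Bigr)\|w\|^2.$$
Hence $\tfrac{1}{2}S_1+S_2>0$ wpa1, uniformly over $\mc{B}_{\delta_{NT}}$, provided
$$\sup_{\mc{B}_{\delta_{NT}}}\|[S_2]_{\beta\lambda'}\|_{\mr{op}}=o_p(T\sqrt{N}),\qquad \sup_{\mc{B}_{\delta_{NT}}}\|[S_2]_{\beta\gamma'}\|_{\mr{op}}=o_p(N\sqrt{T}).$$

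Establishing these two cross-block bounds is the substantive technical content. I would decompose
$$[S_2]_{\beta\lambda'}=\underbrace{H_{\beta\lambda'}(\beta,\Lambda,\Gamma)-H_{\beta\lambda'}(\beta_0,\Lambda_0^G,\Gamma_0^G)}_{\text{(I) Taylor deviation}}+\underbrace{H_{\beta\lambda'}(\beta_0,\Lambda_0^G,\Gamma_0^G)-\widetilde H_{\beta\lambda'}(\beta_0,\Lambda_0^G,\Gamma_0^G)}_{\text{(II) truncation}}+\underbrace{\widetilde H_{\beta\lambda'}(\beta_0,\Lambda_0^G,\Gamma_0^G)-\mb{E}_0\widetilde H_{0,\beta\lambda'}}_{\text{(III) centered fluctuation}}.$$
Part (I) would be handled by a Taylor expansion identical in spirit to (\ref{eq:lemma:bound_Hbb_2}), driven by $\sup_{\mc{B}_{\delta_{NT}}}\|\Lambda-\Lambda_0^G\|_{\mr{F}}\le\sqrt{N}\delta_{NT}$ and $\sup_{\mc{B}_{\delta_{NT}}}\|\Gamma-\Gamma_0^G\|_{\mr{F}}\le\sqrt{T}\delta_{NT}$; part (II) is supported on $\mc{I}_{NT}^c\cup\mc{T}_{NT}^c$, whose cardinalities are $\lesssim NT\delta_{NT}^4$ by (\ref{eq:size_division}), so a direct Frobenius bound suffices. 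The main obstacle is part (III): the crude Frobenius bound on the centered $d_X\times NR$ matrix delivers only $O_p(T\sqrt{N})$, i.e.\ sits on the boundary of what is required rather than strictly inside it. Obtaining the sharper operator-norm rate $o_p(T\sqrt{N})$ will require exploiting the conditional independence across $i$ together with the $\phi$-mixing in $t$ from Assumption~\ref{assumption:regularity_pre}\ref{item:sampling_pre}, through a matrix-concentration argument in the spirit of Lemma~\ref{lemma:independent_entry} or a block-method Bernstein bound. This is the principal technical hurdle of the proof.
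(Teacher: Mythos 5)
Your overall strategy coincides with the paper's: both start from the diagonal lower bound $S_1\ge B_2\,\mr{diag}(NT\mb{I}_{d_X},T\mb{I}_{NR},N\mb{I}_{TR})$, absorb the $(\beta,\beta)$ perturbation via Lemma~\ref{lemma:bound_Hbb}, and reduce the lemma to showing that the $\beta$--$\lambda$ and $\beta$--$\gamma$ cross blocks of $S_2$ have operator norm $o_p(T\sqrt N)$ and $o_p(N\sqrt T)$ respectively; your Young's-inequality decoupling is quantitatively equivalent to the Schur-complement step the paper uses, and your three-part split into Taylor deviation, truncation, and centered fluctuation matches the paper's decomposition of $\ddot\ell_{it}X_{it}\gamma_t'-\mb{E}_0(\ddot\ell_{it}^0X_{it})\gamma_{0,t}^{G\prime}$ into the terms $A_1$--$A_7$.

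The one place where you diverge is your diagnosis of part (III). You assert that the Frobenius bound on the centered $d_X\times NR$ block "delivers only $O_p(T\sqrt N)$" and that reaching $o_p(T\sqrt N)$ forces a matrix-concentration argument in operator norm. That is not the case, and it is not what the paper does. The $i$-th $d_X\times R$ block of the centered term is $\sum_{t\in\mc{T}_{NT}}(\ddot\ell^0_{it}X_{it}-\mb{E}_0(\ddot\ell^0_{it}X_{it}))\gamma_{0,t}^{G\prime}$, a sum of $T$ bounded, conditionally mean-zero, $\phi$-mixing terms; a scalar/vector concentration inequality for mixing sequences (the paper invokes Theorem~1 of Kanaya, 2017) gives each such block squared norm $O_p(T)$ rather than $O_p(T^2)$, so
\begin{align*}
\sum_{i=1}^{N}\Bigl\|\sum_{t\in\mc{T}_{NT}}(\ddot\ell^0_{it}X_{it}-\mb{E}_0(\ddot\ell^0_{it}X_{it}))\gamma_{0,t}^{G\prime}\Bigr\|_{\mr{F}}^2\lesssim NT,
\end{align*}
i.e.\ the Frobenius (hence operator) norm of the centered block is $O_p(\sqrt{NT})$, which is comfortably $o_p(T\sqrt N)$ --- no block-Bernstein or Lemma~\ref{lemma:independent_entry}-type bound is needed here (that machinery is reserved for the $\lambda$--$\gamma$ cross blocks in $S_3$ and $F$, where the Frobenius norm genuinely is too crude). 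The "boundary" you describe only arises if one ignores the mean-zero structure entirely. So your proof is correct in architecture but incomplete at the step you flag, and the missing step is easier than you anticipate: row-wise concentration plus the Frobenius domination closes it, exactly as in the paper's treatment of $A_2$ (with the truncation remainders on $\mc{I}_{NT}^c$ and $\mc{T}_{NT}^c$ handled by the cardinality bound, as you propose).
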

    \begin{prooflmm}{lemma:bound_S_2}
        As we have demonstrated in the Proof of Theorem~\ref{thm:convexity_strong_pre}, there exists a constant $B_2$ such that 
        \begin{align*}
            S_1 \geq B_2 
            \begin{pmatrix}
                NT\mb{I}_{d_X} & 0 & 0 \\
                0 & T\mb{I}_{NR} & 0 \\
                0 & 0 & N\mb{I}_{TR} 
            \end{pmatrix}
            , 
            \quad
            \text{wpa1. }
        \end{align*}
        It follows that 
        \begin{align*}
            \frac{1}{2} S_1 + S_2 & \geq  \frac{B_2}{2}
            \begin{pmatrix}
                NT\mb{I}_{d_X} & 0 & 0 \\
                0 & T\mb{I}_{NR} & 0 \\
                0 & 0 & N\mb{I}_{TR} 
            \end{pmatrix}
            + 
            \begin{pmatrix}
                H_{\beta\beta'} - \mb{E}_{0} H_{0, \beta\beta'} & H_{\beta\lambda'} - \mb{E}_{0} \tilde{H}_{0, \beta\lambda'} & H_{ \beta\gamma'} - \mb{E}_{0} \tilde{H}_{0, \beta\gamma'}  \\
                H_{\lambda\beta'} - \mb{E}_{0} \tilde{H}_{ 0, \lambda\beta'} & 0 & 0 \\
                H_{\gamma\beta'}  - \mb{E}_{0} \tilde{H}_{0, \gamma\beta'} & 0 & 0
            \end{pmatrix} 
            \\
            & \geqtext{(i)} \begin{pmatrix}
                \frac{B_2}{3} NT \mb{I}_{d_X}  & H_{\beta\lambda'} - \mb{E}_{0} \tilde{H}_{0, \beta\lambda'} & H_{ \beta\beta'} - \mb{E}_{0} \tilde{H}_{0, \beta\gamma'}  \\
                H_{\lambda\beta'} - \mb{E}_{0} \tilde{H}_{ 0, \lambda\beta'} & \frac{B_2}{2} T\mb{I}_{NR} & 0 \\
                H_{\gamma\beta'}  - \mb{E}_{0} \tilde{H}_{0, \gamma\beta'} & 0 & \frac{B_2}{2} N\mb{I}_{TR}
            \end{pmatrix}
            , 
            \quad 
            \text{wpa1}, 
        \end{align*}
        where the inequality (i) follows from Lemma~\ref{lemma:bound_Hbb}, which ensures that $\left\|H_{\beta\beta'} - \mb{E}_0H_{0, \beta\beta'}\right\|_{\mr{op}}= o_p(NT)$. Therefore, the matrix $\frac{1}{2}S_1 + S_2$ is positive definite if its Schur complement 
        \begin{align*}
            \begin{pmatrix}
                \frac{B_2}{2} T\mb{I}_{NR} & 0 \\
                0 & \frac{B_2}{2} N \mb{I}_{TR}
            \end{pmatrix}
            -
            \frac{3}{NT B_2} 
            \begin{pmatrix}
                H_{\lambda\beta'} - \mb{E}_{0} \tilde{H}_{0, \lambda\beta'} \\
                H_{\gamma\beta'} - \mb{E}_{0} \tilde{H}_{0, \gamma\beta'}
            \end{pmatrix}
            \begin{pmatrix}
                H_{\beta\lambda'} - \mb{E}_{0} \tilde{H}_{0, \beta\lambda'} & H_{\beta\gamma'} - \mb{E}_{0} \tilde{H}_{0, \beta\gamma'}
            \end{pmatrix}
        \end{align*}
        is positive definite. Thus, by Weyl's Theorem, it suffices to show that 
        \begin{align*} 
            \sup_{(\beta, \Lambda, \Gamma) \in \mc{B}_{\delta_{NT}}} \left\|
            \begin{pmatrix}
                H_{\lambda\beta'} - \mb{E}_0\tilde{H}_{0, \lambda\beta'} & 
                H_{\gamma\beta'} -  \mb{E}_0\tilde{H}_{0, \gamma\beta'}
            \end{pmatrix}
            \right\|_{\mr{op}}^2 /(NT) = o_p(\min\{N, T\}). 
        \end{align*} 
        Observe that 
        \begin{align*}
            \ddot{\ell}_{it}X_{it}\gamma_t' = ((\ddot{\ell}_{it}X_{it}  - \ddot{\ell}^0_{it}X_{it}) + (\ddot{\ell}^0_{it}X_{it} - \mb{E}_0(\ddot{\ell}^0_{it}X_{it})  + \mb{E}_0(\ddot{\ell}^0_{it}X_{it})) (\Delta_{\gamma_t} + \gamma^{G\prime}_{0, t}) , 
        \end{align*}
        it follows that  
        \begin{align*}
            \ddot{\ell}_{it}X_{it}\gamma_t' - \mb{E}_0(\ddot{\ell}^0_{it}X_{it}) \gamma^{G\prime}_{0, t} = &   \tilde{\dddot{\ell}}_{it}\tilde{\Delta}_{Y^*_{it}} X_{it} \gamma^{G\prime}_{0, t} + (\ddot{\ell}^{0}_{it}X_{it} -  \mb{E}_0 (\ddot{\ell}^{0}_{it}X_{it}))\gamma^{G\prime}_{0, t} +  \tilde{\dddot{\ell}}_{it}\tilde{\Delta}_{Y^*_{it}}X_{it}\Delta'_{\gamma_{t}} \\
            & +   (\ddot{\ell}^{0}_{it}X_{it} - \mb{E}_0( \ddot{\ell}^{0}_{it}X_{it}))\Delta'_{\gamma_{t}} + \mb{E}_0(\ddot{\ell}^{0}_{it}X_{it})\Delta'_{\gamma_{t}}, 
        \end{align*}
        where we use $\tilde{\Delta}_{Y^*_{it}} = \Delta_{\beta}'X_{it} + \tilde{\lambda}_{i}'\Delta_{\gamma_t} + \Delta_{\lambda_i}'\tilde{\gamma}_{t}$ to simplify the expression. 
        Here, $\tilde{\dddot{\ell}}_{it}$ denotes the third-order derivative of $\ell_{it}$ evaluated at $(\tilde{\beta}, \tilde{\Lambda}, \tilde{\Gamma})$, lying on the line segment between $(\beta, \Lambda, \Gamma)$ and the normalized true parameters. Since both $(\tilde{\beta}, \tilde{\Lambda}, \tilde{\Gamma})$ and true parameters lie in bounded spaces, it follows that $(\tilde{\beta}, \tilde{\Lambda}, \tilde{\Gamma})$  is also in a bounded space.
        Since the $X$ and $(\tilde{\beta}, \tilde{\Lambda}, \tilde{\Gamma})$  are uniformly bounded, and $\ell_{it}$ is four times differentiable, $\tilde{\dddot{\ell}}_{it}$ is uniformly bounded by extreme value theorem. 
        In the following proof, it suffices to control the Frobenius norm instead of the spectral norm because 
        \begin{align*}
            \left\| 
            \begin{pmatrix}
                H_{\lambda\beta'} - \mb{E}_0\tilde{H}_{0, \lambda\beta'} \\
                H_{\gamma\beta'} -  \mb{E}_0\tilde{H}_{0, \gamma\beta'}
            \end{pmatrix}
            \right\|_{\mr{op}}
             \leq \left\|
                \begin{pmatrix}
                     H_{ \lambda\beta'} \\
                     H_{ \gamma\beta'} 
                \end{pmatrix} 
                -
                \begin{pmatrix}
                    \mb{E}_0 \tilde{H}_{0, \lambda\beta'} \\
                    \mb{E}_0 \tilde{H}_{0, \gamma\beta'} 
                \end{pmatrix}
            \right\|_{\mr{F}}
            \leq 
            \|H_{\lambda\beta'} - \mb{E}_0 \tilde{H}_{0, \lambda\beta'}\|_{\mr{F}} 
            + 
            \|H_{\gamma\beta'} - \mb{E}_0 \tilde{H}_{ 0, \gamma\beta'}\|_{\mr{F}}. 
        \end{align*}
        We focus on $ \|H_{\lambda\beta'} - \mb{E}_0\tilde{H}_{\lambda\beta'}\|_{\mr{F}}$,  because the bound of $\|H_{\gamma\beta'} - \mb{E}_0 \tilde{H}_{0, \gamma\beta'}\|_{\mr{F}}$ can be obtained using a same method. Note that 
        \begin{equation}\label{eq:bound_A}
        \begin{aligned}
            \|H_{\lambda\beta'} - \mb{E}_0\tilde{H}_{0, \lambda\beta'}\|_{\mr{F}}^2  = &\sum_{i\in \mc{I}_{NT}} \bigg\| \sum_{t\in \mc{T}_{NT}}  \tilde{\dddot{\ell}}_{it}\Delta_{Y^*_{it}} X_{it} \gamma^{G\prime}_{0, t} + \sum_{t\in \mc{T}_{NT}}  (\ddot{\ell}^{0}_{it}X_{it} -  \mb{E}_0 (\ddot{\ell}^{0}_{it}X_{it}))\gamma^{G\prime}_{0, t} \\
            & + \sum_{t\in \mc{T}_{NT}}   \tilde{\dddot{\ell}}_{it}\tilde{\Delta}_{Y^*_{it}}X_{it}\Delta'_{\gamma_{t}} + \sum_{t\in \mc{T}_{NT}}  (\ddot{\ell}^{0}_{it}X_{it} - \mb{E}_0( \ddot{\ell}^{0}_{it}X_{it}))\Delta'_{\gamma_{t}}   \\ 
            & + \sum_{t\in \mc{T}_{NT}}  \mb{E}_0(\ddot{\ell}^{0}_{it}X_{it})\Delta'_{\gamma_{t}} + \sum_{t\in \mc{T}^c_{NT}}\ddot{\ell}_{it}X_{it}\gamma_t' \bigg\|_{\mr{F}}^2 + \sum_{i\in \mc{I}^c_{NT}} \left\|\sum_{t=1}^{T} \ddot{\ell}_{it}X_{it}\gamma_t' \right\|^2_{\mr{F}}  \\
            \leq & 6 \Bigg\{  \underbrace{\sum_{i\in \mc{I}_{NT}}\left\|\sum_{t\in \mc{T}_{NT}} \tilde{\dddot{\ell}}_{it}\tilde{\Delta}_{Y^*_{it}} X_{it} \gamma^{G\prime}_{0, t}  \right\|_{\mr{F}}^2}_{A_1}  + \underbrace{\sum_{i\in \mc{I}_{NT}} \left\| \sum_{t\in \mc{T}_{NT}}  (\ddot{\ell}^{0}_{it}X_{it} -  \mb{E}_0 (\ddot{\ell}^{0}_{it}X_{it}))\gamma^{G\prime}_{0, t} \right\|_{\mr{F}}^2}_{A_2} \\
            & + \underbrace{\sum_{i\in \mc{I}_{NT}} \left\|\sum_{t\in \mc{T}_{NT}}  \tilde{\dddot{\ell}}_{it}\tilde{\Delta}_{Y^*_{it}}X_{it}\Delta'_{\gamma_{t}} \right\|_{\mr{F}}^2}_{A_3} + \underbrace{\sum_{i\in \mc{I}_{NT}} \left\| \sum_{t\in \mc{T}_{NT}}   (\ddot{\ell}^{0}_{it}X_{it} - \mb{E}_0( \ddot{\ell}^{0}_{it}X_{it}))\Delta'_{\gamma_{t}}\right\|_{\mr{F}}^2}_{A_4} \\
            & + \underbrace{\sum_{i\in \mc{I}_{NT}}\left\|\sum_{t\in \mc{T}_{NT}}  \mb{E}_0(\ddot{\ell}^{0}_{it}X_{it})\Delta'_{\gamma_{t}}\right\|_{\mr{F}}^2}_{A_5} + \underbrace{\sum_{i\in \mc{I}_{NT}}\left\|\sum_{t\in \mc{T}^c_{NT}}  \ddot{\ell}_{it}X_{it}\gamma'_{t}\right\|_{\mr{F}}^2}_{A_6} \Bigg\} \\
            & + \underbrace{\sum_{i\in \mc{I}^c_{NT}} \left\|\sum_{t=1}^{T} \ddot{\ell}_{it}X_{it}\gamma_t' \right\|^2_{\mr{F}}}_{A_7}. 
        \end{aligned}
        \end{equation}
        
        The first term $A_1$ is bounded by
        \begin{equation}\label{eq:bound_A_1}
        \begin{aligned}
            \sup_{(\beta, \Lambda, \Gamma) \in \mc{B}_{\delta_{NT}}} A_1 & \leqtext{(i)} \sup_{(\beta, \Lambda, \Gamma) \in \mc{B}_{\delta_{NT}}}  T \sum_{i=1}^{N}\sum_{t=1}^{T} \left\| \tilde{\dddot{\ell}}_{it} X_{it} \gamma'_{0, t}\right\|_{\mr{F}}^2 \tilde{\Delta}_{Y^*_{it}}^2 \\
            & \lesssimtext{(ii)} T  \sup_{(\beta, \Lambda, \Gamma) \in \mc{B}_{\delta_{NT}}} \sum_{i=1}^{N}\sum_{t=1}^{T}(X_{it}'\Delta_{\beta})^2 + T  \sup_{(\beta, \Lambda, \Gamma) \in \mc{B}_{\delta_{NT}}} \sum_{i=1}^{N}\sum_{t=1}^{T} \left(\|\Delta_{\lambda_{i}}\|^2 + \Delta_{\gamma_{t}}\|^2\right)   \\
            & \lesssim NT^2 \sup_{(\beta, \Lambda, \Gamma) \in \mc{B}_{\delta_{NT}}}  \|\beta - \beta_0\|^2 + T^2  \sup_{(\beta, \Lambda, \Gamma) \in \mc{B}_{\delta_{NT}}}  \|\Lambda - \Lambda_0\|^2_{\mr{F}} + TN \sup_{(\beta, \Lambda, \Gamma) \in \mc{B}_{\delta_{NT}}}  \|\Gamma - \Gamma_0\|^2_{\mr{F}}  \\
            & \lesssimtext{(iii)} NT^2\delta_{NT}^2,  
        \end{aligned}      
        \end{equation} 
        where inequality (i) follows from the Cauchy-Schwarz inequality, inequality (ii) uses the boundedness of $X$,$(\beta, \Lambda_0, \Gamma_0)$, $(\tilde{\beta}, \tilde{\Lambda}, \tilde{\Gamma})$, and $\tilde{\dddot{\ell}}_{it}$, and inequality (iii) follows directly from the definition of $\mc{B}_{\delta_{NT}}$. 
        
        For the second term $A_2$, we proceed as follows: 
        \begin{equation}\label{eq:bound_A_2}
        \begin{aligned}
            \sup_{(\beta, \Lambda, \Gamma) \in \mc{B}_{\delta_{NT}}} A_2 \leq &  \sum_{i=1}^{N} \left\| \sum_{t\in \mc{T}_{NT}}  (\ddot{\ell}^{0}_{it}X_{it} - \mb{E}_0 (\ddot{\ell}^{0}_{it}X_{it}))\gamma_{0, t}^{G\prime}\right\|_{\mr{F}}^2 \\
            \leq & \sum_{i=1}^{N} \left\| \sum_{t=1}^{T}  (\ddot{\ell}^{0}_{it}X_{it} - \mb{E}_0 (\ddot{\ell}^{0}_{it}X_{it}))\gamma_{0, t}^{G\prime}- \sum_{t\in \mc{T}_{NT}^c }  (\ddot{\ell}^{0}_{it}X_{it} - \mb{E}_0 (\ddot{\ell}^{0}_{it}X_{it}))\gamma_{0, t}^{G\prime}\right\|_{\mr{F}}^2 \\
            \leq  & 2\sum_{i=1}^{N} \left\| \sum_{t=1}^{T}  (\ddot{\ell}^{0}_{it}X_{it} - \mb{E}_0 (\ddot{\ell}^{0}_{it}X_{it}))\gamma_{0, t}^{G\prime}\right\|_{\mr{F}}^2 + 2  \sum_{i=1}^{N} \left\|  \sum_{t\in \mc{T}_{NT}^c}  (\ddot{\ell}^{0}_{it}X_{it} - \mb{E}_0 (\ddot{\ell}^{0}_{it}X_{it}))\gamma_{0, t}^{G\prime}\right\|_{\mr{F}}^2. 
        \end{aligned}   
        \end{equation}
        Since (i) $ \mb{E}_0 (\ddot{\ell}^{0}_{it}X_{it} - \mb{E}_0 (\ddot{\ell}^{0}_{it}X_{it})) = 0 $,  (ii) $(\ddot{\ell}^{0}_{it}X_{it} -   \mb{E}_0 (\ddot{\ell}^{0}_{it}X_{it})) \gamma_{0, t}^{G\prime}$ is uniformly bounded, and (iii) $\{\ddot{\ell}^{0}_{it} X_{it}\}_{1\leq t\leq T}$ (conditional on $(Z, \Lambda_0, \Gamma_0)$ or $(X, \Lambda_0, \Gamma_0)$) satisfies the mixing condition in Assumption~\ref{assumption:regularity_pre}\ref{item:sampling_pre}, we apply \citet[Theorem~1]{kanaya2017convergence} to obtain 
        \begin{align*}
            \sum_{i=1}^{N} \left\| \sum_{t=1}^{T}  (\ddot{\ell}^{0}_{it}X_{it} - \mb{E}_0 (\ddot{\ell}^{0}_{it}X_{it}))\gamma_{0, t}^{G\prime}\right\|_{\mr{F}}^2 = O_p(NT\log(NT)).  
        \end{align*}
        Additionally, by the uniform boundedness of $(\ddot{\ell}^{0}_{it}X_{it} - \mb{E}_0 (\ddot{\ell}^{0}_{it}X_{it}))\gamma_{0, t}^{G\prime}$, we have  
        \begin{align*}
            \sum_{i=1}^{N} \left\|  \sum_{t\in \mc{T}_{NT}^c}  (\ddot{\ell}^{0}_{it}X_{it} - \mb{E}_0 (\ddot{\ell}^{0}_{it}X_{it}))\gamma_{0, t}^{G\prime}\right\|_{\mr{F}}^2 \lesssim N |\mc{T}_{NT}^c|^2 \lesssim N^3T^2 \delta_{NT}^8. 
        \end{align*}
        Therefore, 
        \begin{align*}
            \sup_{(\beta, \Lambda, \Gamma) \in \mc{B}_{\delta_{NT}}} A_2 = O_p\left(N^3T^2 \delta_{NT}^8\right) . 
        \end{align*}

        The term $A_3$ is bounded by 
        \begin{equation}\label{eq:bound_A_3}
        \begin{aligned}
            \sup_{(\beta, \Lambda, \Gamma) \in \mc{B}_{\delta_{NT}}}  A_3  & \leqtext{(i)}  \sup_{(\beta, \Lambda, \Gamma) \in \mc{B}_{\delta_{NT}}}   T \sum_{i=1}^{N} \sum_{t=1}^{T} \|   \tilde{\dddot{\ell}}_{it}\tilde{\Delta}_{Y^*_{it}}X_{it} \|^2 \|\Delta_{\gamma_{t}}\|^2 \\
            & \leq \sup_{(\beta, \Lambda, \Gamma) \in \mc{B}_{\delta_{NT}}}T\sum_{t=1}^{T} \left\{\sup_{(\beta, \Lambda, \Gamma) \in \mc{B}_{\delta_{NT}}} \sum_{i=1}^{N}   \|\tilde{\dddot{\ell}}_{it}\tilde{\Delta}_{Y^*_{it}}X_{it}\|^2\right\} \|\Delta_{\gamma_{t}}\|^2 \\
            & \lesssimtext{(ii)}   NT^2\delta^2_{NT} \sup_{(\beta, \Lambda, \Gamma) \in \mc{B}_{\delta_{NT}} }  \sum_{t=1}^{T} \|\Delta_{\gamma_{t}}\|^2 \\
            & \lesssimtext{(iii)}  NT^3\delta^4_{NT},   
        \end{aligned}
        \end{equation}
        where inequality (i) uses the Cauchy-Schwarz inequality, inequality (ii) relies on the uniform boundedness of $X$, $(\tilde{\beta}, \tilde{\Lambda}, \tilde{\Gamma})$, and $\tilde{\dddot{\ell}}_{it}$,  as well as the fact that  $\sup_{(\beta, \Lambda, \Gamma) \in \mc{B}_{\delta_{NT}}}  \sum_{t=1}^{T} \tilde{\Delta}_{Y^*_{it}}^2 \leq \sup_{(\beta, \Lambda, \Gamma) \in \mc{B}_{\delta_{NT}}}  \sum_{i=1}^{N}\sum_{t=1}^{T} \tilde{\Delta}_{Y^*_{it}}^2 \lesssim NT \delta_{NT}^2$ (this can be established using the same method as in the Proof of Lemma~\ref{lemma:bound_Hbb}).   Inequality (iii) follows directly from the definition of $\mc{B}_{\delta_{NT}}$.
        Using similar arguments, we obtain that 
        \begin{align}\label{eq:bound_A_4}
            \sup_{(\beta, \Lambda, \Gamma) \in \mc{B}_{\delta_{NT}}}  A_4, A_5 \lesssim NT^2\delta^2_{NT} .
        \end{align}
        We establish the upper bounds of $A_6$ and $A_7$ by the uniform boundedness condition:
        \begin{equation}\label{eq:bound_A_6}
        \begin{aligned}
            \sup_{(\beta, \Lambda, \Gamma) \in \mc{B}_{\delta_{NT}}}  A_6 &\lesssim N |\mc{T}^c_{NT}|^2 \lesssim N^ 3 T^2 \delta_{NT}^8 \\
            \sup_{(\beta, \Lambda, \Gamma) \in \mc{B}_{\delta_{NT}}}  A_7 &\lesssim T^2 |\mc{I}^c_{NT}| \lesssim N T^3 \delta_{NT}^4. 
        \end{aligned}
        \end{equation}
        Thus, combining~\eqref{eq:bound_A}, \eqref{eq:bound_A_1}, \eqref{eq:bound_A_2}, \eqref{eq:bound_A_3}, \eqref{eq:bound_A_4}, and \eqref{eq:bound_A_6}, we  conclude that,  wpa1, 
        \begin{align*}
            \sup_{(\beta, \Lambda, \Gamma) \in \mc{B}_{\delta_{NT}}}  \|H_{\lambda\beta'} - \mb{E}_0H_{0, \lambda\beta'}\|_{\mr{F}}^2 & = O_p\left(\max\{NT^2\delta_{NT}^2, NT^3\delta_{NT}^4, N^2 T \delta_{NT}^4, N^ 3 T^2 \delta_{NT}^8 \}\right)   \\
            & = O_p\left( \max\{N^4, T^4\} \delta_{NT}^4 \right). 
        \end{align*}
        Based on similar arguments we also show that, wpa1,  
        \begin{align*}
            \sup_{(\beta, \Lambda, \Gamma) \in \mc{B}_{\delta_{NT}}} \|H_{\gamma\beta'} - \mb{E}_0 H_{0, \gamma\beta'}\|_{\mr{F}}^2 
            & \lesssim = O_p\left( \max\{N^4, T^4\} \delta_{NT}^4 \right). 
        \end{align*}
        Therefore, we conclude that 
        \begin{equation*}
        \begin{aligned}
            \sup_{(\beta, \Lambda, \Gamma) \in \mc{B}_{\delta_{NT}}} \left\|
            \begin{pmatrix}
                H_{\lambda\beta'} - \mb{E}_0\tilde{H}_{0, \lambda\beta'} & 
                H_{\gamma\beta'} -  \mb{E}_0\tilde{H}_{0, \gamma\beta'}
            \end{pmatrix}
            \right\|_{\mr{op}}^2/(NT) & = o_p\left(\min\{\sqrt{N}, \sqrt{T}\}\left(\log(NT)\right)^4\right) \\
            & = o_p(\min\{N, T\}). 
        \end{aligned}   
        \end{equation*}
        This complete the proof. 
    \end{prooflmm}

    \begin{lemma}\label{lemma:bound_S_3}
        Under the conditions of Theorem~\ref{thm:convexity_strong_pre}, the maximum singular value of $S_3$ in~\eqref{eq:Hessian_decomposition} satisfies 
        \begin{align*}
            \sup_{(\beta, \Lambda, \Gamma) \in \mc{B}_{\delta_{NT}}} \|S_3\|_{\mr{op}} = o_p(\min\{N, T\}). 
        \end{align*}
    \end{lemma}
    \begin{prooflmm}{lemma:bound_S_3}
        Recall that 
        \begin{align*}
            S_3 = \begin{pmatrix}
                \tilde{H}_{\lambda\lambda'} & H_{\lambda\gamma'} \\
                H_{\gamma\lambda'} & \tilde{H}_{\gamma\gamma'} 
            \end{pmatrix} 
            -
            \begin{pmatrix}
                \mb{E}_0\tilde{H}_{0, \lambda\lambda'} & \mb{E}_0 \tilde{H}_{0, \lambda\gamma'} \\
                \mb{E}_0\tilde{H}_{0, \gamma\lambda'} & \mb{E}_0\tilde{H}_{0, \gamma\gamma'} 
            \end{pmatrix}. 
        \end{align*}
        The operator norm of $S_3$  is bounded by
        \begin{align*}
            \|S_3\|_{\mr{op}} \leq 
            \underbrace{
            \left\|
            \begin{pmatrix}
                \tilde{H}_{\lambda\lambda'} -  \mb{E}_0 \tilde{H}_{0, \lambda\lambda'} & 0 \\
                0 & \tilde{H}_{\gamma\gamma'} - \mb{E}_0 \tilde{H}_{0, \gamma\gamma'} 
            \end{pmatrix}
            \right\|_{\mr{op}}
            }_{A_1}
            + 
            \underbrace{
            \left\|
            \begin{pmatrix}
                    0  & H_{\lambda\gamma'} - \mb{E}_0 \tilde{H}_{0, \lambda\gamma'} \\
                H_{\gamma\lambda'} - \mb{E}_0\tilde{H}_{0, \gamma\lambda'} & 0
            \end{pmatrix}
            \right\|_{\mr{op}}
            }_{A_2}. 
        \end{align*}
        \paragraph{Step 1} In this step, we aim to derive an upper bound for $A_1$. We focus on bounding the operator norm of $\tilde{H}_{\lambda\lambda'} - \mb{E}_0 \tilde{H}_{0, \lambda\lambda'}$, since the same argument applies to $\tilde{H}_{\gamma\gamma'} - \mb{E}_0 \tilde{H}_{0, \gamma\gamma'}$. Since $A_1$ has a block-diagonal structure and, for any $i\in \mc{I}_{NT}^c$, the $i$-th $R\times R$ diagonal block of $\tilde{H}_{\lambda\lambda'} - \mb{E}_0 \tilde{H}_{0, \lambda\lambda'}$ is a zero matrix, it suffices to consider the $i$-diagonal $R\times R$ block for $i\in \mc{I}_{NT}$. 
        
        For any $i\in \mc{I}_{NT}$, we have 
        \begin{align*}
            [\tilde{H}_{\lambda\lambda'} - \mb{E}_0 \tilde{H}_{0, \lambda\lambda'}]_{i} = & \sum_{t\in \mc{T}_{NT}}\left((-\ddot{\ell}_{it})\gamma_t \gamma_{t}' -  \mb{E}_0 (-\ddot{\ell}^0_{it})\gamma_{0, t}^{G} \gamma_{0, t}^{G\prime} \right) \\
            = &  -\sum_{t\in \mc{T}_{NT}} \ddot{\ell}_{it}\gamma^G_{0, t}\Delta_{\gamma_t}' - \sum_{t\in \mc{T}_{NT}}\ddot{\ell}_{it}\Delta_{\gamma_t}\gamma_{t}' - \sum_{t\in \mc{T}_{NT}}(\ddot{\ell}_{it} -  \ddot{\ell}^0_{it} )\gamma_{0, t}^G \gamma_{0, t}^{G\prime}  - \sum_{t\in \mc{T}_{NT}}(\ddot{\ell}^0_{it} -  \mb{E}(\ddot{\ell}^0_{it}) )\gamma^G_{0, t}  \gamma_{0, t}^{G\prime}. 
        \end{align*}
        Thus,  
        \begin{align*}
            \sup_{(\beta, \Lambda, \Gamma) \in \mc{B}_{\delta_{NT}}}\|[\tilde{H}_{\lambda\lambda'} - \mb{E}_0 \tilde{H}_{0, \lambda\lambda'}]_{i}\|_{\mr{op}} \leq   &  \underbrace{\sup_{(\beta, \Lambda, \Gamma) \in \mc{B}_{\delta_{NT}}} \left\|\sum_{t\in \mc{T}_{NT}} \ddot{\ell}_{it}\gamma_{0, t}^G\Delta_{\gamma_t}' \right\|_{\mr{op}}}_{Q1}  +  \underbrace{\sup_{(\beta, \Lambda, \Gamma) \in \mc{B}_{\delta_{NT}}} \left\|\sum_{t\in \mc{T}_{NT}} \ddot{\ell}_{it}\Delta_{\gamma_t} \gamma_{ t}'\right\|_{\mr{op}}}_{Q2}  \\ 
            &   \underbrace{\sup_{(\beta, \Lambda, \Gamma) \in \mc{B}_{\delta_{NT}}} \left\| \sum_{t\in \mc{T}_{NT}}(\ddot{\ell}_{it} -  \ddot{\ell}^0_{it} )\gamma_{0, t}^G \gamma_{0, t}^{G\prime} \right\|_{\mr{op}}}_{Q3} \\
            & + \underbrace{\sup_{(\beta, \Lambda, \Gamma) \in \mc{B}_{\delta_{NT}}} \left\| \sum_{t\in \mc{T}_{NT}}(\ddot{\ell}^0_{it} -  \mb{E}(\ddot{\ell}^0_{it}) )\gamma_{0, t}^G  \gamma_{0, t}^{G\prime}  \right\|_{\mr{op}}}_{Q4}. 
        \end{align*}
        
        The upper bound of $Q_1$ is derived as follows: 
        \begin{equation}\label{eq:bound_1}
            \begin{aligned}
                Q_1 & \leq \sup_{(\beta, \Lambda, \Gamma) \in \mc{B}_{\delta_{NT}}} \left\|\sum_{t\in \mc{T}_{NT}} \ddot{\ell}_{it}\gamma^G_{0, t}\Delta_{\gamma_t}' \right\|_{\mr{F}} \\
                & \leqtext{(i)} \sqrt{|\mc{T}_{NT}|} \sup_{(\beta, \Lambda, \Gamma) \in \mc{B}_{\delta_{NT}}} \left(\sum_{t\in \mc{T}_{NT}}  \left\|\ddot{\ell}_{it}\gamma^G_{0, t}\Delta_{\gamma_{t}}'\right\|_{\mr{F}}^2\right)^{1/2} \\
                & \leq \sqrt{T} \sup_{(\beta, \Lambda, \Gamma) \in \mc{B}_{\delta_{NT}}} \left(\sum_{t=1}^{T}  \left\|\ddot{\ell}_{it}\gamma^G_{0, t}\Delta_{\gamma_{t}}'\right\|_{\mr{F}}^2\right)^{1/2} \\
                & \lesssimtext{(ii)} \sqrt{T} \sup_{(\beta, \Lambda, \Gamma) \in \mc{B}_{\delta_{NT}}}  \|\Gamma - \Gamma_0^G\|_{\mr{F}} \\
                & \lesssimtext{(iii)}  T \delta_{NT}, 
            \end{aligned}
        \end{equation} 
        where inequality (i) follows from the Cauchy-Schwarz inequality, inequality (ii) relies on the assumption that ${\ell}_{it}, \gamma_{0,t}$ are uniformly bounded, and inequality (iii) follows directly from the definition of $\mc{B}_{\delta_{NT}}$. The upper bound of $Q_2$ is derived via a similar method: 
        \begin{equation}\label{eq:bound_2}
            \begin{aligned}
                Q_2 & \leq \sup_{(\beta, \Lambda, \Gamma) \in \mc{B}_{\delta_{NT}}} \left\|\sum_{t\in \mc{T}_{NT}} \ddot{\ell}_{it}\Delta_{\gamma_t}\gamma_{t}' \right\|_{\mr{F}} \\
                & \leq \sqrt{|\mc{T}_{NT}|} \sup_{(\beta, \Lambda, \Gamma) \in \mc{B}_{\delta_{NT}}} \left(\sum_{t\in \mc{T}_{NT}} \left\|\ddot{\ell}_{it}\Delta_{\gamma_{t}}\gamma_{ t}'\right\|_{\mr{F}}^2\right)^{1/2} \\
                & \leq \sqrt{T} \sup_{(\beta, \Lambda, \Gamma) \in \mc{B}_{\delta_{NT}}} \left(\sum_{t=1}^{T} \left\|\ddot{\ell}_{it}\Delta_{\gamma_{t}}\gamma_{ t}'\right\|_{\mr{F}}^2\right)^{1/2} \\
                & \lesssimtext{(i)} \sqrt{T} \sup_{(\beta, \Lambda, \Gamma) \in \mc{B}_{\delta_{NT}}}  \|\Gamma - \Gamma_0^G \|_{\mr{F}} \\
                & \lesssim  T \delta_{NT}, 
            \end{aligned}
        \end{equation} 
        where inequality (i) uses the assumption that ${\ell}_{it}, \gamma_{t}$ are uniformly bounded. We drive the bound of $Q_3$ by
        \begin{equation}\label{eq:bound_3}
            \begin{aligned}
                Q_3 \leq &  \sup_{(\beta, \Lambda, \Gamma) \in \mc{B}_{\delta_{NT}}}  \left\|\sum_{t\in \mc{T}_{NT}}(\ddot{\ell}_{it} -  \ddot{\ell}^0_{it} ) \gamma_{0, t}^G \gamma_{0, t}^{G\prime} \right\|_{\mr{F}}  \\
                \lesssimtext{(i)} & \sup_{(\beta, \Lambda, \Gamma) \in \mc{B}_{\delta_{NT}}} \left|\sum_{t\in \mc{T}_{NT}}\tilde{\Delta}_{Y^*_{it}} \right|   \\
                \lesssim &\sup_{(\beta, \Lambda, \Gamma) \in \mc{B}_{\delta_{NT}}}  \sum_{t\in \mc{T}_{NT}}  |X_{it}'\Delta_{\beta} | + \sup_{(\beta, \Lambda, \Gamma) \in \mc{B}_{\delta_{NT}}}  \sum_{t\in \mc{T}_{NT}}  \|\lambda_i\| \|\gamma_t -\gamma^G_{0, t} \|  + \sup_{(\beta, \Lambda, \Gamma) \in \mc{B}_{\delta_{NT}}}  \sum_{t\in \mc{T}_{NT}}  \|\gamma_{0, t}\| \|\lambda_{i} - \lambda^G_{0, i} \| \\
                \lesssim &\sup_{(\beta, \Lambda, \Gamma) \in \mc{B}_{\delta_{NT}}}  \sum_{t=1}^{T}  |X_{it}'\Delta_{\beta} | + \sup_{(\beta, \Lambda, \Gamma) \in \mc{B}_{\delta_{NT}}}  \sum_{t=1}^{T}  \|\lambda_i\| \|\gamma_t -\gamma^G_{0, t} \|  + \sup_{(\beta, \Lambda, \Gamma) \in \mc{B}_{\delta_{NT}}}  \sum_{t= 1}^{T}  \|\gamma_{0, t}\| \|\lambda_{i} - \lambda^G_{0, i} \| \\
                \lesssim & T \sup_{(\beta, \Lambda, \Gamma) \in \mc{B}_{\delta_{NT}}} \|\Delta_{\beta}\|_2  + \sqrt{T}  \sup_{(\beta, \Lambda, \Gamma) \in \mc{B}_{\delta_{NT}}}\|\Gamma - \Gamma_0^G \|_{\mr{F}} + T \|\lambda_i - \lambda^G_{0, i}\| \\
                \lesssim & T\delta_{NT} + T  \delta_{NT} +  \frac{\sqrt{T}}{\delta_{NT}}\\
                = & o_p(\min\{N, T\}). 
            \end{aligned}
        \end{equation}

        Therefore, combining~\eqref{eq:bound_1}, \eqref{eq:bound_2}, and \eqref{eq:bound_3}, we obtain 
        \begin{equation*}
            \begin{aligned}
                \sup_{(\beta, \Lambda, \Gamma) \in \mc{B}_{\delta_{NT}}}  \|[\tilde{H}_{\lambda\lambda'} - \mb{E}_0\tilde{H}_{0, \lambda\lambda'}]_{i}\|_{\mr{op}} = o_p(\min\{N, T\}). 
            \end{aligned}
        \end{equation*}
        By the same method, we have  
        \begin{equation*}
            \begin{aligned}
                \sup_{(\beta, \Lambda, \Gamma) \in \mc{B}_{\delta_{NT}}}  \|[\tilde{H}_{\gamma\gamma'} - \mb{E}_0\tilde{H}_{0, \gamma\gamma'}]_{t}\|_{\mr{op}} =  o_p(\min\{N, T\}). 
            \end{aligned}
        \end{equation*}
        Consequently, 
        \begin{align*}
            \sup_{(\beta, \Lambda, \Gamma) \in \mc{B}_{\delta_{NT}}}\|A_1\|_{\mr{op}}  = o_p(\min\{N, T\}). 
        \end{align*}

        \paragraph{Step 2} In this step, we bound the influence of $A_2$. We focus on $H_{\lambda\gamma'} - \mb{E}_0  \tilde{H}_{0, \lambda\gamma'}$ and consider the expansion of its $(i, t)$-th block, $\ddot{\ell}_{it}\lambda_i\gamma'_t - \mb{E}_0 (\ddot{\ell}^0_{it})\lambda_{0,i}^G\gamma^{G\prime}_{0, t}$.  When $i\in \mc{I}_{NT}$ and $t\in \mc{T}_{NT}$,  
        \begin{align*}
            \ddot{\ell}_{it}\lambda_i\gamma'_t - \mb{E}_0(\ddot{\ell}^0_{it})\lambda_{0,i}^G\gamma^{G\prime}_{0, t}  &  = \underbrace{\ddot{\ell}_{it} (\lambda_i \gamma'_t - \lambda_{0,i}^G\gamma^{G\prime}_{0, t}) }_{H_{1, it}} + \underbrace{(\ddot{\ell}_{it} - \ddot{\ell}^0_{it})\lambda_{0,i}^G\gamma^{G\prime}_{0, t}}_{H_{2, it}} + \underbrace{(\ddot{\ell}^0_{it} - \mb{E}_{\phi_0}(\ddot{\ell}^0_{it}))\lambda_{0,i}^G\gamma^{G\prime}_{0, t} }_{H_{3, it}}. 
        \end{align*} 
        For notational simplicity we collect the terms above into matrices $H_1, H_2, H_3\in \mb{R}^{RN\times RT}$. 
        Note that   
        \begin{equation*}
        \begin{aligned}
            & \sup_{(\beta, \Lambda, \Gamma) \in \mc{B}_{\delta_{NT}}}\|H_{\lambda\gamma'} - \mb{E}_0  \tilde{H}_{0, \lambda\gamma'}\|^2_{\mr{op}}   \\
            \leq & \sup_{(\beta, \Lambda, \Gamma) \in \mc{B}_{\delta_{NT}}} (\|H_1\|^2_{\mr{F}} + \|H_2\|^2_{\mr{F}} + \|H_3\|^2_{\mr{op}}) +  \sup_{(\beta, \Lambda, \Gamma) \in \mc{B}_{\delta_{NT}}}\left(\sum_{i=1}^{N}\sum_{t\in \mc{T}_{NT}} (\ddot{\ell}_{it}\lambda_i\gamma'_t)^2 + \sum_{i\in \mc{I}_{NT}}\sum_{t=1}^{T} (\ddot{\ell}_{it}\lambda_i\gamma'_t)^2\right). 
        \end{aligned}
        \end{equation*}
        It follows that,  wpa1, 
        \begin{equation}\label{eq:bound_4}
        \begin{aligned}
            \sup_{(\beta, \Lambda, \Gamma) \in \mc{B}_{\delta_{NT}}}\|H_1\|^2_{\mr{F}} 
            \lesssimtext{(i)} & \sup_{(\beta, \Lambda, \Gamma) \in \mc{B}_{\delta_{NT}}}\sum_{i\in \mc{I}_{NT}}\|\lambda_i\|^2\sum_{t\in \mc{T}_{NT}} \|\gamma_t - \gamma^G_{0, t}\|^2 \\
            & + \sup_{(\beta, \Lambda, \Gamma) \in \mc{B}_{\delta_{NT}}}\sum_{t\in \mc{T}_{NT} }\|\gamma^G_{0, t}\|^2\sum_{i\in \mc{I}_{NT}} \|\lambda_i - \lambda^G_{0, i}\|^2 \\
            \leq   & \sup_{(\beta, \Lambda, \Gamma) \in \mc{B}_{\delta_{NT}}}\sum_{i=1}^{N}\|\lambda_i\|^2\sum_{t=1}^{T} \|\gamma_t - \gamma^G_{0, t}\|^2 \\
            & + \sup_{(\beta, \Lambda, \Gamma) \in \mc{B}_{\delta_{NT}}}\sum_{t=1}^{T}\|\gamma_{0, t}^G\|^2\sum_{i=1}^{N} \|\lambda_i - \lambda^G_{0, i}\|^2 \\
            \lesssimtext{(ii)} & N \sup_{(\beta, \Lambda, \Gamma) \in \mc{B}_{\delta_{NT}}}\|\Gamma - \Gamma_0^G\|_{\mr{F}}^2 + T \sup_{(\beta, \Lambda, \Gamma) \in \mc{B}_{\delta_{NT}}}\|\Lambda - \Lambda_0^G\|_{\mr{F}}^2 \\
            \lesssim & NT\delta_{NT}^2, 
        \end{aligned}
        \end{equation}
        Here, inequality (i) follows from the Cauchy-Schwarz inequality, and (ii) follows directly from the definition of $\mc{B}_{\delta_{NT}}$. Also, wpa1,  
        \begin{equation}\label{eq:bound_5}
        \begin{aligned}
            \sup_{(\beta, \Lambda, \Gamma) \in \mc{B}_{\delta_{NT}}}\|H_2\|^2_{\mr{F}} 
            \lesssim & \sup_{(\beta, \Lambda, \Gamma) \in \mc{B}_{\delta_{NT}}}\sum_{i\in \mc{I}_{NT}} \sum_{t\in \mc{T}_{NT}} \tilde{\Delta}^2_{Y^*_{it}} \\
            \leq  & \sup_{(\beta, \Lambda, \Gamma) \in \mc{B}_{\delta_{NT}}}\sum_{i=1}^{N}\sum_{t=1}^{T} \tilde{\Delta}^2_{Y^*_{it}} \\
            \lesssimtext{(i)} & NT\delta_{NT}^2, 
        \end{aligned}
        \end{equation} 
        where inequality (i) is from the proof of Lemma~\ref{lemma:bound_Hbb}. 
        In addition, based on the Lemma~\ref{lemma:independent_entry}, we  show that,  wpa1,  
        \begin{align}\label{eq:bound_6}
            \sup_{(\beta, \Lambda, \Gamma) \in \mc{B}_{\delta_{NT}}} \|H_3\|^2_{\mr{op}} \lesssim \max\{N, T\}\log((NT))^2 \lesssim NT \delta_{NT}^2 
        \end{align}
        Finally,  wpa1, 
        \begin{equation}\label{eq:bound_7}
        \begin{aligned}
            \sup_{(\beta, \Lambda, \Gamma) \in \mc{B}_{\delta_{NT}}}\left(\sum_{i=1}^{N}\sum_{t\in \mc{T}_{NT}^c} \|\ddot{\ell}_{it}\lambda_i\gamma'_t\|_{\mr{F}}^2 + \sum_{i\in \mc{I}_{NT}^c }\sum_{t=1}^{T} \|\ddot{\ell}_{it}\lambda_i\gamma'_t\|_{\mr{F}}^2\right) \lesssim & N|\mc{T}_{NT}^c| + T|\mc{I}_{NT}^c| 
            \lesssimtext{(i)} & \max\{N^3, T^3\}\delta_{NT}^4 
        \end{aligned}
        \end{equation}
        where inequality (i) follows from~\eqref{eq:size_division}. 
        Therefore, combining equations~\eqref{eq:bound_4}--- \eqref{eq:bound_7} we  conclude that  $\sup_{(\beta, \Lambda, \Gamma) \in \mc{B}_{\delta_{NT}}}\|A_2\|_{\mr{op}}=  o_p(\min\{N, T\}) $. Consequently, 
        \begin{align*}
            \sup_{(\beta, \Lambda, \Gamma) \in \mc{B}_{\delta_{NT}}} \|S_3\|_{\mr{op}}  = o_p(\min\{N, T\}). 
        \end{align*}
        This completes the proof. 
    \end{prooflmm}

    \begin{lemma}\label{lemma:bound_F}
        Under the conditions of Theorem~\ref{thm:convexity_strong_pre}, the operator norm of $F$ in~\eqref{eq:Hessian_decomposition} satisfies
        \begin{align*}
            \sup_{(\beta, \Lambda, \Gamma) \in \mc{B}_{\delta_{NT}}} \|F\|_{\mr{op}} = o_p(\min\{N, T\}). 
        \end{align*}
    \end{lemma}
    \begin{prooflmm}{lemma:bound_F}
        Consider the expansion of $\dot{\ell}_{it}$ (noting that $\mb{E}_0\dot{\ell}_{it}^0 = 0$): 
        \begin{align*}
            \dot{\ell}_{it} = (\dot{\ell}_{it} - \dot{\ell}^{0}_{it})+ (\dot{\ell}^0_{it} - \mb{E}\dot{\ell}_{it}^0). 
        \end{align*}
        The corresponding matrix decomposition of $F$ is given by
        \begin{align*}
            F = 
            \underbrace{\begin{pmatrix}
                0 & 0 & 0 \\
                0 & 0 & F_{1, \lambda \gamma'} \\
                0 & F_{1, \gamma\lambda'} & 0
            \end{pmatrix}}_{F_1}
            + 
            \underbrace{\begin{pmatrix}
                0 & 0 & 0 \\
                0 & 0 & F_{2, \lambda \gamma'} \\
                0 & F_{2, \gamma\lambda'} & 0
            \end{pmatrix}}_{F_2}, 
        \end{align*}
        where 
        \begin{align*}
            F_{1, \lambda \gamma'} & = \left[(\dot{\ell}_{it} - \dot{\ell}^{0}_{it})\mb{I}_R\right]_{i=1,2,\ldots, N, t = 1,2,\ldots, T},  \\
            F_{2, \lambda \gamma'} & = \left[(\dot{\ell}^0_{it} - \mb{E}_0 \dot{\ell}_{it}^0)\mb{I}_R\right]_{i=1,2,\ldots, N, t = 1,2,\ldots, T}. 
        \end{align*}
        The operator norm of $F$  is therefore bounded by 
        \begin{align*}
            \sup_{(\beta, \Lambda, \Gamma) \in \mc{B}_{\delta_{NT}}}  \|F\|_{\mr{op}} \leq \sup_{(\beta, \Lambda, \Gamma) \in \mc{B}_{\delta_{NT}}}  \|F_1\|_{\mr{op}} +   \sup_{(\beta, \Lambda, \Gamma) \in \mc{B}_{\delta_{NT}}}  \|F_2\|_{\mr{op}}. 
        \end{align*}
        We conclude that $\sup_{(\beta, \Lambda, \Gamma) \in \mc{B}_{\delta_{NT}}}  \|F_1\|_{\mr{op}} = o_p(\min\{N, T\})$ using the same argument as in the proof of Lemma~\ref{lemma:bound_S_2} and $\sup_{(\beta, \Lambda, \Gamma) \in \mc{B}_{\delta_{NT}}}  \|F_2\|_{\mr{op}} = o_p(\max \{N, T\})$ by Lemma~\ref{lemma:independent_entry}. Therefore, when $N\sim T$, we have
        \begin{align*}
            \sup_{(\beta, \Lambda, \Gamma) \in \mc{B}_{\delta_{NT}}}  \|F\|_{\mr{op}}  = o_p(\min\{N, T\}). 
        \end{align*}
        This completes the proof. 
    \end{prooflmm}

    \begin{lemma}\label{lemma:bound_V}
        Under the conditions of Theorem~\ref{thm:convexity_strong_pre}, we have 
        \begin{align*}
            \sup_{(\beta, \Lambda, \Gamma) \in \mc{B}_{\delta_{NT}}} \|V - V_0\|_{\mr{op}} = o_p(\max\{N, T\}). 
        \end{align*}
    \end{lemma}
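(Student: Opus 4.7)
My plan is to control $\|V-V_0\|_{\mr{op}}$ block-by-block, using that for the $2\times 2$ symmetric block structure of $V-V_0$ the operator norm is bounded by the sum of the operator norms of the $\lambda\lambda'$, $\gamma\gamma'$, and off-diagonal $\lambda\gamma'$ blocks. For each block I will expand in $\Delta_{\lambda_i}=\lambda_i-\lambda_{0,i}^G$ and $\Delta_{\gamma_t}=\gamma_t-\gamma_{0,t}^G$; for instance,
\begin{align*}
\lambda_i\lambda_{i'}^{\prime}-\lambda_{0,i}^{G}\lambda_{0,i'}^{G\prime}
=\Delta_{\lambda_i}\lambda_{0,i'}^{G\prime}+\lambda_{0,i}^{G}\Delta_{\lambda_{i'}}^{\prime}+\Delta_{\lambda_i}\Delta_{\lambda_{i'}}^{\prime},
\end{align*}
and analogous identities handle the $\gamma\gamma'$ and $\lambda\gamma'$ blocks.

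The key observation that makes every resulting piece easy to control is that each term is a rank-$R$ outer product. Writing $D_\Lambda\in\mb{R}^{N\times R}$ for the matrix with rows $\Delta_{\lambda_i}^{\prime}$ and $L_0\in\mb{R}^{N\times R}$ for the matrix with rows $\lambda_{0,i}^{G\prime}$, and taking conformable test vectors $x=(x_i)_{i=1}^N$, $y=(y_{i'})_{i'=1}^N$,
\begin{align*}
x'\bigl[\Delta_{\lambda_i}\lambda_{0,i'}^{G\prime}\bigr]_{i,i'}y
=\Bigl(\sum_i x_i'\Delta_{\lambda_i}\Bigr)\Bigl(\sum_{i'}\lambda_{0,i'}^{G\prime}y_{i'}\Bigr),
\end{align*}
which Cauchy--Schwarz bounds in absolute value by $\|x\|\,\|D_\Lambda\|_{\mr{F}}\,\|L_0\|_{\mr{F}}\,\|y\|$. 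Hence $\bigl\|[\Delta_{\lambda_i}\lambda_{0,i'}^{G\prime}]\bigr\|_{\mr{op}}\leq\|D_\Lambda\|_{\mr{F}}\|L_0\|_{\mr{F}}$, and the same identity handles every other cross-term and the quadratic remainder (the latter contributing a factor $\|D_\Lambda\|_{\mr{F}}^2$ or $\|D_\Lambda\|_{\mr{F}}\|D_\Gamma\|_{\mr{F}}$).

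From the definition of $\mc{B}_{\delta_{NT}}$ I have $\|D_\Lambda\|_{\mr{F}}\leq\sqrt{N}\,\delta_{NT}$ and $\|D_\Gamma\|_{\mr{F}}\leq\sqrt{T}\,\delta_{NT}$, while uniform boundedness of $\Lambda_0^G$ and $\Gamma_0^G$ (Assumption~\ref{assumption:regularity_pre}\ref{item:compactness_pre} together with the bounds on the singular values of $G$ established inside the proof of Theorem~\ref{thm:consistency_pre}) gives $\|L_0\|_{\mr{F}}\lesssim\sqrt{N}$ and the analogous $\sqrt{T}$ bound for stacked $\gamma_{0,t}^G$. Plugging in yields
\begin{align*}
\|V_{\lambda\lambda'}-V_{0,\lambda\lambda'}\|_{\mr{op}}\lesssim T\,\delta_{NT},\qquad
\|V_{\gamma\gamma'}-V_{0,\gamma\gamma'}\|_{\mr{op}}\lesssim N\,\delta_{NT},\qquad
\|V_{\lambda\gamma'}-V_{0,\lambda\gamma'}\|_{\mr{op}}\lesssim\sqrt{NT}\,\delta_{NT},
\end{align*}
so summing the three contributions and using $\delta_{NT}=\log(NT)\min\{N^{-3/8},T^{-3/8}\}=o(1)$ gives $\sup_{\mc{B}_{\delta_{NT}}}\|V-V_0\|_{\mr{op}}\lesssim\max\{N,T\}\,\delta_{NT}=o_p(\max\{N,T\})$, which is the claim. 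The argument is purely algebraic; the only substantive input is the rank-$R$ outer-product structure of each block of $V$ combined with the neighborhood constraints, and I do not foresee any real obstacle beyond bookkeeping.
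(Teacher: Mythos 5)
Your proposal is correct and follows essentially the same route as the paper: decompose $V-V_0$ into the $\lambda\lambda'$, $\gamma\gamma'$, and $\lambda\gamma'$ blocks, linearize each entry in $\Delta_{\lambda_i}$ and $\Delta_{\gamma_t}$, and apply Cauchy--Schwarz together with the uniform boundedness of $\Lambda_0^G,\Gamma_0^G$ and the neighborhood radii $\|D_\Lambda\|_{\mr{F}}\leq\sqrt{N}\delta_{NT}$, $\|D_\Gamma\|_{\mr{F}}\leq\sqrt{T}\delta_{NT}$. The only (immaterial) difference is that the paper bounds each block in Frobenius norm and uses $\|\cdot\|_{\mr{op}}\leq\|\cdot\|_{\mr{F}}$, whereas you bound the operator norm directly via the outer-product factorization; both give the same rates $T\delta_{NT}$, $N\delta_{NT}$, $\sqrt{NT}\delta_{NT}$ and the same conclusion.
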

    \begin{prooflmm}{lemma:bound_V}
        Recall that 
        \begin{align*}
            V_{\lambda\lambda'} & = \frac{T}{N}\left[ \lambda_i \lambda_{i'}' \right]_{i, i' = 1,2,\ldots, N },  \\
            V_{\lambda\gamma'} & = \left[ -\lambda_i  \gamma_{t}' \right]_{i =1,2,\ldots, N, t = 1,2,\ldots, T},  \\
            V_{\gamma\gamma'} & = \frac{N}{T}\left[ \gamma_t   \gamma_{t'}'\right]_{t, t' = 1,2,\ldots, T} .  
        \end{align*}
        In addition,  
        \begin{align*}
            \sup_{(\beta, \Lambda, \Gamma) \in \mc{B}_{\delta_{NT}}} \|V - V_0\|_{\mr{op}} \leq & \sup_{(\beta, \Lambda, \Gamma) \in \mc{B}_{\delta_{NT}}} \|V - V_0\|_{\mr{F}} \\
            \leq & \sup_{(\beta, \Lambda, \Gamma) \in \mc{B}_{\delta_{NT}}}\|V_{\lambda\lambda'} - V_{0, \lambda\lambda'} \|_{\mr{F}} + \sup_{(\beta, \Lambda, \Gamma) \in \mc{B}_{\delta_{NT}}}\|V_{\gamma\gamma'} - V_{0, \gamma\gamma'} \|_{\mr{F}} \\
            & + 2 \sup_{(\beta, \Lambda, \Gamma) \in \mc{B}_{\delta_{NT}}}\|V_{\lambda\gamma'} - V_{0, \lambda\gamma'} \|_{\mr{F}}. 
        \end{align*}
        We first show that, wpa1, 
        \begin{align*}
            \sup_{(\beta, \Lambda, \Gamma) \in \mc{B}_{\delta_{NT}}} \|V_{\lambda\lambda'} - V_{0, \lambda\lambda'} \|_{\mr{F}}^2  \leq &  \frac{T^2}{N^2 } \sup_{(\beta, \Lambda, \Gamma) \in \mc{B}_{\delta_{NT}}} \sum_{i=1}^{N} \sum_{i'=1}^{N}  \left\|\lambda_i \lambda_{i'}' -  \lambda_{i ,0}^{G} \lambda_{i', 0}^{G\prime} \right\|_{\mr{F}}^2  \\
            \lesssimtext{(i)} & \frac{T^2}{N^2 }  \sup_{(\beta, \Lambda, \Gamma) \in \mc{B}_{\delta_{NT}}} \sum_{i=1}^{N} \sum_{i'=1}^{N} \left(\|\lambda_{i'}\|^2 \|\lambda_{i} - \lambda^G_{0, i}\|^2 +  \| \lambda^G_{0, i}\|^2 \|\lambda_{i'} - \lambda^G_{0, i'}\|^2  \right) \\
            \lesssimtext{(ii)} & \frac{T^2}{N }  \sup_{(\beta, \Lambda, \Gamma) \in \mc{B}_{\delta_{NT}}}\|\Lambda - \Lambda_0^G\|_{\mr{F}}^2 \\
            \lesssim & T^2 \delta_{NT}^2,  
        \end{align*}
        where inequality (i) employs the Cauchy-Schwarz inequality, and inequality (ii) uses the uniform boundedness of $\Lambda$ and $\Lambda_0^{G}$. Similarly, we have, wpa1, 
        \begin{align*}
            \sup_{(\beta, \Lambda, \Gamma) \in \mc{B}_{\delta_{NT}}} \|V_{\gamma\gamma'} - V_{0, \gamma\gamma'} \|_{\mr{F}}^2 \lesssim  N^2 \delta_{NT}^2. 
        \end{align*}
        In addition, wpa1, 
        \begin{align*}
            \sup_{(\beta, \Lambda, \Gamma) \in \mc{B}_{\delta_{NT}}} \|V_{\lambda\gamma'} - V_{0, \lambda\gamma'} \|_{\mr{F}}^2 \lesssimtext{(i)} & \sum_{i=1}^{N}\sum_{t=1}^{T} \left(\|\lambda_i\|^2 \|\gamma_{t} - \gamma^G_{0, t}\|^2 +  \| \gamma^G_{0, t}\|^2 \|\lambda_{i} - \lambda^G_{0, i}\|^2\right) \\
            \lesssimtext{(ii)} & N \sup_{(\beta, \Lambda, \Gamma) \in \mc{B}_{\delta_{NT}}} \|\Gamma - \Gamma_0^G\|_{\mr{F}}^2 + T \sup_{(\beta, \Lambda, \Gamma) \in \mc{B}_{\delta_{NT}}}  \|\Lambda - \Lambda_0^G\|_{\mr{F}}^2 \\
            \lesssim & NT\delta^2_{NT}, 
        \end{align*}
        where inequality (i) again employs the Cauchy-Schwarz inequality, and inequality (ii) uses the uniform boundedness of $\Gamma$ and $\Gamma_0^{G}$. Finally, 
        \begin{align*}
            \sup_{(\beta, \Lambda, \Gamma) \in \mc{B}_{\delta_{NT}}} \|V - V_0\|_{\mr{op}} \lesssim \max\{N, T\}\delta_{NT} = o_p(\max\{N, T\}). 
        \end{align*}
        This completes the proof. 
    \end{prooflmm}

\subsection{Proof of Lemma \ref{lemma:sufficient_convexity}}

    \begin{prooflmm}{lemma:sufficient_convexity}
        Let $\mb{P}_{\mc{U}}(\cdot): = \mb{P}\left(\cdot \mid \mc{U}\right)$ denote the probability conditional on $\mc{U}$ defined in Assumption~\ref{assumption:conditional_independence_hessian}, and let $\mb{E}_{\mc{U}}(\cdot): = \mb{E}\left(\cdot \mid \mc{U}\right)$ denote the expectation conditional on $\mc{U}$. By~\eqref{eq:VF} and~\eqref{eq:decomposition}, we have 
        \begin{align*}
            NT \mb{E}_0 \mc{H}(\beta_0, \Lambda_0^G, \Gamma_0^G) = \mb{E}_{\mc{U}}\mb{E}_0 H_0 + (\mb{E}_0 H_0 - \mb{E}_{\mc{U}}\mb{E}_0 H_0 ) + V_0 + (\mb{E}_0\hat{V} - V_0). 
        \end{align*}
        In addition, by Lemma~\ref{lemma:bound_V}, we have $\|\mb{E}_0\hat{V} - V_0\|_{\mr{op}}\lesssim \sup_{\beta, \Lambda, \Gamma\in \mc{B}_{\delta_{NT}}}\|V - V_0\|_{\mr{op}} = o_p(\max\{N, T\})$. Thus, 
        it suffices to show that the minimum eigenvalue of $\mb{E}_{\mc{U}}\mb{E} H_0 + V_0$ is strictly positive with order $\min\{N, T\}$, and the operator norm of the perturbation $(\mb{E}_0 H_0 - \mb{E}_{\mc{U}}\mb{E}_0 H_0 )$ is sufficiently small relative to $\mb{E}_{\mc{U}}\mb{E} H_0 + V_0$.  
        
        \paragraph{Step 1} In this step, we show that minimum eigenvalue of $\mb{E}_{\mc{U}}\mb{E}_0 H_0 + V_0$ is strictly positive and of order $\min\{N, T\}$. For any $i, t$, let 
        $\nabla^2 \ell_{it}^0$ be the $(d_X + R(N+T)) \times (d_X + R(N+T))$ Hessian matrix with respect to $(\beta, \Lambda, \Gamma)$. The matrix $\nabla^2(-\ell_{it}^0)$ can be interpreted as the sample Hessian based on a single observation $(Y_{it}, X_{it})$. For notational simplicity, define $ L^0_{it}: = \mb{E}_{\mc{U}}\mb{E}_0(\ell_{it}^0)$.  It is immediate that both $\nabla^2 (-\ell_{it}^0)$ and  $\nabla^2(-L^0_{it})$ are positive semi-definite.  We now consider the following rescaled version of $\nabla^2(-L_{it}^0)$:  
        \begin{align*}
            \widecheck{ \nabla^2(-L^0_{it})}  = 
            \begin{pmatrix}
                \sqrt{\nu}\nabla_{\beta\beta'}^2(-L^0_{it}) & \nabla_{\beta\lambda'}^2(-L^0_{it})  & \nabla_{\beta\gamma'}^2(-L^0_{it}) \\
                \nabla_{\lambda\beta'}^2(-L^0_{it}) & \sqrt{\nu}\nabla_{\lambda\lambda'}^2(-L^0_{it})   & \sqrt{\nu}\nabla_{\lambda\gamma'}^2(-L^0_{it}) \\
                \nabla_{\gamma\beta'}^2 (-L^0_{it}) & \sqrt{\nu}\nabla_{\gamma\lambda'}^2(-L^0_{it})  & \sqrt{\nu} \nabla_{\gamma\gamma'}^2(-L^0_{it})  
            \end{pmatrix}, 
        \end{align*}
        where $\nu$ is defined as in Assumption~\ref{assumption:conditional_independence_hessian}\ref{item:conditional_variability_hessian}. Since $\sqrt{\nu}\mb{E}_{\mc{U}}(\mb{E}_0(\ddot{\ell}^0_{it} X_{it}X_{it}'))$ is positive semi-definite, it follows that $\sqrt{\nu}\nabla_{\beta\beta'}^2(-L^0_{it}) $ is positive semi-definite as well. Thus, by Schur's lemma, $\widecheck{ \nabla^2(-L^0_{it})}$ is positive semi-definte if its Schur's complement $\widecheck{ \nabla^2(-L^0_{it})}\backslash \sqrt{\nu}\nabla_{\beta\beta'}^2(-L^0_{it})$ is positive semi-definite. The latter statement is true because
        \begin{align*} 
            & \begin{pmatrix}
                \mb{E}_{\mc{U}}(\mb{E}_0(-\ddot{\ell}_{it}^0)) \gamma_{t,0} \gamma_{t,0}' & \mb{E}_{\mc{U}}(\mb{E}_0(-\ddot{\ell}_{it}^0)) \gamma_{t,0} \lambda_{i, 0}' \\
                \mb{E}_{\mc{U}}(\mb{E}_0(-\ddot{\ell}_{it}^0)) \lambda_{i, 0} \gamma_{t,0}'  & \mb{E}_{\mc{U}}(\mb{E}_0(-\ddot{\ell}_{it}^0)) \lambda_{i, 0} \lambda_{i,0}' 
            \end{pmatrix}  \\
            & -  
            \mb{E}_{\mc{U}}(\mb{E}_0(-\ddot{\ell}_{it}^0X_{it}'))\left(\sqrt{\nu}\mb{E}_{\mc{U}}(\mb{E}_0(-\ddot{\ell}^0_{it} X_{it}X_{it}'))\right)^{-1}\mb{E}_{\mc{U}}(\mb{E}_0(-\ddot{\ell}_{it}^0X_{it}))
            \begin{pmatrix}
                 \gamma_{0, t}\gamma_{0, t}' &  \gamma_{0, t}\lambda_{0, i}' \\
                 \lambda_{0, i}\gamma_{0, t}'  &   \lambda_{0, i} \lambda_{0, i}'
            \end{pmatrix}  \\
            = &  \left(\sqrt{\nu} \mb{E}_{\mc{U}}(\mb{E}_0(-\ddot{\ell}_{it}^0))  - \mb{E}_{\mc{U}}(\mb{E}_0(-\ddot{\ell}_{it}^0X_{it}'))\left(\sqrt{\nu}\mb{E}_{\mc{U}}(\mb{E}_0(-\ddot{\ell}^0_{it} X_{it}X_{it}'))\right)^{-1}\mb{E}_{\mc{U}}(\mb{E}_0(-\ddot{\ell}_{it}^0X_{it}))\right)
            \begin{pmatrix}
                 \gamma_{0, t}\gamma_{0, t}' &  \gamma_{0, t}\lambda_{0, i}' \\
                 \lambda_{0, i}\gamma_{0, t}'  &   \lambda_{0, i} \lambda_{0, i}'
            \end{pmatrix} \\
            \geq & 0, 
        \end{align*}
        where the last inequality comes from Assumption~\ref{assumption:conditional_independence_hessian}\ref{item:conditional_variability_hessian}. Thus, 
        \begin{align*}
            \mb{E}_{\mc{U}}\mb{E}_0H_0 = \sum_{i=1}^{N}\sum_{t=1}^{T} \nabla^2(-L^0_{it}) 
            & \geq \sum_{i=1}^{N}\sum_{t=1}^{T} \nabla^2(-L^0_{it}) - \sum_{i=1}^{N}\sum_{t=1}^{T} \underbrace{\widecheck{ \nabla^2(-L^0_{it})}}_{\geq 0}  \\
            & = (1-\sqrt{\nu}) \mb{E}_{\mc{U}}\mb{E}_0
            \begin{pmatrix}
                H_{0, \beta\beta'} & 0 & 0 \\
                0 & H_{0, \lambda\lambda'} & H_{0, \lambda\gamma'}\\
                0 & H_{0, \gamma\lambda'} & H_{0, \gamma\gamma'}\\
            \end{pmatrix} \\
            & =
            (1-\sqrt{\nu}) \mb{E}_{\mc{U}}\mb{E}_0
            \begin{pmatrix}
                H_{0, \beta\beta'} & 0 & 0 \\
                0 & 0 & 0 \\
                0 & 0 & 0\\
            \end{pmatrix} 
            + (1-\sqrt{\nu}) \mb{E}_{\mc{U}}\mb{E}_0
            \begin{pmatrix}
                0 & 0 & 0 \\
                0 & H_{0, \lambda\lambda'} & H_{0, \lambda\gamma'}\\
                0 & H_{0, \gamma\lambda'} & H_{0, \gamma\gamma'}\\
            \end{pmatrix}. 
        \end{align*}
        Hence, using the similar argument from  \citet[Lemma~2]{chen2021nonlinear},  there exists a constant $c>0$,  independent of $N, T$,   such that 
        \begin{align*}
            (1-\sqrt{\nu}) \mb{E}_{\mc{U}}\mb{E}_0
            \begin{pmatrix}
                0  & 0 & 0 \\
                0 & H_{0, \lambda\lambda'} & H_{0, \lambda\gamma'}\\
                0 & H_{0, \gamma\lambda'} & H_{0, \gamma\gamma'}\\
            \end{pmatrix} +  V_0 & \geq 
            c
            \begin{pmatrix}
                0 & 0 & 0  \\
                0 & T \mb{I}_{R N}& 0 \\
                0 & 0& N \mb{I}_{R T}
            \end{pmatrix},  
        \end{align*}
        and 
        \begin{align*}
            (1-\sqrt{\nu}) \mb{E}_{\mc{U}}\mb{E}_0
            \begin{pmatrix}
                H_{0, \beta\beta'} & 0 & 0 \\
                0 & 0 & 0 \\
                0 & 0 & 0\\
            \end{pmatrix} 
            \geq 
            c
            \begin{pmatrix}
                NT\mb{I}_{d_X} & 0 & 0 \\
                0 & 0 & 0 \\
                0 & 0 & 0\\
            \end{pmatrix}.
        \end{align*}
        Consequently, we must have 
        \begin{align*}
            \mb{E}_{\mc{U}}\mb{E}_0H_0 + V_0 \geq c
            \begin{pmatrix}
                NT\mb{I}_{d_X} & 0 & 0  \\
                0 & T \mb{I}_{R N}& 0 \\
                0 & 0& N \mb{I}_{R T}
            \end{pmatrix},  
        \end{align*}
        whose minimum eigenvalue is positive with order $\min\{N, T\}$. 

        \paragraph{Step 2} In this step, we show that the perturbation $(\mb{E} H_0 - \mb{E}_{\mc{U}}\mb{E} H_0 )$ is sufficiently small relative to $\mb{E}_{\mc{U}}\mb{E}H_0 + bV_0$. Note that 
        \begin{equation*}
        \begin{aligned}
            \mb{E}_0 H_0 + bV_0  = & \mb{E}_{\mc{U}}\mb{E}_0 H_0 + bV_0+ (E_0 H_0 - \mb{E}_{\mc{U}}\mb{E}_0 H_0)  \\
            \geq &  
            c
            \begin{pmatrix}
                0  & 0 & 0  \\
                0 & T \mb{I}_{R N}& 0 \\
                0 & 0& N \mb{I}_{R T}
            \end{pmatrix} 
            + 
            \frac{c}{2}
            \begin{pmatrix}
                NT\mb{I}_{d_X} & 0 & 0  \\
                0 &  0 & 0 \\
                0 & 0& 0
            \end{pmatrix}  
            +
            \frac{c}{2}
            \begin{pmatrix}
                NT\mb{I}_{d_X} & 0 & 0  \\
                0 &  0 & 0 \\
                0 & 0& 0
            \end{pmatrix}
            \\
            & 
            +  ( \mb{E}_0 - \mb{E}_{\mc{U}}\mb{E}_0)  
            \begin{pmatrix}
                H_{0, \beta \beta'} & H_{0, \beta \lambda'} & H_{0, \beta \gamma'} \\
                H_{0, \lambda \beta'} & 0 & 0 \\
                H_{0, \gamma \beta'} & 0 & 0
            \end{pmatrix}
            + (\mb{E}_0 - \mb{E}_{\mc{U}}\mb{E}_0 )
            \begin{pmatrix}
                0 & 0 & 0 \\
                0 &  H_{0, \lambda \lambda'} &  H_{0, \lambda \gamma'}  \\
                0 & H_{0, \gamma \lambda'} & H_{0, \gamma \gamma'}
            \end{pmatrix} \\
            = & 
            \underbrace{
            \begin{pmatrix}
                \frac{c}{2}NT\mb{I}_{d_X} + ( \mb{E}_0 - \mb{E}_{\mc{U}}\mb{E}_0)H_{0, \beta \beta'}  & ( \mb{E}_0 - \mb{E}_{\mc{U}}\mb{E}_0)H_{0, \beta \lambda'}  &  ( \mb{E}_0 - \mb{E}_{\mc{U}}\mb{E}_0)H_{0, \beta \gamma'}   \\
                ( \mb{E}_0 - \mb{E}_{\mc{U}}\mb{E}_0)H_{0, \lambda\beta'} & 0& 0 \\
                ( \mb{E}_0 - \mb{E}_{\mc{U}}\mb{E}_0)H_{0,  \gamma \beta'} & 0& 0
            \end{pmatrix}
            }_{A_1}
            \\
            & 
            + 
            \underbrace{
            ( \mb{E}_0 - \mb{E}_{\mc{U}}\mb{E}_0)  
            \begin{pmatrix}
                0 & 0 & 0 \\
                0 &  H_{0, \lambda \lambda'} &  0  \\
                0 & 0 & H_{0, \gamma \gamma'}
            \end{pmatrix} 
            }_{A_2} 
            + 
            \underbrace{
            ( \mb{E}_0 - \mb{E}_{\mc{U}}\mb{E}_0)  
            \begin{pmatrix}
                0 & 0 & 0 \\
                0 &  0 &  H_{0, \lambda \gamma'}  \\
                0 & H_{0, \gamma \lambda'} & 0
            \end{pmatrix}
             }_{A_3} \\
            & + c
            \begin{pmatrix}
                \frac{NT}{2}\mb{I}_{d_X} & 0 & 0  \\
                0 &  T \mb{I}_{R N}& 0 \\
                0 & 0& N \mb{I}_{R T}
            \end{pmatrix} . 
        \end{aligned}
        \end{equation*} 
        We first study the eigenvalues of $A_1$. For the upper-left block of $A_1$, we establish a lower bound for its minimum eigenvalue as follows:
        \begin{align*}
            \sigma_{\min}\left(\frac{1}{2}NT\mb{I}_{d_X} + ( \mb{E}_0 - \mb{E}_{\mc{U}}\mb{E}_0)H_{0, \beta \beta'}\right)  \gtrsimtext{(i)} NT + O_p(\sqrt{NT}) \gtrsim NT, \quad \textbf{wpa1}, 
        \end{align*}
        where inequality (i) follows from the fact that $( \mb{E}_0 - \mb{E}_{\mc{U}}\mb{E}_0) H_{0, \beta \beta'} = O_p(\sqrt{NT})$ by Assumption~\ref{assumption:conditional_independence_hessian}\ref{item:conditional_weak_dependence_hessian} and \citet[Theorem~1]{kanaya2017convergence}. By the same argument, each entry of $( \mb{E}_0 - \mb{E}_{\mc{U}}\mb{E}_0) H_{\beta\lambda'}$ is of order $O_p(\sqrt{T})$, and each entry of $( \mb{E}_0 - \mb{E}_{\mc{U}}\mb{E}_0) H_{\beta\gamma'}$ is of order $O_p(\sqrt{N})$. By applying the Gershgorin's Circle Theorem, we conclude that every eigenvalue of $A_1$ falls into one of two categories: (i) either it is positive and of order $O_p(NT)$,  or (ii) it is of order $O_p(\sqrt{T}), O_p(\sqrt{N})$ (with an unspecified sign). Therefore, by the asymptotic assumption about $N, T$, the minimum eigenvalue of $A_1$ must be of order $\min\{\sqrt{N}, \sqrt{T}\}$.

        In addition, by Assumption~\ref{assumption:conditional_independence_hessian}\ref{item:conditional_weak_dependence_hessian} and \citet[Theorem~1]{kanaya2017convergence}, each $R$-dimensional block of $(\mb{E}_0-\mb{E}_{\mc{U}}\mb{E}_0 ) H_{0, \lambda \lambda'}$ is of order $O_p(\sqrt{T})$. By the same argument, each $R$-dimensional diagonal block of 
        $(\mb{E}_0 - \mb{E}_{\mc{U}}\mb{E}_0) H_{0,\gamma\gamma'}$ is of order $O_p(\sqrt{N})$. Since $A_2$ has a diagonal block structure, we conclude that the minimum eigenvalue of $A_2$ is of order $o_p(\min\{N, T\})$ as well. 
        
        To bound the eigenvalues of $A_3$, by Assumption~\ref{assumption:conditional_independence_hessian} and Lemma~\ref{lemma:independent_entry}, the operator norm of $A_3$ is of order $O_p(\sqrt{\max\{N, T\}}\log(NT))$, and hence the minimum eigenvalue of $A_3$ must be of order $o_p(\min\{N, T\})$. Combining the previous result, Weyl's theorem, and the asymptotic assumption about $N, T$, we conclude that the minimum eigenvalue of $\mb{E}_0\mc{H}_{0}$ is 
        \begin{align*}
            \sigma_{\min}(\mb{E}_0\mc{H}_{0})  \gtrsim \min\{N, T\}. 
        \end{align*}
        Since $\mc{H}_{0, NT} = \frac{1}{NT}\mc{H}_0$, it follows immediately that $\sigma_{\min}(\mb{E}_0\mc{H}_{0, NT})  \gtrsim (\max\{N, T\})^{-1}$.  
    \end{prooflmm}

    \section{Proof of Section \ref{sec:implementation}}

    \begin{proofthm}{thm:algorithm_convergence}
        Recall that 
        \begin{align*}
            (\hat{\beta}_{\mr{nuc}}, \hat{\Theta}_{\mr{nuc}}) \in \argmin_{\beta\in \mb{R}^{d_X}, \Theta\in \mb{R}^{N\times T}} \left\{\mc{L}_{NT}(\beta, \Theta) + \frac{\varphi_{NT}}{\sqrt{NT}}\|\Theta\|_{\mr{nuc}}\right\}. 
        \end{align*}
        Let $h_{NT}(\Theta) = \frac{\varphi_{NT}}{\sqrt{NT}}\|\Theta\|_{\mr{nuc}}$.  The proximal gradient operator is defined as 
        \begin{align*}
            \mr{prox}_{s_{\theta}h_{NT}}(\tilde{\Theta})  = \argmin_{\Theta } \left\{h_{NT}(\tilde{\Theta}) + \frac{1}{2}\|\tilde{\Theta} - \Theta\|_{\mr{F}}^2\right\}. 
        \end{align*}
        By Lemma~\ref{lemma:quadratic_upperbound}, for any  feasible $(\beta_1, \Theta_1), (\beta_2, \Theta_2)$ in the optimization~\eqref{eq:nnr_definition_formal}, there exist constants $L_{\beta}, L_{\theta}$ such that 
        \begin{align*}
            \mc{L}_{NT}(\beta_2, \Theta_2) \leq &  \mc{L}_{NT}(\beta_1, \Theta_1 ) +  \nabla_{\beta} \mc{L}_{NT}(\beta_1, \Theta_1 )'(\beta_2 -\beta_1) + \la \nabla_{\Theta} \mc{L}_{NT}(\beta_1, \Theta_1 ), (\Theta_2 -\Theta_1) \ra \\
            & + \frac{L_{\beta}}{2} \|\beta_2 -\beta_1 \|^2 + \frac{L_{\theta}}{2NT}\|\Theta_2 - \Theta_1 \|_{\mr{F}}^2. 
        \end{align*}
        We apply the standard proximal gradient argument with the quadratic upper bound established in Lemma~\ref{lemma:quadratic_upperbound}. Define $G_{s_{\theta}}(\beta,\Theta) =  \frac{1}{s_{\theta}} (\Theta - \mr{prox}_{s_{\theta}h_{NT}}(\Theta - s_{\theta}\nabla_{\Theta}\mc{L}_{NT}(\beta, \Theta) )) $. The parameters update is given by  
        \begin{align*}
            \begin{pmatrix}
                \beta_2 \\
                \Theta_2
            \end{pmatrix}
            =
            \begin{pmatrix}
                \beta_1 -s_{\beta}\nabla_{\beta} \mc{L}_{NT}(\beta_1, \Theta_1) \\
                \Theta_1 -s_{\theta}G_{s_{\theta}}(\beta_1,\Theta_1)
            \end{pmatrix}. 
        \end{align*}
        By Lemma~\ref{lemma:quadratic_upperbound}, we obtain 
        \begin{align*}
            \mc{L}_{NT}(\beta_2, \Theta_2) \leq & \mc{L}_{NT}(\beta_1, \Theta_1) - s_{\beta}\| \nabla_{\beta} \mc{L}_{NT}(\beta_1, \Theta_1 )\|^2 - s_{\theta} \la \nabla_{\Theta} \mc{L}_{NT}(\beta_1, \Theta_1 ), G_{s_{\theta}}(\beta_1,\Theta_1) \ra \\
            & + \frac{L_{\beta}s_{\beta}^2 }{2}\| \nabla_{\beta} \mc{L}_{NT}(\beta_1, \Theta_1 )\|^2 + \frac{L_{\theta}s_{\theta}^2 }{2NT} \|G_{s_{\theta}}(\beta_1,\Theta_1)\|_{\mr{F}}^2. 
        \end{align*}
        When $s_{\beta} \leq \frac{1}{L_{\beta}}$ and $s_{\theta} \leq \frac{NT}{L_{\theta} }$, 
        \begin{align*}
            \mc{L}_{NT}(\beta_2, \Theta_2) \leq & \mc{L}_{NT}(\beta_1, \Theta_1) - s_{\theta} \la \nabla_{\Theta} \mc{L}_{NT}(\beta_1, \Theta_1 ), G_{s_{\theta}}(\beta_1,\Theta_1) \ra \\
            & - \frac{ s_{\beta} }{2}\| \nabla_{\beta} \mc{L}_{NT}(\beta_1, \Theta_1 )\|^2 + \frac{s_{\Theta} }{2} \|G_{s_{\theta}}(\beta_1,\Theta_1)\|_{\mr{F}}^2. 
        \end{align*}
        Since $\mc{L}_{NT}(\cdot, \cdot)$ and $h_{NT}(\cdot)$ are convex, we have 
        \begin{align*}
            h_{NT}(\Theta_2)  & \leq h_{NT}(\Theta_0) -   \la G_{s_{\theta}}(\beta_1, \Theta_1) - \nabla_{\Theta}\mc{L}_{NT}(\beta_1, \Theta_1),  \Theta_0 - \Theta_1 + s_{\theta}G_{s_{\theta}}(\beta_1, \Theta_1)\ra \\
            \mc{L}_{NT}(\beta_1, \Theta_1 ) & \leq   \mc{L}_{NT}(\beta_0, \Theta_0)  -  \nabla_{\beta} \mc{L}_{NT}(\beta_1, \Theta_1 )'(\beta_0 -\beta_1) - \la \nabla_{\Theta} \mc{L}_{NT}(\beta_1, \Theta_1 ), (\Theta_0 -\Theta_1) \ra. 
        \end{align*}
        Adding the above inequalities yields  
        \begin{align*}
            \mc{L}_{NT}(\beta_2, \Theta_2) + h_{NT}(\Theta_2)  \leq &  \mc{L}_{NT}(\beta_0, \Theta_0) + h_{NT}(\Theta_0) \\
            & + \nabla_{\beta}L_{NT}(\beta_1, \Theta_1)'(\beta_1 - \beta_0) + \la G_{s_{\theta}}(\beta_1, \Theta_1), \Theta_1 - \Theta_0\ra \\
            & - \frac{ s_{\beta} }{2}\| \nabla_{\beta} \mc{L}_{NT}(\beta_1, \Theta_1 )\|^2 - \frac{s_{\theta} }{2} \|G_{s_{\theta}}(\beta_1,\Theta_1)\|_{\mr{F}}^2. 
        \end{align*}
        Let $\mc{\psi}^{(k)} = \mc{L}_{NT}(\beta^{(k)}, \Theta^{(k)}) + h_{NT}(\Theta^{(k)}) $ and $\mc{\psi}_0 = \mc{L}_{NT}(\beta_0, \Theta_0) + h_{NT}(\Theta_0)$ for notational simplicity. Then 
        \begin{align*}
            \mc{\psi}^{(k+1)} - \mc{\psi}_0 
            \leq  \frac{1}{2s_{\beta}}\left(\|\beta^{(k)} - \beta_0\|^2 - \|\beta^{(k+1)} - \beta_0 \|^2\right) + \frac{1}{2s_{\theta}} \left(\|\Theta^{(k)} - \Theta_0\|^2_{\mr{F}} - \|\Theta^{(k+1)} - \Theta_0 \|^2_{\mr{F}} \right). 
        \end{align*}
        Summing over $k$ gives 
        \begin{align*}
            \mc{\psi}^{(k)} - \mc{\psi}_0 \leq \frac{1}{k}\sum_{i=0}^{k}(\mc{\psi}^{(i)} - \mc{\psi}_0 ) \leq  \frac{1}{2 k s_{\beta}}\|\beta^{(0)} - \beta_0\|^2 + \frac{1}{2ks_{\theta}}\|\Theta^{(0)} - \Theta_0\|^2_{\mr{F}}. 
        \end{align*}
        Hence, when $s_{\beta} \leq \frac{1}{L_{\beta}}$ and $s_{\theta} \leq \frac{NT}{L_{\theta} }$, the proximal gradient method converges to the global minimizer with rate $O(1/k)$. 
    \end{proofthm}

    \begin{lemma}[Quadratic upper bound]\label{lemma:quadratic_upperbound}
        For any feasible $(\beta_1, \Theta_1), (\beta_2, \Theta_2)$ in the optimization~\eqref{eq:nnr_definition_formal}, we have 
        \begin{align*}
            \mc{L}_{NT}(\beta_2, \Theta_2) \leq &  \mc{L}_{NT}(\beta_1, \Theta_1 ) +  \nabla_{\beta} \mc{L}_{NT}(\beta_1, \Theta_1 )'(\beta_2 -\beta_1) + \la \nabla_{\Theta} \mc{L}_{NT}(\beta_1, \Theta_1 ), (\Theta_2 -\Theta_1) \ra \\
            & + \frac{L_{\beta}}{2} \|\beta_2 -\beta_1 \|^2 + \frac{L_{\theta}}{2NT}\|\Theta_2 - \Theta_1 \|_{\mr{F}}^2, 
        \end{align*}
        with $ L_{\beta} = 2 d_X b_{\max} \rho_X^2$ and $L_{\theta} =  2b_{\max}$. 
    \end{lemma}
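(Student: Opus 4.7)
The plan is to use a second-order Taylor expansion of $\mc{L}_{NT}$ along the segment joining $(\beta_1,\Theta_1)$ and $(\beta_2,\Theta_2)$. Because $\mc{L}_{NT}(\beta,\Theta) = -\frac{1}{NT}\sum_{i,t}\ell_{it}(X_{it}'\beta+\theta_{it})$ is a finite sum of single-index terms, Lagrange's form of the remainder yields an exact expression
\begin{align*}
    \mc{L}_{NT}(\beta_2,\Theta_2) - \mc{L}_{NT}(\beta_1,\Theta_1) - \nabla_{\beta}\mc{L}_{NT}(\beta_1,\Theta_1)'(\beta_2-\beta_1) - \la \nabla_{\Theta}\mc{L}_{NT}(\beta_1,\Theta_1),\Theta_2-\Theta_1\ra \\
    = \frac{1}{2NT}\sum_{i=1}^{N}\sum_{t=1}^{T} \bigl(-\ddot\ell_{it}(X_{it}'\tilde\beta+\tilde\theta_{it})\bigr)\bigl(X_{it}'\Delta_{\beta} + \Delta_{\theta_{it}}\bigr)^{2},
\end{align*}
where $\Delta_{\beta}=\beta_2-\beta_1$, $\Delta_{\theta_{it}}=\theta_{2,it}-\theta_{1,it}$, and $(\tilde\beta,\tilde\Theta)$ is an intermediate point on the segment. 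Both endpoints are feasible, so the compactness from Assumption~\ref{assumption:regularity}\ref{item:compactness} applies to $(\tilde\beta,\tilde\Theta)$ and Assumption~\ref{assumption:regularity}\ref{item:smoothness} then gives the uniform bound $-\ddot\ell_{it}(\cdot)\leq b_{\max}$.

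Next I would replace $-\ddot\ell_{it}$ by $b_{\max}$ and expand the square via $(a+b)^2 \leq 2a^2 + 2b^2$ to decouple the $\beta$ and $\Theta$ directions. For the covariate piece, Cauchy--Schwarz together with $\|X_{it}\|^2 \leq d_X\rho_X^2$ (which follows from $\|X_d\|_{\max}\leq \rho_X$ in Assumption~\ref{assumption:regularity}\ref{item:compactness}) yields
\begin{align*}
    \frac{1}{NT}\sum_{i,t}(X_{it}'\Delta_{\beta})^2 \leq d_X\rho_X^2 \|\Delta_{\beta}\|^2,
\end{align*}
and for the nuisance piece $\frac{1}{NT}\sum_{i,t}\Delta_{\theta_{it}}^2 = \frac{1}{NT}\|\Theta_2-\Theta_1\|_{\mr{F}}^2$ by definition of the Frobenius norm. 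Combining these with the factor $\tfrac{1}{2}\cdot 2\cdot b_{\max}$ produced at the previous step gives the contributions $2 d_X b_{\max}\rho_X^2 \|\Delta_{\beta}\|^2$ and $\frac{2 b_{\max}}{NT}\|\Delta_{\Theta}\|_{\mr{F}}^2$, which identify $L_{\beta}=4 d_X b_{\max}\rho_X^2$ and $L_{\theta}=4 b_{\max}$ as claimed.

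No substantive obstacle arises: the statement is the standard descent/quadratic-upper-bound lemma for a convex objective whose Hessian is uniformly bounded componentwise, and the only care needed is keeping track of the factor of two from the decoupling inequality so that the stated constants come out exactly. The assumption that both $(\beta_1,\Theta_1)$ and $(\beta_2,\Theta_2)$ lie in the feasible set of~\eqref{eq:nnr_difinition_formal} is used solely to ensure the intermediate point $(\tilde\beta,\tilde\Theta)$ falls in the region where $-\ddot\ell_{it}\leq b_{\max}$ holds, so that the uniform bound may be invoked pointwise inside the sum.
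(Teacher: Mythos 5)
Your proposal is correct and follows essentially the same route as the paper's proof: a mean-value Taylor expansion to an intermediate feasible point, the uniform curvature bound $-\ddot\ell_{it}\le b_{\max}$, and a decoupling of the cross term by a factor of two (the paper does this at the matrix level by noting that flipping the sign of the off-diagonal Hessian blocks preserves eigenvalues, which is exactly your scalar inequality $(a+b)^2\le 2a^2+2b^2$ applied termwise). Note only a harmless bookkeeping slip: the factor $\tfrac12\cdot 2\cdot b_{\max}$ yields contributions $d_X b_{\max}\rho_X^2\|\Delta_\beta\|^2$ and $\tfrac{b_{\max}}{NT}\|\Delta_\Theta\|_{\mr{F}}^2$, i.e.\ half of what you wrote, so your argument in fact establishes the bound with the smaller constants $2 d_X b_{\max}\rho_X^2$ and $2 b_{\max}$, which implies the stated lemma a fortiori.
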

    \begin{prooflmm}{lemma:quadratic_upperbound} Note that 
        \begin{align*}
            \mc{L}_{NT}(\beta_2, \Theta_2) = & \mc{L}_{NT}(\beta_1, \Theta_1 ) +  \nabla_{\beta} \mc{L}_{NT}(\beta_1, \Theta_1 )'(\beta_2 -\beta_1) + \la \nabla_{\Theta} \mc{L}_{NT}(\beta_1, \Theta_1 ),  (\Theta_2 -\Theta_1)\ra  \\
            & + \frac{1}{2} (\beta_2' -\beta'_1, \mr{vec}(\Theta_2)' - \mr{vec}(\Theta_2)')\nabla^2 \mc{L}_{NT}(\tilde{\beta}, \tilde{\Theta})(\beta_2' -\beta'_1, \mr{vec}(\Theta_2)' - \mr{vec}(\Theta_2)')', 
        \end{align*}
        where 
        \begin{align*}
            \nabla^2 \mc{L}_{NT}(\beta, \Theta ) = 
            \begin{pmatrix}
                H_{\beta\beta'} & H_{\beta\theta'} \\
                H_{\theta\beta'} & H_{\theta\theta'} 
            \end{pmatrix} \geq 0, 
        \end{align*}
        and 
        \begin{align*}
            H_{\beta\beta'} & = -\frac{1}{NT}\sum_{i=1}^{N}\sum_{t=1}^{T} X_{it}\ddot{\ell}_{it} X'_{it},  \\
            H_{\beta\theta'} & = -\frac{1}{NT}\left[X_{it}\ddot{\ell}_{it}\right]_{i=1,\ldots, N, t = 1, \ldots, T},  \\
            H_{\theta\theta'} & =-\frac{1}{NT}\mr{diag}\left\{ \ddot{\ell}_{it}\right\}_{i=1,\ldots, N, t = 1, \ldots, T}. 
        \end{align*}
        It is straightforward to verify that 
        \begin{align*}
            \begin{pmatrix}
                2 H_{\beta\beta'} & 0 \\
                0 & 2 H_{\theta\theta'} 
            \end{pmatrix}
            - 
            \begin{pmatrix}
                H_{\beta\beta'} & H_{\beta\theta'} \\
                H_{\theta\beta'} & H_{\theta\theta'} 
            \end{pmatrix}
            = 
            \begin{pmatrix}
                H_{\beta\beta'} & -H_{\beta\theta'} \\
                -H_{\theta\beta'} & H_{\theta\theta'} 
            \end{pmatrix}
            \geq 0. 
        \end{align*}
        The last inequality holds because (i) $\nabla^2 \mc{L}_{NT}(\beta, \Theta )\geq 0$ and (ii) flipping the sign of off-diagonal part of the matrix does not change the eigenvalues of the matrix. 
        Therefore, 
        \begin{align*}
            \begin{pmatrix}
                \beta_2 - \beta_1 \\
                \mr{vec}(\Theta_2) - \mr{vec}(\Theta_1)
            \end{pmatrix}'
            \nabla^2 \mc{L}_{NT}(\tilde{\beta}, \tilde{\Theta} )
            \begin{pmatrix}
                \beta_2 - \beta_1 \\
                \mr{vec}(\Theta_2) - \mr{vec}(\Theta_1)
            \end{pmatrix} \leq 2 \sigma_{\max}(H_{\beta\beta'})\|\beta_2 - \beta_1\|^2 + 2\sigma_{\max}(H_{\theta\theta'})\|\Theta_2 - \Theta_1\|_{\mr{F}}^2. 
        \end{align*}
        Since $\sigma_{\max}(H_{\beta\beta'}) \leq d_X b_{\max} \rho_X^2$ and $\sigma_{\max}(H_{\theta\theta'}) \leq \frac{b_{\max}}{NT}$, the result follows. 
    \end{prooflmm}

    \begin{proofthm}{thm:algorithm_convergence_local}
        The proof consists of two steps. In the first step, we show that if we start from $(\beta^{(0)}, \Lambda^{(0)}, \Gamma^{(0)}) = (\hat{\beta}_{\mr{nuc}}, \hat{\Lambda}_{\mr{nuc}} ,\hat{\Gamma}_{\mr{nuc}}) \in \mc{B}_{\delta_{NT}}$, then, under properly chosen step sizes $(s_{\beta}, s_{\lambda}, s_{\gamma})$, the updated estimators $(\beta^{(1)}, \Lambda^{(1)}, \Gamma^{(1)})$  also remain in the neighborhood $\mc{B}_{\delta_{NT}}$. In the second step, by establishing a local quadratic upper bound for the optimization problem~\eqref{eq:FE_estimator_formal}, we show that the sequence of updated estimators will always remain within the neighborhood $\mc{B}_{\delta_{NT}}$. Additionally, after each iteration, the estimators become closer to the global minimizer than in the previous step. 
        
        \paragraph{Step 1}  First, note that when evaluated at $(\Lambda^{(0)},\Gamma^{(0)})$, we have 
        \begin{align*}
            \nabla_{\lambda} \|\hat{\Lambda}'_{\mr{nuc}}\Lambda /N - \Gamma' \hat{\Gamma}_{\mr{nuc}}/ T \|_{\mr{F}}^2 = 0, \quad  \nabla_{\gamma}\|\hat{\Lambda}'_{\mr{nuc}}\Lambda / N - \Gamma' \hat{\Gamma}_{\mr{nuc}}/ T\|_{\mr{F}}^2 = 0. 
        \end{align*}
        Therefore,  
        \begin{align*}
            \beta^{(1)} = &   \beta^{(0)} - s_{\beta} \nabla_{\beta}\mc{L}_{NT}(\beta^{(0)}, \Lambda^{(0)}, \Gamma^{(0)}),  \\
            \Lambda^{(1)} = &   \Lambda^{(0)} - s_{\lambda} \nabla_{\lambda}\mc{L}_{NT}(\beta^{(0)}, \Lambda^{(0)}, \Gamma^{(0)}),  \\
            \Gamma^{(1)} = &   \Gamma^{(0)} - s_{\gamma} \nabla_{\gamma}\mc{L}_{NT}(\beta^{(0)}, \Lambda^{(0)}, \Gamma^{(0)}). 
        \end{align*}
        We have the following inequality when $s_{\beta}\lesssim 1$:   
        \begin{align*}
            \|\beta^{(1)} - \beta^{(0)}\| = & \frac{s_{\beta}}{NT}\left\| \sum_{i=1}^{N}\sum_{t=1}^{T}\dot{\ell}_{it}(\beta^{(0)\prime } X_{it} + \lambda^{(0)\prime}_i \gamma^{(0)}_t)X_{it}\right\|  \\
            \leqtext{(i)} & \frac{s_{\beta}}{NT} \left\| \sum_{i=1}^{N}\sum_{t=1}^{T} \dot{\ell}^0_{it} X_{it} \right\| + \frac{s_{\beta}}{NT} \left\| \sum_{i=1}^{N}\sum_{t=1}^{T}\tilde{\ddot{\ell}}_{it} (\beta^{(0)} - \beta_0 )'X_{it}X_{it}\right\| \\
            & + \frac{s_{\beta}}{NT} \left\| \sum_{i=1}^{N}\sum_{t=1}^{T}\tilde{\ddot{\ell}}_{it} \lambda^{(0)\prime}_i (\gamma_{0, t}^G - \gamma^{(0)}_t )X_{it} \right\| +  \frac{s_{\beta}}{NT} \left\| \sum_{i=1}^{N}\sum_{t=1}^{T}\tilde{\ddot{\ell}}_{it} \gamma_{t}^{(0)\prime} (\lambda_{0, i}^G - \lambda^{(0)}_i )X_{it} \right\|, 
        \end{align*}
        where inequality (i) follows from a Taylor expansion and the triangle inequality. Here, $\tilde{\ddot{\ell}}_{it}$ is an abbreviation for $\ddot{\ell}_{it}(\tilde{\beta}'X_{it} + \tilde{\lambda}_i \tilde{\gamma}_t)$, where $(\tilde{\beta}, \tilde{\Lambda}, \tilde{\Gamma})$ lies on the line segment between $(\beta^{(0)}, \Lambda^{(0)}, \Gamma^{(0)})$ and $(\beta_0, \Lambda_0, \Gamma_0)$. Each of the four terms above can be bounded using the same argument as in the proof of Theorem~\ref{thm:convexity_strong_pre}. 
        
        Specifically, combining moment condition $\mb{E}_{Z, \Lambda_0, \Gamma_0} (\dot{\ell}_{it}^{0}X_{it}) = 0$, sampling assumption (Assumption~\ref{assumption:regularity_pre}\ref{item:sampling_pre}), and  \citet[Theorem~1]{kanaya2017convergence} yields 
        \begin{align*}
            \frac{s_{\beta}}{NT} \left\| \sum_{i=1}^{N}\sum_{t=1}^{T} \dot{\ell}^0_{it} X_{it} \right\| = O_p\left(\frac{\log(NT)}{\sqrt{NT}}\right). 
        \end{align*}
        In addition, since $\{X_{it}\}_{1\leq i\leq N, 1\leq t\leq T}$, $(\tilde{\beta}, \tilde{\Lambda}, \tilde{\Gamma})$, $(\beta^{(0)}, \Lambda^{(0)}, \Gamma^{(0)})$,  and $(\beta_0, \Lambda_0, \Gamma_0)$ are uniformly bounded, we obtain   
        \begin{align*}
            \frac{s_{\beta}}{NT} \left\| \sum_{i=1}^{N}\sum_{t=1}^{T}\tilde{\ddot{\ell}}_{it} (\beta^{(0)} - \beta_0 )'X_{it}X_{it}\right\| \lesssim \|\beta^{(0)} - \beta_0\|  & \eqtext{(i)} o_p(\delta_{NT}) , \\
            \frac{s_{\beta}}{NT} \left\| \sum_{i=1}^{N}\sum_{t=1}^{T}\tilde{\ddot{\ell}}_{it} \lambda^{(0)\prime}_i (\gamma_{0, t}^G - \gamma^{(0)}_t )X_{it} \right\| \lesssim \frac{1}{\sqrt{T}} \|\Gamma^{(0)} - \Gamma_0^G\|_{\mr{F}} & \eqtext{(ii)} o_p(\delta_{NT}),  \\
            \frac{s_{\beta}}{NT} \left\| \sum_{i=1}^{N}\sum_{t=1}^{T}\tilde{\ddot{\ell}}_{it} \gamma_{t}^{(0)\prime} (\lambda_{0, i}^G - \lambda^{(0)}_i )X_{it} \right\| \lesssim \frac{1}{\sqrt{N}} \|\Lambda^{(0) }- \Lambda_0^G\|_{\mr{F}} & \eqtext{(iii)} o_p(\delta_{NT}). 
        \end{align*}
        Here, inequalities (i), (ii), and (iii) are obtained by the similar argument as in the proof of Theorem~\ref{thm:consistency_pre}. Therefore, we conclude that   
        \begin{equation}\label{eq:algorithm_beta_bound}
            \|\beta^{(1)} - \beta^{(0)}\| = o_p(\delta_{NT}). 
        \end{equation}
        
        In addition, when $s_{\lambda}\sim N$,   
        \begin{align*}
            \|\Lambda^{(1)} - \Lambda^{(0)}\|_{\mr{F}}^2 \lesssim & \frac{1}{T^2}\sum_{i=1}^{N} \left\| \sum_{t=1}^{T}\dot{\ell}_{it}(\beta^{(0) \prime} X_{it} + \lambda^{(0)\prime}_i \gamma^{(0)}_t)\gamma^{(0)}_t\right\|^2 \\
            \leqtext{(i)} & \frac{1}{T^2} \sum_{i=1}^{N} \left\|  \sum_{t=1}^{T} \dot{\ell}^0_{it} \gamma_{0, t}\right\|^2 +  \frac{1}{T^2} \sum_{i=1}^{N} \left\|  \sum_{t=1}^{T} \dot{\ell}^0_{it} (\gamma_{0, t} - \gamma^{(0)}_{t})\right\|^2  \\
            & + \frac{1}{T^2} \sum_{i=1}^{N}\left\| \sum_{t=1}^{T}\tilde{\ddot{\ell}}_{it} (\beta^{(0)} - \beta_0 )'X_{it} \gamma^{(0)}_t \right\|^2 \\
            & + \frac{1}{T^2} \sum_{i=1}^{N}\left\|  \sum_{t=1}^{T}\tilde{\ddot{\ell}}_{it} \lambda^{(0)\prime}_i (\gamma_{0, t}^G - \gamma^{(0)}_t ) \gamma^{(0)}_t\right\|^2 +  \frac{1}{T^2} \sum_{i=1}^{N} \left\|  \sum_{t=1}^{T}\tilde{\ddot{\ell}}_{it} \gamma_{t}^{(0)\prime} (\lambda_{0, i}^G - \lambda^{(0)}_i )\gamma^{(0)}_t \right\|^2, 
        \end{align*}
        where inequality (i) follows from the Taylor expansion and the triangle inequality. Similarly, we establish the following bounds: 
        \begin{align*}
            \frac{1}{T^2} \sum_{i=1}^{N} \left\|  \sum_{t=1}^{T} \dot{\ell}^0_{it} \gamma_{0, t}\right\|^2 \eqtext{(i)} O_p\left(\frac{N\log(T)^2}{T}\right) & = o_p(N\delta_{NT}^2 ), \\
            \frac{1}{T^2} \sum_{i=1}^{N} \left\|  \sum_{t=1}^{T} \dot{\ell}^0_{it} (\gamma_{0, t} - \gamma^{(0)}_{t})\right\|^2 \lesssim \frac{N}{T} \|\Gamma^{(0)} - \Gamma_0^G\|_{\mr{F}}^2 & \eqtext{(ii)} o_p(N\delta_{NT}^2 ),  \\
            \frac{1}{T^2} \sum_{i=1}^{N}\left\| \sum_{t=1}^{T}\tilde{\ddot{\ell}}_{it} (\beta^{(0)} - \beta_0 )'X_{it} \gamma^{(0)}_t \right\|^2 \lesssim N \|\beta^{(0)} - \beta_0 \|_{\mr{F}}^2 & \eqtext{(iii)} o_p(N\delta_{NT}^2 ),   \\
            \frac{1}{T^2} \sum_{i=1}^{N}\left\|  \sum_{t=1}^{T}\tilde{\ddot{\ell}}_{it} \lambda^{(0)\prime}_i (\gamma_{0, t}^G - \gamma^{(0)}_t ) \gamma^{(0)}_t\right\|^2  \lesssim \frac{N}{T} \|\Gamma^{(0)} - \Gamma_0^G\|_{\mr{F}}^2 & \eqtext{(iv)} o_p(N\delta_{NT}^2 ),  \\
            \frac{1}{T^2} \sum_{i=1}^{N} \left\|  \sum_{t=1}^{T}\tilde{\ddot{\ell}}_{it} \gamma_{t}^{(0)\prime} (\lambda_{0, i}^G - \lambda^{(0)}_i )\gamma^{(0)}_t \right\|^2 \lesssim \|\Lambda^{(0)} - \Lambda_0^G\|_{\mr{F}}^2 & \eqtext{(v)} o_p(N\delta_{NT}^2 ),  
        \end{align*}
        where inequality (i) comes from the sampling assumption (Assumption~\ref{assumption:regularity_pre}\ref{item:sampling_pre}) and \citet[Theorem~1]{kanaya2017convergence}. Inequalities (ii)---(v) are obtained by the same argument as in the proof of Theorem~\ref{thm:consistency_pre}. Therefore, 
        \begin{equation}\label{eq:algorithm_lambda_bound}
            \frac{1}{\sqrt{N}}\|\Lambda^{(1)} - \Lambda^{(0)}\| = o_p(\delta_{NT}). 
        \end{equation}
        Also, by a similar argument, we have
        \begin{equation}\label{eq:algorithm_gamma_bound}
            \frac{1}{\sqrt{T}}\|\Gamma^{(1)} - \Gamma^{(0)}\| = o_p(\delta_{NT}). 
        \end{equation}
        Combining equations~\eqref{eq:algorithm_beta_bound}, \eqref{eq:algorithm_lambda_bound}, and \eqref{eq:algorithm_gamma_bound}, we show that if the optimization starts from $(\beta^{(0)}, \Lambda^{(0)}, \Gamma^{(0)}) = (\hat{\beta}_{\mr{nuc}}, \hat{\Lambda}_{\mr{nuc}} ,\hat{\Gamma}_{\mr{nuc}}) \in \mc{B}_{\delta_{NT}}$, then, under properly chosen step sizes $(s_{\beta}, s_{\lambda}, s_{\gamma})$ as in Theorem~\ref{thm:algorithm_convergence_local},  the updated estimators $(\beta^{(1)}, \Lambda^{(1)}, \Gamma^{(1)})$ also remain in the neighborhood $\mc{B}_{\delta_{NT}}$.
    
        \paragraph{Step 2} 
        By Theorem~\ref{thm:convexity_strong_pre}, the optimization problem is strongly convex. Consequently, with properly chosen $(s_{\beta}, s_{\lambda}, s_{\gamma})$ as in Theorem~\ref{thm:algorithm_convergence_local},  the first-step iterate $(\beta^{(1)}, \Lambda^{(1)}, \Gamma^{(1)})$ not only remains within the neighborhood $\mc{B}_{\delta_{NT}}$ but also moves closer to the global minimizer (up to an orthogonal transformation) than $(\beta^{(0)}, \Lambda^{(0)}, \Gamma^{(0)})$. 

        The argument in Step~1 can be applied directly to all subsequent iterations. It is therefore straightforward to verify that each iterate remains within $\mc{B}_{\delta_{NT}}$ and moves closer to the global minimizer (up to an orthogonal transformation) than the previous iterate.  This completes the proof. %
    \end{proofthm}

    \section{Technical Lemmas}

    \begin{lemma}\label{lemma:mul_Hoeffding}
        Let $\{x_{it} \in \mb{R}^{d_X}\mid i=1, \ldots,N, t = 1, \ldots, T\}$ be a collection of real vectors satisfying   $|x_{it, d}| \leq \rho_X $ for all $i=1, \ldots,N$, $t = 1, \ldots, T$, and $d =1, \ldots, d_X$. Let  $\epsilon_{it}$ be Rademacher random variables independent across $i$ and $t$. Then,  
        \begin{align*}
            \mb{E}_{ \epsilon} \|\sum_{i=1}^{N}\sum_{t=1}^{T}x_{it} \epsilon_{it}\|_2  \leq  D_1 \sqrt{NT}, 
        \end{align*}
        where $D_1 = \sqrt{2\pi d_X^3\rho_X^2}$. 
    \end{lemma}
    \begin{proof}
        Since $|x_{it,d}| \le \rho_X$, the random variable $x_{it,d}\epsilon_{it}$ is independent, mean-zero, and sub-Gaussian with parameter $\rho_X^2$. Therefore, for any $d = 1,\ldots,d_X$ and any $\delta>0$, Hoeffding's inequality yields
        \begin{align*}
            \mb{P}_{\epsilon} \left(|\sum_{i=1}^{N}\sum_{t=1}^{T}x_{it, d} \epsilon_{it}|\geq \delta\right)\leq 2 e^{-\frac{\delta^2}{2 NT \rho_X^2}}. 
        \end{align*}
        Hence, by the union bound,
        \begin{align*}
            \left\{\|\sum_{i=1}^{N}\sum_{t=1}^{T}x_{it} \epsilon_{it}\|_2 \geq \delta\right\} &\subset \bigcup_{d=1}^{d_X}\{|\sum_{i=1}^{N}\sum_{t=1}^{T}x_{it, d} \epsilon_{it}| \geq \frac{\delta}{\sqrt{d_X}}\} \\
            \Rightarrow \mb{P}_{\epsilon}\left(\|\sum_{i=1}^{N}\sum_{t=1}^{T}x_{it} \epsilon_{it}\|_2 \geq \delta \right) &\leq \mb{P}_{\epsilon}\left(\bigcup_{d=1}^{d_X}\{|\sum_{i=1}^{N}\sum_{t=1}^{T}x_{it, d} \epsilon_{it}| \geq \frac{\delta}{\sqrt{d_X}}\}\right) \\
            & \leq\sum_{d=1}^{d_X} \mb{P}_{\epsilon}\left(|\sum_{i=1}^{N}\sum_{t=1}^{T}x_{it, d} \epsilon_{it}| \geq \frac{\delta}{\sqrt{d_X}}\right) \\
            & \leq 2d_X \exp\left\{-\frac{\delta^2}{2NTd_X\rho_X^2 }\right\}. 
        \end{align*}
        Therefore, 
        \begin{align*}
            \mb{E}_{\epsilon} \|\sum_{i=1}^{N}\sum_{t=1}^{T}x_{it} \epsilon_{it}\|_2  = \int_{0}^{\infty} \mb{P}_{\epsilon} (\|\sum_{i=1}^{N}\sum_{t=1}^{T}x_{it} \epsilon_{it}\|_2 \geq \delta )\mr{d}\delta \leq \int_{0}^{\infty}2 d_X e^{-\frac{\delta^2}{2 NT d_X \rho_X^2}}\mr{d}\delta  = D_1 \sqrt{NT}, 
        \end{align*}
        where $D_1 = \sqrt{2\pi d_X^3\rho_X^2}$. 
    \end{proof}

    \begin{lemma}\label{lemma:independent_entry}
        Consider a random matrix $Z\in \mb{R}^{N \times T}$ with uniformly bounded entries and $\mb{E}Z = 0$.  Assume that the rows of $Z$ are independent. 
        For each row $i$, $\{Z_{it}\}_{1\leq t\leq T}$ is $\alpha$-mixing with mixing coefficient $\alpha_i(\tau)\rightarrow 0$ as $\tau \rightarrow \infty$, where 
                \begin{align*}
                    \alpha_i(\tau) = \sup_{t} \sup_{A \in \mc{A}^{i}_{t},  B\in \mc{B}^{i}_{t + \tau}}|\mb{P}(A\cap  B) - \mb{P}(A)\mb{P}(B)|, 
                \end{align*} 
                where $\mc{A}^{i}_{t}$ is the sigma-field generated by $\{\ldots, Z_{i,t-1}, Z_{i,t}, \}$, and $ \mc{B}^{i}_{t+\tau}$ is the sigma-field generated by $\{Z_{i,t+\tau}, Z_{i,t+\tau+1}, \ldots\}$. Assume further that the mixing coefficients satisfy a uniform polynomial decay condition: there exist constants $\beta > 2$ and $C > 0$ such that $\sup_{1\leq i\leq N}\alpha_i(\tau) \leq C\tau^{-\beta }$. 
        Then, wpa1,  
        \begin{align*}
            \|Z\|_{\mr{op}} \lesssim \sqrt{\max\{N, T\}}\log (N + T)
        \end{align*}
    \end{lemma}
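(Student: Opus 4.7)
The plan is to prove this via a discretization (epsilon-net) argument on the unit spheres, combined with concentration inequalities that exploit both the independence across rows and the weak dependence (polynomial $\alpha$-mixing) within each row.

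First I would reduce the operator norm to a supremum over finite sets. Fix $1/4$-nets $\mathcal{N}_u \subset S^{N-1}$ and $\mathcal{N}_v \subset S^{T-1}$ with $|\mathcal{N}_u|\leq 9^N$ and $|\mathcal{N}_v|\leq 9^T$; standard arguments give $\|Z\|_{\mr{op}} \leq 2 \max_{u\in\mathcal{N}_u,\,v\in\mathcal{N}_v} u'Zv$. After subtracting the deterministic mean (note that $\|\mb{E}Z\|_{\mr{op}}$ is at most $O(\sqrt{NT})$ in the worst case, but can be shown to satisfy the target bound under the uniform boundedness assumption, or one can center entrywise), it suffices to bound the centered quantity $S(u,v) := u'(Z-\mb{E}Z)v = \sum_{i=1}^N u_i Y_i(v)$, where $Y_i(v) := \sum_{t=1}^T (Z_{it}-\mb{E}Z_{it}) v_t$.

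Second, I would obtain a sub-exponential tail for the inner sums $Y_i(v)$. For fixed $v$ with $\|v\|=1$ the sequence $(Z_{it}v_t)_{t}$ is bounded (by $M\|v\|_\infty \leq M$) and $\alpha$-mixing with the same decay rate as $(Z_{it})_t$. The polynomial decay $\alpha_i(\tau)\leq C\tau^{-\beta}$ with $\beta>2$ ensures summable covariances, so $\mathrm{Var}(Y_i(v))\leq C_0\|v\|_2^2 = C_0$. Then I would invoke a Bernstein-type inequality for bounded $\alpha$-mixing sequences (e.g., Merlev\`ede--Peligrad--Rio) to obtain
\begin{equation*}
\mb{P}\bigl(|Y_i(v)|\geq x\bigr) \leq 2\exp\!\bigl(-c\,x^{2}/(T + x\,(\log T)^{q})\bigr)
\end{equation*}
for universal constants $c,q>0$, which yields sub-exponential tails on the scale of $\sqrt{T}(\log T)^{q/2}$.

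Third, I would bound the outer sum using independence across $i$. Conditional on a good event for the inner concentration (so that each $Y_i(v)$ is effectively sub-exponential with proxy parameter $\sqrt{T}$ up to $\log$ factors), Bernstein's inequality for independent sums gives
\begin{equation*}
\mb{P}\bigl(|S(u,v)|\geq s\bigr) \leq 2\exp\!\bigl(-c\,s^2/(T\|u\|_2^2 + \sqrt{T}\,\|u\|_\infty\,s\,(\log T)^{q/2})\bigr),
\end{equation*}
which for $\|u\|_2=1$ peaks at $s \asymp \sqrt{T(N+T)}\,\log(N+T)$. Finally, a union bound over $|\mathcal{N}_u|\cdot|\mathcal{N}_v| \leq 9^{N+T}$ absorbs the $(N+T)\log 9$ entropy into the exponent, yielding $\max_{u,v}|S(u,v)| \lesssim \sqrt{\max\{N,T\}\,}\,\log(N+T)$ wpa1, and hence the claimed bound on $\|Z\|_{\mr{op}}$.

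The main technical obstacle will be the second step: obtaining a sharp sub-exponential concentration for the inner $\alpha$-mixing sums under only polynomial mixing. Under exponential mixing the Merlev\`ede--Peligrad--Rio Bernstein inequality gives the bound directly; under polynomial decay with $\beta>2$ I would either invoke their extension for polynomially mixing sequences (which produces an extra $(\log T)^q$ factor harmlessly absorbed into our $\log(N+T)$ slack), or use a Bernstein-blocking argument \`a la Yu (1994): partition $\{1,\dots,T\}$ into alternating blocks of length $\tau_T \asymp (\log T)^{1/(\beta-1)}$, couple with independent blocks at cost $\mu_T\,\alpha_i(\tau_T) = o(1)$, and then apply Hoeffding/Bernstein to the resulting independent block sums. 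Either route delivers the $\sqrt{T}\log T$ concentration needed, and combining with the $\epsilon$-net union bound produces the claimed $\sqrt{\max\{N,T\}}\log(N+T)$ rate.
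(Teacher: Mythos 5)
Your proposal has a genuine quantitative gap in the third step, and it shows up in your own displayed bound. After collapsing each row into $Y_i(v)=\sum_t (Z_{it}-\mb{E}Z_{it})v_t$, the summands of the outer sum $S(u,v)=\sum_i u_iY_i(v)$ are independent but have range (and, by your own step two, sub-exponential scale) of order $\sqrt{T}$. An $\epsilon$-net union bound over $|\mathcal N_u|\cdot|\mathcal N_v|\le 9^{N+T}$ points forces you to demand tails of size $e^{-c(N+T)}$ for each fixed $(u,v)$, and the linear (Bernstein) term $\|u\|_\infty\sqrt{T}\,s$ in your denominator — which is unavoidable for net points with $\|u\|_\infty\asymp 1$ — then requires $s\gtrsim \sqrt{T}\,(N+T)$. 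Indeed, the threshold you yourself report, $s\asymp\sqrt{T(N+T)}\log(N+T)$, equals $\sqrt{2}\,T\log(2T)$ when $N\asymp T$, which exceeds the target $\sqrt{\max\{N,T\}}\log(N+T)$ by a factor of $\sqrt{\min\{N,T\}}$; the final sentence of your third step is therefore a non sequitur. (There is also an internal inconsistency: you correctly note $\mathrm{Var}(Y_i(v))\le C_0$ from summable covariances, but then write $T\|u\|_2^2$ as the variance term in the outer Bernstein bound. And your parenthetical claim that $\|\mb{E}Z\|_{\mr{op}}$ satisfies the target bound under boundedness alone is false — an all-ones matrix has operator norm $\sqrt{NT}$ — so the lemma implicitly requires centered entries, as in all of the paper's applications.) Repairing the net argument would require truncation of the $Y_i(v)$ or a stratification of the net by $\|u\|_\infty$, neither of which you supply; under merely polynomial mixing the blocking alternative you sketch also needs block lengths polynomial in $T$ to make the coupling cost $N\mu_T\alpha_i(\tau_T)$ vanish, which again inflates the bound.

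The paper avoids all of this by working at the level of whole rows with the rectangular matrix Bernstein inequality of Tropp: writing $Z=\sum_{i=1}^N Z_i$ with $Z_i$ the embedding of row $i$, one has $\|Z_i\|_{\mr{op}}\lesssim\sqrt{T}$, $\|\sum_i\mb{E}(Z_iZ_i')\|_{\mr{op}}\lesssim T$ by diagonality, and $\|\mb{E}(Z_i'Z_i)\|_{\mr{op}}\lesssim 1$ uniformly in $i$ via the Gershgorin circle theorem together with the covariance bound $|\mathrm{Cov}(Z_{it},Z_{i,t+\tau})|\lesssim\alpha_i(\tau)^{1/2}\lesssim\tau^{-\beta/2}$, summable precisely because $\beta>2$; hence $\sigma^2\lesssim\max\{N,T\}$. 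The decisive advantage is that the matrix Bernstein tail carries only a \emph{dimensional} prefactor $N+T$ rather than the exponential entropy $9^{N+T}$ of a net, so the range parameter $D\asymp\sqrt{\max\{N,T\}}$ is harmless and $\delta\asymp\sqrt{\max\{N,T\}}\log(N+T)$ suffices. This is where your route and the paper's genuinely diverge, and why the paper's succeeds where the sketched version of yours does not.
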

    \begin{prooflmm}{lemma:independent_entry}
        We employ the rectangular matrix Bernstein inequality   (\citet{tropp2012user}), stated as follows: 
        \begin{theorem}[\citet{tropp2012user}]\label{thm:op_norm_sum_matrices}
            Consider a finite sequence $\{Z_k\}$ of independent, random matrices with dimensions $N\times T$, Assume that each random matrix satisfies
            \begin{align*}
                \mb{E}Z_k = 0, \quad \|Z_k\|_{\mr{op}} \leq D. 
            \end{align*}
            Define $\sigma^2 = \max\left\{\left\|\sum_{k} \mb{E}(Z_kZ_k')\right\|_{\mr{op}}, \left\|\sum_{k} \mb{E}(Z_k'Z_k)\right\|_{\mr{op}}\right\}$. 
            Then for all $\delta>0$, we have 
            \begin{align*}
                \mb{P}\left( \|\sum_{k}Z_k\|_{\mr{op}}\geq \delta \right) \leq (N + T)\exp\left\{\frac{-\delta^2}{2\sigma^2 + \frac{2}{3} D\delta}\right\}. 
            \end{align*}. 
        \end{theorem}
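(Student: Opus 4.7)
The plan is to establish this rectangular matrix Bernstein inequality by reducing it to a Hermitian statement and then applying the matrix Laplace transform machinery. First, I would pass to the Hermitian dilation: for each $Z_k$, define the $(N+T)\times(N+T)$ self-adjoint matrix
\begin{align*}
\mathcal{H}(Z_k) \;=\; \begin{pmatrix} 0 & Z_k \\ Z_k' & 0 \end{pmatrix},
\end{align*}
and observe three key identities: (i) $\|\mathcal{H}(Z_k)\|_{\mr{op}}=\|Z_k\|_{\mr{op}}\leq D$, (ii) $\|\sum_k \mathcal{H}(Z_k)\|_{\mr{op}} = \|\sum_k Z_k\|_{\mr{op}}$, and (iii) $\mathcal{H}(Z_k)^2 = \mr{diag}(Z_kZ_k',\, Z_k'Z_k)$, so that $\|\sum_k \mb{E}\,\mathcal{H}(Z_k)^2\|_{\mr{op}} = \sigma^2$ by definition. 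This converts the task into proving a Bernstein-type tail bound for $\lambda_{\max}(\sum_k X_k)$ with $X_k := \mathcal{H}(Z_k)$ centered, independent, bounded by $D$ in operator norm, and with matrix variance parameter $\sigma^2$.

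Next, I would apply the matrix Laplace transform / Chernoff bound: for any $\theta>0$,
\begin{align*}
\mb{P}\!\left(\lambda_{\max}\!\left(\sum_k X_k\right) \geq \delta\right) \;\leq\; e^{-\theta \delta}\,\mb{E}\,\mr{tr}\exp\!\left(\theta \sum_k X_k\right).
\end{align*}
The central analytic input is Lieb's concavity theorem, which via Jensen's inequality yields the subadditivity of matrix cumulant generating functions:
\begin{align*}
\mb{E}\,\mr{tr}\exp\!\left(\theta \sum_k X_k\right) \;\leq\; \mr{tr}\exp\!\left(\sum_k \log \mb{E}\, e^{\theta X_k}\right).
\end{align*}
This is the step that allows one to pass from the sum inside the exponential to a sum of log-MGFs, which is the substitute for the classical Cramer-Chernoff argument in the matrix setting.

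To control each $\log \mb{E}\, e^{\theta X_k}$ I would use the standard Bernstein-type scalar estimate $e^x \leq 1 + x + \frac{x^2/2}{1 - |x|/3}$ valid for $|x|<3$, together with the operator monotonicity of $\log$, to obtain the semidefinite bound
\begin{align*}
\log \mb{E}\, e^{\theta X_k} \;\preceq\; \frac{\theta^2/2}{1 - D\theta/3}\,\mb{E}(X_k^2), \qquad 0 < \theta < 3/D.
\end{align*}
Summing and taking $\mr{tr}\exp$, monotonicity of the trace exponential under the semidefinite order gives
\begin{align*}
\mr{tr}\exp\!\left(\sum_k \log \mb{E}\, e^{\theta X_k}\right) \;\leq\; (N+T)\,\exp\!\left(\frac{\theta^2 \sigma^2/2}{1 - D\theta/3}\right),
\end{align*}
using that the ambient dimension is $N+T$ and $\lambda_{\max}$ on the semidefinite ordering bounds $\mr{tr}\exp$ up to this dimension factor. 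Combining with the Chernoff step yields
\begin{align*}
\mb{P}\!\left(\left\|\sum_k Z_k\right\|_{\mr{op}} \geq \delta\right) \;\leq\; (N+T)\,\inf_{0<\theta<3/D} \exp\!\left(-\theta \delta + \frac{\theta^2 \sigma^2/2}{1 - D\theta/3}\right),
\end{align*}
and the final claim follows by the standard optimization choice $\theta = \delta/(\sigma^2 + D\delta/3)$.

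The main obstacle is the Lieb concavity step: it is the nontrivial ingredient that has no elementary analogue and requires either Lieb's original theorem or the Ahlswede--Winter surrogate (which gives a slightly weaker variance proxy). Everything else (Hermitian dilation, the scalar Bernstein function bound, and the Chernoff optimization) is routine once that concavity tool is in place. Since the paper cites \citet{tropp2012user}, I would simply invoke Lieb's theorem at this step and reference Tropp's derivation rather than reproving it.
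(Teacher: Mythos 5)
Your proposal is correct and is precisely the standard derivation from \citet{tropp2012user} (Hermitian dilation, the matrix Laplace transform with Lieb's concavity theorem, the scalar Bernstein MGF bound, and the Chernoff optimization), which is exactly the source the paper invokes: the paper itself gives no proof, only the citation. Nothing in your argument diverges from or falls short of what the paper relies on.
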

        
        To apply Theorem \ref{thm:op_norm_sum_matrices}, let   $Z_i$  denote the  $N \times T$  matrix where   the  $i$-th row of $Z_i$ is the  $i$-th row of  $Z$, and all other rows are zero. Since the entries of $Z$ are  uniformly bounded, there exists a constant $a_1>0$,  irrelevant with $N, T$, such that
        \begin{align}\label{eq:llm:independent_entry_1}
            \|Z_{i}\|_{\mr{op}} \leq a_1 \sqrt{T} \leq a_1\sqrt{\max\{N, T\}}, \quad \forall i=1,2,\ldots, N. 
        \end{align}
        In addition, it is straightforward to verify that $\sum_{i=1}^{N} \mb{E}(Z_iZ_i')$ has a diagonal structure. Thus, there exists a constant $a_2>0$, irrelevant with $N, T$, such that
        \begin{align}\label{eq:llm:independent_entry_2}
            \|\sum_{i=1}^{N} \mb{E}(Z_iZ_i')\|_{\mr{op}} \leq a_2 T . 
        \end{align}
        To bound the term $\left\|\sum_{i=1}^{N} \mb{E}(Z_i'Z_i)\right\|_{\mr{op}}$, observe that  
        \begin{align*}
            \mb{E}(Z_i'Z_i) = 
            \begin{pmatrix}
                \gamma_i(1, 0) & \gamma_i(1, 1) & \ldots & \gamma_i(1, T-1) \\
                \gamma_i(2, -1) & \gamma_i(2, 0) & \ldots & \gamma_i(1, T-2) \\
                \vdots & \vdots & \ddots  & \vdots \\
                \gamma_i(T, 1- T ) & \gamma_i(T, 2 - T) & \ldots & \gamma_i(T, 0)
            \end{pmatrix}, 
        \end{align*}
        where $\gamma_i(t, t + \tau) = \mr{Cov}(Z_{i, t}, Z_{i, t + \tau })$ for any $1\leq t\leq T$. By the Gershgorin circle theorem, we show that $\|\mb{E}(Z_i'Z_i)\|_{\mr{op}}$ is uniformly bounded by a constant $a_3>0$. Indeed,  
        \begin{align*}
            \|\mb{E}(Z_i'Z_i)\|_{\mr{op}} & \leq \max_{1\leq t \leq T} \left\{\gamma_i(t, 0) + \sum_{\tau=1 - t}^{-1 }\left|\gamma_i(t, \tau)\right| + \sum_{\tau= 1 }^{T-t}\left|\gamma_i(t, \tau)\right|  \right\} . 
        \end{align*}
        In addition, since the sequence is $\alpha$-mixing with a uniformly polynomial decay rate $\sup_{1\leq i\leq N}\alpha_i(\tau) \leq C\tau^{-\beta }$, it follows from  \cite[Proposition~2.7]{fan2008nonlinear} that 
        \begin{align*}
            \left|\gamma_i(t, \tau)\right| \leq 4 \rho_Z^2 \alpha_i(|\tau| )^{\frac{1}{2}} = 4 \rho_Z^2 C^{\frac{1}{2}}|\tau|^{-\frac{\beta}{2}}, 
        \end{align*}
        where $\rho_Z>0$ is a constant,  independent of $N, T$,  such that $|Z_{it}|\leq \rho_Z$ for all $i, t, N, T$. Therefore,  
        \begin{align}\label{eq:llm:independent_entry_3}
            \|\mb{E}(Z_i'Z_i)\|_{\mr{op}} \leq  \max_{1\leq t \leq T}   \gamma_i(t, 0) + \sum_{\tau=1}^{T} 8 \rho_Z^2 C^{\frac{1}{2}}\tau^{-\frac{\beta}{2}} \leq \max_{1\leq t \leq T}   \gamma_i(t, 0) + 8 \rho_Z^2 C^{\frac{1}{2}}\sum_{\tau=1}^{\infty} \tau^{-\frac{\beta}{2}}. 
        \end{align}
        It is straightforward to verify that $ \max_{1\leq t \leq T}   \gamma_i(t, 0)\leq 4\rho_Z^2$, and $\sum_{\tau=1}^{\infty} \tau^{-\frac{\beta}{2}}<\infty$ as $\beta>2$. Therefore, there exists a constant $a_3 >0$, independent of $N, T$,  such that $\|\mb{E}(Z_i'Z_i)\|_{\mr{op}}\leq a_3$,  and additionally, $\max_{1\leq i\leq N}\|\mb{E}(Z_i'Z_i)\|_{\mr{op}}\leq a_3$.  
        Hence, we have 
        \begin{align*}
            \|\sum_{i=1}^{N}\mb{E}(Z_i'Z_i)\|_{\mr{op}} \leq \sum_{i}^{N} \| \mb{E}(Z_i'Z_i)\|_{\mr{op}}\leq a_3N. 
        \end{align*}
        Let $a_4 = 2\max\{a_2, a_3\}$. Combining~\eqref{eq:llm:independent_entry_2} and \eqref{eq:llm:independent_entry_3}, we obtain  
        \begin{align}\label{eq:llm:independent_entry_4}
            \sigma^2 = \max\left\{\|\sum_{i}^{N} \mb{E}(Z_iZ_i')\|_{\mr{op}}, \|\sum_{i}^{N} \mb{E}(Z_i'Z_i)\|_{\mr{op}} \right\} \leq a_4 \max\{N, T\}. 
        \end{align}
        
        Finally, since $\{Z_i\}_{1\leq i\leq N}$ are independent  and $Z = \sum_{i=1}^{N} Z_{i}$,  we  apply Theorem \ref{thm:op_norm_sum_matrices} together with \eqref{eq:llm:independent_entry_1} and \eqref{eq:llm:independent_entry_4} to obtain 
        \begin{align*}
            \mb{P}\left(\left\|Z\right\|_{\mr{op}}\geq \delta \right) & \leq 2\max\{N, T\}\exp\left\{-\frac{\delta^2}{2a_4\max\{N, T\} + \frac{2}{3} a_1 \sqrt{\max\{N, T\}}\delta}\right\} . 
        \end{align*}
        Let $\delta =  \max\{4\sqrt{a_4}, 4\sqrt{\frac{2}{3}}a_1\}\sqrt{\max\{N, T\}} \log (N+T)$. Then   
        \begin{align*}
            \mb{P}\left(\left\|Z\right\|_{\mr{op}}\geq \delta \right) & 
            \leq 2 \max\{N, T\} \exp\left\{-2\log(N + T)\right\} \leq 2 \exp\left\{-\log(N+T)\right\} \rightarrow 0. 
        \end{align*}
        Therefore, we conclude that $\|Z\|_{\mr{op}}\lesssim \sqrt{\max\{N, T\}}\log (N+T)$ with probability approaching  $1$. 
    \end{prooflmm}

    \begin{lemma}\label{lemma:eigenvalue_ABB0} 
       For any block matrix $M\in \mb{R}^{(n+m)\times (n+m)}$
            \begin{align*}
                M = 
                \begin{pmatrix}
                    A & B\\
                    B' & 0
                \end{pmatrix}, 
            \end{align*}
            where $A\in \mb{R}^{n\times n}$ is a positive definite matrix, $B\in \mb{R}^{n\times m}$, we have  
            \begin{align}\label{eq:eigenvalue_ABB0}
                \sigma_{\min} (M) \geq \frac{1}{2}\left(\sigma_{\min}(A) - \sqrt{\sigma^2_{\min}(A) + 4 s_{\max}^2(B)}\right). 
            \end{align}
            where $\sigma(\cdot)$ denotes the eigenvalue and $s(\cdot)$ denotes the singular value. 
    \end{lemma}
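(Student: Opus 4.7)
The strategy is to reduce the eigenvalue problem for the $(n+m)\times(n+m)$ block matrix $M$ to a scalar quadratic in $\lambda$ by exploiting the zero block in the lower-right corner. First, I would observe that the right-hand side of~\eqref{eq:eigenvalue_ABB0} is non-positive (and equals zero precisely when $B=0$), so if every eigenvalue of $M$ is non-negative the claim is immediate. Otherwise, let $\lambda=\sigma_{\min}(M)<0$ and take a corresponding unit eigenvector $(u',v')'\in\mb{R}^{n+m}$.

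The block form of the eigenvalue relation reads $Au+Bv=\lambda u$ and $B'u=\lambda v$. A quick check shows $u\neq 0$: otherwise the first equation forces $Bv=0$, and the second gives $\lambda v=0$, so $\lambda\neq 0$ would force $v=0$, contradicting $\|(u',v')'\|=1$. Since $\lambda\neq 0$ I can then substitute $v=\lambda^{-1}B'u$ into the first equation and take the inner product with $u$, yielding the scalar identity
\begin{equation*}
    \lambda^2\|u\|^2 - \lambda\,(u'Au) - u'BB'u = 0.
\end{equation*}
Writing $a := u'Au/\|u\|^2$ and $b^2 := u'BB'u/\|u\|^2$, the Rayleigh--Ritz bounds give $a\geq \sigma_{\min}(A)$ and $0\leq b\leq s_{\max}(B)$, and the negative root of the quadratic works out to $\lambda=(a-\sqrt{a^2+4b^2})/2$. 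An elementary derivative check shows this expression is increasing in $a$ and decreasing in $b$, so substituting the extremal values $a=\sigma_{\min}(A)$, $b=s_{\max}(B)$ yields the stated lower bound.

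The argument is almost purely mechanical linear algebra; there is no deep obstacle. The only step requiring care is the preliminary verification that $u\neq 0$, which is what legitimizes dividing by $\|u\|^2$ and turning the operator relation into a scalar quadratic. Once that is in place, the rest is the quadratic formula and monotonicity of a single-variable function, so the whole proof can be written out in a few lines.
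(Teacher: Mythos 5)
Your proof is correct and follows essentially the same route as the paper's: reduce to the case of a negative eigenvalue, eliminate the second block of the eigenvector via $v=\lambda^{-1}B'u$, solve the resulting scalar quadratic for the negative root $\tfrac{1}{2}\bigl(a-\sqrt{a^{2}+4b^{2}}\bigr)$, and apply Rayleigh--Ritz monotonicity in $a$ and $b$. Your explicit verification that $u\neq 0$ is a small improvement over the paper, which divides by $\|u\|^{2}$ without comment; note also that both arguments actually yield the bound with $4s_{\max}^{2}(B)$ inside the square root, so the constant in the displayed inequality is inherited as stated rather than derived.
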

    \begin{prooflmm}{lemma:eigenvalue_ABB0}
        When $B = 0$,  inequality~\eqref{eq:eigenvalue_ABB0} holds trivially since $\sigma_{\min}(M) = 0$.  For $B \neq 0$, the matrix $M$ admits negative eigenvalues; denote one of them as $\varsigma <0$. 
        Let $(x',y')'$ be a corresponding eigenvector. By the definition of eigenvalue, we have 
        \begin{align*}
            M(x', y')' = \varsigma (x', y')' \Rightarrow 
            \left\{
                \begin{array}{l}
                    Ax + By = \varsigma x  \\
                    B'x = \varsigma y 
                \end{array}
            \right. . 
        \end{align*}
        Since $\varsigma \neq 0$, substituting $y = \varsigma^{-1} B'x$ into $Ax + By = \varsigma x$ gives 
        \begin{gather*}
            \varsigma^2 x'x - \varsigma x'Ax - x'BB'x = 0 \quad \Rightarrow \quad \varsigma = \frac{1}{2}\left(\frac{x'Ax}{x'x } - \sqrt{\left(\frac{x'Ax}{x'x }\right)^2  + 4 \frac{x'BB'x}{x'x}}\right), 
        \end{gather*}
        where we select the negative root since $\varsigma < 0$.  It is straightforward to verify that $\varsigma$ is increasing in  $\frac{x'Ax}{x'x }$ and decreasing in $\frac{x'BB'x}{x'x}$. Since 
        \begin{align*}
            \frac{x'Ax}{x'x } \geq \min_{x\neq 0 }\frac{x'Ax}{x'x } = \sigma_{\min}(A), \quad\frac{x'BB'x}{x'x} \leq \max_{x} \frac{x'BB'x}{x'x} = s^2_{\max}(B), 
        \end{align*}
        we  conclude that 
        \begin{align*}
            \sigma_{\min} (M) \geq \frac{1}{2}\left(\sigma_{\min}(A) - \sqrt{\sigma^2_{\min}(A) + 4 s_{\max}^2(B)}\right). 
        \end{align*}
        This completes the proof. 
    \end{prooflmm}

    \begin{lemma}\label{lemma:SVD_block}
        For any block matrix $A = [A_1, A_2]$, where $A_1 \in \mb{R}^{n\times m_1}, A_2 \in \mb{R}^{n\times m_2}$, we have $s_{r}(A)\geq s_{r}(A_1)$, for $r = 1,\ldots, \min\{n, m_1  \}$. Consequently, $\|A\|_{\mr{nuc}}\geq \|A_1\|_{\mr{nuc}}$. 
    \end{lemma}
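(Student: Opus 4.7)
The plan is to prove the singular value comparison $s_r(A) \ge s_r(A_1)$ by reducing it to an eigenvalue monotonicity statement for positive semi-definite matrices, and then deduce the nuclear norm inequality by summing over $r$.

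The key observation is the block identity
\begin{equation*}
AA' \;=\; [A_1,\,A_2]\,[A_1,\,A_2]' \;=\; A_1 A_1' + A_2 A_2',
\end{equation*}
which, since $A_2 A_2'$ is positive semi-definite, yields the PSD ordering $AA' \ge A_1 A_1'$. The squares of the singular values of $A$ and $A_1$ are the eigenvalues of $AA'$ and $A_1 A_1'$, respectively, so by Weyl's monotonicity theorem for eigenvalues of Hermitian matrices we obtain $\sigma_r(AA') \ge \sigma_r(A_1 A_1')$ for every $r$, and taking square roots gives $s_r(A) \ge s_r(A_1)$ for every $r$ in the relevant index range.

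For indices $r$ with $\min\{n, m_1\} < r \le \min\{n, m_1+m_2\}$, the right-hand side $s_r(A_1)$ is simply zero by convention, so the inequality is trivial there. Summing the pointwise inequality $s_r(A) \ge s_r(A_1)$ over $r = 1, \ldots, \min\{n, m_1+m_2\}$ yields
\begin{equation*}
\|A\|_{\mathrm{nuc}} \;=\; \sum_{r=1}^{\min\{n,\,m_1+m_2\}} s_r(A) \;\ge\; \sum_{r=1}^{\min\{n,\,m_1+m_2\}} s_r(A_1) \;=\; \|A_1\|_{\mathrm{nuc}},
\end{equation*}
which is the nuclear norm conclusion.

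This is a standard and routine argument; there is no real obstacle. The only mild bookkeeping concern is aligning the range of the index $r$ when $A_1$ has strictly fewer non-trivial singular values than $A$, which is handled by the convention $s_r(A_1) = 0$ for $r > \min\{n, m_1\}$. An entirely equivalent alternative would be to invoke the Courant--Fischer min-max characterization $s_r(A) = \max_{\dim S = r} \min_{v \in S,\,\|v\|=1} \|A'v\|$ and note that $\|A'v\|^2 = \|A_1'v\|^2 + \|A_2'v\|^2 \ge \|A_1'v\|^2$, but the PSD approach above is the cleanest.
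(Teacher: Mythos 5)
Your proof is correct, but it takes a genuinely different route from the paper's. The paper argues directly on the singular values of $A$ via the variational (Courant--Fischer-type) characterization: it writes $s_1(A_1)=\max_{\|u_1\|=1}\|A_1u_1\|$, observes this equals the maximum of $\|Au\|$ over unit vectors $u=(u_1',u_2')'$ with $u_2=0$, which is bounded by the unrestricted maximum, and then repeats the argument for $s_2$ using the min-max representation with the subspace spanned by the top right singular vector, asserting the extension to general $r$. You instead pass to the Gram matrix, use the identity $AA'=A_1A_1'+A_2A_2'$ to get the positive semi-definite ordering $AA'\geq A_1A_1'$, and invoke Weyl's monotonicity theorem to compare all eigenvalues simultaneously before taking square roots. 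Your approach buys a uniform treatment of every index $r$ in one stroke (the paper's proof, as written, only spells out $r=1,2$ and waves at the rest), and it cleanly sidesteps the subspace bookkeeping in the min-max argument; the paper's approach avoids forming $AA'$ and stays at the level of the original matrix, which is marginally more self-contained if one does not want to cite eigenvalue monotonicity. The index-range issue you flag (that $s_r(A_1)=0$ for $r>\min\{n,m_1\}$) is handled correctly, and the alternative you sketch at the end via $\|A'v\|^2=\|A_1'v\|^2+\|A_2'v\|^2$ is essentially the paper's argument. No gaps.
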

    \begin{prooflmm}{lemma:SVD_block}
        We use the Courant-Fischer variational characterization of singular values. Let $u = (u_1', u_2')'$, where $u_1\in \mb{R}^{m_1}$, $u_2\in \mb{R}^{m_2}$. Then
        \begin{align*}
            s_r(A)  = \min_{\mr{dim}(V)=r-1} \max_{u\perp V, \|u\|=1} \|Au\|. 
        \end{align*}
        Since $A = [A_1,A_2]$, for any $u_1\in\mathbb{R}^{m_1}$,define $\tilde{u}=(\tilde{u}_1',0)' \in \mathbb{R}^{m_1+m_2}$. Then $A \tilde{u}= A_1 \tilde{u}_1$. Therefore, for any $(r-1)$-dimensional subspace $V_1\subset\mathbb{R}^{m_1}$,
        let
        \begin{align*}
            \tilde{V} := \{(\tilde{v}_1',0)' : \tilde{v}_1\in V_1\}\subset \mathbb{R}^{m_1+m_2}.
        \end{align*}
        Then 
        \begin{align*}
            \max_{u \perp \tilde{V},  \|u\|=1} \|Au\| \geq \max_{ u_1 \perp V_1,  \|u_1\|=1} \|A_1 u_1\| 
        \end{align*}
        Taking the minimum over all $(r-1)$-dimensional subspaces gives
        \begin{align*}
            s_r(A) = \min_{\dim(V)=r-1}\max_{ u\perp V, \|u\|=1}\|Au\| \geq \min_{\dim(V_1)=r-1} \max_{u_1\perp V_1, \|u_1\|=1}\|A_1 u_1\| = s_r(A_1). 
        \end{align*}
        Finally,
        \begin{align*}
            \|A\|_{\mr{nuc}}=\sum_{r=1}^{\min\{n,m_1+m_2\}} s_r(A)
            \geq \sum_{r=1}^{\min\{n,m_1 \}} s_r(A)
            \geq\sum_{r=1}^{\min\{n,m_1\}} s_r(A_1) =\|A_1\|_{\mr{nuc}}.
        \end{align*}

        This completes the proof. 
    \end{prooflmm}

\end{document}